\documentclass[10pt]{amsart}
\usepackage{lineno}

\usepackage[margin=2.2cm]{geometry}
\setlength{\marginparwidth}{2cm}
\usepackage{fullpage}
\usepackage{tikz}
\usepackage{tikz-cd}
\usetikzlibrary{matrix,arrows,shapes,cd}
\usepackage{pgf}
\usepackage{url}
\usepackage{longtable}
\usepackage{enumerate}
\usepackage{amsthm}
\usepackage{amssymb}
\usepackage{amsmath}
\usepackage{amscd}
\usepackage{eucal}
\usepackage{mathrsfs}
\usepackage{upgreek}
\usepackage{mathtools}
\usepackage{dsfont}
\usepackage{yfonts}
\usepackage[T1]{fontenc}
\usepackage{bbm}
\usepackage{graphics}
\usepackage{array}
\usepackage[retainorgcmds]{IEEEtrantools}
\usepackage{booktabs}
\usepackage{enumerate}
\usepackage{feynmf}
\usepackage{amsaddr}

\usetikzlibrary{trees}
\usetikzlibrary{decorations.pathmorphing}
\usetikzlibrary{decorations.markings}
\tikzset{
	photon/.style={decorate, decoration={snake}, draw=red},
	electron/.style={draw=blue, postaction={decorate},
		decoration={markings,mark=at position .55 with {\arrow[draw=blue]{>}}}},
	gluon/.style={decorate, draw=magenta,
		decoration={coil,amplitude=4pt, segment length=5pt}},
	sderiv/.style={postaction={decorate},
		decoration={markings,mark=at position .3 with {\arrow{>}}}},
	tderiv/.style={postaction={decorate},
		decoration={markings,mark=at position .7 with {\arrow{<}}}},
	stderiv/.style={postaction={decorate},
		decoration={markings,mark=at position .7 with {\arrow{<}},mark=at position .3 with {\arrow{>}}}}
}
%---------------------------------------------------------------------------------------------------------
% colors:
\definecolor{see}{RGB}{67,75,179}
\definecolor{darksee}{RGB}{42,44,148}
\definecolor{honey}{RGB}{232,180,129}
\definecolor{lighthoney}{RGB}{255,254,220}
\definecolor{citecol}{rgb}{0.5,0,0} 
\definecolor{black1}{RGB}{130,150,209}
%---------------------------------------------------------------------------------------------------------
%---------------------------------------------------------------------------------------------------------

%---------------------------------------------------------------------------------------------------------
\DeclareSymbolFont{bbold}{U}{bbold}{m}{n}
\DeclareSymbolFontAlphabet{\mathbbold}{bbold}

%*****************
% Annotations
%*****************
\usepackage{soul}
\usepackage[colorinlistoftodos,textsize=footnotesize,color=blue!30,textwidth=2cm]{todonotes}

\setstcolor{red}
%*****************
% hyperref
\definecolor{see}{RGB}{67,75,179}
\usepackage[bookmarks=true]{hyperref}
\hypersetup
{
	%bookmarks=true,
	bookmarksopen=true,
	%pdftitle=PAQFT,
	%pdfauthor=Kasia Rejzner,
	%pdfsubject=An introduction for mathematicians, %subject of the document
	%pdftoolbar=false, % toolbar hidden
	pdfmenubar=true, %menubar shown
	pdfhighlight=/O, %effect of clicking on a link
	colorlinks=true, %couleurs sur les liens hypertextes
	%pdfpagemode=None, %aucun mode de page
	pdfpagelayout=SinglePage, %ouverture en simple page
	pdffitwindow=true, %pages ouvertes entierement dans toute la fenetre
	linkcolor=see, %couleur des liens hypertextes internes
	citecolor=red!75!darkgray, %couleur des liens pour les citations
	urlcolor=see %couleur des liens pour les url
}
%*****************
%---------------------------------------------------------------------------------------------------------
%
%

%                                           ==================
%                                          ||    Definitions of macros ||
%                                           ==================
%
%    ----------------------
%   | Old-german font |
%    ---------------------- 

%---------------------------------------------------------

%---------------------------------------------------------

%---------------------------------------------------------

%---------------------------------------------------------

%---------------------------------------------------------

%\newcommand{\fP}{\mathfrak{P}}

%\newcommand{\F}{\mathfrak{F}}

%---------------------------------------------------------------------------------------------------------
%    ---------------------
%   |    Script    font   |
%    ---------------------
%---------------------------------------------------------------------------------------------------------

%    ---------------------
%   | Caligraphic font |
%    ---------------------

  % gauge group

\newcommand{\Tcal}{\mathcal{T}}

 % smooth functions
%---------------------------------------------------------------------------------------------------------
%    ----------------------
%   | "mathrm" font     |
%    ----------------------
%
%---------------------------------------------------------------------------------------------------------
%  ----------------------------------           Categories         ----------------------------------------
%---------------------------------------------------------------------------------------------------------

% spacetimes
   % spin spacetimes
       % groups
       % observables

          % sets 
     % categorie delle categorie tensoriali
       % the category of locally convex topological vector
      % the category of differential graded algebras
          % topological spaces
    %algebras 
      % the category of lcvs unital, involutive topological algebras
   %poisson topological algebras
      % the category of differential graded algebras
% the category of top. graded Poisson algebras 
        %  the category of fibered manifolds

%---------------------------------------------------------------------------------------------------------
%   -----------------------------           Subscripts and other          -------------------------------
%---------------------------------------------------------------------------------------------------------
\newcommand{\WF}{\mathrm{WF}}         % wave front set
\newcommand{\id}{\mathrm{id}}               % identity
%\newcommand{\dvol}{d\mathrm{vol}_M} % volume form
 % volume form
        % diffeomorphisms
             % image
            % kernel
          % derivations
                 % trace

%---------------------------------------------------------------------------------------------------------

%---------------------------------------------------------------------------------------------------------
%    ------------------------------
%   | "blackbord bold" font     |
%    ------------------------------
          % natural naumbers
     % Menge der ganzen Zahlen
           % real  numbers
           % complex numbers

 	     % Minkowski spacetime
%---------------------------------------------------------------------------------------------------------
%    ------------------------------
%   |  Greek letters                  |
%    ------------------------------

\newcommand{\Ga}{\Gamma}

\newcommand{\La}{\Lambda}

%\newcommand{\Si}{\Sigma}
%---------------------------------------------------------------------------------------------------------
%    ------------------------------
%   |      Products                    |
%    ------------------------------

\newcommand{\TT}{\Tcal}

%---------------------------------------------------------------------------------------------------------
%    ------------------------------
%   |   Functional differential operators         |
%    ------------------------------

%---------------------------------------------------------------------------------------------------------
%    ------------------------------
%   |     Abbreviations             |
%    ------------------------------

%---------------------------------------------------------------------------------------------------------
%    ---------------------------------------------
%   |  Differential geometric symbols         |
%    ---------------------------------------------

%---------------------------------------------------------------------------------------------------------
%    ------------------------------
%   |  Other symbols               |
%    ------------------------------
  % small font for the subscripts
         % bold vectors
             % inverse of the Hodge star
   % power ^{-1}
                         % identity
                              % partial derivative
\newcommand{\be}{\begin{equation}}
\newcommand{\ee}{\end{equation}}

      % support
\DeclareMathOperator{\supp}{\mathrm{supp}}      % support

\newcommand{\no}[1]{:\!#1\!:}

%%*********************************************************************

\theoremstyle{plain}
\newtheorem{thm}{Theorem}[section]
\newtheorem{df}[thm]{Definition}
\newtheorem{prop}[thm]{Proposition}

\newtheorem{lemma}[thm]{Lemma}

\theoremstyle{definition}
\newtheorem{rem}[thm]{Remark}

\title{Equilibrium states for the massive Sine-Gordon theory in the Lorentzian signature}

\author[1]{\small{Dorothea Bahns}}
\address{ University of G\"ottingen, \\
Mathematical Institute \\
\normalfont{\texttt{dbahns@mathematik.uni-goettingen.de}}}
\author[12]{\small{Nicola Pinamonti}}
\address{ University of Genova, \\Department of Mathematics \\ and INFN sez. Genova\\
	\normalfont{\texttt{pinamont@dima.unige.it}}}
\author[3]{\small{Kasia Rejzner}}
\address{University of York,  \\
	Department of Mathematics \\
\normalfont{\texttt{kasia.rejzner@york.ac.uk}}
}

\date{\today}

\begin{document}
 \sloppy

%\setpagewiselinenumbers
%\linenumbers

\maketitle
\begin{abstract}
In this paper we investigate the massive Sine-Gordon model in the ultraviolet finite regime in thermal states over the two-dimensional Minkowski spacetime.
We combine recently developed methods of perturbative algebraic quantum field theory with techniques developed in the realm of constructive quantum field theory over Euclidean spacetimes to construct the correlation functions of the equilibrium state of the Sine-Gordon theory in the adiabatic limit.
First of all, the observables of the Sine-Gordon theory are seen as functionals over the free configurations and are obtained as a suitable combination of the $S-$matrices of the interaction Lagrangian restricted to compact spacetime regions over the free massive theory.
These $S-$matrices are given as power series in the coupling constant with values in the algebra of fields over the
free massive theory. Adapting techniques like conditioning and inverse conditioning to spacetimes with Lorentzian signature, we prove that these power series converge when evaluated on a generic field configuration. 
The latter observation implies convergence in the strong operator topology in the GNS representations of the considered states. 
In the second part of the paper, adapting the cluster expansion technique to the Lorentzian case, we prove that the correlation functions of the interacting equilibrium state at finite temperature (KMS state) can be constructed also in the adiabatic limit, where the interaction Lagrangian is supported everywhere in space. 

\end{abstract}
\tableofcontents

\section{Introduction}

The programme of constructive quantum field theory (CQFT) was initiated 80 years ago, yet, to this day, only a handful of rather special models seem to have a chance to be constructible in 4 dimensions (for reviews, see~\cite{GrW20, Riv16}).
%\footnote{Raimar Wulkenhaar, Harald Grosse, Construction of a quantum field theory in four dimensions,
%Proceedings, 14th International Symposium Frontiers of Fundamental Physics (FFP14),  PoS FFP14 (2016) 151
%DOI: 10.22323/1.224.0151
%
%Vincent Rivasseau, Constructive Tensor Field Theory
% SIGMA 12 (2016), 085, 
%DOI: 	10.3842/SIGMA.2016.085}. 
Yang-Mills and gauge theory on the other hand, have so far eluded all attempts and we have to be content with perturbation theory. With this paper we propose to go back to a 2-dimensional model that has been studied for decades \cite{GJS,GJS74,GR75a,GR75b}, and to combine the 
techniques
developed then with recent methods that emerged in the context of perturbative algebraic quantum field theory (pAQFT) \cite{DF,BDF,Book} in the hope that such a combination of 
results
might bring the whole programme of CQFT forward. We investigate the massive Sine Gordon model in the ultraviolet finite regime 
  ($a^2<4\pi /\hbar$ in  \eqref{eq:sinegordon}) in thermal states and prove that the adiabatic (infinite volume) limit exists. 
Several results regarding Sine-Gordon models have been obtained in the past,
%we recall that for finite volume or non vanishing mass, 
in particular, for finite volume or non vanishing mass,
the construction of the Sine-Gordon model in the Euclidean case for $a^2<4\pi/\hbar$ has been obtained
by Fr\"ohlich and Seiler in 
\cite{Froe76,FS} and
 it has been extended %$a^2<\frac{16}{3}\pi/\hbar$ in \cite{} by Brydges and Kennedy
% and more recently 
to $a^2< 8\pi/\hbar$ by Dimock and Hurd in \cite{DimockHurd,DimockHurd1},
see also \cite{BGN,NRS} for earlier works using a different approach.
Further recent works on the subject are reported in \cite{BFM1}, \cite{bauerschmidt2022coleman}.
%   
% Several results regarding Sine-Gordon models have been obtained in the last years,
% we refer to \cite{DimockHurd1} and reference therein for a nice review and for other readings. Further recent works on the subject are reported in \cite{BFM1}, \cite{bauerschmidt2022coleman}.
%  
We stress that while most work done in the past used the auxiliary Euclidean framework,  we are able to remain purely in the Lorentzian framework. This, in fact, turns out to simplify arguments and proofs. Most notably, we will see how the  time slice property~\cite{ChF08} and causality allow us to reduce the problem of taking the adiabatic limit to a one-dimensional problem, thus 
simplifying the estimates in the cluster expansion used in the Euclidean case \cite{FS}.
Our construction might also shed new light on Osterwalder-Schrader positivity and the Wick rotation. It would be interesting to investigate the model's analytically continued correlation functions, calculated in thermal states, and study the resulting Euclidean theory.
Finally, it should be noted that our approach to the construction of the algebra of observables carries over to any globally hyperbolic spacetime. 

To be specific, we construct a quantized $*-$algebra of the massive Sine-Gordon theory on the two-dimensional Minkowski space $M = (\mathbb{R}^2,\eta)$ with the Minkowski metric $\eta= (-,+)$.
We recall that the classical Lagrangian of the Sine-Gordon theory is 
\begin{equation}\label{eq:sinegordon}
\mathcal{L} :=   -\frac{1}{2}\partial\varphi \partial\varphi -\frac{m^2}{2} \varphi^2 - \lambda \cos(a\varphi)
\end{equation}
where $\varphi$ represents the Sine-Gordon field, $m$ is the mass of the field, $\lambda$ is a coupling constant and $a$ is a parameter. We assume  
$a^2 < 4\pi/\hbar$, so the perturbative series does not require renormalization (ultraviolet finite regime). 
%  
% We recall that  for finite volume or non vanishing mass, the rigorous construction of the Sine-Gordon model in the Euclidean case for $a^2<4\pi/\hbar$ has been done in  
% \cite{Froe76,FS} and
%  it has been extended
% to $a^2< 8\pi/\hbar$  by Dimock and Hurd in \cite{DimockHurd,DimockHurd1},
% see also \cite{BGN,NRS} for earlier works using a different approach.
%  
Working in the framework of pAQFT, the quantization of 
% the Sine-Gordon such a 
theory is achieved in two separate steps. First, one constructs a $*-$algebra $\mathcal{A}_I$ which contains fields observables as a formal power series deformation of a classical Poisson algebra. In a second step, representations of this algebra are considered, via the introduction of states (normalized positive linear functionals on $\mathcal{A}_I$). This is fundamentally different from other approaches to QFT, where a vacuum state is assumed along with the fields. The dynamics enters the construction of both the algebra and the states. In this paper, we consider equilibrium states (KMS states) and already in the case of free theories, the representations obtained at different temperatures are not equivalent.

In the case of fields which satisfy a nonlinear equation of motion,  we shall use perturbation theory to construct interacting fields. The idea is then to construct $\mathcal{A}$, the $*-$algebra of free fields (the case $\lambda=0$ in \eqref{eq:sinegordon}), 
we then find a map which permits to represent interacting fields, namely elements of $\mathcal{A}_I$ in $\mathcal{A}$. 
This map exists for the off-shell algebra, namely $\mathcal{A}$ is constructed by implementing the free, canonical commutation relations but without assuming that linear fields satisfy the equation of motion. In other words, in $\mathcal{A}$ it is possible to find an ideal $\mathcal{I}$ generated by the free equation of motion applied to test functions
and instead of taking the quotient of $\mathcal{A}$ with $\mathcal{I}$ at the beginning, we shall consider states over $\mathcal{A}$ which vanish over $\mathcal{I}$, hence we are effectively taking the quotient at the very end of the quantization process.

The map which permits to represent off-shell interacting fields within the algebra of linear fields is constructed using ideas of Bogoliubov and St\"uckelberg (\cite{BS,SR}
% \cite{BS} 
and in particular constructing and using the time ordered exponential of the interacting Lagrangian called the $S$-matrix.
Usually, in higher dimensions, the $S-$matrix is known as a formal power series in the coupling constant $\lambda$ or in the case of polynomial interaction Lagrangian as a power series in $\hbar$. 
For Sine-Gordon theory \eqref{eq:sinegordon} in two dimensions, this $S$-matrix can be shown to converge for various choices of the parameters $m$, $\lambda$ and $a$ and in various forms, see e.g. \cite{DimockHurd,DimockHurd1} for the case $a< (8\pi/\hbar)$. 

To obtain the interacting fields perturbatively, we shall construct the $S-$matrix of the interacting Lagrangian 
\begin{equation}\label{eq:LI}
\mathcal{L}_I = \lambda \cos(a\varphi)
\end{equation}
as the time ordered exponential of $\mathcal{L}_I$. It is long known that $S$ converges in the case of a Euclidean theory \cite{Froe76,FS}. 
The first step in this construction is the restriction of the support of the interaction Lagrangian multiplying it with a suitable cutoff function $g$. 
It is also known that $S$ converges  for the massless two-dimensional Lorentzian case provided the support of  $g\mathcal{L}_I$ is restricted to a sufficiently small region \cite{BR16, BFR17}. In \cite{BFR17}, it is proved that in the GNS representation of the Derezinski-Meissner state \cite{DM06}, the series that defines the $S-$matrix converges 
in the strong operator topology to a unitary operator.  
 
In the first part of this paper we extend this construction to the interacting Lagrangian supported on arbitrary compact regions (Theorem \ref{thm:convergence-V} below). In order to achieve this result we make use of techniques like conditioning and inverse conditioning developed for the Euclidean case, see e.g. (\cite{Froe76, FS, GR75a, GR75b}) and extend them to the Lorentzian case. 
These results are used to prove   absolute  convergence of the $S-$matrix when the interaction Lagrangian is supported on a small region   uniformly on the field configuration. See Theorem \ref{thm:convergence}, which is a direct extension of the results of \cite{BR16}.
Observe that the analysis performed in this part of the work can be extended also to generic globally hyperbolic two dimensional spacetime.
%the generic curved space. 
We then proceed to analyze adiabatic limits, i.e. limits where the cutoff function $g$ is removed taking the limit $g\to1$ in a suitable sense. As a first step,
in Theorem \ref{thm:convergence-V} we 
%thus 
prove that the $S-$matrix converges also when $g$ is supported on a generic domain.

In section \ref{se:thermal}, we investigate the equilibrium state for the interacting theory with positive mass at generic temperature $\beta^{-1}$. When the interaction Lagrangian is of compact support, the construction is an extension of the results of Fredenhagen and Lindner \cite{FL} which are based on ideas of Araki \cite{Araki73}. The investigation of the adiabatic limit, however, is rather difficult for these states and requires a detailed analysis of the decay properties of the free $n-$point functions at large spatial separation. 
To present this part in details, we start by discussing some properties of thermal states for the free theory in subsection 
\ref{se:thermal-state-free-theory}. In particular, we argue that similar results as those given in Theorem \ref{thm:convergence-V} can be obtained also for an algebra whose product is constructed with the two-point function of the free equilibrium state.
The Fredenhagen-Lindner construction uses two key properties, namely, the time-slice axiom for the interacting theory \cite{ChF08} and the
fact that 
(smooth) alteration of the interaction Lagrangian in the past of $\Sigma_0$, the time $t=0$ hypersurface, results in an unitary transformation of the algebra of interacting observables supported in the future of $\Sigma_0$ \cite{BF0, HW, HW01}. These observations and the way in which they are used to compute adiabatic limits, are recalled in subsection  \ref{se:strategy}. Further properties of the interacting equilibrium state, and in particular their independence on the time cutoff function are given in  subsection \ref{se:thermal-state-interacting-theory}.
The interacting equilibrium state $\omega^{\beta,V}$ is presented in \eqref{eq:interacting-state} and it is given in terms of the equilibrium state of the free theory $\omega^\beta$   and of $U(i\beta)$, the relative partition function associated to the free and interacting dynamics , as
\[
\omega^{\beta,V}(F) = \frac{\omega^\beta(F U(i\beta))}{\omega^\beta(U(i\beta))}, \qquad F \in \mathcal{A}_I. 
\]
 
We then discuss the construction of the generating functional $G(f)$ given in \eqref{eq:generating-functional} of the equilibrium state for the interacting theory at finite temperature when the interaction Lagrangian is localized in a compact region with cutoff function of the form  $g=\chi h$  with $\chi$ and $h$ the time and space cutoffs, respectively, as discussed in  subsection \ref{se:thermal-state-interacting-theory}.
$G(f)$ is obtained from the ratio above
%, see \eqref{eq:interacting-state}, 
with $F$ the exponential of the field localized by a cutoff function $f$, and with numerator and denominator given in \eqref{eq:FandZ}.
Furthermore, the explicit form of the generating functional can be simplified with the following observations.
The state evaluated on observables supported in the future of $\Sigma_0$ does not depend on the particular form of the time cutoff function $\chi$ in the past of $\Sigma_0$.
If the time cutoff function is chosen to be a Heaviside step function, no further divergences are introduced in $G(f)$. Furthermore, in this case, the corresponding interacting Hamiltonian given in \eqref{eq:interacting-hamiltonian}
 is supported at time zero only. 
Using the time slice axiom of the free theory, we can also move the support of $f$ to be closer and closer to the hypersurface $\Sigma_0$ without altering the form of the generating functional $G(f)$. 
Therefore, instead of considering $e^{i\Phi(f)}$ in  \eqref{eq:generating-functional} with general $f$, we can equivalently compute expectation values of fields entirely supported at the hypersurface $\Sigma_0$ because  \eqref{eq:time-slice-free} holds up to contributions which vanish on shell and the state satisfies the free equation of motion. 

Taking into account these observations, we have that the explicit form of the generating functional can be obtained using \eqref{eq:Fexp-Zexp} which is an expression given in terms of objects entirely supported on $\Sigma_0$. 
%Furthermore, in that 
In this expression, the covariance   $C^\beta_m\in \mathcal{D}'((0,\beta)\times \Sigma_0)$ 
% $C^\beta_m(u,x)\in \mathcal{D}'((0,\beta)\times \Sigma_0)$ 
appears. It corresponds to the analytic continuation of the two-point function of the free KMS state given in \eqref{eq:defC}.
In \eqref{eq:Fexp-Zexp}, the spatial support of the interacting Hamiltonian (at time $t=0$) is restricted by the spatial   cutoff function $h$.
%to be  compact.  
The last step in the construction of the equilibrium state for the Sine-Gordon theory consists in taking the limit where the spatial cut off function $h$ tends to $1$ everywhere on $\Sigma_0$. 
To study this limit, we adapt to one dimension 
the cluster expansion originally proposed by 
Glimm, Jaffe and Spencer \cite{GJS} for a two-dimensional Euclidean theory and used in the context of Euclidean Sine-Gordon models by Fr{\"o}hlich and Seiler \cite{FS}.
% To study this limit  we
% make use of a one dimensional clustering expansion proposed by Glimm Jaffe and Spencer \cite{GJS} for a two-dimensional euclidean theory and used in the context of euclidean Sine-Gordon theories by Fr{\"o}hlich and Seiler \cite{FS}.
%  
We recall that the cluster expansion, originally proposed in \cite{GJS}, has been improved and generalized over the years in various works, see e.g. \cite{BF2, BK, AR95}.

In order to apply these techniques, the decay properties of the covariances with or without boundary conditions are discussed in section \ref{se:covariance-bc}. 
The details of the cluster expansion are given in section \ref{se:cluster-expansion} where we reduce to the one-dimensional case the cluster expansion given in \cite{GJS} and in \cite{FS}.
In particular, in Proposition \ref{pr:cluster-exp-00} and Proposition \ref{pr:cluster-exp} we summarize some properties of the cluster expansion and in Theorem \ref{thm:adiabatic-limit}, we state the validity of the adiabatic limit in our context.
In order to prove the finiteness of the adiabatic limit, the decay properties for $h$ with large support as stated in Lemma \ref{le:cluster-convergence-2} are crucial. The proof of this Lemma descends from some more technical lemmata collected in section \ref{se:lemmata}.

\subsection{Functional Approach}
In the functional approach to QFT, the observables of the theory are constructed as (formal power series of) functionals 
%over 
on smooth field configurations. 
The set of these objects with a certain regularity is then equipped with suitable products and involutions in order to obtain the $*-$algebra of field observables. The commutation relations among fields are encoded at algebraic level in the product.
Further details on this construction can be found in \cite{BF0, BDF, BFV, HW, HW01, LesHauches}, see also the recent books on the subject \cite{Book,Due19}.
Here we recall the main definitions. We shall formulate the formalism more generally than needed here, assuming $M$ to be a general globally hyperbolic manifold.

Starting point is the choice of possible field configurations. In our case of spin 0 fields, the set of {\bf field configurations} is simply the set of real-valued smooth functions
\[
\mathcal{C} := \left\{\varphi \in C^{\infty}(M;\mathbb{R})  \right\}. 
\]
Observables of the theory are contained in the set of {\bf microcausal functionals} which is the set of smooth functionals over $\mathcal{C}$ that are of compact support and whose wave front set is microcausal, i.e.
\[
\mathcal{F}_{\mu c} := \left\{F:\mathcal{C} \to \mathbb{\mathbb{C}} \;\left|\;   F^{(n)} \in \mathcal{E}'(M^{n}),  \text{WF}(  F^{(n)}) \cap  (\overline{V}^+ \cup \overline{V}^-)   = \emptyset   \right. , n \in \mathbb{N}\ \right\},
\]
where $F^{(n)}$ denotes the $n$-th functional derivative of $F$,  $\text{WF}(d)$ denotes the wave front set of a distribution $d\in\mathcal{D}'(M^n)$, see e.g. \cite{Hoer1} for a definition,  and where $V_+$ and $V_-$ are the set of future-pointing and past-pointing covectors in  $T^*M^n$, respectively.
The set of {\bf regular functionals} over $\mathcal{C}$ is the subset
\[
\mathcal{F}_{\text{reg}} := \left\{F \in \mathcal{F}_{\mu c} \;\left|\;  F^{(n)} \in C^{\infty}_0(M^n) , n \in \mathbb{N}\ \right.\right\}. 
\]
Another relevant subset of $\mathcal{F}_{\mu c}$ is the set of {\bf local functionals} 
\[
\mathcal{F}_{\text{loc}} :=  \left\{F \in \mathcal{F}_{\mu c}   \;\left|\;  F^{(1)} \in C^{\infty}(M),\; \text{supp}F^{(n)}\subset D_n\subset M^n   \right.
, n \in \mathbb{N}\ \right\},
\]
where $D_n$ denotes the diagonal in $M^n$. A functional $F\in\mathcal{F}_{\mu c}$ is said to be {\bf polynomial} if only finitely many functional derivatives are non-vanishing. We denote by $\mathcal{F}^p_{\mu c}$ the set of polynomial microcausal functionals and by $\mathcal{F}_{\text{reg}}^p$ the set of regular polynomial microcausal functionals.

Restricting the domain of integration by a compactly supported smooth function (``smearing'' as it is usually called), the interaction Lagrangian and fields themselves,
gives rise to elements of $\mathcal{F}_{\text{loc}}$ in the following sense:
The smeared {\bf linear field} is the functional $\Phi(f)$ on $\mathcal C$
\begin{equation}\label{eq:linear-field}
\Phi(f):\varphi \mapsto \langle f, \varphi\rangle =    \int_M  f(x)\varphi(x) d\mu_x ,  \qquad f \in C^\infty_0(M)
\end{equation}
where $\mu$ is the measure induced by the metric of the spacetime $M$. The linear field is obviously  a local regular functional. The notation usually suppresses the $\varphi$-dependence, i.e. we write $\Phi(f)$ instead of $\Phi(f)(\varphi)$. 
In a similar way, the interaction Lagrangian gives rise to a local functional, which we define by means of the smeared {\bf vertex operator}, i.e. the exponential of the field
\begin{equation}\label{eq:vertex}
V^g_a :=\int_M  e^{\mathrm{i} a\varphi(x)}   g(x)  d\mu_x    ,\qquad a\in \mathbb{R},\; g\in C^\infty_0(M),
\end{equation}
where again, in the notation  we suppress the $\varphi$-dependence.

If we equip $\mathcal{F}_{\mu c}$ or $\mathcal{F}_{\text{reg}}$ with suitable products and with a suitable involution, we obtain the classical or quantum algebra which contains the observables of the free theory.
More precisely, the involution we are considering coincides with the complex conjugation, i.e.
\[
F^{*}(\varphi) :=  \overline{F(\varphi)}.   
\]
We now equip $\mathcal F_{\mu c}$ with the pointwise product 
defined as
\[
F\cdot G = \mathcal{M} (F\otimes G)
\]
where $\mathcal{M}$ is the pullback of the map $\imath:\mathcal{C}\to\mathcal{C}\times \mathcal{C}$, $\imath(\varphi):=(\varphi,\varphi)$ on $\mathcal{F}_{\mu c}\otimes\mathcal{F}_{\mu c}$. Effectively, this means that $F \cdot G (\varphi) = F(\varphi) G(\varphi)$ with pointwise multiplication in $\mathbb C$.
The algebra $\mathcal{A} = (\mathcal{F}_{\mu c},\cdot )$ is thus a commutative $*-$algebra. We equip it with a topology by saying that a sequence $A_n \in \mathcal{A}$ converges to a functional $A$ on $\mathcal C$ 
if for any field configuration $\varphi$ and for any $k\in \mathbb{N}$, the functional derivatives $A_{n}^{(k)}(\varphi)$ converge to $A^{(k)}(\varphi)$ in the topology of distributions of compact support. 

Observe that the field configurations which are considered in the construction of this algebra are not necessarily solutions of any field equation, hence this algebra is called \textbf{off-shell algebra of free classical fields}. The \textbf{on-shell algebra of fields} is obtained by taking the quotient with respect to $\mathcal{I}$, the ideal formed by elements which vanish when evaluated on a field configuration $\varphi$ that is a solution of the chosen equation of motion. We avoid computing this quotient as it is easier to work off shell to obtain the correlation functions of the theory. We will simply ensure that the chosen state that will define the representation of the off-shell algebra, hence the correlation functions, is compatible with taking the quotient, i.e. that the ideal $\mathcal{I}$ generated by the equation of motion is in the kernel of the state.
%Nevertheless, we avoid computing this quotient and, in order to obtain the correlation function of the theory it is easier to work off shell. We have only to make sure that the chosen state that will define the representation of the off-shell algebra is compatible with taking the quotient, i.e. that the ideal $\mathcal{I}$ generated by the equation of motion is in the kernel of the state. 

The canonical quantization of the free theory  is realized by deforming the pointwise product to a non-commutative product in terms of a formal power series with a formal parameter $\hbar$, with $\hbar=0$ corresponding to the classical
theory (pointwise product). 
The deformed products we consider are of the from 
\begin{equation}\label{eq:product-exponential}
F\star_w G = \mathcal{M}\; e^{\Gamma_w}(F\otimes G), \qquad   \Gamma_w = \int d\mu_xd\mu_y w(x,y) \frac{\delta}{\delta \varphi(x)} \otimes \frac{\delta}{\delta \varphi(y)}
\end{equation}
where 
$w\in \mathcal{D}'(M^2)$ is a suitable distribution of order $\hbar$. 

In order to encode the canonical commutation relations,
making the deformed product the correct  {\bf ordinary quantum product}, $w$ is chosen to be equal to $\frac{i}{2}\hbar\Delta$ where $\hbar$ is the Planck constant and $\Delta$ is the retarded minus advanced fundamental solution of the free equation of motion, also called the Pauli-Jordan commutator function or the {\bf causal propagator}. This choice makes the product $\star_{i\hbar\Delta/2}$ (also denoted by $\star$ or even omitted when this does not create confusion),
well defined on polynomial regular functionals and that each term in the series is again (up to a power in $\hbar$) a regular polynomial functional. Observe that the resulting series' only contain finitely many terms and hence, we can treat $\hbar$ as a {\em{number}}, not as a formal parameter, justifying the notation $(\mathcal F_{\normalfont\text{reg}}^p, \star)$ instead of $(\mathcal F_{\normalfont\text{reg}}^p[[\hbar]], \star)$ for the  $*-$algebra of \textbf{free quantum  fields}. In particular, we have implemented the canonical commutation relations, 
\begin{equation}\label{eq:CCR}
[\Phi(f_1),\Phi(f_2)]_\star =   \mathrm{i}\hbar \langle f_1, \Delta f_2    \rangle  , \qquad f_1,f_2 \in C^\infty_0(M).    
\end{equation}
We observe that $\star_{i\hbar\Delta/2}$ cannot be directly extended to microcausal functionals because the singularities of $\Delta$ are not necessarily compatible with the singularities of the functional derivatives of elements of $\mathcal{F}_{\mu c}$.
Hence, elements like the smeared interaction Lagrangian \eqref{eq:LI}, the vertex operators \eqref{eq:vertex} and even non-linear polynomial local functionals are not contained in $\mathcal F_{\normalfont\text{reg}}^p$. 
On the other hand, physically relevant states like the vacuum defined on the quantum algebra $(\mathcal{F}^p_{\text{reg}},\star)$  cannot be extended to  
$\mathcal{F}_{{\mu c}}$.
In order to overcome both problems, a normal ordering procedure is used.
The procedure of \textbf{normal ordering} on regular functionals is to consider formal power series
\[
\no{F}_w   =     \alpha_w^{-1}F \ \in \mathcal{F}_{\text{reg}}[[\hbar]], \qquad F\in\mathcal{F}_{\text{reg}}, 
\] 
where 
\begin{equation}\label{eq:deformation}
\alpha_{w}F = e^{{\tilde\Gamma}_{w}} F,\qquad     \tilde\Gamma_{w} = \frac{1}{2}\int d\mu_xd\mu_y w(x,y) \frac{\delta^2}{\delta \varphi(x) \delta \varphi(y)}
\end{equation}
and where $w\in \mathcal{D}^\prime(M\times M)$, the so-called \textbf{2-point function}, is a suitable distribution, again of order $\hbar$, which 
is assumed to have the following properties: \label{page:admissiblew} 
$w$ is a solution of the free equation of motion up to smooth remainder term,
its antisymmetric part coincides   with
% , as before, with    
% the antisymmetric part of 
$\frac{i}{2}\hbar\Delta$ so that the canonical commutation relations \eqref{eq:CCR} are still valid   if $\star$ is replaced by $\star_w$, and moreover, the symmetric part of $w$ is real and it is chosen in such a way that  the wave front set of $w$ satisfies the so-called microlocal spectrum condition.
We recall that a bi-distribution $u$ satisfies the {\bf microlocal spectrum condition} if 
\begin{equation}\label{eq:microlocal}
\text{WF} (u)  =  \{ (x,y;k_x,k_y) \in T^{*}M^2 \setminus \{0\} \mid (x,k_x)\sim(y,-k_y), k_x \triangleright 0 \}
\end{equation} 
where $(x,k_x)\sim(y,k_y)$ means that  $x$ and $y$ are connected by a null geodesic $\gamma$, with the vectors  $\eta^{-1}k_x$  and $\eta^{-1}k_y$ being tangent to $\gamma$  at $x$ and $y$, respectively, and with $k_y$ being the parallel transport of $k_x$ along $\gamma$, and where  $k_x \triangleright 0$ means that $k_x$ is future pointing.  
Observe that the existence of such functions $w$ is guaranteed by a deformation argument on general globally hyperbolic spacetimes, see e.g. \cite{FNW}.

A state $\omega$ on $(\mathcal{F}^p_{\text{reg}},\star)$ whose $2-$point function $\omega(\Phi(f)\star\Phi(g))$ satisfies the microlocal spectrum condition given in \eqref{eq:microlocal} and whose truncated $n$-point functions with $n\neq 2$ are smooth functions, is called \textbf{Hadamard state} \cite{Rad,BFK96}.
In this paper we shall always consider normal ordering computed with respect to $w$ given by the massive vacuum two-point function, see subsection~\ref{subsection:States}. Normal-ordered fields have finite expectation values when tested on Hadamard states, however, as mentioned above, they are formal power series, in general. 
Restricting to polynomial regular functionals, on the other hand, $\alpha_w$ introduced in \eqref{eq:star-isomorphism} maps to honest regular functionals (if, as explained above, $\hbar $ is treated as a number) and in fact, realizes a $*$-isomorphism between $(\mathcal{F}^p_{\text{reg}},\star)$ and $(\mathcal{F}^p_{\text{reg}},\star_w)$ where $\star_w$ is again given in terms of formula \eqref{eq:product-exponential}.
%This deformation is actually a $*-$isomorphisms, 
Contrary to $\star$, the product in  $(\mathcal{F}^p_{\text{reg}},\star_w)$ can be straightforwardly extended to polynomial local functionals.
Furthermore, taking into account this extension, $(\mathcal{F}^p_{\text{reg}},\star)$ can also be extended accordingly, but the resulting space is not a space of (formal power series over) functionals anymore. It includes formal objects like $\no{F}_w$ with $F\in \mathcal{F}^p_{\text{loc}}$, which are constructed as limits of suitable sequences in regular functionals, see e.g. \cite{BDF}, and
then mapped to ordinary polynomial local functionals by $\alpha_w$ (i.e. $\alpha_w(\no{F}_w)=F$, understood as an element of
  $\mathcal{F}^p_{\text{loc}})$.
%the formal objects like $\no{F}_w$ with $F\in \mathcal{F}^p_{\text{loc}}$ are 
%are understood to be mapped to ordinary polynomial local functionals by $\alpha_w$.

Now, the star-products $\star_w$ constructed via a 2-point function $w$ with the properties stated above, are well defined on polynomial microcausal functionals, but in fact they can be further extended to other elements of $\mathcal{F}_{\mu c}$, such as the exponential of the fields. 
In this section, we will denote by $\mathcal F$ the set of functionals on which $\star_w$ is defined and does not yield only a formal power series, but an honest functional in $\mathcal F$ if $\hbar$ is treated like a number. Then $\mathcal{A}=(\mathcal{F},\star_w)$ is a unital $*$-algebra which contains observables of the theory, such as exponentials of the field. We will discuss this in  detail in subsection~\ref{se:algebras-star-product} where we shall define a set $\mathcal{F}^V$ of functionals which contains $V_a^g$ and which forms an algebra with respect to the  various products used in the present paper. 
Observe that for different choices of $w$, the resulting products $\star_w$ are equivalent 
to each other in the sense of star products 
and thus produce isomorphic algebras. The isomorphism of $(\mathcal{F},\star_{w^\prime})$ with $(\mathcal{F},\star_{w})$ is realized by
\begin{equation}\label{eq:star-isomorphism}
\alpha_{w-w^\prime}F = e^{{\tilde\Gamma}_{w-w^\prime}} F,
\end{equation}
where $\tilde\Gamma_{v}$ is given in \eqref{eq:deformation} and $w-w^\prime$ is smooth.

\subsection{States}\label{subsection:States}
A state is a positive normalized   linear functional on $({\mathcal{F}^p_{\text{reg}},\star})$ and, if it is of Hadamard form, it can be extended to normal-ordered fields and it can thus be represented as a
  positive normalized linear functional on  $\mathcal{A}=(\mathcal{F},\star_w)$.  
%
%A state on $\mathcal{A}=(\mathcal{F},\star_w)$ is a positive normalized functional on $\mathcal A$. %It is well-known that normalized functionals exist if it is of Hadamard form, it can be extended to normal-ordered fields and it can thus be represented as a functional on  $\mathcal{A}=(\mathcal{F},\star_w)$.   is a positive normalized functional over 

  If $w$ is a positive bidistribution,
an example of a state on $\mathcal{A}=(\mathcal{F},\star_w)$ is given by evaluation at the vanishing field configuration $\varphi=0$, 
\[
\omega(A) := {\rm{ev}}_0(A):= A(0), \qquad A \in (\mathcal{F},\star_w) \ , 
\]
whose 2-point function $\omega(\Phi(f){\star_w}\Phi(g))$ coincides with $w$. This state is quasifree, i.e. its $n$-point functions are determined by the 2-point functions, and vanish for odd $n$. Since a generic element $A\in (\mathcal{F},\star_w)$ is not invariant under the action of the $*-$homomorphism $\alpha_{w-w'}$ that intertwines  $(\mathcal{F},\star_w)$ and $(\mathcal{F},\star_{w'})$, the state   ${\rm{ev}}_0$  depends on the choice of $w$.
%the presentation over which ${\rm{ev}}_0$ is applied.
%

\subsection{Bogoliubov map}\label{se:BogoliubovMap}

A direct construction of the product among interacting fields is usually not available, however, it is possible to use ideas of Bogoliubov and St\"uckelberg (\cite{BS,SR}, see \cite{Book,Due19} for reviews)
to represent interacting fields as power series with coefficients in $\mathcal{F}_{\mu c}$ (with formal parameter given by the coupling constant). 

The starting point of this construction is in the splitting of the Lagrangian in a free and an interacting part. 
In two dimensions there are  interacting Lagrangians like $\mathcal{L}_I$ in \eqref{eq:LI}, which give rise to series which are in fact convergent at least in the weak sense, namely when evaluated on a generic field configuration $\varphi\in\mathcal{C}$.  
As a first step, in order to avoid infrared divergences, one obtains the interacting potential as a local field smearing $\mathcal{L}_I$ with a suitable test function $g\in C^\infty_0(M)$,   $g\geq 0$,
$ V^g := \frac{1}{\lambda}\int_M  \mathcal{L}_I   g   d\mu_x$.
In our case 
\begin{equation}\label{eq:interacting-potential}
 V^g :=  \frac{1}{2}  \left( V_a^g  + V_{-a}^g   \right)  = \int  \cos(a\varphi(x))   g(x)   d\mu_x, \qquad g\in C^\infty_0(M),
\end{equation}
where $V^g_a$ is given in \eqref{eq:vertex}.
We then have to introduce the time ordering map $\mathcal{T}$ on multilocal fields by means of a family of multilinear maps 
\[
\mathcal{T}_n : \mathcal{F}_{\text{loc}}^{\otimes n} \to \mathcal{F}_{\mu c}, \quad \textrm{with} \ \mathcal T_0=1, \ \mathcal T_1=\id \, ,
\]
where $\TT({\prod_{i=1}^n} F_i)=\mathcal T_n({\prod_{i=1}^n}^\otimes F_i)$ for $F_i \in \mathcal F_{\textrm{loc}}$, see~\cite{FR}.
 The $\mathcal{T}_n$ are such that $\TT$ is symmetric and respects the causal factorisation property, which requires that if $\{F_i\}\subset \mathcal{F}_{\text{loc}}$ and $\{G_j\}\subset \mathcal{F}_{\text{loc}}$ 
\[
\mathcal{T}({\prod_i} F_i  {\prod_j} G_j  ) =   \mathcal{T}({\prod_i} F_i)  \star   \mathcal{T}({\prod_j} G_j  ),  \qquad       F_i  \gtrsim G_j ,\; \forall i,j.
\]
where $F  \gtrsim G $ means that $\text{supp} (F)$ does not intersect $J^{-} \supp (G)$, the past of $\supp (G)$.

Usually, the construction of a time ordering map is not straightforward and it requires a renormalization procedure due to divergences which are present when the factors have overlapping supports.
In two-dimensional theories, the divergences present in the propagators are of logarithmic nature and thus these divergences can be avoided at least
when the local functionals do not contain derivatives of the field and are polynomial in the field configurations. As we shall see in subsection \ref{se:algebras-star-product} and as was observed long ago for the Euclidean case, the same is true also in the case  of the smeared interaction Lagrangian $\int \mathcal{L}_I g d\mu$ \eqref{eq:interacting-potential} or $V_a^g$ 
 in \eqref{eq:vertex} with a sufficiently small $a$.
Hence, in these cases, 
the causal factorisation property determines completely the time ordering map.
 
More concretely, on polynomial local functionals which are constructed with fields that do not contain derivatives,
\[
\mathcal{F}_{\text{loc}}^{\Phi} = \left\{ F \in \mathcal{F}^{p}_{\text{loc}} \;\left|\;   F = \int A(\varphi(x))   f(x) d\mu_x, \; f\in C^\infty_0(M;\mathbb{R}),\; A\in \mathcal{R}_\mathbb{R}[X] \right.\right\},
\] 
where $\mathcal{R}_\mathbb{R}[X]$ is the set of polynomials in   one variable $X$ with coefficients in $\mathbb{R}$,
  the time ordering map does not require renormalization
% , when $M$ is 2-dimensional Minkowski space 
and induces a well defined associative product $\cdot_T$ when $M$ is 2-dimensional Minkowski space, see
\cite{FR}. Observe that a definition of the time-ordered product consistent with causal factorisation and with the first-order
condition $\mathcal{T}_1(F) = F$ is
\begin{equation}\label{eq:time-ordering}
F_1\cdot_{T} \dots \cdot_{T}  F_n := 
F_1\cdot_{w^{\text{F}}} \dots \cdot_{w^{\text{F}}}  F_n := 
\mathcal{T}^{w^{\text{F}}} (F_1\otimes \dots \otimes  F_n) = \mathcal{M} e^{\sum_{i<j}\Gamma_{w^{\text{F}}}^{ij}}   \left( F_1\otimes \dots \otimes  F_n \right),   \qquad F_i \in \mathcal{F}_{\text{loc}}^{\Phi}
\end{equation}
where $\Gamma_{w^{\text{F}}}^{ij} = \int d\mu_xd\mu_y w^{\text{F}}(x,y) \frac{\delta}{\delta \varphi^i(x)} \otimes \frac{\delta}{\delta \varphi^j(y)}$
and $\delta/\delta\varphi^i$ is the functional derivatives acting on the $i-$th element of the tensor product.
Here, $w^{\text{F}}$ is the Feynman propagator associated to a two-point function $w$ as described on page \pageref{page:admissiblew}, 
\begin{equation}\label{eq:feynmann}
w^{\text{F}}= w \mathrm{i}\hbar\Delta^A, 
\end{equation}
where $\Delta^A$ is the advanced fundamental solution of the free equation of motion.
Notice that $\mathcal{A}^\Phi = (\mathcal{F}_{\mu\text{c}}^{\Phi}, \cdot_{w^{\text{F}}})$ forms a commutative algebra. 
Here $\mathcal{F}_{\mu\text{c}}^{\Phi}$ is the smaller subset of $\mathcal{F}_{\mu c} $ which contains $\mathcal{F}_{\text{loc}}^{\Phi}$ and where $\cdot_{w^{\text{F}}}$ is defined.
As we shall see in subsection \ref{se:algebras-star-product} the time ordered product can be extended to more singular objects,   like products of $V^g_a$, without using any renormalization procedure.
 
Once the time ordering map is at disposal, we may introduce the {\bf $S-$matrix} of the local functionals as the time ordered exponential of its argument, namely 
\[
S(V)  := \sum_n  \frac{\mathrm{i}^{n}}{n!} \frac{\lambda^n}{\hbar^n} \mathcal{T}^{w^{\text{F}}}
%\mathcal{T}
(V^{\otimes n}), \qquad V\in\mathcal{F}_{\text{loc}}.
\]
Observe that by definition of the time ordering, the $S-$matrix is a formal power series in $\lambda$ and a  Laurent series in $\hbar$.
The {\bf  relative $S-$matrix} is then defined as  
\[
S_V(F) := S(V)^{-1}\star S(V+F), \qquad V,F\in\mathcal{F}_{\text{loc}}
\]
where the inverse is calculated in the sense of formal power series.
Finally, with the relative matrix at disposal we get the {\bf Bogoliubov map} of a local functional $F$ which is 
\[
R_V(F) := \left. -\mathrm{i}\frac{\hbar}{\lambda} \frac{d}{d\mu}  S_V(\mu F) \right|_{\mu=0}, \qquad V,F\in\mathcal{F}_{\text{loc}}.
\]
Observe that $R_V(\Phi)$, for $\Phi$ as in \eqref{eq:linear-field}, solves the 
interacting equation of motion with interaction term given by $V^{(1)}$.
The algebra of interacting fields is thus the smaller subalgebra of $\mathcal{A}[[\lambda]]$ which contains $\{R_V(F)\mid F\in \mathcal{F}_{\text{loc}}\}$.
Notice that even if $F$ is a local functional, $R_V(F)$ in $\mathcal{A}$ is in general not local.

\subsection{Algebras generated by vertex operators and linear fields}\label{se:algebras-star-product}

In this section we shall see that some of the formal power series presented in the previous sections are actually ordinary elements of a well defined algebra when we specialize the construction for a $2-$dimensioanl Minkowski space   $M$.  

We are actually interested in finding a set of functionals which contains 
the vertex operators $V_a^f$ given in \eqref{eq:vertex}, the linear fields $\Phi(f)$ given in $\eqref{eq:linear-field}$
and their adjoints for $f\in C^{\infty}_0(M)$ and $a$ sufficiently small. 
Furthermore, the $\star_w-$products, $\cdot_{w^{\text{F}}}$ and $\cdot_{\text{Re}w}$ need to be internal in the set we are looking for, when $w$ has the form of two-point functions or Feynman propagators used in the next sections, namely $w\in \mathcal{D}'(M^2)$ is such that
\begin{equation}\label{eq:form-of-w}
\begin{aligned}
w(x,x') &= \hbar H_0^\mu(x,x') + \frac{\mathrm{i} \hbar}{2} \Delta_0(x-x') +s(x,x')\\
&:= -\frac{\hbar}{4\pi}  \log\left(\frac{(x-x')^2+i0^+(x-x')^0}{4\mu^2}\right)  +s(x,x')
\end{aligned}
\end{equation}
where $\mu$ is a fixed positive length scale, $H_0^\mu$ and $\Delta_0$ 
 
are real distributions whose explicit form is  
\begin{equation} \label{eq:form-of-H0-Delta}
\begin{aligned}
H_0^\mu(x,x') &= \mathcal{H}_0^\mu(x-x') =  - \frac{1}{4\pi} \log\left| \frac{(x-x')^2}{4\mu^2}\right|
= - \frac{1}{4\pi} \log\left| \frac{-{(x-x')^0}^2+{(x-x')^1}^2}{4\mu^2}\right|
, 
\\
\Delta_0(x) &= \frac{1}{2} \left(\theta(-x^0 -|x^1|)- \theta(x^0 -|x^1|)   \right) .
\end{aligned}
\end{equation}
Furthermore, 
% are real distributions and 
 $s$ in \eqref{eq:form-of-w} is a continuous function on $M^2$ such that if $(x,x';k,k')\in\WF(w)$ then $k+k'= 0$. 
 Notice that $\Delta_0$ is the causal propagator of the massless theory and that, denoting by $\Delta_m$ the causal propagator of the massive theory, we have that
 $\Delta_m-\Delta_0 = (1-I_0(m\sqrt{x^2}))\Delta_0$ where $I_0$ is the modified Bessel function of the first kind.  $\Delta_m-\Delta_0$  is actually a continuous function whose wave front set satisfies the condition required for $s$ in  \eqref{eq:form-of-w}.
 Furthermore, 
as we shall see in section \ref{se:convergence-S(V)} and in section \ref{se:thermal}   the symmetric part of both the massive, massless, thermal two-point functions used in this paper have the form $\hbar H_0^\mu+s'$ where $s'$ is a symmetric continuous function whose wave front set satisfies the property required for $s$ in \eqref{eq:form-of-w}. Hence, in all these cases the two-point functions have the form given in \eqref{eq:form-of-w}. 
Further properties of $H_0^\mu$ will be discussed after \eqref{eq:distH}.

\begin{df}\label{df:FV}
Let $\mathcal{F}^{V}$ be the vector space formed 
% of functionals generated by vertex operators and linear fields is formed 
by finite linear combinations of elements of the form 
\begin{equation}\label{eq:Gabnmf}
G_{a,n,m}(f) = \int_{M^{n+m}}   e^{\mathrm{i}\sum_{j}  a_j\varphi(x_j)} e^{- \sum_{i<j}  a_i a_j \hbar {H}_0^\mu(x_i,x_j) } 
f(X,Y) \prod_{i=1}^nd\mu_{x_i}       \prod_{j=1}^m   \varphi(y_j) d\mu_{y_j}
\end{equation}
% \begin{equation}\label{eq:Gabnmf}
% G_{a,b,n,m}(f) = \int_{M^{n+m}}   e^{\mathrm{i}\sum_{j}  a_j\varphi(x_j)} e^{- \sum_{i<j}  a_i a_j \hbar {H}_0^\mu(x_i,x_j) } e^{\mathrm{i} \sum_{i<j}  b_i b_j \frac{\hbar}{2} \Delta_0(x_i,x_j) }    
% f(X,Y) \prod_{i=1}^nd\mu_{x_i}       \prod_{j=1}^m   \varphi(y_j) d\mu_{y_j}
% \end{equation}
where $a$ is in $(-\alpha,\alpha)^n\subset \mathbb{R}^n$ with $\alpha^2<\frac{4\pi}{\hbar}$ 
and $f$ is a compactly supported bounded function on $M^{n+m}$, $X=(x_1,\dots,x_n)$ and $Y=(y_1,\dots,y_m)$. Furthermore, $f$ is chosen in such a way that for every $l$,
\begin{equation}\label{eq:f-WF}
\WF(f)\subset L_{n+m} := \left\{(x_1,\dots, x_{n+m}; k_1, \dots, k_{n+m})\in T^*M^{n+m}\setminus \{0\}\;\left|\; \sum_{i=1}^{n+m} k_i=0\right.\right\}
\end{equation}
% \begin{equation}\label{eq:sumWF}
% \WF(G_{a,n,m}(f)^{(l)})\subset L_l := \left\{(x_1,\dots, x_l; k_1, \dots, k_l)\in T^*M^l\setminus \{0\}\;\left|\; \sum_{i=1}^l k_i=0\right.\right\}
% \end{equation}
and
 ${H}_0^\mu$ is given in \eqref{eq:form-of-w}.
\end{df}
  \noindent
As we shall discuss later, we notice that ${H}_0^\mu$  which
%and $\Delta_0$ which 
appears in \eqref{eq:form-of-w} and in Definition \ref{df:FV} is  the symmetric Hadamard function of the massless field given in \eqref{eq:distH}. 
\begin{prop}\label{prop:FVinFmuc}
It holds that $\mathcal{F}^{V}\subset \mathcal{F}_{\mu c}$.
\end{prop}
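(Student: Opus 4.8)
Since $\mathcal{F}_{\mu c}$ is a vector space, it suffices to show that each generator $G:=G_{a,b,n,m}(f)$ of \eqref{eq:Gabnmf} lies in $\mathcal{F}_{\mu c}$; that is, one must check that $G$ is a well-defined smooth functional on $\mathcal{C}$, that each functional derivative $G^{(l)}(\varphi)$ belongs to $\mathcal{E}'(M^l)$, and that $\WF(G^{(l)})$ is disjoint from $\overline{V}^+\cup\overline{V}^-$. The plan is to deal with well-definedness and smoothness via a local-integrability estimate for the numerical prefactor in \eqref{eq:Gabnmf}, to obtain compactness of the supports of the derivatives by elementary bookkeeping, and to deduce the wave front set condition from the defining property \eqref{eq:sumWF} of $f$ together with the two-dimensional light-cone geometry.

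\emph{Well-definedness and smoothness.} For fixed configurations $X,Y$ the integrand in \eqref{eq:Gabnmf} is smooth in $\varphi$, and each of its $\varphi$-derivatives is bounded, locally uniformly in $\varphi$, by an $\|\cdot\|_\infty$-multiple of its modulus. Using the explicit logarithmic form of $H_0^\mu$ in \eqref{eq:form-of-w} and the reality of $\Delta_0$, this modulus equals $\prod_{i<j}\bigl|(x_i-x_j)^2/(4\mu^2)\bigr|^{a_ia_j\hbar/(4\pi)}$, while $f$ is continuous of compact support and the $\varphi(y_j)$ are smooth; so everything comes down to local integrability on $M^n$ of the displayed product. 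In null coordinates $(x-x')^2$ factorises as a product of the two null separations, so the product factorises into $\prod_{i<j}|u_i-u_j|^{a_ia_j\hbar/(4\pi)}$ times the same expression in the other null variable, each a one-dimensional log-gas (Coulomb-gas) integrand. Near a stratum where a subset $S$ of the points collides, the relevant exponent satisfies $\sum_{i<j\in S}a_ia_j\tfrac{\hbar}{4\pi}\ge-\tfrac{\hbar}{8\pi}\sum_{i\in S}a_i^2>-\tfrac{|S|}{2}\ge 1-|S|$, which is strictly above the integrability threshold; the strict inequality is exactly the place where $\alpha^2<4\pi/\hbar$ enters (cf.\ \cite{Froe76,FS,BR16}). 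Hence $G(\varphi)$ is finite for all $\varphi$, differentiation under the integral sign is legitimate, and $\varphi\mapsto G(\varphi)$ is a smooth functional on $\mathcal{C}$.

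\emph{Supports and wave front sets.} Applying $l$ functional derivatives to the integrand of \eqref{eq:Gabnmf} yields a finite sum of terms in which $l$ of the internal points have been promoted to external arguments by delta functions — $ia_j\delta(\,\cdot\,-x_j)$ from a vertex factor $e^{ia_j\varphi(x_j)}$, or $\delta(\,\cdot\,-y_j)$ from a factor $\varphi(y_j)$ — the remaining integrations being performed against $f$ and the integrable prefactor. Each term is therefore a distribution on $M^l$ whose support is contained in a coordinate projection of $\supp f$, hence compactly supported, so $G^{(l)}(\varphi)\in\mathcal{E}'(M^l)$; the surviving $\varphi$-dependent factors are smooth in the integration variables and do not enlarge the wave front set, which is governed only by the conormals of $H_0^\mu$, $\Delta_0$ and the coincidence deltas. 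Since these all depend on coordinate differences, and since $f$ was chosen in Definition \ref{df:FV} so that \eqref{eq:sumWF} holds, one has $\WF(G^{(l)})\subset L_l=\{(x_1,\dots,x_l;k_1,\dots,k_l)\in T^*M^l\setminus\{0\}:\sum_i k_i=0\}$. It remains to observe that in two dimensions $L_l\cap(\overline{V}^+\cup\overline{V}^-)=\emptyset$: if every $k_i$ were in the closed forward cone, so that $k_i^0\ge|k_i^1|\ge0$ in Minkowski coordinates, then $\sum_i k_i=0$ would force $\sum_i k_i^0=0$ and hence $k_i=0$ for all $i$, contradicting that the covector part is nonzero; the backward cone is excluded by the identical argument. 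Therefore $\WF(G^{(l)})\cap(\overline{V}^+\cup\overline{V}^-)=\emptyset$, so $G\in\mathcal{F}_{\mu c}$, and finite linear combinations remain in $\mathcal{F}_{\mu c}$.

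The one genuinely substantial step is the local integrability of the prefactor. Integrability of a single factor $\bigl|(x_i-x_j)^2\bigr|^{a_ia_j\hbar/(4\pi)}$ is immediate from $|a_ia_j|<\alpha^2<4\pi/\hbar$, but one must control all the factors simultaneously at every multiple-coincidence stratum; it is there — through the elementary bound $\sum_{i<j\in S}a_ia_j\ge-\tfrac12\sum_{i\in S}a_i^2$ combined with $\alpha^2<4\pi/\hbar$ — that the ultraviolet-finiteness assumption on the parameter $a$ is really used. The remainder of the argument, tracking supports and invoking the constraint \eqref{eq:sumWF}, is routine.
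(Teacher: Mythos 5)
Your proof is correct and follows essentially the same route as the paper: local integrability of the prefactor $e^{-\sum_{i<j}a_ia_j\hbar H_0^\mu(x_i,x_j)}$ gives $G^{(l)}(\varphi)\in\mathcal{E}'(M^l)$, and the condition $\sum_i k_i=0$ in $L_l$ combined with the definition's requirement \eqref{eq:sumWF} excludes the closed forward and backward cones by the identical ``sum of future-pointing covectors vanishes only if all vanish'' argument. The only difference is that you spell out the null-coordinate power-counting estimate (the charge bound $\sum_{i<j\in S}a_ia_j\ge-\tfrac12\sum_{i\in S}a_i^2$ together with $\alpha^2<4\pi/\hbar$), whereas the paper delegates this step to Proposition 6 of \cite{BR16} and Lemma \ref{le:Cauchy-Determinant}.
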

\begin{proof}
Observe that operating as in the proof of   Theorem 6 of \cite{BR16}, (see also Lemma \ref{le:Cauchy-Determinant} below) we can show that $e^{- \sum_{i<j}  a_i a_j \hbar {H}_0^\mu(x_i,x_j) }$ is a locally integrable function in $M^n$. Hence, when multiplied with a   bounded compactly supported functions $f$ it gives a distribution of compact support (which is of order $0$).
By direct inspection we then have that for every $l$, $G_{a,n,m}(f)^{(l)}\in \mathcal{E}'(M^l)$.
Notice that $\varphi$ is smooth and if
$(x,x';k,k')\in\WF(H^\mu_0)$ then $k+k'=0$, hence, by an application of the second part of Theorem 8.2.10 in \cite{Hoer1},
 $\WF(e^{-\sum_{i<j}a_ia_j H^\mu_0 }f)\subset L_{m+n}$.
Now, by applying Theorem 8.2.12 in \cite{Hoer1} we get
 $\WF(G_{a,n,m}(f)^{(l)})\subset L_l$.
This implies that elements of $\mathcal{F}^{V}$ are microcausal functionals, actually, recalling the form of $L_l$ given in Definition \ref{df:FV} we have that  
\[ 
L_l \cap (\overline{V}^+ \cup \overline{V}^-)   = \emptyset 
\]
because if $(x;k_1,\dots, k_l)\in \overline V^+$, $\sum_i k_i$ is a future pointing covector which vanishes only if $k_i=0$ for every $i$ and thus  $(x;k_1,\dots, k_l) \not\in L_l$. Hence, $\WF(G_{a,n,m}(f)^{(l)})\cap \overline{V}^+  = \emptyset$. The same holds replacing $V^+$ with $V^-$. We have thus that $\mathcal{F}^V\subset \mathcal{F}_{\mu c}$.
\end{proof}

We shall now see that the various products we are using in the rest of the paper are internal in $\mathcal{F}^V$.
Hence, thanks to the following proposition, we shall define the $*-$algebra in which we operate as $\mathcal{A}_w^V:=(\mathcal{F}^V,\star_w)$, where $w$ is as in \eqref{eq:form-of-w}.
% {\tt   check it for $\cdot_{w^{\rm F}}$ it seems to be not true (either say that $f$ is bounded or add $\Delta^A$ and $\Delta^R$ to $G$ )}
%{\tt   are $\cdot_{Re w}$ and $\cdot_{w^F}$ already defined? }

\begin{prop}\label{prop:algebra-well-defined}
The operations $*$, $\star_{w}$ in \eqref{eq:product-exponential}, $\cdot_{\text{\rm Re}w}$ and $\cdot_{w^{{\rm F}}}$ in \eqref{eq:time-ordering} with $w$ of the form given in \eqref{eq:form-of-w} and $w^{{\rm F}}$ in \eqref{eq:feynmann} are closed in $\mathcal{F}^{V}$.
Hence $\mathcal{A}_w^V=(\mathcal{F}^V,\star_w)$ is a $*-$algebra over the complex numbers. 
\end{prop}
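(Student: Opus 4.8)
The plan is to establish closedness of each of the four operations by reducing to a computation on the generators $G_{a,b,n,m}(f)$ of $\mathcal{F}^V$ and checking that the outcome is again a finite linear combination of functionals of the form \eqref{eq:Gabnmf}. For the involution $*$ this is immediate: complex conjugation replaces $e^{i\sum_j a_j\varphi(x_j)}$ by $e^{-i\sum_j a_j\varphi(x_j)}$, leaves the (real) kernels $H_0^\mu$ and $\Delta_0$ and the real factors $\varphi(y_j)$ untouched, conjugates $f$, and flips the overall sign of the $\Delta_0$-dressing; using that $\Delta_0$ is antisymmetric one reabsorbs this sign by relabelling the vertex variables, so that $G_{a,b,n,m}(f)^*$ is again of the form \eqref{eq:Gabnmf} with $a\mapsto -a$ and $f\mapsto\bar f$ (up to relabelling). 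Since $-a$ again lies in $(-\alpha,\alpha)^n$ and the cones $L_l$ of \eqref{eq:sumWF} are invariant under $k\mapsto -k$, the condition \eqref{eq:sumWF} is preserved, so $*$ is closed. For the three products $\star_w$, $\cdot_{\text{Re}w}$, $\cdot_{w^{\text{F}}}$, all of which are of the exponential type \eqref{eq:product-exponential} (resp.\ \eqref{eq:time-ordering}) with kernels $w$, $\text{Re}\,w$, $w^{\text{F}}$ of the form \eqref{eq:form-of-w}, bilinearity reduces everything to evaluating the product of two generators.

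The core step is thus to expand $G_{a,b,n,m}(f)\star_w G_{a',b',n',m'}(f')=\mathcal{M}\,e^{\Gamma_w}\big(G_{a,b,n,m}(f)\otimes G_{a',b',n',m'}(f')\big)$ explicitly, using the elementary rules $\tfrac{\delta}{\delta\varphi(u)}e^{ia\varphi(x)}=ia\,\delta(u-x)\,e^{ia\varphi(x)}$ and $\tfrac{\delta}{\delta\varphi(u)}\varphi(y)=\delta(u-y)$, the latter annihilated by a second derivative; hence expanding $e^{\Gamma_w}$ yields a \emph{finite} sum of terms indexed by which linear factors $\varphi(y_j),\varphi(y'_j)$ get contracted and how they are paired. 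In each term the contractions among vertex factors exponentiate cleanly into $\exp\!\big(-\sum_{i,j}a_ia'_j\,w(x_i,x'_j)\big)$, and substituting $w=\hbar H_0^\mu+\tfrac{i\hbar}{2}\Delta_0+s$ from \eqref{eq:form-of-w}, the $\hbar H_0^\mu$-part of this factor combines with the two incoming $H_0^\mu$-dressings to give exactly the $H_0^\mu$-dressing of the concatenated vertex family, the $\tfrac{i\hbar}{2}\Delta_0$-part assembles (with the appropriate sign rule for the charges) into the $\Delta_0$-dressing, and the continuous remainder $s$, together with the contributions of the vertex–linear and linear–linear contractions, is folded into a new smearing multiplying the surviving $\varphi(y)$-factors. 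One then checks that the new smearing is compactly supported and that \eqref{eq:sumWF} holds for the result: the first point is clear; the second follows because $L_l$ is stable under the operations that occur (adding covectors, restricting to partial diagonals, integrating out variables), while the logarithmic dressings are genuine distributions of order zero — indeed locally integrable functions — by the Cauchy-determinant estimate of Lemma~\ref{le:Cauchy-Determinant}, valid precisely under the bound $\alpha^2<\frac{4\pi}{\hbar}$. The same argument handles $\cdot_{\text{Re}w}$ (the $\Delta_0$-cross term is simply absent) and $\cdot_{w^{\text{F}}}$, where $w^{\text{F}}=w+i\hbar\Delta^A$ and one uses that the massless Feynman parametrix has the same logarithmic singularity as $H_0^\mu$ up to bounded terms. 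Together with Proposition~\ref{prop:FVinFmuc} and the associativity of the exponential products already established on $\mathcal{F}_{\mu c}$, this gives that $\mathcal{A}_w^V=(\mathcal{F}^V,\star_w)$ is a $*$-algebra.

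The part I expect to be the main obstacle is twofold. First, the local-integrability claim for the Feynman dressing: since $w^{\text{F}}$ carries an $i0^+$-prescription, the exponent $-\sum_{i,j}a_ia'_j\,\hbar H_0^{\mu,\text{F}}(x_i,x'_j)$ is complex, the logarithm acquires a branch, and the Cauchy-determinant computation underlying Lemma~\ref{le:Cauchy-Determinant} must be redone keeping track of this branch; one needs the bound to be uniform on compact sets so that the $i0^+\to 0$ limit exists in $\mathcal{D}'$ and the product genuinely lands back in $\mathcal{F}^V$ rather than in some larger space. Second, the bookkeeping of the mixed vertex–linear contractions: these a priori insert a factor $w(x_i,y_j)$ — hence a logarithmic singularity — into the smearing, and one has to verify that, after the decomposition \eqref{eq:form-of-w} and a suitable reorganisation of the $\varphi(y)$-factors, what remains really does fit the template \eqref{eq:Gabnmf} (equivalently, that $\mathcal{F}^V$ is stable under absorbing such kernels into the smearing). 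Once these two points are dealt with, the remaining verifications are routine.
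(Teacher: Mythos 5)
Your proposal follows essentially the same route as the paper's proof: expand the exponential bidifferential operators on the generators $G_{a,b,n,m}(f)$, use the decomposition \eqref{eq:form-of-w} to reassemble the $H_0^\mu$- and $\Delta_0$-dressings and fold $s$ into the smearing, invoke the local-integrability argument of Proposition~\ref{prop:FVinFmuc} (i.e.\ Lemma~\ref{le:Cauchy-Determinant}) for the distributional character of the functional derivatives, and obtain \eqref{eq:sumWF} from the additivity of wave front sets under products of distributions --- which the paper makes precise by citing H\"ormander's Theorem~8.2.1 together with the fact that $k+k'=0$ on $\WF(w)$, $\WF(\mathrm{Re}\,w)$ and $\WF(w^{\mathrm{F}})$. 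Of your two flagged obstacles, the first dissolves because \eqref{eq:form-of-w} already stipulates that $w^{\mathrm{F}}$, like $w$, decomposes into a \emph{real} logarithmic part $\hbar H_0^\mu$ plus a real antisymmetric part plus a continuous remainder, so the modulus of the Feynman dressing is exactly the $H_0^\mu$ factor and no branch analysis beyond Lemma~\ref{le:Cauchy-Determinant} is required; the second (absorbing the mixed vertex--linear contraction kernels $w(x_i,y_j)$ into the smearing) is precisely the point the paper itself dispatches with ``by direct inspection'', so your treatment is no less complete than the published one.
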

\begin{proof}
The $*$ operation acts as complex conjugation on $G_{a,n,m}(f)\in\mathcal{F}^V$ and, hence, since ${H}^\mu_0$ is a real distribution, we have that $G_{a,n,m}(f)^*$ is in $\mathcal{F}^V$. The same holds for the linear combinations.

We prove now that $\star_{w}$ is internal in $\mathcal{F}^V$. 
Notice that 
$G_{a,n,m}(f)$
can be obtained as linear combinations of elements of the form 
$\left. \partial_{b_1}\dots \partial_{b_{m'}} G_{(a,b),n+m',0}(f') \right|_{b=0}$ with $a\in (-\alpha,\alpha)^n$, $b\in (-\alpha,\alpha)^{m'}$ and 
for various $m'\in\{0,\dots,m\}$ and with $f'$ which is constructed out of $f$ and $a$.
It is thus sufficient to check that the product of two elements of the from $G_{a,n,0}(f)$ and $G_{a',n',0}(f')$ is internal in $\mathcal{F}^V$. 

To this end, recall that $w$ given in \eqref{eq:form-of-w} has the form 
\[
w = \hbar H_0^\mu +  \frac{\mathrm{i} \hbar}{2} \Delta_0   + s
\]
where $s$ is a continuous function and $\Delta_0$ is a bounded function.

Applying \eqref{eq:product-exponential}
we get 
\[
G_{a,n,0}(f) \star_w G_{a',n',0}(f')
= \mathcal{M}\sum_{l\geq 0 } \frac{1}{l!} \Gamma^l_w(G_{a,n,0}(f) \otimes G_{a',n',0}(f')).
\]
Notice that this series converges because
$\frac{\delta}{\delta  \varphi(y)} e^{\mathrm{i} b \varphi(x)} = \mathrm{i} b e^{\mathrm{i} b\varphi(x)}  \delta(x,y)$
and hence
\[
\Gamma_w(e^{\mathrm{i}b \varphi(x)}\otimes e^{\mathrm{i}b' \varphi(y)}) 
= -(bb' w(x,y))^l e^{\mathrm{i}b \varphi(x)}\otimes e^{\mathrm{i}b' \varphi(y)}
\]
where $w$ is a smooth function almost everywhere, furthermore, $e^{-bb' w}$ is locally integrable at the discontinuity points for $b,b'\in (-\alpha, \alpha)$.
Hence, 
% by direct inspection, using \eqref{eq:product-exponential} and observing that $\frac{\delta}{\delta  \varphi(y)} e^{\mathrm{i}a \varphi(x)} = \mathrm{i} a  e^{\mathrm{i} \varphi(x)}  \delta(x,y)$
it holds that 
\[
G_{a,n,0}(f) \star_w 
G_{a',n',0}(f') = 
 G_{(a,a'),n+n',0}(h)
\]
where $h$ is a function on $M^{n}\times M^{n'}$ obtained from $f$ and $f'$ in the following way
\[
h(X,Y) := e^{- \sum_{i}\sum_j a_ia_j'(  w(x_i,y_j)-\hbar H_0^{\mu}(x_i,y_j) )} f(X) f'(Y)
\]
with $X=(x_1,\dots,x_n)\in M^n$ and $Y=(y_1,\dots, y_{n'})\in M^{n'}$.
Since $w-\hbar H^{0}_\mu$ is locally bounded, $f$ and $f'$ are compactly supported, we have that $h$ is a bounded compactly supported function. Furthermore, if 
$(x,x';k,k')\in\WF(w)$ then $k+k'=0$ and the same holds for $H^\mu_0$. 
Hence by the second part of Theorem 8.2.10  in \cite{Hoer1} about multiplication of distributions
we have that $\WF(h)\subset L_{n+n'}$ 
% which implies that $\WF(G_{(a,a'),n+n',0}^{(l)})\subset L_l$ 
thus completing the proof that $\star_w$ is internal in $\mathcal{F}^V$. 
The proof that the same holds for  $\cdot_{{\rm Re}w}$ and $\cdot_{w^{\rm F}}$ can be obtained in a similar way noticing that
both
${\rm Re}w-H^\mu_0$ and $w^{\rm F} - H^\mu_0$ are locally bounded and that,
if $(x,x';k,k')\in\WF({\rm Re}w)$ or if $(x,x';k,k')\in\WF(w^{\rm F})$, $k+k'=0$. 
 
% To show that the functional derivatives of the result are well defined distributions we make the same observation as
% in the proof of Proposition \ref{prop:FVinFmuc}.
% % We can show that The form of their wave front set can shown 
%   In particular, to
% show that the wave front set of functional derivatives of the product satisfy the property \eqref{eq:sumWF}, we use the second part of Theorem 8.2.10  in \cite{Hoer1} about multiplication of distribution and we observe that if $(x,x';k,k')\in\WF(w)$ then $k+k'=0$. 
% %
% In this way we get that $\star_w$ is an internal operation in  $\mathcal{F}^V$.  
% The proof that the same holds for  $\cdot_{{\rm Re}w}$ and $\cdot_{w^{\rm F}}$ can be done in a similar way noticing that if $(x,x';k,k')\in\WF({\rm Re}w)$ or if $(x,x';k,k')\in\WF(w^{\rm F})$, $k+k'=0$. 
% %
\end{proof}

We have seen that the vertex operators $V^g_a$ and the smeared interaction Lagrangian $V^g$ are elements of $\mathcal{A}^V_w$.
At the same time all the time ordered product of $n$ factors of the form $V^g$ gives an element of $\mathcal{A}^V_w$. 
Hence we may write $S(V^g)$ as a power series   in the coupling constants whose elements are in  $\mathcal{A}^V_w$. 
Later we shall prove that when evaluated  on a generic field configuration the series converges absolutely.

From now on we shall work directly in $\mathcal{A}^V_w = (\mathcal{F}^V,\star_w)$ where all the elements are well defined.
To simplify the notation, whenever it will not be strictly necessary, we shall denote $\mathcal{A}^V_w$ by $\mathcal{A}^V$. 
Furthermore, local fields in $\mathcal{F}^V$ are now understood as being normal ordered with respect to $w$. The quantum states on $(\mathcal{F}_{\text{reg}},\star)$   need to be pulled back to $(\mathcal{F}_{\text{reg}},\star_w)$ accordingly and there extended to   $\mathcal{A}^V$. In particular, the functional ${\rm ev}_0$ on $\mathcal{A}^V_w$ determines a well defined state which is the quasi free state constructed out of the two-point function $w$.
%$\mathcal{F}^V$.

\section{Convergence of the S-matrix of the massive theory}\label{se:convergence-S(V)}

Consider $M$ a two dimensional Minkowski space with Lorentzian signature $(-,+)$.   Let us denote by $\omega_m$ the vacuum state. 
Hence, the canonical two-point function $w^\infty_m$ of the massive Klein Gordon field is given by the vacuum expectation value of the product of two fields, $w^\infty_m(f,g) :=  \omega_m(\Phi(f)\star\Phi(g))$ and it takes form
 
\[
w^\infty_m(f,g) = \lim_{\epsilon\to 0^+}  \hbar\int  f(y)g(x+y) \Delta^+_m(x+i\epsilon e^0)  d\mu_xd\mu_y, \qquad f,g\in C^\infty_0(M)
\]
where, $e^0=(1,0)$ and the regularized integral Kernel $\Delta^+_m$ is given in terms of the function
\begin{equation}\label{eq:massive2pt}
\Delta^+_m(x):=  \frac{1}{2\pi}K_0(m\sqrt{ x^2 }),
\end{equation}
here $K_0$ is the Mac Donald function also known as modified Bessel Function of the second kind
and
$x^2$ is the Lorentzian square $x^2= -{(x^0)}^2 + {(x^1)}^2$.

The Feynman propagator $\Delta^{\rm F}_m$ is obtained as the time ordered version of $\Delta^+_m$,
\[
\Delta^{\rm F}_m(x) = \Delta^+_m(x) + \mathrm{i} \Delta^A_m(x),
\]
with the advanced fundamental solution $\Delta^A_m$ of the massive Klein-Gordon equation.
With this canonical choice of the time ordering, the $S-$matrix of the interaction Lagrangian $V^g$ given in \eqref{eq:interacting-potential}  can thus explicitly be computed,
%. 
\begin{equation}\label{eq:S-matrix}
S(V^g) = \sum_n S_n^{m}(V^g)(\varphi) = \sum_{n} \frac{1}{n!}\left(\frac{\mathrm{i} \lambda}{\hbar}\right)^n\mathcal{T}^{\hbar\Delta^{\rm F}_m}_n({V^{g}}^{\otimes n})
\end{equation}
where the time ordered product of $n-$copies of the interaction Lagrangian can be further expanded as
\[
\mathcal{T}^{\hbar\Delta^{\rm F}_m}_n({V^{g}}^{\otimes n}) = \frac{1}{2^n}\sum_{k=0}^n  \begin{pmatrix} n \\ k \end{pmatrix} \mathcal{T}^{\hbar\Delta^{\rm F}_m}_n({V^g_a}^{\otimes k}\otimes {V^g_{-a}}^{\otimes(n-k)}).
\]
Recalling the definition of the smeared vertex operator $V_a$ given in \eqref{eq:interacting-potential} and 
  the map $\mathcal{T}$ given in \eqref{eq:time-ordering}
we get the following by directly applying the map $\mathcal{T}$ given in \eqref{eq:time-ordering} to  
$V^g_{a_1}\otimes \dots   \otimes V^g_{a_n}$
\[
\mathcal{T}^{\hbar\Delta^{\rm F}_m}_n(V^g_{a_1}\otimes \dots   \otimes V^g_{a_n}) = \int e^{\mathrm{i}\sum_k a_k\varphi(x_k)} e^{-\sum_{1\leq i<j\leq n} a_ia_j\hbar\Delta^{\rm F}_m(x_i,x_j)}g(x_1)\dots g(x_n) d\mu_{x_1}\dots d\mu_{x_n}
\]
where $a_k$ is equal to either $a$ or $-a$ 
and where with a little abuse of notation $\Delta^{\rm F}_m(x,y)=\Delta^{\rm F}_m(x-y)$.
Notice $e^{-\sum_{1\leq i<j\leq n} a_ia_j\hbar\Delta^{\rm F}_m(x_i,x_j)}$ is a locally integrable function because 
 $\Delta^{\rm F}_m=H^\mu_0+s$ where $s$ is a locally bounded function and the local integrability of  $e^{-\sum_{1\leq i<j\leq n} a_ia_j\hbar H^\mu_0(x_i,x_j)}$ can be proved as in the proof Theorem 6 of \cite{BR16} or as in Theorem \ref{thm:convergence} and in Lemma \ref{le:Cauchy-Determinant} below.
 Hence, since $\varphi$ is a smooth function and $g$ is compactly supported and smooth $\mathcal{T}^{\hbar\Delta^{\rm F}_m}_n(V^g_{a_1}\otimes \dots   \otimes V^g_{a_n})$ is well defined for every value of $\hbar$.
In the next we want to prove the convergence of the sum given in \eqref{eq:S-matrix} whenever $g$ has sufficiently small compact support. Later we shall remove the hypothesis on the smallness of the support. 
In the case $m=0$ the convergence of the corresponding $S-$matrix has been obtained in Theorem 6 in \cite{BR16}.

The idea we want to implement is the following. We notice that 
\begin{equation}\label{eq:stima-Re}
\left|\mathcal{T}^{\hbar\Delta^{\rm F}_m}_n(V^g_{a_1}\otimes \dots   \otimes V^g_{a_n}) \right|
\leq
\int e^{-\sum_{1\leq i<j\leq n} a_ia_j\hbar H_m(x_i,x_j)}g(x_1)\dots g(x_n) d\mu_{x_1}\dots d\mu_{x_n}
\end{equation}
where ${H_m}$ is the real part of $\Delta^{\rm F}_m$. Its form is discussed in \eqref{eq:Hm} and \eqref{eq:hm}.
We furthermore observe that the right hand side of \eqref{eq:stima-Re} is the expectation value in a state ${\rm{ev}}_0(F)=F(0)$
of the $n-$th power of $V^g$ computed 
with respect to the commutative product $\cdot_{\hbar H_m}$   and 
it is positive for every $n$.
Hence, we get
\begin{equation}\label{eq:step-local-g}
\left| S_n^{m}(V^g)(\varphi) \right| \leq 
\frac{1}{n!}\frac{\lambda^n}{\hbar^n}{\rm{ev}}_0(\mathcal{T}^{\hbar H_m}_n({V^g}^{\otimes n}))  
\end{equation}
where $\mathcal{T}^{\hbar H_m}$ is the map defined in \eqref{eq:time-ordering} constructed with   $w^F$ replaced by $\hbar H_m$. 
As discussed above, this is a deformation of the pointwise product which is like the time ordering map with $H_m$ at the place of $\Delta^{\rm F}_m$.

\begin{rem}\label{rem:estimates} If we have an estimate such as \eqref{eq:step-local-g} for every $n$,
and if the right hand side gives rise to a convergent series, the left hand side gives an absolutely  convergent series, and the result is less than or equal to the limit on the right hand side. Observe that we will state such inequalities even when considering formal power series, as a matter of notation, meaning that if the series converge, the inequality holds. A similar argument is used in Proposition \ref{pr:conditioning} to which we refer for further details. 
\end{rem}

Hence, in this sense, we have
\begin{equation}\label{eq:estimate-S(V)}
|S(V^g)|   \leq \sum_n |S^m_n(V^g)| \leq {\rm{ev}}_0\left( \exp_{\hbar{H_m}}\left(\frac{\lambda}{\hbar}V^g\right)\right)
\end{equation}
where $\exp_{\hbar H_m}$ is the time ordered exponential constructed with the time ordering map $\mathcal{T}^{\hbar H_m}$. 
We furthermore observe that 
$\exp_{\hbar H_m}\left(\frac{\lambda}{\hbar}V^g\right)$ is the exponential of $\frac{\lambda}{\hbar}V^g$ 
with respect to $\cdot_{\hbar H_m}$. Hence, we can thus bound $|S(V)|$ with a power series of $\frac{\lambda}{\hbar}V$ 
computed in the commutative $*-$algebra
$(\mathcal{F}^V,\cdot_{\hbar H_m})$
 evaluated in state ${\rm{ev}}_0$.
Now, adapting similar methods developed for the case of Euclidean background to the Lorentzian one, shall show how to control the expectation value of the exponential computed in the massive theory with the massless one.

\subsection{The Cauchy-Schwarz inequality, conditioning and inverse conditioning.}

We draw from the techniques of conditioning and inverse conditioning to control the massive theory by the massless one, cf. \cite{Froe76} and  \cite{GR75a,GR75b}. In Proposition \ref{pr:conditioning} and Proposition \ref{pr:inverse-conditioning2}   in the appendix we recall the main points of these techniques  as far as they are relevant for our constructions. 
We start by recalling that a bi-distribution $w:C_0^\infty(M;\mathbb{R})^2\to \mathbb{R}$ is positive if $w(f,f)\geq 0$ for every $f \in C_0^\infty(M;\mathbb{R})$. 
For such distributions, we have the following theorem in which we use the $*-$algebras generated by $V^g=\int \cos(a\varphi(x))g(x)d\mu_x$ whose elements are contained in $\mathcal{F}^V$. The construction of these $*-$algebras is also discussed in Remark \ref{rem:how-use-it}. The expectation values of exponentials of $V^g$ appearing below need to be understood as limits of the corresponding partial sums. Later we shall prove that the partial sums converge when evaluated in ${\rm{ev}}_{0}(F)=F(0)$.

\begin{thm}\label{th:control}
Let us consider two positive, real valued,  symmetric, bi-distributions $w_1$ and $w_0$ in $\mathcal{D}'(M^2)$ such that $\cdot_{w_i}$ is well defined on $\mathcal{F}^V$. 
Suppose that 
\[
w_0= P - N + w_1 
\]
where $P$ and $N$ are also positive symmetric bi-distributions   and that $N$ can be restricted on the diagonal and there it is bounded, namely,
\[
\sup_x\{N(x,x)\}\leq K
\]
for some positive constant $K$.
Then for $V^g=\int \cos(a\varphi(x))g(x)d\mu_x$ as in \eqref{eq:interacting-potential}, assuming that the partial sums defining the right hand side converge, we have the following estimates: 
\[
{\rm{ev}}_{0}(\exp_{w_1}(\lambda V^g))\leq  {\rm{ev}}_{0}(\exp_{v_0}(\lambda V^g)) \leq 
2 {\rm{ev}}_{0}(\exp_{w_0}(2e^{a^2\frac{K}{2}} \lambda V^g)) 
\]
where $v_0$ is $w_0+N$ and where we   used the notation of \eqref{eq:estimate-S(V)}.
\end{thm}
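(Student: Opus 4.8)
The plan is to reduce both inequalities to the two classical tools of constructive field theory that the section title advertises: \emph{conditioning} (for the left inequality) and \emph{inverse conditioning} via the Cauchy--Schwarz inequality (for the right one), adapted to the present Coulomb-gas setting. The starting point is the representation that makes everything explicit. For $w$ real, symmetric and positive, $\cdot_w$ is the commutative product \eqref{eq:product-exponential} and $(\mathcal F^V,\cdot_w)$ is a commutative $*$-algebra on which ${\rm{ev}}_0$ is a state, since ${\rm{ev}}_0(F^*\cdot_w F)=\sum_k\frac1{k!}\int w^{\otimes k}\big(\overline{F^{(k)}(0)}\otimes F^{(k)}(0)\big)\ge 0$ because $w^{\otimes k}$ is positive definite. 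Expanding the $\cdot_w$-exponential of $V^g=\int\cos(a\varphi(x))g(x)\,d\mu_x$ as in \eqref{eq:S-matrix} and evaluating at $\varphi=0$ gives the Coulomb-gas form
\[
{\rm{ev}}_0\big(\exp_{w}(\lambda V^g)\big)=\sum_{n\ge 0}\frac{1}{n!}\Big(\frac{\lambda}{2}\Big)^{n}\sum_{\sigma\in\{\pm1\}^{n}}\int_{M^{n}}\exp\Big(-a^{2}\sum_{1\le i<j\le n}\sigma_i\sigma_j\,w(x_i,x_j)\Big)\prod_{i=1}^{n}g(x_i)\,d\mu_{x_i},
\]
each integrand locally integrable by the Cauchy determinant estimate used in Proposition~\ref{prop:FVinFmuc} (cf. Lemma~\ref{le:Cauchy-Determinant}), each term nonnegative for $\lambda,g\ge 0$, so that partial sums are monotone and Remark~\ref{rem:estimates} lets us pass convergence and inequalities along the whole chain. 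Equivalently, heuristically, ${\rm{ev}}_0(\exp_w(\lambda V^g))=\int e^{\lambda \bar V_w^g(\varphi)}\,d\mu_w(\varphi)$, where $\mu_w$ is the centred Gaussian with covariance $w$ and $\bar V^g_w(\varphi)=\int e^{\frac{a^2}{2}w(x,x)}\cos(a\varphi(x))g(x)\,d\mu_x$; I use this formal picture only to display the ideas, the estimates themselves being carried out on the Coulomb-gas sums above.

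For the left inequality, observe $v_0=w_0+N=w_1+P$ with $P$ positive and symmetric. Since $\mu_{v_0}=\mu_{w_1}\!*\!\mu_{P}$ I would split $\varphi=\varphi_1+\zeta$ with $\varphi_1\sim\mu_{w_1}$, $\zeta\sim\mu_P$ independent, and condition on $\varphi_1$. Jensen's inequality on the probability average over $\zeta$ gives $\mathbb E_\zeta\big[e^{\lambda \bar V^g_{v_0}(\varphi_1+\zeta)}\big]\ge e^{\lambda\,\mathbb E_\zeta[\bar V^g_{v_0}(\varphi_1+\zeta)]}$, and the key point is that the conditional mean of the interaction reproduces \emph{exactly} the interaction relative to $w_1$: using $\cos(a\varphi_1+a\zeta)=\cos a\varphi_1\cos a\zeta-\sin a\varphi_1\sin a\zeta$ and $\mathbb E_\zeta[\cos a\zeta(x)]=e^{-\frac{a^2}{2}P(x,x)}$, $\mathbb E_\zeta[\sin a\zeta(x)]=0$, the extra normal-ordering constant cancels: $\mathbb E_\zeta[\bar V^g_{v_0}(\varphi_1+\zeta)]=\int e^{\frac{a^2}{2}(v_0(x,x)-P(x,x))}\cos(a\varphi_1(x))g(x)\,d\mu_x=\bar V^g_{w_1}(\varphi_1)$, since $v_0-P=w_1$. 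Averaging over $\varphi_1$ yields ${\rm{ev}}_0(\exp_{v_0}(\lambda V^g))\ge{\rm{ev}}_0(\exp_{w_1}(\lambda V^g))$. On the Coulomb-gas side this cancellation is the \emph{finite} identity $e^{-a^{2}\sum_{i<j}\sigma_i\sigma_j v_0(x_i,x_j)}=e^{-a^{2}\sum_{i<j}\sigma_i\sigma_j w_1(x_i,x_j)}\,\mathbb E_\zeta\big[\prod_i e^{ia\sigma_i\zeta(x_i)}e^{\frac{a^2}{2}P(x_i,x_i)}\big]$, so only well-defined quantities appear; this is the content of Proposition~\ref{pr:conditioning}.

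For the right inequality ($v_0=w_0+N$, $\sup_x N(x,x)\le K$) I would again split $\varphi=\varphi_0+\zeta$ with $\varphi_0\sim\mu_{w_0}$, $\zeta\sim\mu_N$, but since there is no Jensen bound in the wanted direction I use Cauchy--Schwarz in the state ${\rm{ev}}_0$ on $(\mathcal F^V,\cdot_{w_0})$. Writing $\bar V^g_{v_0}(\varphi_0+\zeta)=A_\zeta(\varphi_0)-B_\zeta(\varphi_0)$ with $A_\zeta(\varphi_0)=\int e^{\frac{a^2}{2}w_0(x,x)}\cos(a\varphi_0(x))\,\big[\cos(a\zeta(x))e^{\frac{a^2}{2}N(x,x)}g(x)\big]\,d\mu_x$ and $B_\zeta$ the same with both cosines replaced by sines, one has $e^{\lambda \bar V^g_{v_0}(\varphi_0+\zeta)}=e^{\lambda A_\zeta}\cdot_{w_0}e^{-\lambda B_\zeta}$ (commutativity of $\cdot_{w_0}$), whence by Cauchy--Schwarz ${\rm{ev}}_0\big(e^{\lambda A_\zeta}\cdot_{w_0}e^{-\lambda B_\zeta}\big)\le {\rm{ev}}_0\big(e^{2\lambda A_\zeta}\big)^{1/2}{\rm{ev}}_0\big(e^{-2\lambda B_\zeta}\big)^{1/2}$. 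Each factor is then majorised by ${\rm{ev}}_0(\exp_{w_0}(2e^{a^2K/2}\lambda V^g))$: in the Coulomb-gas expansion of ${\rm{ev}}_0(e^{2\lambda A_\zeta})$ the $n$-th vertex activity is $\cos(a\zeta(x))e^{\frac{a^2}{2}N(x,x)}g(x)$, of modulus $\le e^{a^2K/2}g(x)$; bounding every term by its absolute value and using the positivity of the kernel $e^{-a^2\sum_{i<j}\sigma_i\sigma_j w_0(x_i,x_j)}$ turns the sum into the Coulomb-gas series for $w_0$ with the coupling rescaled by $2e^{a^2K/2}$ — the $2$ from the Cauchy--Schwarz doubling of the exponent, $e^{a^2K/2}$ from the normal-ordering bound — and the same majorization, now with $:\!\sin(a\varphi_0)\!:_{w_0}=\frac1{2i}(:\!e^{ia\varphi_0}\!:_{w_0}-:\!e^{-ia\varphi_0}\!:_{w_0})$, controls the $B$-factor. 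Averaging over $\zeta$ gives ${\rm{ev}}_0(\exp_{v_0}(\lambda V^g))\le {\rm{ev}}_0(\exp_{w_0}(2e^{a^2K/2}\lambda V^g))$; a slightly generous accounting of these coefficients gives the stated bound with the harmless overall factor $2$. This is Proposition~\ref{pr:inverse-conditioning2}.

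The step I expect to be the real obstacle is not the combinatorial heart just described but its \emph{rigorous} execution: the Gaussian functionals $\mu_w,\mu_P,\mu_N$ attached to logarithmic covariances do not live on ordinary function spaces in two dimensions, so conditioning, the conditional-mean computation, and the Cauchy--Schwarz step must all be phrased intrinsically in terms of the normal-ordered vertex operators of $\mathcal F^V$ and of the finite-dimensional Coulomb-gas integrals, where the ill-defined self-contractions $w(x,x)$ enter only through the finite differences $w-\hbar H_0^\mu$, $P$ and $N$ (the last with $\sup_x N(x,x)\le K$); and since one is manipulating series that need not converge absolutely, the termwise bookkeeping — transferring convergence from the majorant down the chain via the monotonicity of the nonnegative Coulomb-gas partial sums, in the spirit of Remark~\ref{rem:estimates} and of the estimate \eqref{eq:step-local-g} — has to be handled with care.
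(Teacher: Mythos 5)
Your proposal is correct, and it splits naturally into two halves of different status. For the \emph{left} inequality your argument is essentially the paper's: Proposition~\ref{pr:conditioning} is exactly the Jensen/conditioning step you describe, phrased algebraically via the doubled configuration space $\tilde F(\varphi_1,\varphi_2)=F(\varphi_1+\varphi_2)$ and the deformation maps $\alpha^1_{v}$, $\alpha^2_{w-v}$ in place of an explicit Gaussian splitting, with the same cancellation of the normal-ordering constants and the same transfer of convergence through the nonnegative partial sums. For the \emph{right} inequality you take a genuinely different route. The paper's inverse conditioning (Proposition~\ref{pr:inverse-conditioning2}) passes through the neutral Coulomb-gas partition functions $Z_n(w,g)$, the charge-sector comparison $0\le Z_{2n}^{2q}\le Z_n$ obtained from the sign-flip Lemma~\ref{eq:change-sign-quadratic}, the sandwich $\Xi\le 2\Xi_{\rm cosh}\le 2Z\le 2\Xi_{\rm cosh}(\,\cdot\,,2g)$ of Theorem~\ref{th:partitions} (which is where the prefactor $2$ and the doubling of the activity originate), and the completion-of-the-square estimate with the neutral charge density $\Sigma$ in Proposition~\ref{pr:inverse-conditioning}. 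You instead split off an auxiliary Gaussian field $\zeta$ with the bounded covariance $N$, separate the cosine and sine parts of the shifted vertex, and apply Cauchy--Schwarz in ${\rm{ev}}_0$ on $(\mathcal{F}^V,\cdot_{w_0})$; termwise domination by the nonnegative $w_0$-kernel then yields the bound, in fact with constant $1$ rather than $2$, so the stated inequality follows a fortiori. What each approach buys: the paper's core estimate only needs $v_0-w_0$ to be positive on \emph{neutral} test functions ($\int f\,d\mu=0$), a weaker hypothesis that is sometimes all one has in Coulomb-gas applications, at the price of the partition-function bookkeeping; your route is shorter and sharper but uses positivity of $N$ on all test functions (which the theorem assumes anyway) and requires the auxiliary field $\zeta$ — unproblematic here since $N$ has a bounded continuous kernel, and since all Fubini/limit interchanges are dominated, uniformly in $\zeta$, by the assumed-convergent majorant series, exactly the care point you flag.
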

\begin{proof}
 
The distribution $w_1$, $v_0$ and $w_0$ are real valued, hence, the terms of the series defininig 
${\rm{ev}}_{0}(\exp_{w_1}(\lambda V^g))$, 
${\rm{ev}}_{0}(\exp_{v_0}(\lambda V^g))$
and 
$
2 {\rm{ev}}_{0}(\exp_{w_0}(2e^{a^2\frac{K}{2}} \lambda V^g)) 
$ can be directly computed as in \eqref{eq:stima-Re} and they are positive.
Hence, since the partial sums defining
$
2 {\rm{ev}}_{0}(\exp_{w_0}(2e^{a^2\frac{K}{2}} \lambda V^g)) 
$
is convergent by hypothesis all the three series are absolutely convergent.
Since $v_0\geq w_1$,   in the sense that $v_0-w_1$ is a positive distribution, the first inequality  descends from conditioning (Proposition \ref{pr:conditioning} and the discussion given in Remark \ref{rem:how-use-it}).
In particular, to apply Proposition \ref{pr:conditioning} we observe that considering $f_{n,x}$ a sequence of smooth compactly supported function converging to the Dirac delta function centered in $x$ we can write  $V^g$ as a limit of polynomial regular functionals in the following way
\[
V^g = \frac{1}{2} (V_a^g+V_{-a}^g), 
\qquad
V^g_a = \lim_{n\to \infty}\lim_{N\to\infty}
\sum_{j=0}^N\frac{(\mathrm{i}a)^j}{j!} 
\int_M  \Phi(f_{n,x})^j d\mu_x
\]
and by direct inspection
${\rm{ev}}_{0}(V_g\cdot_w\dots \cdot_w V_g)$ is positive.

Furthermore, since $w_0 \leq v_0$  and $v_0(x,x)-w_0(x,x)\leq K$, the second inequality descends from inverse conditioning (Proposition \ref{pr:inverse-conditioning2}).
\end{proof}

\subsection{The massless Hadamard function.}
% Consider a two-dimensional Minkowski spacetime with signature $(-,+)$. 
  Consider the function $\mathcal{H}_0^\mu$  on two-dimensional Minkowski spacetime $M$ which has been introduced in \eqref{eq:form-of-H0-Delta} and which takes the form
\[
\mathcal{H}^\mu_0(x) = -\frac{1}{4\pi}\log\left|\frac{x^2}{ 4\mu^2}\right|
\]
  where as before 
%$x^2=-(x^0)^2+(x^1)^2$ is the Lorentzian square  and 
$\mu$ is a positive length scale.
Notice that $\mathcal{H}^\mu_0$ is locally integrable, hence we may associate a symmetric distribution (two-point function)  
namely an element of $\mathcal{D}'(M^2)$ in the following way
\begin{equation}\label{eq:distH}
H^\mu_0(f,g) := \int_{M^2} \mathcal{H}^\mu_0(x-y)f(x)g(y) d\mu_xd\mu_y.
\end{equation}
  With a little abuse of notation, in \eqref{eq:form-of-H0-Delta} we have denoted the integral Kernel of $H_0^{\mu}$ with the same symbol.  The distribution
$H^\mu_0$ is symmetric however it is not positive.
We would like to prove that its restriction on a diamond
\begin{equation}\label{eq:domainDmu}
D_\mu:=\{ x=(t,s)\in M |  -\mu <(t+s) <\mu,\;-\mu <(t-s) <\mu   \}
\end{equation}
is positive.
We have the following proposition.
\begin{prop}
The distribution $H^\mu_0$ introduced in  \eqref{eq:distH} is symmetric and positive on $\mathcal{D}(D_\mu\times D_\mu)$.
\end{prop}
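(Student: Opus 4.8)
The symmetry is immediate from the fact that $\mathcal{H}^\mu_0(x-y)=\mathcal{H}^\mu_0(y-x)$, since the Lorentzian square is even, so I focus on positivity. The plan is to exhibit $\mathcal{H}^\mu_0$ on the diamond $D_\mu$ as a superposition, with \emph{positive} weights, of the covariances of the most elementary positive objects available, namely characteristic-function-type kernels. Concretely, on $D_\mu$ the two null coordinates $u=t+s$, $v=t-s$ both range in $(-\mu,\mu)$, and
\[
-4\pi\,\mathcal{H}^\mu_0(x-y) \;=\; \log\!\left|\frac{(u-u')(v-v')}{4\mu^2}\right| \;=\; \log\frac{|u-u'|}{2\mu} + \log\frac{|v-v'|}{2\mu},
\]
so the kernel splits as a sum of a function of $(u,u')$ only and a function of $(v,v')$ only. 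Hence it suffices to prove that, for a single real variable ranging in $(-\mu,\mu)$, the kernel $k(r,r'):=-\tfrac{1}{4\pi}\log\frac{|r-r'|}{2\mu}$ is a positive bilinear form on $C_0^\infty((-\mu,\mu))$; positivity of $H^\mu_0$ on $D_\mu\times D_\mu$ then follows because the measure $d\mu_x$ in these coordinates factorizes and the test functions in question can be handled by a density/Fubini argument reducing to products $f(u,v)=f_1(u)f_2(v)$, whose positivity is the tensor sum of two one-dimensional positive forms.

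For the one-dimensional claim I would use the classical integral representation
\[
-\log\frac{|r-r'|}{2\mu} \;=\; \int_0^\infty \big(e^{-p|r-r'|}-e^{-2p\mu}\big)\,\frac{dp}{p} \;+\; (\text{const}),
\]
together with the elementary fact that for each fixed $p>0$ the kernel $e^{-p|r-r'|}$ is positive definite (it is the covariance of the Ornstein--Uhlenbeck process, or equivalently has Fourier transform $\tfrac{2p}{p^2+\xi^2}>0$). The key point that rescues positivity on the \emph{finite} interval is that one is allowed to subtract a constant: for a test function $\phi$ on $(-\mu,\mu)$ one does \emph{not} in general have $\int\phi=0$, but the subtracted term contributes $-e^{-2p\mu}\big(\int\phi\big)^2\,\tfrac{dp}{p}$, which has the wrong sign and must be dominated. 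This is exactly where the restriction to the diamond of size $\mu$ enters: because $|r-r'|<2\mu$ on the interval, one has $e^{-p|r-r'|}>e^{-2p\mu}$ pointwise, so $e^{-p|r-r'|}-e^{-2p\mu}$ is itself a \emph{nonnegative} kernel after the constant shift is absorbed, and more carefully one shows $\int\!\!\int \phi(r)\phi(r')\big(e^{-p|r-r'|}-e^{-2p\mu}\big)\,dr\,dr'\ge 0$ for every $p$ by writing $e^{-p|r-r'|}-e^{-2p\mu}$ as a positive-definite kernel on the interval (e.g.\ reflecting/periodizing, or noting it is the Green's function of $-\partial_r^2+p^2$ with suitable boundary conditions on $(-\mu,\mu)$, which is manifestly positive). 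Integrating in $p$ against the positive measure $dp/p$ (the logarithmic divergence at $p=0$ being harmless after the subtraction, and at $p=\infty$ being integrable) yields $k(r,r')\ge 0$ as a bilinear form, up to the harmless additive constant which, being a rank-one kernel $c\cdot\mathbbm{1}\otimes\mathbbm{1}$ with $c=-\tfrac{1}{4\pi}\log(\text{something})$, can be checked to be positive for $\mu$ in the relevant range or simply absorbed since the relevant constant works out to be nonnegative on $D_\mu$.

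The main obstacle I anticipate is precisely the bookkeeping around that subtracted constant: one has to be honest that $\log\frac{|r-r'|}{2\mu}$ is \emph{negative} throughout the open interval, so $k(r,r')$ is a genuinely positive pointwise kernel there, and the cleanest route is probably to avoid the integral representation of $\log$ altogether and instead directly identify $-\tfrac{1}{4\pi}\log\frac{|r-r'|}{2\mu}$ on $(-\mu,\mu)$ with (a positive multiple of) the Green's function of a positive self-adjoint operator — e.g.\ $-\partial^2$ with appropriate boundary conditions after a conformal change of variable mapping the interval to a half-line or circle — which gives positivity for free. In the paper this is surely done by the same Cauchy-determinant / explicit computation used in the massless case (Proposition~6 of~\cite{BR16}, referenced above, and Lemma~\ref{le:Cauchy-Determinant}); I would cross-check my argument against that computation. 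Once the one-dimensional statement is in hand, the reduction via the null-coordinate factorization and a standard approximation of general $f\in C_0^\infty(D_\mu\times D_\mu)$ by finite sums of tensor products finishes the proof.
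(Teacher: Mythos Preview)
Your null-coordinate factorisation is exactly the paper's first move: on $D_\mu$ one has $\mathcal H^\mu_0(x-y)=L(u-u')+L(v-v')$ with $L(r)\propto-\log\tfrac{|r|}{2\mu}$, so positivity reduces to a one-dimensional question. One caveat on the reduction: a ``density by tensor products'' argument is not the right mechanism (positivity of a quadratic form on tensor products does not propagate to linear combinations), but the reduction is in fact immediate and needs no approximation --- for the $L(u-u')$ term one just integrates out $v,v'$ to obtain the marginal $F(u)=\int f(u,v)\,dv$, supported in $(-\mu,\mu)$, and is left with $\int L(u-u')F(u)F(u')\,du\,du'$. This is what the paper does.

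The genuine gap is in your one-dimensional argument. Pointwise nonnegativity of $e^{-p|r-r'|}-e^{-2p\mu}$ on the square says nothing about positive definiteness, and the proposed fixes do not work as stated: $\tfrac{1}{2p}\bigl(e^{-p|r-r'|}-e^{-2p\mu}\bigr)$ is \emph{not} the Green's function of $-\partial_r^2+p^2$ on $(-\mu,\mu)$ for any standard boundary condition (Dirichlet gives $\sinh(p(\mu-r_{>}))\sinh(p(r_{<}+\mu))/(p\sinh 2p\mu)$, Neumann the analogous $\cosh$ expression), and ``reflecting/periodising'' is too vague to constitute a proof. There is also no leftover additive constant to worry about --- the Frullani identity $-\log\tfrac{|r-r'|}{2\mu}=\int_0^\infty\bigl(e^{-p|r-r'|}-e^{-2p\mu}\bigr)\tfrac{dp}{p}$ is exact --- so the hedging about absorbing a rank-one term is a distraction. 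Your pointwise-in-$p$ positivity claim does turn out to be true, but establishing it requires a Fourier computation of its own, at which point you have reinvented the paper's argument with an unnecessary intermediate layer.

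The paper bypasses all of this: it truncates $L$ by the characteristic function of $[-2\mu,2\mu]$ (harmless, since $|u-u'|<2\mu$ on $D_\mu$), computes its Fourier transform directly, and finds $\hat L(p)=c\,\mathrm{Si}(2\mu p)/p$, which is nonnegative because $\mathrm{Si}(z)/z\ge 0$ for all real $z$.
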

\begin{proof}
Since $\mathcal{H}^\mu_0(x)=\mathcal{H}^\mu_0(-x)\in \mathbb{R}$, $H^\mu_0$ is symmetric.
We observe that if $f,g\in\mathcal{D}(D_\mu)$ we have
\[
H^\mu_0(f,g) =  \int_{M^2} \tilde{\mathcal{H}^\mu_0}(u-u',v-v')f(u,v)g(u',v') dudvdu'dv'
\]
where $v=t+s$ and $u=t-s$ are null coordinates and 
\[
\tilde{\mathcal{H}}(u,v) = L(u)+L(v) :=
-\frac{1}{16\pi}\left(\log\left|\frac{u}{2\mu}\right|\chi(u)  +\log\left|\frac{v}{2\mu}\right| \chi(v)\right)
\]
where $\chi$ is the characteristic function of the interval $[-2\mu,2\mu]$. 
Hence $\tilde{\mathcal{H}}$ coincides with $\mathcal{H}$ on $D_\mu \times D_\mu$.
$L$ is a function from $\mathbb{R}$ to $\mathbb{R}$, its Fourier transform is
\[
\hat{L}(p) = c  \frac{\text{Si}(2\mu p)}{p} 
\]
where $\text{Si}(z)$ is the sine integral function and $c$ an appropriate constant.
Hence we have that 
\begin{align*}
H^\mu_0(f,g) &= c' \int_{\mathbb{R}}  \hat{L}(p_u) \overline{\hat{f}}(p_u,0)\hat{g}(p_u,0) dp_u
+
 c' \int_{\mathbb{R}}  \hat{L}(p_v) \overline{\hat{f}}(0,p_v)\hat{g}(0,p_v) dp_v\\
 &= c' \int_{\mathbb{R}}  \hat{L}(p) \left( \overline{\hat{f}}(p,0)\hat{g}(p,0) +  \overline{\hat{f}}(0,p)\hat{g}(0,p) \right) dp
\end{align*}
where $\hat{f}$ and $\hat{g}$ are the Fourier transform of $f$ and $g$ in $M$. 
Since $\text{Si}(z)/z$ is positive, $\hat{L}$ is positive and we have proved that $H^\mu_0$ on $\mathcal{D}(D_\mu\times D_\mu)$ is positive.
\end{proof}

\subsection{The massive Hadamard function.}
Consider now the symmetric part of the massive two-point function $\Delta_m^+$ given in \eqref{eq:massive2pt}. Its integral kernel is the real part of \eqref{eq:massive2pt}, i.e.
\begin{equation}\label{eq:hm}
\mathcal{H}_m(x)= \frac{1}{2\pi} \text{Re}\left(K_0(m\sqrt{ x^2 })\right).
\end{equation}
The distribution
\begin{equation}\label{eq:Hm}
H_m(f,g) = \int_{M^2} \mathcal{H}_m(x-y) f(x)g(y) d\mu_xd\mu_y
\end{equation}
is the symmetric or real part of the two-point function of the massive vacuum and it is also equal to the real part of the Feynman propagator.

\begin{prop}
The distribution $H_m$ is symmetric and positive.
\end{prop}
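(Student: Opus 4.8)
The plan is as follows. \textbf{Symmetry} is immediate and costs nothing: the kernel $\mathcal{H}_m$ in \eqref{eq:hm} is real-valued, and it is even, $\mathcal{H}_m(x)=\mathcal{H}_m(-x)$, since the real part of a translation-invariant Wightman function does not distinguish future from past; hence $H_m(f,g)=H_m(g,f)$ for all real $f,g\in C^\infty_0(M)$.

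For \textbf{positivity} I would first recognise $\hbar H_m$ as the symmetric part of the full massive vacuum two-point function $w^\infty_m$, whose antisymmetric part is $\frac{i\hbar}{2}\Delta_m$ with $\Delta_m$ the massive causal propagator. Because $\Delta_m$ is antisymmetric, $\Delta_m(f,f)=0$ for every real test function $f$, so $w^\infty_m(f,f)=\hbar\,H_m(f,f)$; and since $w^\infty_m$ is the two-point function of a genuine state (the free massive vacuum), $w^\infty_m(f,f)=\omega\big(\Phi(f)\star\Phi(f)\big)\ge 0$. This already gives $H_m(f,f)\ge 0$, i.e. positivity.

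As an explicit cross-check — and in the spirit of the momentum-space computation carried out above for the massless $H^\mu_0$ — I would also verify positivity in Fourier space. Since $\widehat{\Delta^+_m}(p)=2\pi\,\theta(p^0)\,\delta(p^2+m^2)$ is a real (indeed nonnegative) measure, one has $\overline{\Delta^+_m(x)}=\Delta^+_m(-x)$, hence $\mathcal{H}_m=\mathrm{Re}\,\Delta^+_m=\tfrac12\big(\Delta^+_m(\cdot)+\Delta^+_m(-\cdot)\big)$ and therefore
\[
\widehat{\mathcal{H}}_m(p)=\pi\big(\theta(p^0)+\theta(-p^0)\big)\,\delta(p^2+m^2)=\pi\,\delta(p^2+m^2),
\]
which is a nonnegative measure on $\RR^2$ (a multiple of the Lorentz-invariant measure on the full mass hyperboloid $p^2+m^2=0$). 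Parseval then yields, for real $f$,
\[
H_m(f,f)=\frac{1}{(2\pi)^2}\int_{\RR^2}\widehat{\mathcal{H}}_m(p)\,|\widehat f(p)|^2\,dp\ \ge\ 0 .
\]

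\textbf{Where the (little) work is.} There is no real obstacle here; the argument is short. The only points that need to be handled with some care are the $i0^+$ prescription in $\Delta^+_m$ — equivalently the correct branch of $K_0$ on timelike arguments, which is exactly what makes $\mathcal{H}_m$ in \eqref{eq:hm} a well-defined locally integrable kernel — and the normalisation of $\widehat{\Delta^+_m}$; both are standard from the mode expansion of the free massive field. For the first route one additionally invokes the (standard) positivity of the massive vacuum, equivalently that $\widehat{\Delta^+_m}$ is a positive measure.
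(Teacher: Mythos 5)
Your proposal is correct and your Fourier-space argument is exactly the paper's proof: the paper simply notes that $\hat{\mathcal{H}}_m(p)=\tfrac{1}{2}\delta(p^2+m^2)$ is positive (your prefactor $\pi$ versus the paper's $\tfrac12$ is only a Fourier normalisation convention and is immaterial for positivity). Your first route, via positivity of the massive vacuum two-point function together with the vanishing of the antisymmetric part on real $f\otimes f$, is a harmless equivalent reformulation of the same fact.
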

\begin{proof}
By definition $H_m$ is symmetric. As is well-known, positivity follows from the fact that its Fourier transform %of $\mathcal{H}_m$
is the Schwartz distribution
\[
{\hat{\mathcal{H}}}_m(p)=\frac{1}{2}\delta(p^2+m^2)
\]
which is positive.
\end{proof}

\subsection{Comparison}\label{se:comparison}

To apply Theorem \ref{th:control} we have to analyze $H_0^\mu-H_m$ on $\mathcal{D}(D_\mu^2)$,
 extract its positive part $P$ and its negative part $N$ and subsequently, find a bound for $N(x,x)$.
To do this in practice, we consider $H_0^\mu-H_m$ on $\mathcal{D}(M^2)$ and restrict the domain with a smooth compactly supported positive symmetric function $\Omega\in C^\infty_0(M)$ which is equal to $1$ on {\color{ blue} $D_{2\mu}\times D_{2\mu}$}
%\times D_{2\mu}$ 
when composed with the map from $\imath:M^2\to M$, 
$\imath(x,y)=x-y$, 
namely we use as integral kernel $(\mathcal{H}_0^\mu-\mathcal{H}_m)\Omega$ at the place of $(\mathcal{H}_0^\mu-\mathcal{H}_m)$.
Let us denote this distribution by  
\[
W(f\otimes g) = \int_{M^2} (\mathcal{H}_0^\mu(x-y)-\mathcal{H}_m(x-y))\Omega(x-y)     f(x) g(y) d\mu_x d\mu_y.
\]
Clearly for $f, g\in \mathcal{D}(D_\mu)$, $W$ coincides with $H_0^\mu-H_m$. 
Furthermore, since $\mathcal{H}_0^\mu-\mathcal{H}_m$ is a locally 
integrable function we have that $(\mathcal{H}_0^\mu-\mathcal{H}_m)\Omega \in L^{1}(M)$. Hence, there is a constant $C$, such that
\[
|W(f\otimes g)| \leq C \|f\|_2\|g\|_2.
\] 
This implies that $W$ can be extended by continuity to a bounded quadratic form $\tilde{W}$ on $L^2(D_\mu)$ and then, by a corollary of the Riesz representation theorem, $\tilde{W}$ is represented by a bounded linear operator on $L^2(D_\mu)$.
This operator, associated to $W$, is multiplicative in the Fourier domain and it corresponds to 
\[
W(f\otimes g) = \int_{\mathbb{R}^2} \hat{\mathcal{W}}(k) \hat{f}(k) \hat{g}(k)  dk
\]
for a suitable function $\hat{\mathcal{W}}$. Actually, 
by the Riemann Lebesgue lemma, $\hat{\mathcal{W}}$ is a continuous function vanishing at infinity. Furthermore, both $\mathcal{H}_0^\mu-\mathcal{H}_m$ and $\Omega$ are symmetric real functions, hence 
$\hat{\mathcal W}$ is real. 
To decompose $W$ in its positive and negative part we just need to decompose the real continuous function  $\hat{\mathcal{W}}$ in its positive and negative part
\[
\hat{\mathcal{W}} = \hat{\mathcal{P}} -\hat{\mathcal{N}}
\]
where now both $\hat{\mathcal{P}}$ and $\hat{\mathcal{N}}$ are positive. 
Both $N$ and $P$ are convolution operators
\begin{align*}
N(f,g) &:= \int_{M^2} \mathcal{N}(x-y) f(x)g(y) d\mu_xd\mu_y \\
P(f,g) &:= \int_{M^2} \mathcal{P}(x-y) f(x)g(y) d\mu_xd\mu_y
\end{align*}
and 
\[
N(x,x) = \mathcal{N}(0) = \int_{\mathbb{R}^2} \hat{\mathcal{N}}(k) d^2k.
\]
  Furthermore, since $N$ is a positive distribution, 
$\hat{\mathcal N}\geq 0$, and thus  $\mathcal{N}(0)= \int_{\mathbb{R}^2}  \hat{\mathcal{N}}(k) d^2k
% = \int_{\mathbb{R}^2}  |\hat{\mathcal{N}}(k)| d^2k
=\|\hat{\mathcal{N}}\|_1$.
The  following proposition shows that $\hat{\mathcal W}$ is integrable and hence, so is its negative part $\hat{\mathcal N}$.

\bigskip
\begin{prop}\label{prop:comparison}
The Fourier transform $\hat{\mathcal{W}}$ of 
\[
\mathcal{W}=(\mathcal{H}_0^\mu-\mathcal{H}_m)\Omega,
\]
is in $L^1(\mathbb{R}^2)$. The intersection of the supports of its positive and negative part in the Fourier domain is a null set.
It follows that also $\hat{\mathcal N}$ is integrable, so there is a constant $K$ such that 
\[
|N(x,x)| = \| \hat{\mathcal{N}} \|_1 \leq   \frac{K}{\hbar}.
\]
The constant $K$ can be chosen to be equal to 
$\hbar \| \hat{\mathcal{W}} \|_1$.
\end{prop}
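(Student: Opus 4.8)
The plan is to compute $\widehat{\mathcal{W}}$ from the explicit momentum-space forms of the two Hadamard functions and to exploit a cancellation between the two mass shells at large momenta. Since $\Omega\in C^\infty_0(M)$ we have $\widehat{\Omega}\in\mathcal{S}(\mathbb{R}^2)$ and
\[
\widehat{\mathcal{W}}=\bigl(\widehat{\mathcal{H}}_0^\mu-\widehat{\mathcal{H}}_m\bigr)\ast\widehat{\Omega}.
\]
Recall $\widehat{\mathcal{H}}_m(p)=\tfrac12\delta(p^2+m^2)$ (used already for the positivity of $H_m$), while away from the origin $\widehat{\mathcal{H}}_0^\mu$ equals $\tfrac12\delta(p^2)$ — either by the $m\to 0$ limit of $\widehat{\mathcal{H}}_m$, or by the explicit Fourier transform of $\log|x^2|$ — the additive constant $\tfrac{1}{4\pi}\log(4\mu^2)$ in $\mathcal{H}_0^\mu$ contributing only a multiple of $\delta(p)$. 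Thus I split $\widehat{\mathcal{H}}_0^\mu-\widehat{\mathcal{H}}_m=\tau+\mu_\infty$, where $\tau$ is the part supported in a fixed ball around $p=0$ (a compactly supported tempered distribution, containing the $\delta(p)$ term and the vertex of $\delta(p^2)$) and $\mu_\infty:=\tfrac12\bigl(\delta(p^2)-\delta(p^2+m^2)\bigr)\big|_{\{|p|\ge 1\}}$, a locally finite measure carried by the two null lines $\{p^2=0\}$ and the hyperbola $\{p^2=-m^2\}$ restricted to $|p|\ge 1$, each with surface density $\tfrac1{2|p|}$. Since $\tau$ has compact support, $\tau\ast\widehat{\Omega}\in\mathcal{S}(\mathbb{R}^2)$, which is trivially in $L^1$.

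The core of the argument is the estimate of $\mu_\infty\ast\widehat{\Omega}$. The two curves carrying $\mu_\infty$ are asymptotic to the same four half-rays $p^0=\pm p^1$: at momentum of size $r=|p|$ their normal separation is $O(m^2/r)$ and the two surface densities differ by $O(r^{-3})$. Smearing with the fixed Schwartz kernel $\widehat{\Omega}(k-\cdot)$, whose scale is $O(1)$ and which therefore cannot resolve the $O(1/r)$-separation of the curves, produces a dipole cancellation, and one obtains for every $N$ and all large $|k|$
\[
|\widehat{\mathcal{W}}(k)|\;\le\; C\,|k|^{-2}\,\mathbf 1\{\operatorname{dist}(k,\{p^2=0\}\cup\{p^2=-m^2\})\le 1\}\;+\;C_N\,|k|^{-N},
\]
the second term handling $k$ far from both curves. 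Now integrate: on $\{|k|\le R\}$ the function $\widehat{\mathcal{W}}$ is continuous (already established via Riemann--Lebesgue), hence bounded and integrable there; on $\{|k|>R\}$ the strip of width $O(1)$ around the curves meets the circle of radius $r$ in arc length $O(1)$, so $\int_{|k|>R}|k|^{-2}\mathbf 1\{\cdots\}\,dk\lesssim\int_R^\infty r^{-2}\,dr<\infty$, while $\int_{|k|>R}|k|^{-N}\,dk<\infty$ for $N>2$. Hence $\widehat{\mathcal{W}}\in L^1(\mathbb{R}^2)$.

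For the remaining assertions: write $\widehat{\mathcal{W}}=\widehat{\mathcal{P}}-\widehat{\mathcal{N}}$ with $\widehat{\mathcal{P}}=\max(\widehat{\mathcal{W}},0)$ and $\widehat{\mathcal{N}}=\max(-\widehat{\mathcal{W}},0)$; both are continuous, nonnegative, and dominated by $|\widehat{\mathcal{W}}|\in L^1$, so both are integrable. One has $\operatorname{supp}\widehat{\mathcal{P}}\cap\operatorname{supp}\widehat{\mathcal{N}}\subseteq\{\widehat{\mathcal{W}}=0\}$; since $\mathcal{W}$ is a nonzero compactly supported distribution, $\widehat{\mathcal{W}}$ extends to a nonzero entire function by the Paley--Wiener--Schwartz theorem, so its zero set is Lebesgue-null, which gives the stated null intersection. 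Finally $\widehat{\mathcal{N}}\in L^1$, and since $\widehat{\mathcal{N}}\ge 0$ and $N$ is the convolution operator with kernel $\mathcal{N}$, $N(x,x)=\mathcal{N}(0)=\int_{\mathbb{R}^2}\widehat{\mathcal{N}}(k)\,dk=\|\widehat{\mathcal{N}}\|_1\le\|\widehat{\mathcal{W}}\|_1$, so one may take $K=\|\widehat{\mathcal{W}}\|_1$.

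The main obstacle is the dipole-cancellation estimate of the middle display: each of the two mass-shell measures, smeared with $\widehat{\Omega}$, produces only a $|k|^{-1}$-decay near the momentum-space light cone, which is not integrable in two dimensions, so the cancellation between them is essential and must be controlled uniformly in the direction along the curves (and one must keep the non-integrable vertex of $\delta(p^2)$ safely inside $\tau$). A cleaner but heavier alternative would be to extract from the Bessel expansion of $K_0$ the short-distance form $\mathcal{H}_0^\mu-\mathcal{H}_m=\text{const}+\tfrac{m^2}{16\pi}\,x^2\log|x^2|\,\psi(x^2)+\phi(x^2)$ with $\psi,\phi$ smooth, deduce that $\mathcal{W}\in H^{s}(\mathbb{R}^2)$ for some $s>1$ (the worst distributional derivative being a transverse principal-value singularity, which lies in $H^{-1/2-\epsilon}$), and then invoke the embedding $H^s(\mathbb{R}^2)\hookrightarrow\mathcal{F}L^1(\mathbb{R}^2)$ for $s>1$.
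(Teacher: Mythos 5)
Your route is genuinely different from the paper's. The paper stays in position space: it isolates the short--distance structure $\mathcal{H}_0^\mu-\mathcal{H}_m=\mathrm{const}+c\,x^2\log|x^2|+O((x^2)^2\log|x^2|)+\text{smooth}$, passes to null coordinates $u,v$ where $\log|x^2|=\log|u|+\log|v|$, and shows that $\partial_u^2\partial_v^2$ of each compactly supported piece is in $L^1$ (plus an explicit one--dimensional computation of the Fourier transform of $v\log|v|$), so that $\hat{\mathcal W}$ decays like $k_u^{-2}k_v^{-2}$ away from the null axes. You instead work directly in momentum space, representing $\hat{\mathcal W}$ as the convolution of $\widehat{\Omega}\in\mathcal{S}$ with the difference of the two mass--shell measures and extracting the decay from a dipole cancellation between the light cone and the hyperbola $p^2=-m^2$. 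The two are dual pictures of the same mechanism (the position--space $x^2$ factor is exactly what produces your two transverse derivatives' worth of cancellation), but your version buys a transparent geometric reason for the $|k|^{-2}$ rate and a clean proof of the null--intersection claim via Paley--Wiener--Schwartz and real--analyticity of $\hat{\mathcal W}$ — a point the paper asserts without argument. Your closing observation that $\|\hat{\mathcal N}\|_1\le\|\hat{\mathcal W}\|_1$ already follows from pointwise domination $\hat{\mathcal N}\le|\hat{\mathcal W}|$ is also a simplification. Your sketched alternative ($H^s\hookrightarrow \mathcal{F}L^1$ for $s>1$) is essentially a Sobolev repackaging of what the paper actually does.

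One displayed inequality needs repair. The bound
\[
|\hat{\mathcal W}(k)|\le C|k|^{-2}\,\mathbf 1\{\operatorname{dist}(k,\Sigma)\le 1\}+C_N|k|^{-N},\qquad \Sigma=\{p^2=0\}\cup\{p^2=-m^2\},
\]
is false for $N>2$ in the intermediate region $1<\operatorname{dist}(k,\Sigma)\ll|k|$: there the dipole estimate only gives $|k|^{-2}\langle\operatorname{dist}(k,\Sigma)\rangle^{-M}$ (rapid decay \emph{transverse} to the curves, but still only quadratic decay \emph{along} them), which for fixed distance, say $2$, is of order $|k|^{-2}$ and cannot be absorbed into $C_N|k|^{-N}$. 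The correct statement is the product bound $|\hat{\mathcal W}(k)|\le C_M|k|^{-2}\bigl(1+\operatorname{dist}(k,\Sigma)\bigr)^{-M}$ for $|k|$ large, which follows from exactly the computation you describe (normal separation $O(m^2/r)$ against one derivative of $\widehat\Omega$, plus the $O(r^{-3})$ density difference) and is still integrable, since the circle of radius $r$ meets the set $\{\operatorname{dist}\le d\}$ in arc length $O(d)$. With that correction, and with a smooth rather than sharp cutoff separating $\tau$ from $\mu_\infty$ so that the finite--part singularity of $\widehat{\mathcal H_0^\mu}$ at the vertex genuinely sits inside a well--defined element of $\mathcal{E}'$, the argument goes through.
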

\begin{proof}
Notice that 
\begin{equation}
\mathcal{H}_0^\mu(x) - \mathcal{H}_m(x) = -\frac{1}{4\pi}   \log\left( {\frac{|x^2|}{4 \mu^2}}\right)(1-I_0(m\sqrt{x^2})) + r(x)
\end{equation}
where $I_0$ is the modified Bessel function of the first kind. Furthermore, $I_0(m\sqrt{x^2})$ and $r(x)$ are smooth functions on $M$. 
We also have that 
\[
-\frac{1}{4\pi}  \log \left({\frac{|x^2|}{4 \mu^2}}\right)(1-I_0(m\sqrt{x^2})) = s_0(x)
+ \log\left({\frac{|x^2|}{4 \mu^2}}\right)r'(x) 
\]
where $r'(x)$ is a smooth function and it is  of order $O((x^2)^2)$ near $x^2=0$ and where 
\begin{align}
s_0(x) &= - \frac{m^2}{16 \pi}\log\left(\frac{|x^2|}{4 \mu^2}\right)x^2.
\end{align}
Notice that $r(x)\Omega(x)$ is a compactly supported smooth function. Its Fourier transform is a Schwartz function, hence it is integrable. 
We now observe that, the Fourier transform of $l_1(x) = r'(x)\log|x^2|\Omega(x)$ is also in $L^1$. Actually,  
using null coordinates $u=t-s$ and $v=t+s$ to parametrize $x=(t,s)$, we have that 
\[
l_1(u,v) = \log|v|r'(u,v)\Omega(u,v)+\log|u|r'(u,v)\Omega(u,v).
\]
Hence,  $l_1$ is an integrable function because it is of compact support, both $\log|v|$ and $\log|u|$ are locally integrable near $v=0$ or $u=0$ and $\Omega$ and $r'$ are smooth compactly supported functions.
Hence, the Fourier transform of $l_1\in L^1(M)$ exists and it is a bounded continuous function, namely by $|\hat{l}_1|\leq \|l_1\|_1$.   We can now improve that bound to get that $\hat{l}_1$ is integrable too, actually
% Furthermore, 
near $v=0$ $\log|v|r'(u,v)\Omega(u,v) = c\log|v|(v^2u^2+ O(v^3))$ where $c$ is a suitable constant and similarly, near $u=0$ $\log|u|r'(u,v)\Omega(u,v)= c\log|u|(v^2u^2+ O(u^3)$ and hence we have that 
\[
\partial_u^2 l_1(u,v), \qquad 
\partial_v^2 l_1(u,v),
\qquad 
\partial_u^2\partial_v^2 l_1(u,v),
\]
are integrable functions because they are locally integrable near $v=0$ or $u=0$, smooth if $u\neq 0$ and $v\neq 0$ and of compact support.
This implies that 
\[
|\hat{l}_1(k_v,k_u)(1+k_v^2)(1+k_u^2)|\leq C =\|l_1\|_1+ \|\partial_v^2 l_1\|_1+\|\partial_u^2 l_1\|_1+\|\partial_v^2\partial_u^2 l_1\|_1
\]
and hence $\hat{l}_1$ is also integrable.
% \[
% f(u,v)=\partial_u^2\partial_v^2 l_1(u,v)
% \]
% is integrable near either $v=0$ or near $u=0$ and thus $\partial_u^2\partial_v^2 l_1$ is in $L^1$ because it is  smooth if both $u\neq0$ and $v\neq0$ and it is of compact support. This implies that $|\hat{l}_1(k_v,k_u)k_v^2k_u^2|\leq C=\|\partial_u^2\partial_v^2 l_1\|_1 $. Since 
% $\hat{l}_1$ is the Fourier transform of an integrable function it is continuous and this implies that $\hat{l}_1$ is also integrable.

To conclude the proof, we need to be sure that also $l_0=s_0\Omega$ is such that $\hat{l}_0$ is an element of $L^1$. Using null coordinates we have that 
\[
s_0 = - \frac{m^2}{16 \pi}\log\left(\frac{|v|}{2\mu}\right)vu 
- \frac{m^2}{16 \pi}
\log\left(\frac{|u|}{2\mu}\right)vu.
\]
Hence,
\begin{equation}\label{eq:step-l0}
    l_0(u,v)=s_0(u,v)\Omega(u,v) =  f(v) \frac{u\Omega(u,v)}{16\pi} +
f(u) \frac{v\Omega(u,v)}{16\pi}
-\frac{m^2}{8\pi} \log\left(\frac{b}{2\mu}\right)uv\Omega(u,v)
\end{equation}
where $f(v):=-m^2 v\log|v/b|$ for $v$ on the interval $[-b,b]$ and $0$ outside, with $b$ chosen sufficiently large to have that $\chi_{[-b,b]}(v)\Omega(u,v) = \Omega(u,v)$ and $\chi_{[-b,b]}(u)\Omega(u,v) = \Omega(u,v)$, where $\chi_{[-b,b]}$ is the characteristic function of the interval $[-b,b]$.

To prove that $\hat{l}_0$ exists as an element of $L^1$ we analyze the Fourier transform of the three contributions $l_{01}$, $l_{02}$ and $l_{03}$ at the right hand side of \eqref{eq:step-l0} separately. 
$l_{03}(u,v):= -\frac{m^2}{8\pi}\log\left(\frac{b}{2\mu}\right) uv\Omega(u,v)$ is smooth and compactly supported, hence, its Fourier transform is a Schwartz function and thus also integrable.

Let us study the Fourier transform of
\[
l_{01}(u,v):=f(v)g(u,v) := f(v)\frac{u\Omega(u,v)}{16\pi}
\]
$g$ is smooth and compactly supported hence, its Fourier transform $\hat{g}(k_u,k_v)$ is a Schwartz function.  Furthermore, by the convolution theorem in one dimension 
\[
\hat{l}_{01}(k_u,k_v) = \frac{1}{\sqrt{2\pi}} \int_{-\infty}^\infty dp\;    
\hat{g}(k_u,p) \hat{f}(k_v-p).
\]
Hence in order to prove that $\hat{l}_{01}$ is integrable it is sufficient to prove that the Fourier transform of $f$ is in $L^1$. 
By direct inspection, we have that
\[ 
\hat{f}(k) =  \frac{m^2}{\sqrt{2\pi}} \int_{-b}^b  v\log\left|\frac{v}{b}\right|      e^{\mathrm{i} k v} dv  =
i\sqrt{\frac{2}{\pi}}\frac{(\sin(b k)-\text{Si}(b k))}{k^2}
\]
it vanishes in $k=0$ and it decays as $1/k^2$ for large $k$ and it is thus an integrable function.
Similarly we can prove that  $\hat{l}_{02}$, the Fourier transform of $l_{02} := f(u)\frac{v\Omega(u,v)}{16\pi}$, is an in $L^1$. 
Hence, $\hat{l}_0=\hat{l}_{01}+\hat{l}_{02}+\hat{l}_{03}$ is a sum of three integrable functions hence it is integrable.

% Hence, in order to show that $\hat{l}_0$ is integrable it is sufficient to prove that the 
% Fourier transform of $f(v)=-v\log|v/b|$ is integrable on an interval $[-b,b]$. 

% By direct inspection, we have that $\hat{f}$ is integrable, actually
% \[ 
% \hat{f}(k) =  \frac{1}{\sqrt{2\pi}} \int_{-b}^b  v\log\left|\frac{v}{b}\right|      e^{ik v} dv  =
% i\sqrt{\frac{2}{\pi}}\frac{(\sin(b k)-\text{Si}(b k))}{k^2}
% \]
% it vanishes in $k=0$ and it decays as $1/k^2$ for large $k$.

This discussion implies that $\hat{\mathcal W}$ is in $L^1$ and hence, so is its negative part $\hat{\mathcal N}$, furthermore, since $\supp \hat{\mathcal N}$ intersect $\supp \hat{\mathcal P}$ in a null set we have that $\|\hat{\mathcal{N}}\|_1 \leq \|\hat{\mathcal{N}}\|_1+\|\hat{\mathcal{P}}\|_1=\| \hat{\mathcal{W}}\|_1$, so  $K=\hbar \|\hat{\mathcal{W}}\|_1$ is a finite number which bounds $N(x,x)$.
\end{proof}

\subsection{Convergence of the S-matrix on a sufficiently small local region}
In this section we prove the analogue of Proposition~6 in \cite{BR16} where the convergence of the $S$-matrix of the massive Lorentzian Sine-Gordon theory was proved. Beyond what was proved there, we gain a clearer understanding of the estimate's dependence of the support of the cutoff function (adiabatic switching function) $g$ that cuts off the interaction $V^g$.
We shall work in $\mathcal{A}^V_{\hbar\Delta^+_m}=(\mathcal{F}^V,\star_{\hbar\Delta^+_m})$, hence the smeared interaction Lagrangian $V^g$ seen as an element of $\mathcal{A}^V_{\hbar\Delta^+_m}$ corresponds to the interaction Lagrangian normal ordered with respect to the massive vacuum.   The state we shall consider is the Minkowski vacuum and it is represented by the positive linear functional ${\rm ev}_0$ on $\mathcal{A}^V_{\hbar\Delta^+_m}$.
To simplify the notation, unless strictly necessary, from now on, we shall omit the symbol $\star$ and $\star_w$ to denote the product in the used algebra.

\begin{thm}\label{thm:convergence}
Let $g\geq0$ be a cutoff function whose support is contained in $D_\mu$ given in \eqref{eq:domainDmu} and which is equal to $1$ on $D_{\mu'}$ for some $\mu'<\mu$.
Let $S_n^m(V^g)$ be the $n-$th perturbative order of the $S$ matrix of $V^g=\int\cos(a\varphi(x))g(x)d\mu_x$ constructed with respect to the linear theory with mass $m$. Where the constant $a$ is positive and $\alpha=a^2\hbar/(4\pi)<1$.
Then  for any real number $p$ in $[1,\alpha^{-1})$, there are constants $K$ and $C$, such that
\[
\left| S_n^m(V^g)(\varphi) \right|  \leq 
\frac{2(2\mu)^{n\alpha}}{(n!)^{1-1/p}}\left(\frac{2\lambda e^{\frac{a^2}{2} K}}{\hbar} \right)^n  
\|g\|_q^{n}
\left( C^{n} \right)^{1/p}
\]
with $1/q + 1/p=1$.
Choosing $p>1$, it follows that  the sum $S(V^g)=\sum_{n\geq 0}S_n^m(V^g)$ 
is absolutely convergent for every $\varphi\in\mathcal{C}$.

Observe that the constant $K$ is the one obtained in Proposition \ref{prop:comparison} and in fact does not depend on $p$.
\end{thm}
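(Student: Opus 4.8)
The plan is to start from the bound \eqref{eq:estimate-S(V)}, which reduces the problem to estimating $\mathrm{ev}_0\bigl(\exp_{H_m}(\tfrac{\lambda}{\hbar}V^g)\bigr)$, the exponential computed in the commutative algebra $(\mathcal F^V,\cdot_{\hbar H_m})$, and more precisely to estimating its $n$-th term $\tfrac{1}{n!}\tfrac{\lambda^n}{\hbar^n}\mathrm{ev}_0(\mathcal T^{\hbar H_m}_n({V^g}^{\otimes n}))$. Since $\supp g\subset D_\mu$, all the integration variables in this term lie in $D_\mu$, so on the support of the integrand $H_m$ agrees with the distribution $W$ from Section \ref{se:comparison} plus $H_0^\mu$ (up to the smooth pieces already absorbed), i.e.\ $H_m = H_0^\mu - W$ on $D_\mu^2$. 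I would then invoke Theorem \ref{th:control} with $w_1 = \hbar H_m$, $w_0 = \hbar H_0^\mu$, and $P,N$ the positive and negative parts of $\hbar W$ furnished by Proposition \ref{prop:comparison}; the hypothesis $\sup_x N(x,x)\le \hbar K$ is exactly what that proposition provides, with $K=\|\hat{\mathcal W}\|_1$ independent of $p$. This yields
\[
\mathrm{ev}_0\Bigl(\exp_{\hbar H_m}\bigl(\tfrac{\lambda}{\hbar}V^g\bigr)\Bigr)\le 2\,\mathrm{ev}_0\Bigl(\exp_{\hbar H_0^\mu}\bigl(2e^{a^2 K/2}\tfrac{\lambda}{\hbar}V^g\bigr)\Bigr),
\]
so it remains to bound the $n$-th term of the \emph{massless} exponential, i.e.\ to reprove and quantify Proposition~6 of \cite{BR16}.

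For the massless term, I would expand $\mathrm{ev}_0(\mathcal T^{\hbar H_0^\mu}_n({V^g}^{\otimes n}))$ using $V^g=\tfrac12(V_a^g+V_{-a}^g)$ into $2^{-n}\sum_{k}\binom{n}{k}\int \prod_{i<j}e^{-a_ia_j\hbar H_0^\mu(x_i,x_j)}\,g(x_1)\cdots g(x_n)\,d\mu$, where $a_i\in\{a,-a\}$ and $k$ of them equal $a$. Evaluating at $\varphi=0$ kills the vertex exponentials $e^{i\sum a_k\varphi(x_k)}$, leaving exactly the Cauchy-determinant-type integrand $\prod_{i<j}\bigl|\tfrac{x_i-x_j}{2\mu}\bigr|^{\pm a_ia_j\hbar/(2\pi)}$ coming from $H_0^\mu$ (recall $H_0^\mu$ is $-\tfrac{1}{4\pi}\log(x^2/4\mu^2)$). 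Here I would use Lemma \ref{le:Cauchy-Determinant} (the Cauchy-determinant bound, as in the proof of Proposition \ref{prop:FVinFmuc}) to control the product of the attractive/repulsive factors, followed by a Hölder estimate with exponents $p\in[1,\alpha^{-1})$ and $q$ conjugate: the Cauchy-determinant factor is in $L^p$ over $D_\mu^n$ precisely when $p\alpha<1$, contributing the $(2\mu)^{n\alpha}(C^n)^{1/p}$ and a combinatorial factor, while the $n$ copies of $g$ contribute $\|g\|_q^n$. The key point is that the $L^p$-norm of the Cauchy-determinant kernel over $D_\mu^n$ grows like $C^n$ with $C$ depending only on $\mu,\alpha,p$, and crucially that after pulling out one power $1/(n!)^{1/p'}$ — wait, more precisely, the symmetrization and the determinant structure produce a gain of a factor $n!$ which, combined with the $1/n!$ from the exponential and the Hölder split, leaves $(n!)^{-(1-1/p)}$. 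Summing $\sum_n \tfrac{2(2\mu)^{n\alpha}}{(n!)^{1-1/p}}\bigl(\tfrac{2\lambda e^{a^2K/2}}{\hbar}\bigr)^n\|g\|_q^n (C^n)^{1/p}$ then converges for any $p>1$ by the superexponential denominator, giving absolute convergence of $\sum_n S_n^m(V^g)(\varphi)$ for every fixed $\varphi\in\mathcal C$.

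The main obstacle is the combinatorial/analytic heart of the massless estimate: getting the factorial gain $n!$ out of the Cauchy-determinant kernel in Lorentzian null coordinates so that the final denominator is $(n!)^{1-1/p}$ rather than merely $1$ (which would not converge). This is where Lemma \ref{le:Cauchy-Determinant} does the work — one writes the product $\prod_{i<j}|x_i-x_j|^{\text{(sign)}}$, in null coordinates $u=t-s$, $v=t+s$, as (a minor of) a genuine Cauchy determinant $\det\bigl(\tfrac{1}{u_i-u_j'}\bigr)$ type object, whose $L^1$ (and, by interpolation with the trivial $L^\infty$-type bound on a slightly shrunk domain, $L^p$) norm over the cube $D_\mu^n$ is bounded by $C^n n!$ for an explicit $C$; the extra $n!$ compensates the $1/n!$. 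A secondary subtlety is bookkeeping the passage from $H_m$ to $H_0^\mu$: one must check that the smooth remainder $s_0 + \log|x^2|r'(x)$ appearing in Proposition \ref{prop:comparison} is genuinely absorbed into the bounded operator $W$ and does not spoil the $L^1$ bound on $\hat{\mathcal W}$ — but this is exactly the content of Proposition \ref{prop:comparison}, so it can be cited. Finally I would note, as the statement does, that $K$ is inherited verbatim from Proposition \ref{prop:comparison} and hence is independent of the Hölder exponent $p$, so optimizing over $p\in(1,\alpha^{-1})$ affects only $C$ and the rate, not $K$.
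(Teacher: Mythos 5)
Your overall route is the same as the paper's: reduce to $\mathrm{ev}_0(\exp_{H_m}(\tfrac{\lambda}{\hbar}V^g))$ via \eqref{eq:estimate-S(V)}, pass from $H_m$ to $H_0^\mu$ by Theorem \ref{th:control} with the decomposition $H_0^\mu-H_m=P-N$ and the constant $K=\|\hat{\mathcal W}\|_1$ from Proposition \ref{prop:comparison}, and then control the massless expectation values by the Cauchy determinant in null coordinates together with a H\"older split. Two points need repair.

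First, after expanding $V^g=\tfrac12(V^g_a+V^g_{-a})$ you propose to apply Lemma \ref{le:Cauchy-Determinant} term by term to $\mathrm{ev}_0(V^g_{a_1}\cdots V^g_{a_n})$ with arbitrary signs $a_i\in\{\pm a\}$. But the Cauchy determinant identity, and hence the lemma, only applies to the \emph{neutral} configuration with equal numbers of positive and negative charges; an unbalanced term is not a power of a determinant, and a naive H\"older bound on the singular factors $\prod|x_i-y_j|^{-2\alpha}$ fails for large $n$. The paper closes this gap with the Cauchy--Schwarz inequality for the positive functional $\mathrm{ev}_0$ on $(\mathcal F^V,\cdot_{H_0^\mu})$: $|\mathrm{ev}_0(A)|^2\le \mathrm{ev}_0(A^*\cdot A)$ with $A=V^g_{a_1}\cdots V^g_{a_n}$, which replaces every term by the balanced $2n$-point function $\mathrm{ev}_0((V^g_a)^n(V^g_{-a})^n)^{1/2}$ (using $(V^g_a)^*=V^g_{-a}$), to which the lemma applies. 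You need this step, or an equivalent neutralization argument.

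Second, your factorial bookkeeping is backwards. The determinant does not produce a ``gain of a factor $n!$''; expanding $\det\mathcal D=\sum_\pi\prod_i(\pm1)/(x_i-y_{\pi(i)})$ and using $(\sum_i|a_i|)^{\alpha p}\le\sum_i|a_i|^{\alpha p}$ produces a \emph{loss}: the $L^p$ norm of $|\det\mathcal D^v|^\alpha|\det\mathcal D^u|^\alpha$ over $D_\mu^{2n}$ is bounded by $(C^{2n}(n!)^2)^{1/p}$, so after the square root one pays $(n!)^{1/p}$. It is the prefactor $1/n!$ from the exponential series that then yields $(n!)^{1/p-1}$, and convergence requires $1-1/p>0$, i.e.\ $p>1$. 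If, as you write at one point, ``the extra $n!$ compensates the $1/n!$'', you would be left with a purely geometric series and the statement would only hold for small $\lambda$; as it stands your final displayed bound is correct, but the mechanism you describe for obtaining it is not the one that works.
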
 
\begin{proof}
In \eqref{eq:S-matrix} we have seen that $S(V^g)$ can be expressed as a power series whose coefficients are elements of $(\mathcal{F},\star_{\Delta^+_m})$.
We want to prove that, for every $\varphi\in\mathcal{C}$, the sum is absolutely convergent.  
We have seen in \eqref{eq:estimate-S(V)} that if the series defining ${\rm{ev}}_0\left( \exp_{\hbar H_m}\left(\frac{\lambda}{\hbar}V^g\right)\right)$
converges, the series defining $S(V^g)$   is absolutely convergent and it can be further bounded by
\[
|S(V^g)| \leq  \sum_n |S^m_n(V^g)|  \leq {\rm{ev}}_0\left( \exp_{\hbar H_m}\left(\frac{\lambda}{\hbar}V^g\right)\right)
\]
where $\exp_{\hbar H_m}$ is the time ordered exponential constructed with the time ordering map $\mathcal{T}^{\hbar H_m}$.
We may now use Theorem \ref{th:control} to control the convergence of 
${\rm{ev}}_0\left( \exp_{\hbar H_m}\left(\frac{\lambda}{\hbar}V^g\right)\right)$ by means of 
${\rm{ev}}_0\left( \exp_{\hbar{H_0^\mu}}\left(\frac{\lambda}{\hbar}V^g\right)\right)$ provided the series defining the latter converges.
  Furthermore, notice that $H_0^\mu$ is positive on $D_\mu\times D_\mu$, and since the support of $g$ is in $D_\mu$ the fact that $H^\mu_0$ is not positive in $M$ does not prevent us to apply Theorem  \ref{th:control}.
To be more precise, we can multiplied $H^0_\mu$ and $H_m$ with the cutoff function $\Omega\circ\imath$ introduced in section \ref{se:comparison} without altering finial result.

Actually, thanks to the discussion presented in section \ref{se:comparison} and thanks to the results of Proposition \ref{prop:comparison},  we may extract the positive and negative part of 
\[
(H_{0}^\mu-H_m)\Omega\circ\imath = P - N
\]
furthermore, the integral kernel of the negative part $N$ is such that $N(x,x)\leq \frac{K}{\hbar} $, hence, we 
may now use Theorem \ref{th:control} to obtain
\begin{equation}\label{eq:step-local}
{\rm{ev}}_0\left( \exp_{\hbar H_m}\left(\frac{\lambda}{\hbar}V^g\right)\right) \leq 
2 {\rm{ev}}_0\left( \exp_{\hbar{H_0^\mu}}\left(\frac{2 \lambda e^{\frac{a^2}{2}K}}{\hbar}V^g\right)\right). 
\end{equation}
where $K$ is the constant given in Proposition \ref{prop:comparison}.
To prove that the right hand side of the inequality \eqref{eq:step-local} is the result of an absolutely convergent power series in $\lambda$ we proceed as in the proof of Proposition 6 in \cite{BR16}. 
  Let $w$ be any positive distribution of Hadamard type,  then he functional ${\rm{ev}}_0(F):=F(0)$ is a linear normalized positive functional on the commutative $*$-algebra 
$(\mathcal{F}^V,\cdot_w)$
generated by $V^g_a$ and ${V^g_a}^*$ with the $\cdot_w$ product, see Remark \ref{rem:how-use-it} for further details, and hence  the  Cauchy-Schwarz inequality holds
\[
|{\rm{ev}}_0(A \cdot_w B)| \leq  \sqrt{|{\rm{ev}}_0(A^* \cdot_w A)|}\sqrt{|{\rm{ev}}_0(B^* \cdot_w B)|}.
\]
Choosing $B=1$, we obtain
\begin{equation}\label{eq:C*inequality-1}
|{\rm{ev}}_0(A)|^2 \leq |{\rm{ev}}_0(A^* \cdot_w A)|.
\end{equation}
  With this observation used with $w=\hbar H_0^\mu$, to estimate the convergence of the sum present in the last term on the right hand side of \eqref{eq:step-local}
we notice that  
% \tt CHECK THE ABSOLUTE VALUES
\begin{align*}
{\rm{ev}}_0(\underbrace{V^g \cdot_{\hbar{H_0^\mu}}   \dots \cdot_{{\hbar{H_0^\mu}}} V^g}_{n})  
&=\frac{1}{2^n}\sum_{k=0}^n  \begin{pmatrix} n \\ k \end{pmatrix} 
{\rm{ev}}_0\left(
\underbrace{V^g_{a} \cdot_{{\hbar{H_0^\mu}}}   \dots \cdot_{{\hbar{H_0^\mu}}} V^g_{a}}_{k}\cdot_{{\hbar{H_0^\mu}}}
\underbrace{V^g_{-a} \cdot_{{\hbar{H_0^\mu}}}   \dots \cdot_{{\hbar{H_0^\mu}}} V^g_{-a}}_{n-k}
\right)
\\
&\leq \frac{1}{2^n}\sum_{k=0}^n  \begin{pmatrix} n \\ k \end{pmatrix} 
\left({\rm{ev}}_0\left(
\underbrace{V^g_{a} \cdot_{{\hbar{H_0^\mu}}}   \dots \cdot_{{\hbar{H_0^\mu}}} V^g_{a}}_{n}\cdot_{{\hbar{H_0^\mu}}}
\underbrace{V^g_{-a} \cdot_{{\hbar{H_0^\mu}}}   \dots \cdot_{{\hbar{H_0^\mu}}} V^g_{-a}}_{n}
\right)
\right)^{\frac{1}{2}}
\\
&=  
\left({\rm{ev}}_0\left(
\underbrace{V^g_{a} \cdot_{{\hbar{H_0^\mu}}}   \dots \cdot_{{\hbar{H_0^\mu}}} V^g_{a}}_{n}\cdot_{{\hbar{H_0^\mu}}}
\underbrace{V^g_{-a}\cdot_{{\hbar{H_0^\mu}}}   \dots \cdot_{{\hbar{H_0^\mu}}} V^g_{-a}}_{n}
\right)
\right)^{\frac{1}{2}}
\end{align*}
where we used \eqref{eq:C*inequality-1} in the first inequality and that ${V^g}_a^*=V^g_{-a}$.
Now, since $g\in \mathcal{D}(D_\mu)$, 
by Lemma \ref{le:Cauchy-Determinant} we have that
\[
{\rm{ev}}_0(\underbrace{V^g \cdot_{\hbar{H_0^\mu}}   \dots \cdot_{{\hbar{H_0^\mu}}} V^g}_{n})  
\leq
(2\mu)^{n\alpha}
\|g\|_q^{n}  \left( C^{n}  n!\right)^{1/p}.
\]
Together with the discussion presented above and in particular \eqref{eq:step-local}, this implies that
\begin{equation}\label{eq:estimate-Cauchy}
|S(V^g)| \leq  \sum_n |S^m_n(V^g)|
\leq \sum_n   
\frac{2(2\mu)^{n\alpha}}{n!}\left(\frac{2\lambda e^{\frac{a^2}{2}K}}{\hbar} \right)^n  
\|g\|_q^{n}
\left( C^{n}  n!\right)^{1/p}
\end{equation}
where $K$ is given in \ref{prop:comparison},
and thus the sum defining $S(V^g)$ for a generic $\varphi$ converges absolutely because $\|g\|_q$ is finite and the right hand side of \eqref{eq:estimate-Cauchy} does note depend on $\varphi$. 
Hence, 
considering a suitable number of derivatives in $\lambda$ and evaluating at $\lambda = 0$,
the last inequality in \eqref{eq:estimate-Cauchy} improves \eqref{eq:step-local-g} to give
\begin{align*}
\left| S_n^{m}(V^g)(\varphi) \right| 
&\leq\frac{2(2\mu)^{n\alpha}}{n!}\left(\frac{2\lambda e^{\frac{a^2}{2}K}}{\hbar} \right)^n  
\|g\|_q^{n}  \left( C^{n}  n!\right)^{1/p}
\end{align*}
which concludes the proof.
 
\end{proof}
% where $K$ is given in Propositon \ref{prop:comparison}.
% that \eqref{eq:step12} can be further simplified as
% \begin{align*}
% \left| S_n^{m}(V^g)(\varphi) \right| 
% &\leq\frac{2(2\mu)^{n\alpha}}{n!}\left(\frac{2\lambda e^{\frac{a^2}{2}K}}{\hbar} \right)^n  
% \|g\|_q^{n}  \left( C^{n}  n!\right)^{1/p}.
% \end{align*}

% Hence we can now improve \eqref{eq:step-local-g}
%  to obtain
% \begin{align}\label{eq:step12}
% \left| S_n^{m}(V^g)(\varphi) \right| 
% &\leq\frac{2}{n!}\left(\frac{2\lambda e^{\frac{a^2}{2}K}}{\hbar} \right)^n  
% \left({\rm{ev}}_0\left(
% \underbrace{V^g_{a} \cdot_{\hbar{H_0^\mu}}   \dots \cdot_{\hbar{H_0^\mu}} V^g_{a}}_{n}
% \cdot_{\hbar{H_0^\mu}}
% \underbrace{V^g_{-a} \cdot_{\hbar{H_0^\mu}}   \dots \cdot_{\hbar{H_0^\mu}} V^g_{-a}}_{n}
% \right)
% \right)^{\frac{1}{2}}
% \end{align}
% where $K$ is given in Propositon \ref{prop:comparison}.
% %
% Since $g\in \mathcal{D}(D_\mu)$, 
% by Lemma \ref{le:Cauchy-Determinant} we have that \eqref{eq:step12} can be further simplified as
% \begin{align*}
% \left| S_n^{m}(V^g)(\varphi) \right| 
% &\leq\frac{2(2\mu)^{n\alpha}}{n!}\left(\frac{2\lambda e^{\frac{a^2}{2}K}}{\hbar} \right)^n  
% \|g\|_q^{n}  \left( C^{n}  n!\right)^{1/p}.
% \end{align*}
% This implies that 
% \begin{equation}\label{eq:estimate-Cauchy}
% |S(V^g)|
%  \leq  \sum_n |S^m_n(V^g)|
% \leq \sum_n   
% \frac{2(2\mu)^{n\alpha}}{n!}\left(\frac{2\lambda e^{\frac{a^2}{2}K}}{\hbar} \right)^n  
% \|g\|_q^{n}
% \left( C^{n}  n!\right)^{1/p}
% \end{equation}
% and thus the sum converges absolutely because $\|g\|_q$ is finite. 
% \end{proof}

In the proof of Theorem \ref{thm:convergence} given above we used a known result originally obtained by Cauchy, see also \cite{Froe76, BR16,BFR17},  to estimate the expectation values in ${\rm{ev}}_0$ of the $n-th$ power of $V^g$. We collect this result in the following Lemma where we highlighted the dependence on $\mu$ in the obtained estimates 
and where we adapted the result known in the Euclidean case to the more extended singularities present in $\log|\frac{(x^0)^2-(x^1)^2}{4\mu^2}|$
with respect to 
$\log|\frac{(x^0)^2+(x^1)^2}{4\mu^2}|$.

\begin{lemma}(Cauchy determinant)\label{le:Cauchy-Determinant}
Let $\mu$ be a fixed number, and let $g_0\in C^\infty_0(D_\mu)$. 
Consider the following expression
\begin{gather*}
O={\rm{ev}}_0\left(
\underbrace{V^{g_0}_{a} \cdot_{\hbar{H_0^\mu}}   \dots \cdot_{\hbar{H_0^\mu}} V^{g_0}_{a}}_{n}
\cdot_{\hbar{H_0^\mu}}
\underbrace{V^{g_0}_{-a} \cdot_{\hbar{H_0^\mu}}   \dots \cdot_{\hbar{H_0^\mu}} V^{g_0}_{-a}}_{n}
\right)
=
\\
=\int  e^{-\sum_{1\leq i<j\leq 2n} a_ia_j\hbar H_0^\mu(x_i,x_j)}g_0(x_1)\dots g_0(x_{2n}) 
d\mu_{x_1}\dots d\mu_{x_{2n}}\\
=(4\mu^2)^{n\alpha}\int    
\left( 
\frac{ \prod_{1\leq i<j\leq n}   |(x_i-x_j)^2| |(y_i-y_j)^2|}{\prod_{i=1}^n \prod_{j=1}^n  |(x_i-y_j)^2|} 
\right)^{\alpha} 
g_0(x_1)\dots g_0(x_n)g_0(y_1)\dots g_0(y_n) 
d\mu_{x_1}\dots d\mu_{x_n} d\mu_{y_1}\dots d\mu_{y_n} 
\end{gather*}
where $a_i=a$ if $i\in \{1,\dots, n\}$ and $a_i=-a$
if $i\in\{n+1,\dots, 2n\}$
and where
$\alpha=a^2\hbar/(4\pi)$ is assumed to be smaller than $1$.
For any $p \in [1,\alpha^{-1})$ there exists a constant $C$ such that 
\[
|O|\leq (4\mu^2)^{n\alpha} \|g_0\|_q^{2n} \left( C^{2n}  (n!)^2\right)^{1/p}, 
\]
where $q = p/(p - 1)$.
\end{lemma}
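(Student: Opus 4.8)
The two displayed equalities in the statement are immediate: the first from the definitions of $\mathrm{ev}_0$ and of the product $\cdot_{H_0^\mu}$ (only contractions between \emph{distinct} vertices occur, so no coincident points appear), the second from the explicit form $\mathcal H_0^\mu(x)=-\tfrac{1}{4\pi}\log|x^2/4\mu^2|$ after collecting the $\mu$-dependent terms, which produces the prefactor $(4\mu^2)^{n\alpha}$. So it remains to bound
\[
I_n:=\int\left(\frac{\prod_{1\le i<j\le n}|(x_i-x_j)^2|\,|(y_i-y_j)^2|}{\prod_{i,j=1}^n|(x_i-y_j)^2|}\right)^{\!\alpha}\prod_{i=1}^n g_0(x_i)\prod_{j=1}^n g_0(y_j)\,\prod_i d\mu_{x_i}\prod_j d\mu_{y_j}
\]
by $\|g_0\|_q^{2n}(C^{2n}(n!)^2)^{1/p}$. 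The first step is to pass to null coordinates $u=t-s$, $v=t+s$ on $M$, so that $d\mu_x=\tfrac12\,du\,dv$, the diamond $D_\mu$ becomes the product $(-\mu,\mu)_u\times(-\mu,\mu)_v$, and $(x-x')^2=-(u-u')(v-v')$. Then $|(x_i-x_j)^2|=|u^x_i-u^x_j|\,|v^x_i-v^x_j|$, and similarly for the other factors, so the quotient above factorises as $Q(\vec u^x;\vec u^y)\,Q(\vec v^x;\vec v^y)$, where $Q(\vec a;\vec b):=\prod_{i<j}|a_i-a_j|\,|b_i-b_j|\big/\prod_{i,j}|a_i-b_j|$; by the classical Cauchy determinant identity (cf.\ \cite{Froe76,BR16,BFR17}) one has $Q(\vec a;\vec b)=|\det(1/(a_i-b_j))_{i,j=1}^n|$.

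Second, I would apply Hölder's inequality on $(\mathbb R^2)^{2n}$ with exponents $p$ and $q$, pairing $\prod_i|g_0(x_i)|\prod_j|g_0(y_j)|$ in $L^q$ — which contributes exactly $\|g_0\|_q^{2n}$ — against $Q(\vec u^x;\vec u^y)^{\alpha}Q(\vec v^x;\vec v^y)^{\alpha}$ in $L^p$. Because $\supp g_0\subset D_\mu$, the $L^p$-factor is taken over the product domain $(-\mu,\mu)^{4n}$ in the null variables; and since the first $Q$ depends only on the $u$-variables and the second only on the $v$-variables, that integral splits — up to the Jacobian $4^{-n}$ — into $J_n^{2/p}$, where
\[
J_n:=\int_{(-\mu,\mu)^{2n}}\big|\det(1/(u_i-w_j))_{i,j=1}^n\big|^{\alpha p}\,\prod_i du_i\prod_j dw_j.
\]
Here $\beta:=\alpha p<1$ precisely because $p<\alpha^{-1}$; this is where the hypothesis on $p$ enters, and it is what keeps $J_n$ finite.

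The heart of the argument is the bound $J_n\le n!\,c^n$ with $c:=\int_{-\mu}^{\mu}\!\int_{-\mu}^{\mu}|u-w|^{-\beta}\,du\,dw<\infty$, finite because $\beta<1$. Expanding the determinant over the $n!$ permutations $\sigma$ and using subadditivity of $t\mapsto t^{\beta}$ on $[0,\infty)$ (valid for $0<\beta\le1$) gives the pointwise bound $|\det(1/(u_i-w_j))|^{\beta}\le\sum_{\sigma}\prod_i|u_i-w_{\sigma(i)}|^{-\beta}$; integrating term by term and, in each term, relabelling $w_{\sigma(i)}$ as $w_i$ — a measure-preserving change of variables on $(-\mu,\mu)^n$ — decouples the integral into $n$ identical two-dimensional integrals each equal to $c$, with $n!$ terms in all. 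Putting the pieces together gives $|O|=(4\mu^2)^{n\alpha}|I_n|\le(4\mu^2)^{n\alpha}\|g_0\|_q^{2n}\big(4^{-n}(n!)^2c^{2n}\big)^{1/p}=(4\mu^2)^{n\alpha}\|g_0\|_q^{2n}\big(C^{2n}(n!)^2\big)^{1/p}$ with $C=c/2$, and the endpoint $p=1$, $q=\infty$ is the same computation read off at the limiting exponents. I do not anticipate a genuine analytic obstacle: the only care needed is bookkeeping — checking that the null-coordinate factorisation together with the product structure of $D_\mu$ really do let the $L^p$-norm split as a $u$-part times a $v$-part, and that the permutation relabelling in $J_n$ is legitimate — while the single quantitative input, integrability of $|u-w|^{-\beta}$ for $\beta<1$, is elementary.
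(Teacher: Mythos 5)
Your proposal is correct and follows essentially the same route as the paper's proof: null coordinates to factorise the Lorentzian squares, the Cauchy determinant identity, Hölder's inequality (the paper cites Theorem 3.4 of Fr\"ohlich for this step) to peel off $\|g_0\|_q^{2n}$, subadditivity of $t\mapsto t^{\alpha p}$ to expand the determinants over permutations, and the count of $n!$ permutations per null direction yielding $(C^{2n}(n!)^2)^{1/p}$. The only cosmetic difference is that you cleanly split the $L^p$ integral into a product of a $u$-part and a $v$-part before expanding, whereas the paper keeps both determinants together and sums over two independent permutations $\pi,\pi'$; the bookkeeping is identical.
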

\begin{proof}
Consider an element $x=(x^0,x^1) \in M$, in terms of the ordinary null coordinates ($x^v = x^0+x^1$ and $x^u=x^0-x^2$ ) we have that the Lorentzian squares factorizes $|x^2| = |x^v||x^u|$.
Notice that by a lemma of Cauchy which holds also when the squares are the Lorentzian squares 
\[
\frac{ \prod_{1\leq i<j\leq n}   |(x_i-x_j)^2| |(y_i-y_j)^2|}{\prod_{i=1}^n \prod_{j=1}^n  |(x_i-y_j)^2|} = |\det \mathcal{D}^v||\det \mathcal{D}^u|
\]
where $\mathcal{D}^v$ is an $n\times n$ matrix whose entries are $\mathcal{D}^v_{ij}=1/(x^v_{i}-y^v_{j})$ and
$\mathcal{D}^u$ is an $n\times n$ matrix whose entries are $\mathcal{D}^u_{ij}=1/(x^u_{i}-y^u_{j})$. Hence we have 
\begin{gather*}
O
=
(4\mu^2)^{n\alpha} \int  |\det \mathcal{D}^v|^\alpha|\det \mathcal{D}^u|^\alpha G(X,Y) dXdY 
\end{gather*}
where $X=(x_1,\dots, x_n)$, $Y=(y_1,\dots, y_n)$,  
$dX=d\mu_{x_1}\dots d\mu_{x_n}$, $dY =  d\mu_{y_1}\dots d\mu_{y_n}$ and 
$G(X,Y) = \prod_i g_0(x_i)g_0(y_i)$. 
We observe that 
\[
\det \mathcal{D}^b = \sum_{\pi} \prod_{i=1}^n \frac{(-1)^{|\pi|}}{(x^b_i-y^b_{\pi(i)})}, \qquad b\in\{u,v\}
\]
where the sum is taken over all possible permutations of $(1,\dots, n)$.

Following Theorem 3.4 in \cite{Froe76} we have that 
\[
|O|\leq (4\mu^2)^{n\alpha} \|G\|_q \||\det \mathcal{D}^v|^\alpha |\det \mathcal{D}^u|^\alpha \|_p 
\]
where the norms are computed over ${D}_\mu^{2n}$  and $1/p+1/q=1$ hence both $p,q\geq 1$.

Since we are in the regime where $\alpha = \frac{a^2}{4\pi}\hbar < 1$, for any $p$ such that 
\[
1\leq p < \frac{1}{\alpha}
\]
we have that $\alpha p < 1 $. This implies that  
$\prod_{i,j} 1/|x^b_i-y^b_j|^{\alpha p}$ for $b\in \{u,v\}$ are locally integrable and in that case 
$\||\det \mathcal{D}^v|^\alpha|\det \mathcal{D}^u|^\alpha \|_p$ is finite because the norms are computed on the compact set ${D}_\mu^{2n}$.
More precisely
% \[
% |O|\leq  (4\mu^2)^{n\alpha} \|G\|_q  \left( \sum_{\pi}  \int dXdY \prod_{i=1}^n \frac{1}{|x_i-y_{\pi(i)}|^{\alpha p}}   \right)^{1/p}
% \]
\[
|O|\leq  (4\mu^2)^{n\alpha} \|G\|_q  \left( \sum_{\pi}  \sum_{\pi'}  \int dXdY \prod_{i=1}^n \frac{1}{|x^v_i-y^v_{\pi(i)}|^{\alpha p}}
\prod_{j=1}^n \frac{1}{|x^u_j-y^u_{\pi'(j)}|^{\alpha p}}   \right)^{1/p} 
\]
where we used the fact that $(\sum_i|a_i|)^{\alpha p} \leq \sum_{i}|a_i|^{\alpha p}$ when $\alpha p<1$ and
where both sums are taken over all possible permutations of $(1,\dots, n)$. There exists a suitable constant $C$ such that for every $\pi$ 
\[
\int dXdY \prod_{i=1}^n   \frac{1}{|x^v_i-y^v_{\pi(i)}|^{\alpha p}}\prod_{j=1}^n \frac{1}{|x^u_j-y^u_{\pi'(j)}|^{\alpha p}}  \leq  C^{2n}.
\]
Since
\[
\|G\|_q = \|g_0\|_q^{2n}
\]
and since the number of possible partitions $\pi$ is $n!$  we thus conclude the proof.
\end{proof}
We observe that the absolute convergence of the series defining $S(V)(\varphi)$ for every $\varphi\in\mathcal{C}$ can be obtained even if $g$ is in $L^q$ for every $q$, actually, Lemma \ref{le:Cauchy-Determinant} can be proved also for those $g$.
We shall use this extension for $g$ which are characteristic functions of finite regions of Minkowski space.

\subsection{Convergence of the S-matrix on every local region}

In Theorem \ref{thm:convergence} we established the convergence of the $S-$matrix when $V^g$ is the interaction Lagrangian smeared with $g$ with a support contained in $D_\mu$ for a fixed $\mu$. In this section we want to prove absolute convergence when $g$ has values in $[0,1]$ and has generic compact support independent on $\mu$. 
To this end recall the form of the series defining $S(V)$ in \eqref{eq:S-matrix} and we start again from the estimate \eqref{eq:estimate-S(V)} which we recall here, 
\[
|S(V^g)| \leq {\rm{ev}}_0\left( \exp_{\hbar H_m}\left(\frac{\lambda}{\hbar}V^g\right)\right).
\]
We can improve the bound using conditioning and inverse conditioning.
In particular, consider the following set of squares in the Minkowski space $M$ constructed with respect to a standard coordinate system
\[
\mathcal{Q} := \{[l,l+1]\times [n,n+1] \subset M \mid l,n\in\mathbb{Z}\}
\]
we denote by $\mathcal{I}_g$ the subset of $\mathcal{Q}$ formed by squares which intersect the support of $g$, namely
\begin{equation}\label{eq:Ig}
\mathcal{I}_g:=\{Q\in\mathcal{Q} \mid Q\cap \supp g \neq \emptyset\},
\end{equation}
Let $\mathcal{Z}_g=\cup_{Q\in\mathcal{I}_g}Q$ be the union of the set of squares in $\mathcal{I}_g$, we denote by $\mathcal{V}_g$ the volume of $\mathcal{Z}_g$ and by $\tilde{g}$ the characteristic function of $\mathcal{Z}_g$.
Hence the cardinality of $\mathcal{I}_g$ coincides with the volume $\mathcal{V}_g$ of $\mathcal{Z}_g$.
With this notation, we have the following theorem.

\begin{thm}\label{thm:convergence-V}
Consider $\mu>1$ a fixed parameter. Let $g$ be a compactly supported smooth function with values in $[0,1]$ and which is $1$ on a large region of the spacetime. Let $g_0$ be the characteristic function of $B = [0,1]^2 \subset M$. 
Let the $\mathcal{V}_g$ be the volume of the union of elements of $\mathcal{I}_g$ given in \eqref{eq:Ig}. 
Let $S_n^m(V^g)$ be the $n-$th perturbative order of the $S$ matrix of $V^g=\int\cos(a\varphi(x))g(x)d\mu_x$ constructed with respect to the linear theory with mass $m$, where the constant $a$ is positive and $a^2\hbar/(4\pi)<1$.
Then, for every $\varphi\in\mathcal{C}$, and 
choosing the parameters $p,C,K$ as in Theorem \ref{thm:convergence}, it holds that 
\[
\left| S_n^m(V^g)(\varphi) \right|  \leq 
\frac{2(2\mu)^{n\alpha}}{(n!)^{1-1/p}}\left(\frac{2\lambda e^{\frac{a^2}{2}K}}{\hbar} \right)^n  
\|g_0\|_q^{n}
\left( C^{n} \right)^{1/p} \mathcal{V}_g^n.
\]
Hence, for every $\varphi\in\mathcal{C}$ the sum $S(V^g)=\sum_{n\geq 0}S_n^m(V^g)$ is absolutely convergent.
\end{thm}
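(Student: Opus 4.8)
The plan is to follow the proof of Theorem~\ref{thm:convergence} as closely as possible, reducing everything to an estimate in the commutative algebra $(\mathcal F^V,\cdot_{H_m})$ evaluated on ${\rm ev}_0$, and then to replace the single diamond $D_\mu$ by the finitely many unit squares that meet $\supp g$.

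First I would pass to the square decomposition. Writing $g=\sum_{Q\in\mathcal I_g}g_Q$ with $g_Q:=g\,\mathbf 1_Q$ (the overlaps of the squares of $\mathcal Q$ being Lebesgue-null), so that $V^g=\sum_{Q\in\mathcal I_g}V^{g_Q}$, and expanding inside the estimate \eqref{eq:step-local-g} using linearity of ${\rm ev}_0$, one gets
\[
|S_n^m(V^g)(\varphi)|\le\frac1{n!}\Big(\frac\lambda\hbar\Big)^n\sum_{Q_1,\dots,Q_n\in\mathcal I_g}{\rm ev}_0\big(V^{g_{Q_1}}\cdot_{H_m}\cdots\cdot_{H_m}V^{g_{Q_n}}\big),
\]
a sum of $|\mathcal I_g|^n=\mathcal V_g^{\,n}$ terms, each non-negative because every $g_Q\ge 0$ (as in \eqref{eq:stima-Re}). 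Lemma~\ref{le:Cauchy-Determinant} applies to the $L^q$ cutoffs $g_Q$ by the remark following that lemma. It therefore suffices to bound each mixed term uniformly in the tuple $(Q_1,\dots,Q_n)$ by the quantity already controlled in Theorem~\ref{thm:convergence} for a single unit square.

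Next, for a fixed tuple I would run the Cauchy--Schwarz argument of Theorem~\ref{thm:convergence}: since $(V^{g_{Q_i}}_a)^{*}=V^{g_{Q_i}}_{-a}$ and ${\rm ev}_0$ is positive on $(\mathcal F^V,\cdot_{H_m})$, inequality \eqref{eq:C*inequality-1} gives
\[
{\rm ev}_0\big(V^{g_{Q_1}}\cdot_{H_m}\cdots\cdot_{H_m}V^{g_{Q_n}}\big)\le\Big({\rm ev}_0\big(V^{g_{Q_1}}_{a}\cdot_{H_m}\cdots\cdot_{H_m}V^{g_{Q_n}}_{a}\cdot_{H_m}V^{g_{Q_1}}_{-a}\cdot_{H_m}\cdots\cdot_{H_m}V^{g_{Q_n}}_{-a}\big)\Big)^{1/2}.
\]
To pass from the massive covariance $H_m$ to the massless $H^\mu_0$ I would apply conditioning and inverse conditioning (Theorem~\ref{th:control}) locally: each unit square $Q_i$ is contained in a translate of $D_\mu$ (here $\mu>1$ is used), and on that translate Proposition~\ref{prop:comparison} together with the translation invariance of $\mathcal H^\mu_0-\mathcal H_m$ supplies the decomposition $H^\mu_0-H_m=P-N$ with $\sup_xN(x,x)\le K$ for the \emph{same} constant $K$ of Proposition~\ref{prop:comparison}, while continuity of $\mathcal H^\mu_0-\mathcal H_m$ at the origin bounds the diagonal of the discarded positive part; none of these constants depends on which squares occur. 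One is then left with the massless $2n$-vertex quantity, which is estimated exactly as in Lemma~\ref{le:Cauchy-Determinant}, applied now on the squares $Q_1,\dots,Q_n$: the factorised Cauchy-determinant identity reduces it to a product of one-dimensional integrals $\int|x-y|^{-\alpha p}$ over pairs of unit intervals, each bounded by a single constant $C$ uniformly in the squares (and small when the squares are far apart), so the $\mu$- and $C$-dependence of the single-square estimate is unchanged. Collecting the factor $\tfrac2{n!}\big(2e^{\frac{a^2}{2}K}\lambda/\hbar\big)^n$ from the conditioning step, the factor $\prod_i\|g_{Q_i}\|_q\le\|g\|_q^{\,n}$ from H\"older, the factor $(2\mu)^{n\alpha}(C^n)^{1/p}(n!)^{1/p}$ from the Cauchy-determinant estimate, and the number $\mathcal V_g^{\,n}$ of tuples together yields the asserted bound; since $p>1$ the factor $1/(n!)^{1-1/p}$ makes $\sum_n S_n^m(V^g)$ absolutely convergent for every $\varphi\in\mathcal C$.

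The main obstacle is precisely this uniform control of the mixed terms: unlike in the Euclidean setting, $\mathcal H_m$ does not decay in timelike or null directions, so correlations between far-apart squares cannot simply be discarded and there is no naive decoupling of the covariance. What makes the argument go through is that conditioning and inverse conditioning only require the \emph{diagonal of the negative part} of the covariance difference to be uniformly bounded --- which, after localising to the $D_\mu$-translates covering individual squares, is exactly the translation-invariant constant $K$ of Proposition~\ref{prop:comparison} --- together with the fact that the factorised Cauchy-determinant estimate of Lemma~\ref{le:Cauchy-Determinant} retains the same constants over pairs of unit squares, no matter how these squares are spread out. The remaining delicate point is purely combinatorial: organising the expansion so that enlarging the support costs only the factor $\mathcal V_g^{\,n}$ per order and does not erode the $(n!)^{-(1-1/p)}$ gain.
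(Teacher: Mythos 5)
There is a genuine gap at the central step of your argument: the uniform bound on the mixed terms ${\rm ev}_0\big(V^{g_{Q_1}}\cdot_{H_m}\cdots\cdot_{H_m}V^{g_{Q_n}}\big)$ for squares $Q_1,\dots,Q_n$ spread over a large region. Your plan is to apply conditioning and inverse conditioning ``locally'', square by square, invoking only the diagonal bound $\sup_x N(x,x)\le K$. But Theorem~\ref{th:control} is a global statement: conditioning requires $v_0\ge w_1$ and inverse conditioning requires $w_0\le v_0$ as bi-distributions on the \emph{entire} support of the smeared vertices, and positivity of $H_0^\mu$ itself is only established on $\mathcal D(D_\mu\times D_\mu)$ --- the kernel $\mathcal H_0^\mu(x-y)=-\tfrac1{4\pi}\log|{(x-y)^2}/{4\mu^2}|$ turns negative for large spacelike separation. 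Likewise the decomposition $H_0^\mu-H_m=P-N$ of Proposition~\ref{prop:comparison} is obtained only after multiplying by the cutoff $\Omega$ supported near the diagonal, so $W$ agrees with $H_0^\mu-H_m$ only when the two arguments lie in a common translate of $D_\mu$. For points $x_i\in Q_i$, $x_j\in Q_j$ in distant squares the cross-terms $e^{-a_ia_j\hbar H_m(x_i,x_j)}$ are present in the mixed term and are not controlled by any square-local comparison; this is exactly why Theorem~\ref{thm:convergence} is restricted to $\supp g\subset D_\mu$ and why the present theorem needs a new ingredient.

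The paper's proof supplies that ingredient differently, and you should compare: after replacing $g$ by the characteristic function $\tilde g$ of $\mathcal Z_g$ (legitimate because $H_m$ is real, so the integrands in \eqref{eq:stima-Re} are positive and monotone in $g$) and decomposing $V^{\tilde g}=\sum_{Q\in\mathcal I_g}V^{g_Q}$, one does \emph{not} estimate each mixed term in situ. Instead one uses translation invariance of $H_m$ together with the generalized H\"older inequality \eqref{eq:estimate-product}, $|{\rm ev}_0(\prod_{i=1}^N A_i)|\le{\rm ev}_0(|A|^N)$ when all ${\rm ev}_0(A_i^N)$ coincide, to bound \emph{every} mixed term by the diagonal term in which all $N$ vertices sit on the single unit square $B=[0,1]^2$. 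This collapses the problem to $|S(V^g)|\le{\rm ev}_0\big(\exp_{H_m}(\mathcal V_g\tfrac{\lambda}{\hbar}V^{g_0})\big)$, i.e.\ to Theorem~\ref{thm:convergence} with coupling $\lambda\mathcal V_g$ and compactly localized support, where the diamond-restricted conditioning and Cauchy-determinant machinery apply verbatim. Your counting factor $\mathcal V_g^{\,n}$ and the use of the $L^q$ extension of Lemma~\ref{le:Cauchy-Determinant} for characteristic functions are correct; what is missing is the reduction of the mixed terms to a single square, without which the massive-to-massless comparison cannot be carried out.
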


\begin{proof}
We start considering $\mathcal{I}_g$  the set of squares of edge $1$ which are intersecting the support of $g$ given in \eqref{eq:Ig} and their   union $\mathcal{Z}_g$.
We denote by $\tilde{g}$ the characteristic function of $\mathcal{Z}_g$. We have by construction that $0\leq g\leq \tilde{g}$.
Recalling the steps to obtain \eqref{eq:stima-Re}, and observing that $H_m$ is real valued, we have, by direct inspection, that the inequality  \eqref{eq:estimate-S(V)} can be further bounded
\[
|S(V^g)| \leq {\rm{ev}}_0\left( \exp_{\hbar H_m}\left(\frac{\lambda}{\hbar}V^g\right)\right) \leq 
 {\rm{ev}}_0\left( \exp_{\hbar H_m}\left(\frac{\lambda}{\hbar}V^{\tilde{g}}\right)\right).
\]
We now decompose $V^{\tilde{g}} = \sum_{Q\in\mathcal{I}_g} V^{g_Q}$ where $g_Q$ is the characteristic function of the square $Q$.
 
We shall then use Proposition \ref{prop:multi-holder} to further improve this bound, to match the hypothesis of that proposition,
$V^g$ needs to written with positive functionals. 
Let us start observing that $V^{\tilde{g}}$ can be approximated by a sequence of positive regular functionals in $(\mathcal{F}^V,\cdot_{H_m})$ plus  constant functionals. 
Let $f_{x;n}\geq 0$ with $x\in M$ and $n\in \mathbb{N}$ a sequence of compactly supported smooth functions which converges to the Dirac delta function centered in $x$ for large $n$. 
Observe that 
\[
V^{\tilde{g}}  = \lim_{n\to\infty} 
\int  \cos(a \Phi(f_{x;n}))   \tilde{g}(x) d\mu_x
\]
furthermore, 
\[
\cos(a\Phi(f)) = 2 e^{\frac{a^2}{4}\hbar H_m(f,f)} \cos(\frac{a}{2}\Phi(f))\cdot_{\hbar H_m}
\cos(\frac{a}{2}\Phi(f)) - 
e^{\frac{a^2}{2}\hbar H_m(f,f)}.
\]
Hence
\begin{equation}\label{eq:decomposition-of-V}
V^{\tilde{g}}  = \lim_{n\to\infty}
P^{\tilde{g}}_{n} + c^{\tilde{g}}_n
\end{equation}
where $P^{\tilde{g}}_{n}$ is a positive regular functional and $c^{\tilde{g}}_n$ is a constant functional
 hence
\[
 {\rm{ev}}_0\left( \exp_{\hbar H_m}\left(\frac{\lambda}{\hbar}V^{\tilde{g}}\right)\right)
  =
  \lim_{n\to\infty}
  {\rm{ev}}_0
  \left( \exp_{\hbar H_m}\left(
   \frac{\lambda}{\hbar}P^{\tilde{g}}_n + \frac{\lambda}{\hbar}c^{\tilde{g}}_n
  \right)\right)
  = 
  \lim_{n\to\infty}
  e^{\frac{\lambda}{\hbar}c^{\tilde{g}}_n}
  {\rm{ev}}_0
  \left( \exp_{\hbar H_m}\left(
   \frac{\lambda}{\hbar}P^{\tilde{g}}_n 
  \right)\right)
\]
where in the last equality we used the fact that $c_n^{\tilde{g}}$ is a constant functional.

To estimate the 
 $N-$th contribution to the exponential $\left( \exp_{\hbar H_m}\left(
   \frac{\lambda}{\hbar}P^{\tilde{g}}_n 
  \right)\right)$
  we proceed as follows
% Hence, the $N-$th contributions to the exponential at the right hand side of the previous equation can be estimated using the following
\[
{\rm{ev}}_0(\underbrace{P_n^{\tilde{g}} \cdot_{\hbar H_m} \dots \cdot_{\hbar H_m} P_n^{\tilde{g}}}_{N}) 
=
{\rm{ev}}_0(\underbrace{\sum_{Q\in\mathcal{I}_g} P_n^{g_Q}  \cdot_{\hbar H_m} \dots \cdot_{\hbar H_m} \sum_{Q\in\mathcal{I}_g} P_n^{g_Q} }_{N}) 
=
\sum_{(Q_1,\dots, Q_N) \in \mathcal{I}_g^N}{\rm{ev}}_0(
\prod_{j=1}^N P_n^{g_{Q_j}}),
% \underbrace{V^{g_{Q_1}}  \cdot_{\hbar H_m} \dots \cdot_{\hbar H_m}  V^{g_{Q_j}} }_{N})
% |
% \leq {{\mathcal{V}_g}}^N {\rm{ev}}_0(\underbrace{|V^{g_0}| \cdot_{\hbar H_m} \dots \cdot_{\hbar H_m} |V^{g_0}|}_{N}).
\]
where the product in the last term of the chain of equalities is computed with respect of the commutative product $\cdot_{\hbar H_m}$.
We may now apply
Proposition \ref{prop:multi-holder} and more precisely equation \eqref{eq:multi-holder-ev} to obtain

% By an application of H\"older inequality, see \eqref{eq:estimate-product} and its derivation for further details, we have that
\[
{\rm{ev}}_0(\underbrace{P_n^{\tilde{g}} \cdot_{\hbar H_m} \dots \cdot_{\hbar H_m} P_n^{\tilde{g}}}_{N})
\leq
\sum_{(Q_1,\dots, Q_N) \in \mathcal{I}_g^N}
\prod_{j=1}^N
{\rm{ev}}_0(\underbrace{P_n^{g_{Q_j}}
\cdot_{\hbar H_m} \dots \cdot_{\hbar H_m} P_n^{g_{Q_j}}}_{N})^{\frac{1}{N}}.
% \underbrace{V^{g_{Q_1}}  \cdot_{\hbar H_m} \dots \cdot_{\hbar H_m}  V^{g_{Q_j}} }_{N})
% |
% \leq {{\mathcal{V}_g}}^N {\rm{ev}}_0(\underbrace{|V^{g_0}| \cdot_{\hbar H_m} \dots \cdot_{\hbar H_m} |V^{g_0}|}_{N}).
\]
By translation invariance, we have that for every $j\in \{1,\dots, N\}$ 
\[
{\rm{ev}}_0(\underbrace{P_n^{g_{Q_j}}
\cdot_{\hbar H_m} \dots \cdot_{\hbar H_m} P_n^{g_{Q_j}}}_{N})
=
{\rm{ev}}_0(\underbrace{P_n^{g_{0}}
\cdot_{\hbar H_m} \dots \cdot_{\hbar H_m} P_n^{g_{0}}}_{N})
\]
hence
\[
{\rm{ev}}_0(\underbrace{P_n^{\tilde{g}} \cdot_{\hbar H_m} \dots \cdot_{\hbar H_m} P_n^{\tilde{g}}}_{N})
\leq
\sum_{(Q_1,\dots, Q_N) \in \mathcal{I}_g^N}
{\rm{ev}}_0(\underbrace{P_n^{g_{0}}
\cdot_{\hbar H_m} \dots \cdot_{\hbar H_m} P_n^{g_{0}}}_{N})
=
{{\mathcal{V}_g}}^N {\rm{ev}}_0(\underbrace{P_n^{g_0} \cdot_{\hbar H_m} \dots \cdot_{\hbar H_m} P_n^{g_0}}_{N})
% \underbrace{V^{g_{Q_1}}  \cdot_{\hbar H_m} \dots \cdot_{\hbar H_m}  V^{g_{Q_j}} }_{N})
% |
% \leq {{\mathcal{V}_g}}^N {\rm{ev}}_0(\underbrace{|V^{g_0}| \cdot_{\hbar H_m} \dots \cdot_{\hbar H_m} |V^{g_0}|}_{N}).
\]
where the last equality holds because the 
number of squares in $\mathcal{I}_g$ coincides with the volume $\mathcal{V}_g$.
By translation invariance it holds that  
$c^{\tilde{g}}_n  = \mathcal{V}_g c^{g_0}_n$
hence
\[
  {\rm{ev}}_0
  \left( \exp_{\hbar H_m}\left(
   \frac{\lambda}{\hbar}P^{\tilde{g}}_n + \frac{\lambda}{\hbar}c^{\tilde{g}}_n
  \right)\right)
  \leq 
    {\rm{ev}}_0
  \left( \exp_{\hbar H_m}\left(
   \mathcal{V}_g\frac{\lambda}{\hbar}P^{g_0}_n + \mathcal{V}_g\frac{\lambda}{\hbar}c^{g_0}_n
  \right)\right)
\]
finally taking the limit $n$ to infinity on both side of the inequality we obtain 
\[
 {\rm{ev}}_0\left( \exp_{\hbar H_m}\left(\frac{\lambda}{\hbar}V^{\tilde{g}}\right)\right)
 \leq
  {\rm{ev}}_0\left( \exp_{\hbar H_m}\left(\mathcal{V}_g\frac{\lambda}{\hbar}V^{g_0}\right)\right).
 \]
% {{\mathcal{V}_g}}^N |{\rm{ev}}_0(\underbrace{V^{g_0} \cdot_{\hbar H_m} \dots \cdot_{\hbar H_m} V^{g_0}}_{N})|.
% {{\mathcal{V}_g}}^N |{\rm{ev}}_0(\underbrace{V^{g_0} \cdot_{\hbar H_m} \dots \cdot_{\hbar H_m} V^{g_0}}_{N})|.
% hence  
% \[
% |{\rm{ev}}_0(\underbrace{V^{\tilde{g}} \cdot_{\hbar H_m} \dots \cdot_{\hbar H_m} V^{\tilde{g}}}_{N}) |
% \leq
% {{\mathcal{V}_g}}^N |{\rm{ev}}_0(\underbrace{V^{g_0} \cdot_{\hbar H_m} \dots \cdot_{\hbar H_m} V^{g_0}}_{N})|.
% \]
% , namely 
% %
% using \eqref{eq:estimate-product},
% we have that 
% We now observe that in view of translation invariance and by an application of H\"older inequality, namely 
% %
% using \eqref{eq:estimate-product},
% we have that 
% \[
% |{\rm{ev}}_0(\underbrace{V^{\tilde{g}} \cdot_{\hbar H_m} \dots \cdot_{\hbar H_m} V^{\tilde{g}}}_{N}) |
% =
% |{\rm{ev}}_0(\underbrace{\sum_{Q\in\mathcal{I}_g} V^{g_Q}  \cdot_{\hbar H_m} \dots \cdot_{\hbar H_m} \sum_{Q\in\mathcal{I}_g} V^{g_Q} }_{N}) |
% \leq {{\mathcal{V}_g}}^N {\rm{ev}}_0(\underbrace{|V^{g_0}| \cdot_{\hbar H_m} \dots \cdot_{\hbar H_m} |V^{g_0}|}_{N}).
% \]
% where we used the fact that the volume $\mathcal{V}_g$ coincides with the number of elements of $\mathcal{I}_g$. 
Hence we have that
\begin{align*}
|S(V^g)| &\leq {\rm{ev}}_0\left( \exp_{\hbar H_m}\left(\mathcal{V}_g\frac{\lambda}{\hbar}V^{g_0}\right)\right).
\end{align*}
Applying Theorem \ref{thm:convergence} with $\lambda \mathcal{V}_g$ in the place of $\lambda$ and with $\mu=1$ we have proved the claim.
\end{proof}

\begin{rem}\label{rem:strong-op-top-conv}
With the estimates used to prove Theorem \ref{thm:convergence} and Theorem \ref{thm:convergence-V}, we can repeat an argument presented in \cite{BFR17} (see Proposition 7) to prove that the sum which defines $S(V)=\sum_n S_n$ is actually convergent in the strong operator topology of the GNS representation of $\mathfrak{A}$ the Weyl algebra of the free field in the vacuum state $\omega_m$.
To present this argument in some details, consider the GNS tripe $(\mathfrak{H},\pi,\Omega_0)$ of the free Weyl algebra of massive observable in the vacuum state $\omega_m$. 
Consider $D$ the dense set in  $\mathfrak{H}$ spanned by vectors $\pi(e^{\mathrm{i}\phi(f)})\Omega_0$ with any $f\in C^\infty_0(M;\mathbb{R})$, where $e^{\mathrm{i}\phi(f)}$ are the Weyl generators of the algebra   which are represented by $e^{\mathrm{i} \Phi(f)}$ in $(\mathcal{F}_{\text{reg}},\star)$ and by $e^{-\frac{1}{2} w(f,f)}e^{\mathrm{i} \Phi(f)}$ in $(\mathcal{F}_{\text{reg}},\star_w)$.
The estimates used to prove Theorem \ref{thm:convergence} and Theorem \ref{thm:convergence-V} are sufficient to prove that 
the double sum 
\[
\pi(S( V^g))=\sum_n \frac{1}{n!} \frac{{\mathrm{i}^n}\lambda^n}{2^n  \hbar^n}    \sum_{a_i\in\pm a} \pi(\mathcal{T}(V^g_{a_1} \otimes \dots \otimes  V^g_{a_n}))
\]
 converges in the strong operator norm on the dense domain $D$. Hence $\pi(S(\lambda V))$ is an operator on $D$ and with values in $\mathfrak{H}$. 
  Actually, by direct inspection,  since $f$ is real, we get that  
\[
\| \pi(\mathcal{T}(V^g_{a_1} \otimes \dots \otimes V^g_{a_n})) \pi(e^{\mathrm{i}\phi(f)})\Omega_0\|^2 \leq {\rm{ev}}_0(\underbrace{V^g_{a} \cdot_{\hbar{H_m}}   \dots \cdot_{\hbar{H_m}} V^g_{a}}_{n}\cdot_{{H_m}}
\underbrace{V^g_{-a}\cdot_{\hbar{H_m}}   \dots \cdot_{\hbar{H_m}} V^g_{-a}}_{n}
)
\]
and the right hand side does not depend on $f$.  Hence, estimates similar to those used in 
Theorem \ref{thm:convergence} or in Theorem \ref{thm:convergence-V}, permits to obtain the desired convergence of the sum in the strong operator topology.  
Finally, following Proposition 8 in \cite{BFR17}, we also obtain that as an operator on $\mathfrak{H}$, $\pi(S(V^g))$ preserves the norm, hence, since $S( V)$ is formally unitary, it can be extended to an unitary operator on $\mathfrak{H}$.  
\end{rem}

\section{Thermal state on local region for the generating functional} \label{se:thermal}

\subsection{Causality and the algebraic adiabatic limit}\label{se:strategy}

In this section we shall discuss the construction of equilibrium states for the interacting Sine-Gordon theory for a field with finite mass.
The observables of the interacting theory are represented as 
power series in the coupling constants whose coefficients are
elements of the massive Klein-Gordon theory by means of the Bogoliubov map which is given in terms of the $S-$matrix.
Once a cutoff function $g$ with compact support is used to smear the interaction Lagrangian,
\[
V^g = \int    g \cos (a \varphi)  d\mu, 
\]
and the algebra of interacting fields $\mathcal{A}^V_I(M)$ is  constructed
% represented
 as the smaller subalgebra of $\mathcal{A}^V[[\lambda]]$
 which contains $\{R_{V^g}(F) \mid F\in \mathcal{F}^V\cap \mathcal{F}_{\text{loc}}(M)\}$. 
As discussed in section \ref{se:BogoliubovMap} the Bogoliubov map $R_{V^{g}}(F)$ is proportional to the derivative of  $S(V^g)^{-1}S(V^g+b F)$ with respect to $b$ evaluated at vanishing $b$. 
Hence,
in the two-dimensional Sine-Gordon case, the power series in $\lambda$ describing $R_{V^g}(F)(\varphi)$ evaluated in a generic field configuration $\varphi$ is actually an absolutely convergent series for any finite value of $\lambda$, because, as shown in Theorem \ref{thm:convergence-V},
this holds for $S(V^g)$.

Any quantum state $\omega$ of $\mathcal{A}^V(M)$ is thus a state for the algebra of interacting fields $\mathcal{A}^V_I(M)$. Hence, the expectation values of the interacting local fields $F\in \mathcal{A}^V_I(M)$ is then obtained as
\[
\omega^I(F) = \omega(R_{V^{g}}(F)).
\]
To obtain $\omega^I$ we thus construct $\omega$ on $\mathcal{A}^V(M)$. Its pullback to $\mathcal{A}_I^V(M)$ is then strightforward. 
  Furthermore, in many cases, like for ${\rm{ev}}_0$, the pointwise convergence of the power series describing $R_{V^g}(F)$ on a generic field configuration $\varphi$, implies the convergence of the power series of expectation values $\omega^I(F)$. 
We shall show that this is the case also for the thermal state we are aiming to construct also in the limit $g\to1$, even if in that case $\omega$ will be described by a power series in $\lambda$.
In the case of states represented by ${\rm{ev}}_0$ in 
$(\mathcal{F}^V,\star_{w_m^\infty})$, as discussed in \cite{BFR17} and in Remark \ref{rem:strong-op-top-conv}, the pointwise convergence implies convergence in the strong operator topology on the Hilbert space of the GNS representation of the studied state.

In view of the time slice axiom \cite{ChF08}, the state $\omega$, is completely fixed by its restriction on  $\mathcal{A}^V(\Sigma_\epsilon)$ (the smaller subalgebra of $\mathcal{A}^V$ which contains the local fields whose support is contained in $\Sigma_\epsilon$) where $\Sigma_\epsilon$ is a time slab namely a small neighbourhood of the Cauchy surface with vanishing Minkowski time, 
\[
\Sigma_\epsilon=\{(x^0,x^1)\in M \mid |x^0|<\epsilon\}. 
\] 
Furthermore, in view of causal properties of the $S$ matrix, for observables supported on $\Sigma_\epsilon$ we may modify the form of the interaction Lagrangian in the complement of the past of $\Sigma_\epsilon$ without altering the action of $R_{V^g}$, namely
\[
R_{V^{g}}(F) = R_{V^{g'}}(F)
\]
if $supp(g-g')\subset (M\setminus J^-(\Sigma_\epsilon))$.
Similarly, if we modify $g$ in the complement of  $J^+(\Sigma_\epsilon)$, we obtain an equivalent representation of $\mathcal{A}_I^V$. This equivalence is described by the adjoint action of a suitable unitary operator, namely
\[
R_{V^{g}}(F) = \mathcal{U} R_{V^{g'}}(F)\mathcal{U}^{-1}
\]
if $supp(g-g')\subset (M\setminus J^+(\Sigma_\epsilon))$. Here $\mathcal{U}= S(V^g)^{-1}S(V^{g'})$ and it is formally a unitary element of the algebra which does not depend on $F$.
Hence, in order to construct an equilibrium state for the interacting theory, we restrict our attention to observables supported on $\Sigma_\epsilon$ and we choose the cutoff function of the following form
\begin{equation}\label{eq:cut-off}
g(x^0,x^1) = \chi(x^0)h(x^1)
\end{equation}
where $\chi$ is a smooth compactly supported function on $\mathbb{R}$ with values in $[0,1]$ which is equal to $1$ on $[-\epsilon,\epsilon]$ and whose support is contained on $[-2\epsilon,2\epsilon]$. Furthermore, $h$ is a space cutoff, namely it is a smooth compactly supported function on $\mathbb{R}$ which is $1$ on large regions of the space.

We shall thus construct a state with fixed $h$ and $\chi$ and later, in order to discuss the adiabatic limit, we analyze a suitable limit where $h\to1$ as the limit of $\{h_n\}$ for $n\to \infty$ where $h_n$ is equal to $1$ on $[-n,n]$.

Even if the Bogoliubov map $R_V$ depends on the cutoff functions, 
the causal factorisation property implies that
 for $F\in\mathcal{F}_{\text{loc}}(O)$ (the set of local fields supported in $O$), with $O\subset \Sigma_{\epsilon}$ there exists an $N$, which could depend on $F$, such that for every $l,n \geq N\in \mathbb{N}$
\[
R_{V^{\chi h_n}}(F) = R_{V^{\chi h_l}}(F).  
\]
With this observation, we have that the limit $n\to\infty$ can easily be taken at algebraic level.  
As we shall see in short, in order to obtain an equilibrium state for the interacting theory, the state itself will depend on the cutoff function and in particular on $h$. We have thus to analyze the adiabatic limit at this level as well.

\subsection{Thermal states for the free theory}\label{se:thermal-state-free-theory}
The starting point for the construction of an interacting equilibrium state is the equilibrium state for the free theory.
On infinite systems, the condition of being an equilibrium state at a fixed temperature is captured by the Kubo-Martin-Schwinger (KMS) condition \cite{KMS}. 
We recall that a state $\omega$ on a $C^*-$algebra $\mathfrak{A}$ satisfies the KMS condition at inverse temperature $\beta$  with respect to the time evolution $\tau_t:\mathfrak{A} \to \mathfrak{A}$, which is a one parameter group of $*-$automorphisms if, for every $A,B\in \mathfrak{A}$ the function 
\[
f_{A,B}:t\mapsto \omega(A\tau_{t}B)
\] 
is analytic in the strip $\text{Im} t \in (0,\beta)$ and continuous at the boundaries and if 
\[
f_{A,B}(i\beta) = \omega(BA). 
\]
If the $C^*-$algebra $\mathfrak{A}$ is generated by a field, similar relations holds for the $n-$point functions computed in the state $\omega$. 
In the case of $*-$algebras, the KMS condition given above 
  can be given in terms of suitable relations satisfied by the $n-$point functions. 
Further details can be found e.g. in Definition 1 in \cite{FL} to which we refer for a complete definition in the $*-$algebraic case.
For free fields, the free time evolution is described by the $*-$automorphisms which implements the time translation and thus it is constructed in the following way
\[
\tau_t(F)=F_t, \qquad F_t(\varphi) = F(\varphi_t), \qquad \varphi_t(x)=\varphi(x+te_0). 
\] 
In the case of strictly positive mass $m$, there exists an unique quasifree KMS state at inverse temperature $\beta$ with respect to $\tau_t$ which we denote by $\omega^\beta$. The integral Kernel of the two-point function of this state has the form $w^\beta_m(x,y)=w^\beta(x,y)= \hbar \Delta^\beta(x-y)$ where 
\begin{align}
\Delta^\beta(t-\mathrm{i}u ,x) 
&:= 
\frac{1}{2\pi}\int   d p dp_0 e^{- \mathrm{i}p_0 t }    e^{- p_0 u}    e^{\mathrm{i} p x}    \frac{1}{1-e^{-\beta p_0}} \delta(p_0^2-p^2-m^2) \text{sign}(p_0)
\notag
\\
&= 
\frac{1}{2\pi}\int   d p    \frac{1}{2{\rm w}_p (1-e^{-\beta {\rm w}_p}) } \left(   e^{- \mathrm{i} {\rm w}_p t }    e^{- {\rm w}_p  u}   + 
e^{ \mathrm{i} {\rm w}_p t }    e^{-(\beta-u) {\rm w}_p }   \right) e^{\mathrm{i} p x}  
\label{eq:2pt-beta}
\end{align}
where ${\rm w}_p=\sqrt{p^2+m^2}$. $\Delta^\beta$ is an analytic function for $u\in(0,\beta)$. Since the state is quasifree, the $n-$point functions with $n\neq 2$ are given in terms of the two-point function. 
Notice that $\lim_{\beta\to\infty}\Delta^\beta = \Delta^\infty = \Delta^+_m$ given in \eqref{eq:massive2pt}.

In order to compute expectation values in the state $\omega^\beta$ it is easier to represent observables in $(\mathcal{F}^V,\star_{w^\beta})$. To this end, we observe in particular that
$s(x,y)=\hbar(\Delta^\beta-\Delta^\infty)(x-y)$ is a smooth function, 
hence, 
proceeding as in Proposition \ref{prop:algebra-well-defined}
we get that the action of $\star_{w^\beta}$ is closed in $\mathcal{F}^V$. Furthermore,
we can obtain elements of $(\mathcal{F}^V,\star_{w^\beta})$ simply applying $\alpha_{s}$ to $(\mathcal{F}^V,\star_{\omega^\infty})$.
The state $\omega^\beta$ is then simply represented by ${\rm{ev}}_0$ on $(\mathcal{F}^V,\star_{w^\beta})$.

The normal ordered interaction Lagrangian 
$\no{V^g}=\frac{1}{2}\no{V^g_a+V^{g}_{-a}}$ acquires an extra factor when represented in   $(\mathcal{F}^V,\star_{w^\beta})$.
In particular, we have
\[
\alpha_{\Delta^\beta}\no{V^g} = 
  \alpha_{s} V^g  =
e^{-\hbar \frac{a^2}{2} (\Delta^\beta-\Delta^\infty)(0)}V^g\in  (\mathcal{F}^V,\star_{w^\beta}).
\]
At the same time, we have that 
\begin{equation}\label{eq:omegabeta-omegainfinity}
w^{\beta}(\overline{f},f) \geq w^{\infty}(\overline{f},f),
\end{equation}
hence, by an application of conditioning and inverse conditioning, namely by an application of Theorem \ref{th:control}, we can control the theory at finite temperature with the one at zero temperature.  
Furthermore, when $S(e^{-\hbar \frac{a^2}{2} (\Delta^\beta-\Delta^\infty)(0)}V)$ is constructed in  $(\mathcal{F}^V,\star_{w^\beta})$, we have that an analogue of Theorem \ref{thm:convergence} and of Theorem \ref{thm:convergence-V}
holds with $V$ multiplied by the extra factor $e^{-\hbar \frac{a^2}{2} (\Delta^\beta-\Delta^\infty)(0)}$ and with $K$ constructed in such a way that \eqref{eq:omegabeta-omegainfinity} is taken into account.

\subsection{Thermal states for interacting theories and their properties} \label{se:thermal-state-interacting-theory}
To obtain a
KMS state at finite inverse temperature for the interacting theory, we follow the approach of \cite{FL}.
The starting point is the automorphisms of the interacting dynamics which is represented in the free algebra as
  \[
  \tau_t^V(R_V(F)) =  R_V(F_t).
  \]
We notice that $\tau_t^V$ and $\tau_t$ are intertwined by a cocycle. For $t\geq 0$, and for $F$ with $\supp F \subset J^+(\Sigma_0)$ we have
  \[
  \tau_t^V(\cdot ) =  U(t) \tau_t(\cdot) U(t)^*, \qquad U(t) = S(V)^{-1}S(V_t).
  \]
  The cocycle is such that
  \begin{equation}\label{eq:cocycle-prop}
  U(t+s)=U(t)\tau_t(U(s))=U(s)\tau_s(U(t)).
  \end{equation}
 The interacting KMS state is (see Fredenhagen Lindner  \cite{FL} and Araki \cite{Araki73})
\begin{equation}\label{eq:interacting-state}
\omega^{\beta,V}(F) = \frac{\omega^\beta(F U(\mathrm{i}\beta))}{\omega^\beta(U(\mathrm{i}\beta))},\qquad F\in \mathcal{A}^V(J^{+}(\Sigma_0)).
\end{equation}
Notice that $U(s)$ is constructed as a power series in the coupling constant 
whose coefficients are elements of 
$\mathcal{A}^V$. 
Hence, $\omega^{\beta,V}(F)$ is also as a power series in $\lambda$ too, we shall later prove that for certain choices of $F$ that power series converges also in the adiabatic limit.
Furthermore, the wellposedness of the analytic continuation of 
$s\mapsto \omega^\beta(FU(s))$ necessary to get $\omega^{\beta, V}$ can be shown to hold order by order proceeding as in \cite{FL}. 
The state $\omega^{\beta, V}$ constructed in this way
 
%It 
is invariant under time translation, actually
\begin{align*}
\omega^{\beta,V}(\tau_t^V F) &= \frac{\omega^\beta( U(t) \tau_t(F) U^{-1}(t) U(\mathrm{i}\beta))}{\omega^\beta(U(\mathrm{i}\beta))}
= \frac{\omega^\beta( \tau_t(F) U^{-1}(t) U(\mathrm{i}\beta) \tau_{\mathrm{i}\beta}(U(t)) )}{\omega^\beta(U(\mathrm{i}\beta))}\\
&= \frac{\omega^\beta( \tau_t(F) \tau_t (U(\mathrm{i}\beta))  )}{\omega^\beta(U(\mathrm{i}\beta))}= \frac{\omega^\beta( F U(\mathrm{i}\beta)  )}{\omega^\beta(U(\mathrm{i}\beta))},
\end{align*}
where we used the property \eqref{eq:cocycle-prop} in the   third equality and the
  KMS condition of $\omega^\beta$
%invariance under time translations of $\omega^\beta$ 
in the last one.
$\omega^{\beta,V}$ satisfies the KMS condition with respect to the interacting time evolution $\tau^V$. For a proof of the analytic properties of $t\mapsto \omega^{\beta,V}(F\tau^V_t (G))$ we refer to the work of Araki \cite{Araki73}  or Fredenhagen Lindner \cite{FL}. Here we check that, thanks to the cocycle condition,
\begin{align*}
\omega^{\beta,V}(F\tau^V_{\mathrm{i}\beta}(G)) 
&= \frac{\omega^\beta( F U(\mathrm{i}\beta) \tau_{\mathrm{i}\beta}(G) U(\mathrm{i}\beta)^{-1}U(\mathrm{i}\beta))}{\omega^\beta(U(\mathrm{i}\beta))}
= \frac{\omega^\beta( G F U(\mathrm{i}\beta) )}{\omega^\beta(U(\mathrm{i}\beta))}
= \omega^{\beta,V}(GF).
\end{align*}
Hence, using the KMS condition of $\omega^\beta$, we may change $\chi$ to $\chi'$ without changing the state, actually,
assuming 
$\supp \dot{\chi}  \subset M\setminus J^{+}(\supp F) $ and $\supp \dot{\chi'}  \subset M\setminus J^{+}(\supp F) $.
% $\supp \dot{\chi}  \subset M\setminus J^{+}(\Sigma_\epsilon) $ and $\supp \dot{\chi'}  \subset M\setminus J^{+}(\Sigma_\epsilon) $
we notice that
\[
S(V^\chi+F) = S(V^{\chi'} + F + V^{\chi-\chi'}) =S(V^{\chi'} + F) S(V^{\chi'})^{-1} S(V^{\chi}) = S(V^{\chi'} + F) W_{\chi'\chi}
\]
where, 
thanks to the fact that $F\gtrsim V^{\chi-\chi'}$, 
we used the casual factorisation property   of the $S$ matrix which descends from the causal factorisation property of the time ordered product, see \cite{Book} for further details.
Furthermore
\[
S_{V^\chi}(F) = W_{\chi'\chi}^{-1} S_{V^{\chi'}}(F) W_{\chi'\chi}
\]
and
\[
U_\chi(\mathrm{i}\beta) = S(V^\chi)^{-1} \tau_{\mathrm{i}\beta}(S(V^\chi)) = W_{\chi'\chi}^{-1} U_{\chi'}(\mathrm{i}\beta) \tau_{\mathrm{i}\beta}(W_{\chi'\chi}).
\]
Hence
\[
\omega^\beta(S_{V^\chi}(F) U_\chi(\mathrm{i}\beta))  = \omega^\beta( W_{\chi'\chi}^{-1} S_{V^{\chi'}}(F) U_{\chi'}(\mathrm{i}\beta) \tau_{\mathrm{i}\beta}(W_{\chi'\chi}))  = \omega^\beta( S_{V^{\chi'}}(F) U_{\chi'}(\mathrm{i}\beta))
\]
where in the last equality we used the KMS condition of the free state.
This proves that, if $\supp{\dot\chi}\theta^- \subset M\setminus J^{+}(\Sigma_0)$   where $\theta^{-}(x^0)=\theta(-x^0)$,
\[
\omega^{\beta,V^\chi}(S_{V^\chi}(F)) = \omega^{\beta,V^{\chi'}}(S_{V^{\chi'}}(F)), \qquad \supp F\subset    \Sigma_\epsilon
\]
namely the state restricted to observables supported in the future of $\Sigma_\epsilon$ does not depend on $\chi$.

\subsection{Choice of the time cutoff function and the adiabatic limits}\label{se:form-og-generating-functionals}
To compute the correlation functions of the KMS state at inverse temperature $\beta$ for the interacting theory we consider the following generating functional
\begin{equation}\label{eq:generating-functional}
{G}(f) := \omega^{\beta,V}({W}(f)), \qquad f\in  \mathcal{D}(M),
\end{equation}
where ${W}(f)$ is the exponential of $\Phi(f)$  which is namely
\[
{W}(f) = e^{\mathrm{i}\Phi(f)}
\]
and, up to the factor $e^{-\frac{1}{2} w^\beta(f,f)}$ it coincides with the exponential of $\mathrm{i}\Phi(f)$ computed as the limit of partial sums contained in the algebra $(\mathcal{F}_{\text{reg}},\star_{w^\beta})   \subset \mathcal{A}^V$
  which are thus contained in the domain of $\omega^{\beta,V}$.  Furthermore, the state $\omega^{\beta,V}$ is given in terms of the KMS state of the free theory in \eqref{eq:interacting-state}.
The correlation functions of $\omega^{\beta,V}$ as a state over the free theory $\mathcal{A}^V$ are then obtained by means of a functional derivative of ${G}(f)$ with respect to $f$.
Following Fredenhagen and Lindner \cite{FL}, in terms of the KMS state of the free theory $\omega^\beta$, the generating functional $G$ given in \eqref{eq:generating-functional} can be computed as the ratio in \eqref{eq:interacting-state} once we have the following explicit expressions
\begin{equation}\label{eq:FandZ}
\begin{aligned}
\omega^\beta(W(f)U(\mathrm{i}\beta)) &= \sum_n  \int_{\beta  \mathcal{S}_n}  dU \omega^\beta\left(W(f) \star \tau_{\mathrm{i}u_1}(R_{V^\chi}(K^\chi)) \cdot_{T}   \dots \cdot_{T} \tau_{\mathrm{i}u_n}(R_{V^\chi}(K^\chi))\right) 
\\
\omega^\beta(U(\mathrm{i}\beta)) &= \sum_n  \int_{\beta  \mathcal{S}_n}  dU \omega^\beta\left(\tau_{\mathrm{i}u_1}(R_{V^\chi}(K^\chi)) \cdot_{T}   \dots \cdot_{T} \tau_{\mathrm{i}u_n}(R_{V^\chi}(K^\chi))\right) 
\end{aligned}
\end{equation}
where $U=(u_1,\dots, u_n)$ and $\beta \mathcal{S}_n=\{U\in \mathbb R^{n}\mid 0\leq u_1\leq \dots \leq u_n\leq \beta\}$ and it is the $n-$dimensional simplex of edge $\beta$.
Furthermore, as shown in \cite{FL}, the interacting Hamiltonian has the form
\[
K^\chi = \int \dot{\chi}(x^0) \theta(-x^0)    h(x)   \cos(a\varphi(x)) d\mu_x
\]
where $\theta(-x^0)$ is the Heaviside step function.

We want to prove that the sums in \eqref{eq:FandZ}  exist and that the limit $h\to 1$ of their ratio can be taken. 
To this end, we use the freedom in the choice of $\chi$ discussed in the previous section and we choose a time cutoff function 
which is the Heaviside step function. Contrary to the 4d case this does not give origin to further divergences.
Actually, the estimates in the proofs of the Theorems given in the previous sections involve only integrals of the coupling constants and not of its derivatives.   
This implies that the limit where $\chi$ tends to the Heaviside step function can be taken. Furthermore, under this limit 
$\dot\chi \to {\delta}$ and, formally,
 \begin{equation}\label{eq:interacting-hamiltonian}
 R_{V^\chi}(K^\chi) \to \dot{V} =   \int  d\mathbf{x} \cos(a \varphi(0,\mathbf{x})) h(\mathbf{x}).
 \end{equation}
% \[
% R_{V^\chi}(K^\chi) = K^\chi + \mathrm{i} V^\chi \cdot_T K^\chi 
% -  \mathrm{i} V^\chi \star K^\chi
% \dots 
% \]
To be more precise the $n-$th order contribution in $\lambda$ to $R_{V^\chi}(K^\chi)$ is
\begin{gather*}
R_n = \sum_{(a_1,\dots,a_{n+1})\in \{-a,a\}^{n+1}}
\sum_{0\leq l\leq n}
\frac{1}{2^{n+1}}
\frac{(-\mathrm{i})^l(\mathrm{i})^{n-l}}{l! (n-l)!}
\int_{M^{n+1}}    
e^{\mathrm{i} \sum_i a_i \varphi(x_i)}
e^{-\sum_{1\leq i< j \leq l} a_ia_j\overline{w^{\beta,{\rm F}}}(x_i,x_j) } 
\\
e^{- \sum_{l+1\leq i< j \leq {n+1}}a_ia_j{w^{\beta,{\rm F}}}(x_i,x_j) }  
e^{- \sum_{1\leq i\leq l} \sum_{l+1< j<n+1} a_ia_j{w^\beta}(x_i,x_j) }  
g(x_1)\dots g(x_n)  \dot{g}(x_{n+1}) 
d\mu_{x_1} \dots d\mu_{x_{n+1}}
\end{gather*}
where $w^{\beta,{\rm F}}= w^{\beta}+\mathrm{i} \hbar \Delta^A$ is the Feynman propagator associated to $w^{\beta}$. 
Notice that the real part of  $w^{\beta,{\rm F}}$, $\overline{w^{\beta,{\rm F}}}$ and of $w^\beta$ coincides. We denote it by $\hbar H_m^\beta = {\rm Re} w^\beta$ and $H_m^\beta$ differs from $H_m$ in \eqref{eq:Hm} by a smooth function. 
Furthermore, by the causal properties of the Bogoliubov map the domain of integration can be (non-optimally) restricted to
$J^{-} = \{(x_1,\dots, x_n, x_{n+1}) | x_{n+1}\in M, x_j\in J^{-}(x_{n+1}), x_j^0 \geq 0 , j\in \{1,\dots, n\}  \} $,
hence 
%{\tt   USE the causal properties of $R$ to restrict the sum}
\begin{gather*}
|R_n| \leq
\sum_{(a_1,\dots,a_{n+1})\in \{-a,a\}^{n+1}}
\frac{1}{2}
\int_{J^{-}}    
e^{-\sum_{1\leq i< j \leq n+1} a_ia_j \hbar H^\beta_m(x_i,x_j) } 
|g(x_1)\dots g(x_n)  \dot{g}(x_{n+1}) \theta(-x_{n+1}^0)| 
d\mu_{x_1} \dots d\mu_{x_{n+1}}.
\end{gather*}
With an estimate similar to the one used to prove that $S(V)$ is an absolutely convergent series for every $\varphi$ we can thus prove that the series defining $R_V(K)$ is an absolute convergence series for every $\varphi$.  

Notice that $e^{-\sum_{1\leq i< j \leq n+1} a_ia_j {\hbar H^\beta_m}(x_i,x_j) }$ is a locally integrable function because $H^\beta_m$ differs from $H_m$ by a smooth function and, as discussed above, $e^{-\sum_{1\leq i< j \leq n+1} a_ia_j \hbar H_m(x_i,x_j) }$ is locally integrable.  Furthermore, in the limit where $\chi \to \theta$,  $J^-$ becomes a zero measure set. 
Since $g$ is uniformly bounded by $1$ also in the limit, this implies that all the integrals over $x_1, \dots, x_n$ in $R_n$ are integrals of integrable functions over zero measure sets hence they give zero abd thus
$|R_n| \to 0$ uniformly in $\varphi$ for $n\geq 1$ while $R_0\to \dot{V}$.

 Notice that $\dot{V}$ is outside $\mathcal{A}^V$, however, $G(f)$ in \eqref{eq:generating-functional} and the correlation functions computed directly with $\chi=\theta$ are still finite.  
 In order to further simplify \eqref{eq:FandZ} we may use the time slice axiom \cite{ChF08} of the free theory 
 to restrict the observables (generating functional) to $\Sigma_0$. In this way the problem   of taking the adiabatic limit is only one dimensional.
 
Actually, for $f\in C^{\infty}_0(J^+(\Sigma_0);\mathbb{R})$ we have, up to solutions of the free equation of motion, 
\begin{equation}\label{eq:time-slice-free}
e^{\mathrm{i}\Phi(f)}
=
e^{\mathrm{i} \sigma(\varphi,\psi_f)}e^{\mathrm{i} \Phi(P (\chi_{2}\Delta^A f))}
\end{equation}
where $\psi_f= \Delta(f)$, $\Delta$ being the causal propagator of the free theory, $\Delta^A$ is the 
advanced propagator $P=\square - m^2$ is the differential operator which   represents the free equation of motion,
$\chi_2$ is a smooth function invariant under spatial translation which vanishes on   $(-\infty, -\epsilon)$ and which is $1$ on $J^+(\Sigma_0)$.
Furthermore, 
\[
\sigma(\varphi,\psi_f) = \int_{M} \left(\varphi \partial_t \psi_f -  \partial_t\varphi \psi_f \right) \partial_t \chi_2.
\]
Notice that the integrand appearing at the right hand side is supported on $(-\epsilon,0)\times \mathbb{R}\subset M$
and in the limit $\epsilon\to 0$, namely when $\chi_2$ tends to the heaviside step function $\sigma(\varphi,\psi_f)$ converges to the standard symplectic form of the free theory. 
Since $\Phi(P (\chi_{2}\Delta^A f))$ vanishes when evaluated on a state, to obtain the generating functional $G(f)$, we just need to consider expectation values of $e^{\mathrm{i} \sigma(\varphi,\psi_f)}$, namely expectation values of
% \begin{equation}\label{eq:A}
% A(\psi,\psi') := e^{\mathrm{i}\Phi(\psi \partial_t \chi_2)}  e^{\mathrm{i}\partial_t \Phi(\psi'\partial_t \chi_2)}
% \end{equation}
\begin{equation}\label{eq:A}
A(\psi,\psi') := e^{\mathrm{i}\Phi(\psi \partial_t \chi_2)}  e^{\mathrm{i}\partial_t \Phi(\psi'\partial_t \chi_2)}
\end{equation}
with $\psi$ and $\psi'$ compactly supported real functions whose support is near $\Sigma_0$   selected in such a way that $\psi \partial_t \chi_2= \partial_t\psi_f\partial_t{\chi_2}$ and
$\psi' \partial_t \chi_2= -\psi_f\partial_t{\chi_2}$. 
Summarizing this discussion, 
assuming to have $A$ as in \eqref{eq:A} with $\psi$ and $\psi'$ supported near $\Sigma_0$
and taking the limit $\chi\to\theta$ and $\chi_2\to \theta$ we have
\begin{align*}
\omega^\beta(W(f)U(\mathrm{i}\beta)) &= \omega^\beta(AU(\mathrm{i}\beta)) = \sum_n  \frac{\lambda^n}{\hbar^n} \int_{\beta  \mathcal{S}_n}  dU \omega^\beta\left(A(\partial_t \psi_f,-\psi_f) \star \tau_{\mathrm{i}u_1}(\dot{V}) \cdot_{T}   \dots \cdot_{T} \tau_{\mathrm{i}u_n}(\dot{V})\right)
\end{align*} 
where both $\star$ and $\cdot_T$ are defined with respect to $\omega^\beta$.
More explicitly, denoting by $\psi(\mathbf{x}):=\psi(0,\mathbf{x})$ we have
\begin{equation}\label{eq:Fexp-Zexp}
 \begin{aligned}
\omega^\beta(A(\psi,\psi') U(\mathrm{i}\beta))
&=
\sum_n  \frac{1}{2^n} \frac{\lambda^n}{\hbar^n} \int_{\beta  \mathcal{S}_n}  dU  
\int dX h^{\otimes n}(X)
\\
&\sum_{a_i = \pm a}
e^{-\sum_{i} a_i\hbar \int d\mathbf{x}_0 \psi(\mathbf{x}_0) C^\beta_m(u_i,\mathbf{x}_0-\mathbf{x}_i)}   
e^{+{\mathrm{i}}\sum_{i} a_i\hbar \int d\mathbf{x}_0 \psi'(\mathbf{x}_0) \partial_u C^\beta_m(u_i,\mathbf{x}_0-\mathbf{x}_i)}  
\\
&
e^{-\sum_{1\leq i<j\leq n} a_ia_j\hbar C^\beta_m(u_i-u_j,x_i-x_j)} 
\end{aligned}
\end{equation}
where $h^{\otimes n}(X)=h(x_1)\dots h(x_n)$ and where \textbf{the covariance} $C^\beta_m(u,x) := \Delta^\beta(-\mathrm{i}u ,x)$ with $\Delta^\beta$ given in \eqref{eq:2pt-beta}.
We preliminary study some properties of the quantities appearing in \eqref{eq:Fexp-Zexp}. 
In view of the form of \eqref{eq:2pt-beta} we have that 
\begin{equation}\label{eq:defC}
\begin{aligned}
C^\beta_m(u,x) = \frac{1}{\hbar} \Delta^\beta(-\mathrm{i}u ,x) 
&= 
\frac{1}{4\pi}\int_{-\infty}^\infty   d p    \frac{1}{{\rm w}_p (1-e^{-\beta {\rm w}_p}) } \left( e^{- u {\rm w}_p  }   +  e^{-(\beta-u) {\rm w}_p }   \right) e^{i xp }  
\\
&= 
\frac{1}{2\pi}\int_{0}^\infty   d p     \frac{1}{{\rm w}_p}  \frac{\cosh ( (\frac{\beta}{2}-u) {\rm w}_p )}{\sinh(\frac{\beta}{2} {\rm w}_p)}   \cos( xp )  
\end{aligned}
\end{equation}
where ${\rm w}_p=\sqrt{p^2+m^2}$.
Hence we observe that  $C^\beta_m(u,x)$ on $(0,\beta)\times \mathbb{R}$ is real.
Furthermore, it possesses the following symmetries
\begin{equation}\label{eq:C-symmetry}
C^\beta_m(u,x)=C^\beta_m(u,-x), \qquad C^\beta_m(u,x)=C^\beta_m(\beta -u,-x).
\end{equation}
Notice that 
\begin{lemma}\label{le:positiveC}
$C^\beta_m(u,x)$ is positive on $(0,\beta)\times\mathbb{R}$.
\end{lemma}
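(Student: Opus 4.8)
The plan is to recognise $C^\beta_m$ as an infinite sum over thermal images of the two-dimensional Euclidean propagator and then invoke the strict positivity of the modified Bessel function $K_0$.

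First I would start from the representation of $C^\beta_m$ in \eqref{eq:defC} written as $\frac{1}{4\pi}\int_{\mathbb R}\frac{dp}{w_p(1-e^{-\beta w_p})}\bigl(e^{-uw_p}+e^{-(\beta-u)w_p}\bigr)e^{ipx}$ and expand the Bose factor geometrically, $\frac{1}{1-e^{-\beta w_p}}=\sum_{k\ge 0}e^{-k\beta w_p}$; this is legitimate because $u\in(0,\beta)$ makes all exponents strictly negative. Using the $t=0$ value of the two-dimensional Euclidean Green's function, $\frac{1}{4\pi}\int_{\mathbb R}\frac{dp}{w_p}e^{-w_p|\tau|}e^{ipx}=\frac{1}{2\pi}K_0\bigl(m\sqrt{\tau^2+x^2}\bigr)$, each term of the expanded integrand integrates to a positive multiple of a Bessel function with argument $m\sqrt{(u+k\beta)^2+x^2}$ or $m\sqrt{(\beta-u+k\beta)^2+x^2}$. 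Since $K_0(z)\sim\sqrt{\pi/2z}\,e^{-z}$, the double sum/integral is absolutely convergent, so Fubini applies, and reindexing the two families into a single sum over $n\in\mathbb Z$ yields the image-sum representation
\[
C^\beta_m(u,x)=\frac{1}{2\pi}\sum_{n\in\mathbb Z}K_0\bigl(m\sqrt{(u+n\beta)^2+x^2}\bigr).
\]

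Positivity is then immediate. For $u\in(0,\beta)$ we have $u+n\beta\neq 0$ for every $n\in\mathbb Z$, so each argument $m\sqrt{(u+n\beta)^2+x^2}$ is strictly positive for all $x\in\mathbb R$; since $K_0>0$ on $(0,\infty)$ and the series converges absolutely and locally uniformly, $C^\beta_m(u,x)$ is a convergent sum of strictly positive terms, hence $C^\beta_m(u,x)>0$ on $(0,\beta)\times\mathbb R$. The symmetries \eqref{eq:C-symmetry} are also transparent in this representation.

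The only step requiring genuine care is the interchange of summation and integration, but this is routine given the exponential decay of $K_0$, and I do not anticipate a real obstacle. Should one prefer to bypass the image sum, \eqref{eq:defC} already exhibits $C^\beta_m(u,\cdot)$ as the cosine transform of the manifestly positive spectral density $\cosh\bigl((\beta/2-u)w_p\bigr)/\bigl(w_p\sinh(\beta w_p/2)\bigr)$, which yields at least that $C^\beta_m(u,\cdot)$ is of positive type in the spatial variable for each fixed $u\in(0,\beta)$; the image sum, however, is what gives the stronger pointwise claim.
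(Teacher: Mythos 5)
Your proof is correct, and it takes a genuinely different route from the one in the paper. You expand the Bose factor geometrically and resum the result into the thermal image-sum representation $C^\beta_m(u,x)=\frac{1}{2\pi}\sum_{n\in\mathbb Z}K_0\bigl(m\sqrt{(u+n\beta)^2+x^2}\bigr)$, after which pointwise strict positivity is immediate from $K_0>0$; the interchange of sum and integral is indeed harmless because of the exponential decay of $K_0$. The paper instead works directly with the spectral representation $C^\beta_m(u,x)=\frac{1}{\pi}\int_0^\infty f(w_p)\cos(px)\,dp$, checks by differentiation that the density $f(w_p)=\frac{1}{w_p}\frac{\cosh((\beta/2-u)w_p)}{\sinh(\beta w_p/2)}$ is positive and monotonically decreasing in $p$, and then concludes positivity of the cosine transform by a period-by-period estimate of the oscillatory integral. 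Your argument buys an explicit, manifestly positive closed form (which also makes the symmetries \eqref{eq:C-symmetry} and the exponential spatial decay of Proposition \ref{pr:decayC} transparent, and is essentially the $u$-space counterpart of the positive path-integral weight $W(t,u)$ the paper later derives via the Jacobi triple product in \eqref{eq:path-int-rep}); the paper's argument is shorter and avoids any resummation, at the cost of being specific to the monotone-density structure. You are also right to flag that the cosine-transform representation alone only yields positive-definiteness in $x$, not the pointwise statement, which is why either the image sum or the monotonicity argument is genuinely needed.
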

\begin{proof}
 
In view of the symmetry \eqref{eq:C-symmetry}, we just need to check the claim for $u\in(0,\frac{\beta}{2})$ and for $x \geq 0$.
The case $x=0$ descends for the last equality in \eqref{eq:defC}.
To treat the case $x>0$ we preliminary observe that for $u>0$ and $x>0$ it holds that the modified Bessel function of the second kind $K_0$, also called Mac Donald function satisfies the following integral representations
\begin{equation}\label{eq:Kint}
K_0(m\sqrt{x^2+u^2}) = \frac{1}{2}\int_{-\infty}^\infty  dp \frac{e^{- u {\rm w}_p}}{{\rm w}_p} \cos(p x)
=
\int_{m}^\infty  dw \frac{e^{- m\sqrt{x^2+u^2} w}}{\sqrt{w^2-m^2}}.
\end{equation}
These equalities hold by 
an application of Jordan Lemma, used to prove that for $x>0$ 
\[
\mathcal{J}_\alpha = \frac{1}{2}\int_{\Gamma_\alpha} \frac{e^{i z x}}{\sqrt{z^2+1}} dz
\]
with $\Gamma_\alpha=\{z(t)=\cos\alpha \sinh{t} + i \sin\alpha \cosh{t}| t\in \mathbb{R}\}$, does not depend on $\alpha\in [0,\frac{\pi}{2}]$.
Furthermore, one of the integral representations the modified Bessel function of the second kind $K_0$ given e.g. in 10.32.6 in \cite{NIST:DLMF}
gives that $\mathcal{J}_0(x) = K_0(x)$. 

By the last equality in \eqref{eq:Kint}, we thus have that $K_0(mx)$ is positive. Moreover, we can write $C^\beta_m(u,x)$ as a linear combination of modified Bessel functions of the second kind, actually, from \eqref{eq:defC} we have
\[
C^\beta_m(u,x) = \frac{1}{4\pi} \int_{-\infty}^\infty dp \frac{1}{{\rm w}_p} \sum_{n=0}^\infty e^{-\beta n {\rm w}_p} \left(e^{-u{\rm w}_p}+e^{-(\beta-u){\rm w}_p} \right) \cos(xp),
\]
hence for $x>0$
%\[
%C^\beta_m(u,x) = \frac{1}{2\pi} \sum_{n=0}^\infty \left( K_0 (m \sqrt{x^2 + (\beta n+ u)^2}) + K_0 (m\sqrt{x^2 + (\beta n+\beta - u)^2}) \right).
%\]
%\[
%C^\beta_m(u,x) = \frac{1}{2\pi} \sum_{n=0}^\infty \left( K_0 (m \sqrt{x^2 + (\beta n+ u)^2}) + K_0 (m\sqrt{x^2 + (-\beta n-\beta + u)^2}) \right).
%\]
\[
C^\beta_m(u,x) = \frac{1}{2\pi} \sum_{n=-\infty}^\infty  K_0 (m \sqrt{x^2 + (\beta n+ u)^2}) 
\]
and thus $C^\beta_m(u,x)$ is positive because it is a sum of positive functions.
% In view of the symmetry \eqref{eq:C-symmetry}, we just need to check the claim for $u\in(0,\frac{\beta}{2})$ and for $x>0$.
% Notice that for every $u$ the function
% \[
% f({\rm w}_p):=\frac{1}{{\rm w}_p}  \frac{\cosh ( (\frac{\beta}{2}-u) {\rm w}_p )}{\sinh(\frac{\beta}{2} {\rm w}_p)}, \qquad {\rm w}_p\geq m 
% \]
% is positive and monotonically decreasing in ${\rm w}_p$ because
% \[
% f'({\rm w}_p) =  -\frac{1}{{\rm w}_p^2} \frac{\cosh ( (\frac{\beta}{2}-u) {\rm w}_p )}{\sinh(\frac{\beta}{2} {\rm w}_p)}
% +
% \frac{1}{{\rm w}_p}  (\frac{\beta}{2}-u) \frac{\sinh ( (\frac{\beta}{2}-u) {\rm w}_p )\sinh(\frac{\beta}{2} {\rm w}_p)}{\sinh(\frac{\beta}{2} {\rm w}_p)^2}
% -
% \frac{\beta}{2} \frac{1}{{\rm w}_p}  \frac{\cosh ( (\frac{\beta}{2}-u) {\rm w}_p ) \cosh(\frac{\beta}{2} {\rm w}_p)}{\sinh(\frac{\beta}{2} {\rm w}_p)^2} 
% \]
% and for $u\in (0,\beta/2)$ 
% \[
% (\frac{\beta}{2}-u) < \frac{\beta}{2}, \qquad    \sinh(\frac{\beta}{2} {\rm w}_p) <\cosh(\frac{\beta}{2} {\rm w}_p), 
% \qquad    \sinh ( (\frac{\beta}{2}-u) {\rm w}_p )< \cosh ( (\frac{\beta}{2}-u) {\rm w}_p )
% \]
% hence $f({\rm w}_p)$ is monotonically decreasing in $p$ for positive $p$.
%Hence $C^\beta_m(u,x) = \int_0^\infty dp f({\rm w}_p) \cos(p x)$ is positive because $C^\beta_m(u,0)$ is positive and
%for every $x>0$ and for every $n$ we observe that
%\[
%\int_{[\frac{2\pi}{x} n,\frac{2\pi}{x} (n+1)]}  dp f({\rm w}_p) \cos(p x)\geq 0
%\]
\end{proof}
Notice that in view of \eqref{eq:2pt-beta} $\Delta^\beta(-\mathrm{i}u ,x) =  C^{\beta}_m(u,x)  $, hence
the imaginary part of $\Delta^\beta(-\mathrm{i}u_i,\mathbf{x}_0-\mathbf{x}_i)$ is 0 and 
the real part of $\Delta^\beta(-\mathrm{i}u_i,\mathbf{x}_0-\mathbf{x}_i)$ is  equal to $ C^\beta_m(u_i,\mathbf{x}_0-\mathbf{x}_i)$.
We furthermore observe that 
\[
\partial_t \Delta^\beta(-\mathrm{i}u,{x})  = \mathrm{i} \partial_u C^\beta_m(u,x)
\]
and thus $\partial_t \Delta^\beta_m(-\mathrm{i}u,{x}) $ is purely imaginary because  $C^\beta_m$ is a real function.
We thus have that if $f$ and thus $\psi'_f$ are real, the contribution due to 
\[
e^{-\sum_{i} a_i \hbar\int d\mathbf{x}_0 \psi'(\mathbf{x}_0) \partial_t \Delta^\beta(-\mathrm{i}u_i,\mathbf{x}_0-\mathbf{x}_i)}  
= e^{-\mathrm{i}\sum_{i} a_i\hbar \int d\mathbf{x}_0 \psi'(\mathbf{x}_0) \partial_uC(u_i,\mathbf{x}_0-\mathbf{x}_i)}  
\]
is a pure phase.

The limit $h\to 1$ of  $\frac{\omega^\beta({W}(f) U(\mathrm{i}\beta))}{\omega^\beta(U(\mathrm{i}\beta))}$ can be taken adapting the estimates of the cluster  expansion discussed in section III of the work of Glimm Jaffe Spencer \cite{GJS}, see also Fr\"ohlich and Seiler \cite{FS}. 
Notice that they discuss the cluster expansion of a two-dimensional theory. Here we are in a simpler case, the limit we have to do is $h\to 1$ and $h$ is supported on some interval of $\mathbb{R}$.

\section{Covariance and boundary conditions}\label{se:covariance-bc}

Let us recall $C^\beta_m$ given in \eqref{eq:defC} which is the covariance without boundary conditions and possess the following symmetris
\[
C^\beta_m(u,x)=C^\beta_m(u,-x), \qquad C^\beta_m(u,x)=C^\beta_m(\beta -u,-x) 
\]
on $(0,\beta)\times \mathbb{R}$.
Notice that, having a translation invariant covariance $C^\beta_m(u,x_1-x_2)$, 
we can construct a covariance with Dirichlet boundary condition in $x = b$ using the image charge method.
The new covarinace $C^b(u,x_1,x_2)$ is non zero only if $x_i \geq b $ $\forall i\in\{1,2\}$ or $x_i \leq b$ $\forall i\in\{1,2\}$. 
Furthermore, if $x_i \geq b$ $\forall i\in\{1,2\}$ we have 
\begin{equation}\label{eq:Cb}
\begin{aligned}
C^{b}(u,x_1,x_2) &=  C^\beta_m(u,x_1-x_2) - C^\beta_m(u,x_1-(-(x_2-b) +b))  \\
&=  C^\beta_m(u,x_1-x_2) - C^\beta_m(u,x_1 + x_2-2b ) 
\end{aligned}
\end{equation}
if we now consider an interval  $[a,b]$, a covariance with Dirichlet boundary condition is formed by three components. If both $x_1>b$ and $x_2>b$ just use the covariance obtained above, if both $x_1 < a$ and $x_2 < a$ we just adapt the previous convariance, if both $\{x_1,x_2\}\subset (a,b)$ we use recursively the image charge method. To obtain
\begin{equation}\label{eq:CX}
C_{(a,b)}(u,x_1,x_2) =  \sum_{n=-\infty }^\infty \left( C^\beta_m(u,x_1-x_2  +2(b-a)n) - 
C^\beta_m( u,x_1 + x_2-2b   +2(b-a)n)\right)
\end{equation}
In view of the exponential decay for large $|x|$ of $C^\beta_m(u,x)$ the infinite sum present above is convergent. 
The resulting covariance decays exponentially. 

\bigskip

Consider now $\Gamma$, a finite set of elements of $\mathbb{Z}$, and decompose $\mathbb R \setminus \Gamma$ in its connected components $X_i$
\[
\mathbb R \setminus \Gamma = \bigcup_{i=0}^{|\Gamma|} X_i .
\]
For simplicity we also assume that $X_i<X_{i+i}$, (namely for $x\in X_i$ and $y\in X_{i+1}$ we have $x<y$).
In $\{ X_i\mid i\in \{0,\dots, |\Gamma|\}\}$ there are two non compact elements $X_0$ and $X_{|\Gamma|}$ and the other $|\Gamma|-1$ elements are compact. 
The distribution $C^{\Gamma}(u,x_1,x_2)$ is now the covariance with $0-$Dirichlet data on $\Gamma$, hence 
\begin{itemize}
\item $C^{\Gamma}(u,x_1,x_2)$  is non vanishing only if $x_1,x_2$ lie in the same $X_i$
\item if $x_1,x_2 \in  X_i$ for some $0<i<|\Gamma|$ 
\[
C^{\Gamma}(u,x_1,x_2) = C_{X_i}(u,x_1,x_2)
\]
where $C_{X_i}$ is given in \eqref{eq:CX},
\item if $x_1,x_2 \in  X_i$ for $=0$ or $i=\Gamma$
\[
C^{\Gamma}(u,x_1,x_2) = C^{\partial X_i}(u,x_1,x_2)
\]
where $C^{\partial X_i}$ is given in \eqref{eq:Cb}
 \end{itemize}
Later we shall use $C^\Gamma$ at the place of $C^\beta_m$ in \eqref{eq:interacting-state} to to analyze the adiabatic limits. Hence, in the next we shall derive some estimates satisfied by $C^\beta_m$ and by $C^\Gamma$

\subsection{Path integral representation of \texorpdfstring{$C^\beta_m$}{Cbetam}}

For later purposes it will be useful to have another representation of $C^\beta_m$. We recall that, using the Wiener measure,
we can rewrite the (one dimensional) fundamental solution of the Heat equation  
\[
K(x,y; T) = \frac{1}{\sqrt{2\pi T}}e^{ - \frac{(x-y)^2}{2T}}
\]
as a sum over the paths joining $x$ and $y$ in a time $T$. 
\begin{equation}\label{eq:wiener-measure}
K(x,y; T)  = \int  d\mu_{x,y}^T(\omega)  
\end{equation}
where $\mu_{x,y}^T $ is the Wiener measure. 

We observe that the spatial Fourier transform\footnote{Here we use the convention that $\hat{f}(p)=\int  {e^{\mathrm{i}px} f(x)}dx$.} of $K(x,y; T)$ is $\hat{K}(p,T) = e^{-\frac{p^2 T}{2}}$, hence, to write $\frac{1}{p^2 }$ as a sum over paths we integrate $K(x,y; T)$ in time, actually
\[
\hat{G} = \frac{1}{p^2 } =  \lim_{T\to\infty}\frac{1}{2}\int_0^T  \hat{K}(p,t) dt, \qquad p\neq 0,
\]
and
\[
G(x,y) = \frac{1}{4\pi}\int_0^\infty  \int  d\mu_{x,y}^t(\omega)      dt. 
\]
With these observations we now construct $C^\beta_m(u,x-y)$, we compute its Fourier transform in $x-y$ and its Fourier coefficients in $u$. We get
\[ 
\hat{C}^\beta_m(n,p) = \frac{c}{w(p)^2+\left(\frac{2\pi}{\beta} n\right)^2}  = \lim_{T\to\infty}\frac{c}{4\pi}\int_0^{T} e^{- \left(w(p)^2+\left(\frac{2\pi}{\beta} n\right)^2\right) \frac{t}{2}}dt
\]   
where $c$ is a suitable constant,
hence recalling that $w(p)^2 = p^2 + m^2$ and proceeding as before we have that 
\[
\tilde{C}(n;x,y)  =  \lim_{T\to\infty}\frac{c}{4\pi}\int_0^{T} \int  e^{- \left(m^2+\left(\frac{2\pi}{\beta} n\right)^2\right) \frac{t}{2}}  d\mu_{x,y}^t(\omega)  dt
\] 
hence 
\begin{equation}\label{eq:path-int-rep}
\begin{aligned}
{C^\beta_m}(u,x,y)  &=  \sum_{n} e^{\mathrm{i}\frac{2\pi}{\beta} n  u} \tilde{C}(n,x,y)  =  \lim_{T\to\infty} \frac{c}{4\pi} \sum_{n} e^{\mathrm{i}\frac{2\pi}{\beta} n  u}
\int_0^{T} \int  e^{- \left(m^2+\left(\frac{2\pi}{\beta} n\right)^2\right) \frac{t}{2}}  d\mu_{x,y}^t(\omega)  dt\\
&=  
\int_0^{\infty} \int  e^{- \frac{m^2}{2} t} W(t,u)  d\mu_{x,y}^t(\omega)  dt   
\end{aligned}
\end{equation}
where $W(t,u)$ is a suitable positive weight.
Actually for $q = e^{- \left(\frac{2\pi}{\beta} \right)^2 \frac{t}{2}} $ and $z=e^{\mathrm{i}\frac{2\pi}{\beta} u}$
\[
W(t,u) = \sum_{n=-\infty}^\infty e^{\mathrm{i}\frac{2\pi}{\beta} n  u} e^{- \left(\frac{2\pi}{\beta} n\right)^2 \frac{t}{2}} 
=\sum_{n=-\infty}^\infty z^n q^{n^2} =  \prod_{m=1}^\infty (1-q^m)(1+z q^{2m-1} )(1+z^{-1} q^{2m-1} )
\]
where the last equation is nothing but the Jacobi triple product formula. Observing that 
$(1+z q^{2m-1} )(1+z^{-1} q^{2m-1} ) = (1+ (z+z^{-1})q^{2m-1} + q^{4m-2})$ we get that $W(t,u)$ is positive because $0< q\leq 1$ and $z+z^{-1}=2\cos(\frac{2\pi}{\beta} u)$ and thus $W(t,u)$ is a product of positive factors.

Now if a boundary condition is imposed we have simply to restrict the set of paths over which the Wiener integral is taken  \cite{Roep,GJ12,GJS,spencer1974,Ciesielski}. In particular, let $\Gamma$ be a set of points of $\mathbb{R}$. To represent $C_\Gamma(x,y)$ the covariance which vanishes for $y$ in $\Gamma$, the Wiener integral is restricted to paths which do not touch the elements of  $\Gamma$.  
So 
let $\omega(t)$ be a path in $\mathbb{R}$ starting in  $x$ and ending in $y$, and let $b \in\mathbb{R}$ a point where we impose boundary conditions,   we introduce the following
\[
J^T_b(\omega) = \begin{cases}
0 \qquad \text{if} \quad \omega(t) = b  \quad \text{for some } t\in(0,T) \\ 
1 \qquad  \text{otherwise}
\end{cases}
\]
hence
\begin{equation}\label{eq:Cb-path}
C^b(u,x;y) =  
\int_0^{\infty} \int  e^{- \frac{m^2}{2} T} W(T,u) J^T_b(\omega)d\mu_{x,y}^T(\omega)  dT  
\end{equation}
and
\begin{equation}\label{eq:Cab-path}
C_{(a,b)}(u,x;y) =  
\int_0^{\infty} \int  e^{- \frac{m^2}{2} T} W(T,u) \prod_{p\in\{a,b\}}
J^T_p(\omega)d\mu_{x,y}^T(\omega)  dT   
\end{equation}

\subsubsection{Decay properties of $C^\beta_m$ for large separation}

We derive some bounds satisfied by $C^\beta_m$ given in \eqref{eq:defC}. 

\begin{prop}\label{pr:decayC}
Consider $C^\beta_m$ given in \eqref{eq:defC}, for $x>\alpha$ it holds that
\[
|C_m^\beta(u,x)| \leq c_\beta \frac{e^{-\frac{m}{\sqrt{2}} |x|}}{m}
\] 
where the bounds holds uniformly in $u$ and the constant is such that
\[
c_\beta  =  
\frac{2}{\beta}   
\frac{1}{1-e^{  -\frac{\alpha}{\sqrt{2}}   \frac{2\pi }{\beta}}}.
\]
\end{prop}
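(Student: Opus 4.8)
The plan is to pass from the spectral representation \eqref{eq:defC} to a Matsubara (image--charge) representation of $C^\beta_m$ in which the exponential decay in $x$ is visible term by term, and then to sum a geometric series. By the symmetry \eqref{eq:C-symmetry} it suffices to treat $x>0$. First I would invoke the classical partial--fraction identity
\[
\frac{1}{\beta}\sum_{n\in\mathbb{Z}}\frac{e^{i\omega_n u}}{\omega_n^2+w^2}=\frac{1}{2w}\,\frac{\cosh\!\big((\tfrac{\beta}{2}-u)w\big)}{\sinh(\tfrac{\beta}{2}w)},\qquad 0<u<\beta,\ \ \omega_n=\tfrac{2\pi n}{\beta},
\]
which identifies the integrand in \eqref{eq:defC} with $\tfrac{2}{\beta}\sum_n e^{i\omega_n u}(\omega_n^2+w_p^2)^{-1}$, $w_p^2=p^2+m^2$. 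Interchanging this (for $x\neq0$ absolutely convergent) sum with the $p$--integral and using the elementary residue evaluation $\int_0^\infty(p^2+\kappa_n^2)^{-1}\cos(xp)\,dp=\tfrac{\pi}{2\kappa_n}e^{-\kappa_n|x|}$ with $\kappa_n:=\sqrt{\omega_n^2+m^2}$, one obtains
\[
C^\beta_m(u,x)=\frac{1}{\beta}\sum_{n\in\mathbb{Z}}\frac{e^{i\omega_n u}}{\kappa_n}\,e^{-\kappa_n|x|}.
\]
(The same formula can be read off from the path--integral representation \eqref{eq:path-int-rep} by performing the Gaussian integral over the Wiener time $t$, which again exhibits the decay.)

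Given this representation the bound is straightforward. Since $|e^{i\omega_n u}|=1$, the right--hand side below is independent of $u$, so the estimate is automatically uniform in $u$:
\[
|C^\beta_m(u,x)|\le\frac{1}{\beta}\sum_{n\in\mathbb{Z}}\frac{e^{-\kappa_n|x|}}{\kappa_n}\le\frac{1}{\beta m}\sum_{n\in\mathbb{Z}}e^{-\kappa_n|x|},
\]
where I used $\kappa_n\ge m$. The one inequality I need is $\sqrt{a^2+b^2}\ge(a+b)/\sqrt{2}$ for $a,b\ge0$ (equivalently $(a-b)^2\ge0$), which gives $\kappa_n\ge(m+|\omega_n|)/\sqrt{2}$ and hence $e^{-\kappa_n|x|}\le e^{-\frac{m}{\sqrt2}|x|}\,e^{-\frac{|\omega_n|}{\sqrt2}|x|}$, so that
\[
|C^\beta_m(u,x)|\le\frac{e^{-\frac{m}{\sqrt2}|x|}}{\beta m}\Big(1+2\sum_{n\ge1}e^{-\frac{2\pi n}{\beta\sqrt2}|x|}\Big).
\]
For $x>\alpha$ the tail is dominated by the geometric series of ratio $q:=e^{-\frac{\alpha}{\sqrt2}\frac{2\pi}{\beta}}<1$, and $1+2\sum_{n\ge1}q^{n}=\frac{1+q}{1-q}\le\frac{2}{1-q}$. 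Collecting constants gives exactly $|C^\beta_m(u,x)|\le c_\beta\,m^{-1}e^{-\frac{m}{\sqrt2}|x|}$ with $c_\beta=\frac{2}{\beta}\,\frac{1}{1-e^{-\frac{\alpha}{\sqrt2}\frac{2\pi}{\beta}}}$, as claimed.

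I do not expect a serious obstacle. The two points requiring a little care are: (i) justifying the resummation --- for $x\neq0$ one has $\kappa_n\sim|\omega_n|$, so $e^{-\kappa_n|x|}/\kappa_n$ decays geometrically in $n$ and the double sum/integral converges absolutely, which legitimizes both the interchange and the term--by--term majorization; and (ii) tracking the normalization through \eqref{eq:defC} so that the prefactor $2/\beta$ in $c_\beta$ comes out correctly (this is where the factor $2$ from the cosine--to--exponential rewriting combines with the $\tfrac{1}{2w}$ of the partial--fraction identity). The hypothesis $x>\alpha$ (rather than merely $x>0$) enters only to make the Matsubara tail summable with the stated uniform constant; for $x$ in a compact subset of $(0,\alpha)$ one would obtain the same type of bound with a larger $c_\beta$.
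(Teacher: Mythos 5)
Your proof is correct and is essentially the paper's argument: the residue computation in the paper's proof produces exactly the Matsubara representation $C^\beta_m(u,x)=\frac{1}{\beta}\sum_{n}\kappa_n^{-1}e^{i\omega_n u}e^{-\kappa_n|x|}$ that you derive via the partial-fraction identity, and from there both proofs use $\kappa_n\geq (m+|\omega_n|)/\sqrt{2}$ and the geometric series over $n$ to land on the same constant $c_\beta$. One small inaccuracy in your point (i): the double sum/integral is \emph{not} absolutely convergent, since $\sum_n\int_0^\infty(p^2+\kappa_n^2)^{-1}dp=\tfrac{\pi}{2}\sum_n\kappa_n^{-1}$ diverges, so Tonelli does not apply directly; the interchange is still easily justified (e.g.\ by computing the Fourier coefficients of $u\mapsto C^\beta_m(u,x)$ directly, where Fubini in $(u,p)$ does hold, and noting the resulting coefficients are absolutely summable for $x\neq 0$), but the reason you give is not the right one.
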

\begin{proof}
Since $C_m^\beta(u,x)$ is symmetric we need to check only the case $x>0$.
Notice that thanks to Lemma \ref{le:positiveC} we know that $C$ is positive, furthermore, from \eqref{eq:defC}, we have that 
\begin{align*}
C_m^\beta(u,x) &= \frac{1}{2\pi}\int_{-\infty}^\infty   d p     \frac{1}{{\rm w}_p}  \frac{\cosh ( (\frac{\beta}{2}-u) {\rm w}_p )}{\sinh(\frac{\beta}{2} {\rm w}_p)}   e^{ipx}\\
&= \frac{1}{2\pi}\int_{-\infty}^\infty   d p     \frac{1}{{\rm w}_p^2}  \frac{{\rm w}_p \cosh ( (\frac{\beta}{2}-u) {\rm w}_p )}{\sinh(\frac{\beta}{2} {\rm w}_p)}   e^{ipx} 
\end{align*}
where ${\rm w}_p=\sqrt{p^2+m^2}$.
We use some complex analysis to estimate this integral. We observe that the integrand is an holomorphic function in $p$ which has simple poles where either ${\rm w}_p^2$ or $\sinh(\frac{\beta}{2} {\rm w}_p)/{\rm w}_p$ vanish. 
Compute a contour integral over $\gamma_R$ which is a counterclockwise oriented
contour formed by the interval $[-R,R]$ of the real line and by a semicircle of radius $R$ centred in the origin and lying in the upper half plane. 
We notice that for $x>0$,  in the limit $R\to\infty$ the contribution on the semicircle vanishes because  $e^{ipx}$ vanishes exponentially for $p$ with large imaginary part and so in that limit the integral over $\gamma_R$ gives the desired result. 
We use residue theorem to estimates these integral. 
The poles contained in $\gamma_\infty$ are located at $p=P_n:=\mathrm{i}\sqrt{m^2+ \left(\frac{2\pi n}{\beta}\right)^2}$ for $n\in \mathbb{N}$. 
More precisely $1/{\rm w}_p^2$ has a pole in $P_0$ whose residue is $\frac{1}{2im}$ while 
$f=\frac{{\rm w}_p}{\sinh(\beta {\rm w}_p/2)}$ has poles for $p=P_n$ with $n\geq1$ and the corresponding residues are 
\[
\text{Res}(f)_{P_n} =  (-1)^n \frac{4n\pi}{\beta^2} \frac{1}{\sqrt{m^2+ \left(\frac{2\pi n}{\beta}\right)^2}}.
\]
Hence we get the following estimate
\begin{align*}
|C_m^\beta(u,x)| &\leq  \frac{1}{\beta m} e^{-m|x|} + \frac{2 }{\beta} \sum_{n\geq 1} \frac{1}{\sqrt{m^2+ \left(\frac{2\pi n}{\beta}\right)^2}}e^{-|x| \sqrt{m^2+ \left(\frac{2\pi n}{\beta}\right)^2}}
\\
& \leq  \frac{1}{\beta m} e^{-m|x|} + \frac{2 }{\beta m }  
\sum_{n\geq 1} e^{- |x| \sqrt{m^2+ \left(\frac{2\pi n}{\beta}\right)^2}}
\\
& \leq  \frac{1}{\beta m} e^{-m|x|} + \frac{2 }{\beta m}  
\sum_{n\geq 1} e^{-  \frac{|x|}{\sqrt{2}}  \left(  m +   \frac{2\pi n}{\beta}\right)}
\\
& \leq  \frac{1}{\beta m} e^{-m|x|} + \frac{2 }{\beta m}e^{-m\frac{|x|}{\sqrt{2}}}  
\frac{1}{e^{  \frac{|x|}{\sqrt{2}}   \frac{2\pi }{\beta}}-1}.
\end{align*}
This concludes the proof.
\end{proof}

For later purposes it will be useful to derive a bound for the the following
\begin{align} \label{eq:DGammaC}
D_\Gamma C(u,x,y) = 
\int_0^{\infty} \int  e^{- m^2 T} W(T,u) 
\prod_{i\in\Gamma}(1-J^T_i(\omega))
d\mu_{x,y}^T(\omega)  dT
\end{align}
where $\Gamma$ is some set of points in $\mathbb{Z}\subset \mathbb{R}$,  
where $d\mu_{x,y}^T(\omega)$ is the Wiener measure and $W(T,u) $ is the weight apperaing in the path integral representation of $C^\beta_m$
which we recall here
\begin{align*}
C^\beta_m(u,x,y)  &=  \frac{1}{4\pi}
\int_0^{\infty} \int  e^{- \frac{m^2}{2} T} W(T,u)   d\mu_{x,y}^T(\omega)  dT. 
\end{align*}
We observe that the factors 
$\prod_{i\in\Gamma}(1-J^T_i(\omega))$ in $D_\Gamma C$
force the Wiener integral to be taken over the paths that touches all the points in $\Gamma$.

We observe that the following proposition holds
\begin{prop}\label{pr:boundDGammaC}
Consider $\Gamma$ a subset of $\mathbb{Z}\subset \mathbb{R}$ formed by at least two elements and $D_{\Gamma} C$ given in \eqref{eq:DGammaC}.
Then, for every couple of points $x,y \in \mathbb{R}$ and $u\in[0,\beta]$,
\[
|D_\Gamma C(u,x,y)| \leq  c \frac{1}{m} e^{-\frac{m}{\sqrt{2}} \Delta(\Gamma)} e^{-\frac{m}{\sqrt{2}} d(x,\Gamma)} e^{-\frac{m}{\sqrt{2}} d(y,\Gamma)} 
\]
where $\Delta(\Gamma)$ is the maximum of the distances of between any two points in $\Gamma$, $d(x,\Gamma)$ is the minimum distance between $x$ and the extreme points of $\Gamma$ and where $c$ is a suitable constant which depends only on $\beta$. 
\end{prop}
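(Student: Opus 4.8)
The plan is to reduce Proposition~\ref{pr:boundDGammaC} to the two extreme points of $\Gamma$, pass to the Fourier decomposition in the thermal variable $u$, and then control each Fourier mode by a first-passage (strong Markov) decomposition of the Brownian bridge, in the spirit of the path-integral representation \eqref{eq:path-int-rep} and of Proposition~\ref{pr:decayC}. First I would reduce to two points: writing $\Gamma=\{\gamma_1<\dots<\gamma_k\}$ with $k\ge 2$, the inequality $\prod_{i\in\Gamma}(1-J^T_i(\omega))\le (1-J^T_{\gamma_1}(\omega))(1-J^T_{\gamma_k}(\omega))$ holds pointwise in $\omega$ because each factor is $0$ or $1$, and since $d\mu_{x,y}^T$, $W(T,u)$ and $e^{-m^2T}$ are nonnegative it suffices to treat $\Gamma=\{a,b\}$ with $a<b$ and $b-a=\Delta(\Gamma)$; note that then $d(x,\Gamma)=\min(|x-a|,|x-b|)$ and similarly for $y$, as in the statement. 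Next I would insert $W(T,u)=\sum_{n\in\mathbb{Z}}e^{2\pi i nu/\beta}\,e^{-(2\pi n/\beta)^2T/2}$ (the series behind the Jacobi triple product used after \eqref{eq:wiener-measure}) and absorb $e^{-m^2T}$; since $|e^{2\pi i nu/\beta}|=1$ and the resulting integrands are nonnegative,
\[
|D_\Gamma C(u,x,y)|\le \sum_{n\in\mathbb{Z}}\ \int_0^{\infty} e^{-M_n^2T/2}\Big(\int (1-J^T_a(\omega))(1-J^T_b(\omega))\,d\mu_{x,y}^T(\omega)\Big)\,dT,\qquad M_n:=\sqrt{2m^2+(2\pi n/\beta)^2},
\]
uniformly in $u$.

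The core step is the per-mode estimate. A bridge path from $x$ to $y$ on $[0,T]$ that hits both $a$ and $b$ hits one of them first; splitting at the first passage time of that point and then at the first passage time of the other, the strong Markov property of the heat semigroup gives, for the Wiener mass of such paths,
\[
\int_{0<s_1<s_2<T}\rho_{x\to a}(s_1)\,\rho_{a\to b}(s_2-s_1)\,K(b,y;T-s_2)\,ds_1\,ds_2\ +\ (a\leftrightarrow b),
\]
where $\rho_{p\to q}$ is the first-passage density and $K$ the heat kernel. Applying $\int_0^\infty e^{-M_n^2T/2}(\cdot)\,dT$ and Fubini, the three time integrals decouple, and with $\int_0^\infty e^{-M^2s/2}\rho_{p\to q}(s)\,ds=e^{-M|p-q|}$ and $\int_0^\infty e^{-M^2r/2}K(p,q;r)\,dr=\tfrac1M e^{-M|p-q|}$ the $a$-first term equals $\tfrac1{M_n}e^{-M_n(|x-a|+|b-a|+|b-y|)}$ and the $b$-first term the analogous expression with $a$ and $b$ swapped. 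Using $|x-a|,|x-b|\ge d(x,\Gamma)$, $|a-y|,|b-y|\ge d(y,\Gamma)$ and $|b-a|=\Delta(\Gamma)$,
\[
\int_0^{\infty} e^{-M_n^2T/2}\!\int (1-J^T_a)(1-J^T_b)\,d\mu_{x,y}^T\,dT\ \le\ \frac{2}{M_n}\,e^{-M_n(\Delta(\Gamma)+d(x,\Gamma)+d(y,\Gamma))}.
\]

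Finally I would sum over $n$. Since $\Gamma\subset\mathbb{Z}$ has at least two points, $L:=\Delta(\Gamma)+d(x,\Gamma)+d(y,\Gamma)\ge \Delta(\Gamma)\ge 1$; from $(\sqrt2\,m-2\pi|n|/\beta)^2\ge 0$ one gets $M_n\ge m+\frac{2\pi|n|}{\sqrt2\,\beta}$, together with $M_n\ge\sqrt2\,m$, whence
\[
|D_\Gamma C(u,x,y)|\ \le\ \frac{2}{\sqrt2\,m}\,e^{-mL}\sum_{n\in\mathbb{Z}}e^{-\frac{2\pi|n|}{\sqrt2\,\beta}L}\ \le\ \frac{c}{m}\,e^{-\frac{m}{\sqrt2}(\Delta(\Gamma)+d(x,\Gamma)+d(y,\Gamma))},
\]
with $c=\sqrt2\,\dfrac{1+e^{-\sqrt2\pi/\beta}}{1-e^{-\sqrt2\pi/\beta}}$ depending only on $\beta$, which is the claim.

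The step I expect to be the real work is the per-mode estimate: justifying the first-passage decomposition and the interchange of the three time integrals at the level of the unnormalised bridge ``measures'' $d\mu_{x,y}^T$ (whose total mass is $K(x,y;T)$, not $1$), and checking that conditioning on the order in which $a$ and $b$ are first visited is legitimate. This is classical Brownian-motion bookkeeping (cf.\ \cite{GJS,spencer1974,Ciesielski}); the remaining steps are elementary.
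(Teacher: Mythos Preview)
Your proof is correct and takes a genuinely different route from the paper. The paper argues by a reflection (image--charge) principle combined with the one--dimensional intermediate--value property: ordering the endpoints $z_{\min}<z_{\max}$ of $\Gamma$, it distinguishes the cases $x<\Gamma<y$, one of $x,y$ inside the convex hull of $\Gamma$, and both inside. In the first case every continuous path from $x$ to $y$ automatically passes through all points of $\Gamma$, so $D_\Gamma C(u,x,y)=C^\beta_m(u,x-y)$ \emph{exactly}, and the bound follows from Proposition~\ref{pr:decayC} since then $|x-y|=\Delta(\Gamma)+d(x,\Gamma)+d(y,\Gamma)$. The remaining cases are reduced to this one by reflecting the tail of the path about $z_{\max}$ (and, if needed, the head about $z_{\min}$), which preserves both the Wiener weight and the factor $e^{-m^2T/2}W(T,u)$; this replaces $(x,y)$ by points such as $(2z_{\min}-x,\,2z_{\max}-y)$ and again yields the full covariance at a separation at least $\Delta(\Gamma)+d(x,\Gamma)+d(y,\Gamma)$.

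Your approach instead expands $W(T,u)$ into Fourier modes, treats each mode as a killed Brownian motion with effective mass $M_n$, and uses the strong Markov property at the successive first hits of $a$ and $b$ together with the explicit Laplace transforms of the one--dimensional first--passage density and heat kernel. One small remark: your displayed first--passage formula is really an \emph{upper} bound for the mass of paths hitting both $a$ and $b$, not an identity, because $\rho_{x\to a}$ does not force the path to avoid $b$ on $[0,s_1]$ and the two terms therefore overlap; this is harmless for the inequality you need. What your method buys is an explicit constant and a sharper raw exponent $e^{-M_nL}$ (hence $e^{-mL}$ before you weaken it to $e^{-mL/\sqrt2}$), and it avoids any case analysis on the position of $x,y$ relative to $\Gamma$. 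What the paper's reflection argument buys is that everything is reduced to the already--established decay of $C^\beta_m$ in Proposition~\ref{pr:decayC}, without reopening the Fourier series or invoking first--passage densities.
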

\begin{proof}
The points in $\Gamma$ are contained on $\mathbb{R}$ and thus they can be ordered according to their real value. 
Without loosing generality we suppose that $x<y$ and we consider some cases.
  Suppose for now that the two points $x,y$ are chosen in such a way that $x<\Gamma$ and $y> \Gamma$. In this case, because of continuity and because $\mathbb{R}$ is one dimensional, the paths  joining $x$ and $y$ in the  Wiener integral representation of $D_\Gamma C$ are forced to pass through all the points of $\Gamma$, hence $\prod_{i\in\Gamma}(1-J^T_i(\omega))$ gives one for every path joining $x$ and $y$ and thus in this case
\begin{align*}
D_\Gamma C(u,x,y) = 
\int_0^{\infty} \int  e^{- m^2 T} W(T,u) 
\prod_{i\in\Gamma}(1-J^T_i(\omega))
d\mu_{x,y}^T(\omega)  dT
= C(x,y)
\end{align*}
where $C=C^\beta_m$.
Hence, in this case Proposition \ref{pr:decayC} yields the claim.
Suppose now that $x< \Gamma$, $x<y$ but there is one point $z$  in  $\Gamma$ (the largest extreme) which is larger then $y$.
In this case the paths $\omega$ contributing to the Wiener integral in \eqref{eq:DGammaC}  joining $x$ and $y$ are forced to pass through $z$. Let $\Omega_z$ be the set of all the paths joining $x$ and $y$ in a time $T$ passing through $z$. We notice a path $\omega\in\Omega_z$ can be divided in two parts. $\omega$ is actually formed by a path $\omega_1$ joining $x$ and $z$ in a time $T'<T$ and a path $\omega_2$ joining $z$ and $y$ in a time $T-T'$. Here $T'$ is the first time at which $\omega$ reaches $z$. Notice that thanks to the properties of the Wiener measure, we can reflect the second branch  of the path, $r_z\omega_2 = \omega_2^r$ with respect to $z$ without altering the weight associated by the Wiener measure. Actually, if we denote by $\omega'$ the path formed by $\omega_1$ and $\omega_2^r$ we have that
\[
d\mu_{x,y}^T(\omega) = d\mu_{x,{2z-y}}^T(\omega').
\]
If we do this for all the paths in the set $\Omega_z$ we get $\Omega'$ which is formed all the paths joining $x$ and $2z-y$ in a time $T$. The map we have described is a bijection from $\Omega_z$ to $\Omega'$, hence 
\begin{align*}D_\Gamma C(u,x,y) 
&= 
\int_0^{\infty} \int_{\Omega}  e^{- m^2 T} W(T,u) 
d\mu_{x,y}^T(\omega)  dT
= 
\int_0^{\infty} \int_{\Omega'}  e^{- m^2 T} W(T,u) 
d\mu_{x,2z-y}^T(\omega')  dT
\\&=D_\Gamma C(u,x,2z-y) = C(u,x,2z-y)
\end{align*}
where as before $C=C^\beta_m$.
We have reduced this case to the first one, hence the claim holds also in this case.
The same arguments can be applied also in the symmetric situations. It remains to discuss the case where 
both $x<y$ are contained in $\Gamma$, namely where 
$z_{\text{min}} < x< y<z_{\text{max}}$ where $z_{\text{min}}$ and $z_{\text{max}}$ are the minimum and maximum of $\Gamma$.
In this case we shall divide the set of considered paths $\Omega_{z_{\text{min}},z_{\text{max}}}$ in two $\Omega_1$ and $\Omega_2$ where 
$\Omega_1$ contains the paths which start at $x$ and reach $z_{\text{min}}$ for the first time before $z_{\text{max}}$ and
$\Omega_2=\Omega_{z_{\text{min}},z_{\text{max}}}\setminus \Omega_1$.
Now a path in $\Omega_1$ can be divided in two contributions. The one from $x$ to $z_{\text{max}}$, and from $z_{\text{max}}$ to  to $y$. It is then possible to reflect the second branch of the path with respect to $z_{\text{maz}}$ %and the point 
without altering the weight associated to the Wiener measure. If we do this reflection for all the paths in $\Omega_1$ we get a set of paths joining $x$ to $2z-y$ in a time $T$ and passing through $z_{\text{min}}$ before reaching $z_{\text{max}}$. 
This set is strictly smaller than the $\Omega_{z_{\text{min}}}(x,2z_{\text{max}}-y)$, the set of paths joining $x,2z_{\text{max}}-y$ and passing through $z_{\text{min}}$. Despite this fact we can bound 
the Wiener integral over $\Omega_1$ with the integral over $\Omega_{z_{\text{min}}}(x,2z_{\text{max}}-y)$. Then operating as in the second analyzed case  (namely the case where $z_{\text{min}}<x<z_{\text{max}}<y$), this last integral is again equivalent to the integral over $\Omega(2z_{\text{min}}-x,2z_{\text{max}}-y)$, namely
\[
\int_{\Omega_1} d\mu^T_{x,y} \leq \int_{\Omega_{z_{\text{min}}}(x,2z_{\text{max}}-y)} d\mu^T_{x,2z_{\text{max}}-y}
\leq \int d\mu^T_{2z_{\text{min}}-x,2z_{\text{max}}-y}.
\]
The contribution over $\Omega_2$ can be treated similarly. We have thus that 
\[
D_\Gamma  C(u,x,y) \leq  D_{\Gamma} C(u,2z_{\text{min}}-x,2z_{\text{max}}-y) + D_{\Gamma} C(u,2z_{\text{min}}-y, 2z_{\text{max}}-x)
\]
and both contributions at the right hand side of the inequality can be bounded as in the first part of the proof hence concluding the proof.
\end{proof}

For the case where $\Gamma$ is formed by a single element we have the following proposition. 
\begin{prop}\label{pr:boundDGammaC1point}
Consider $\Gamma=\{b\}$ with $b\in \mathbb{Z}\subset \mathbb{R}$ and $D_{\Gamma} C$ given in \eqref{eq:DGammaC}.
Then, for every couple of points $x,y \in \mathbb{R}$ with $|x-b|+|b-y|>\alpha$
\[
|D_\Gamma C(u,x,y)| \leq  c \frac{1}{m} e^{-\frac{m}{\sqrt{2}} d(x,\Gamma)} e^{-\frac{m}{\sqrt{2}} d(y,\Gamma)} 
\]
where  $d(x,\Gamma)$ is the distance between $x$ and $b$ and where $c$ is a suitable constant which depends on $\alpha$ and on $\beta$. ($c$ diverges is $\alpha\to0$.)
\end{prop}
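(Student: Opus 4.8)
The plan is to reduce everything to the decay estimate of Proposition~\ref{pr:decayC}, following the case analysis used in the proof of Proposition~\ref{pr:boundDGammaC} but now in the much simpler situation of a single boundary point~$b$. The key observation is that the factor $\prod_{i\in\Gamma}(1-J^T_i(\omega))=1-J^T_b(\omega)$ appearing in \eqref{eq:DGammaC} is just the indicator that the path $\omega$ meets $b$, and that the hypothesis $|x-b|+|b-y|>\alpha$ is precisely the condition needed to apply Proposition~\ref{pr:decayC} to the relevant value of $C^\beta_m$. Throughout I would assume, without loss of generality, $x\le y$.

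First I would treat the case where $b$ lies between $x$ and $y$, i.e.\ $x\le b\le y$. Since the paths entering the Wiener integral \eqref{eq:DGammaC} are continuous curves in $\mathbb{R}$, every path joining $x$ and $y$ must pass through $b$, so $1-J^T_b(\omega)=1$ for $\mu^T_{x,y}$-almost every $\omega$ and hence $D_\Gamma C(u,x,y)=C^\beta_m(u,x-y)$, with $C^\beta_m$ in the normalisation of \eqref{eq:path-int-rep} (this is exactly the reduction carried out in the proof of Proposition~\ref{pr:boundDGammaC}). In this case $|x-y|=|x-b|+|b-y|>\alpha$, so Proposition~\ref{pr:decayC} gives
\[
|D_\Gamma C(u,x,y)|\le c_\beta\,\frac{1}{m}\,e^{-\frac{m}{\sqrt 2}|x-y|}
= c_\beta\,\frac{1}{m}\,e^{-\frac{m}{\sqrt 2}d(x,\Gamma)}\,e^{-\frac{m}{\sqrt 2}d(y,\Gamma)},
\]
which is the asserted bound with $c=c_\beta$.

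The remaining case is that $x$ and $y$ lie strictly on the same side of $b$; by the reflection symmetry $z\mapsto 2b-z$ (which fixes $\Gamma=\{b\}$ and preserves the Wiener measure) I may assume $b<x\le y$. Here I would reuse the reflection argument from the proof of Proposition~\ref{pr:boundDGammaC}: a contributing path reaches $b$ for the first time at some $T'\in(0,T)$, and reflecting the portion of the path on $[T',T]$ about $b$ leaves the Wiener measure invariant. Because $x>b$ while $2b-y<b$, this reflection is a measure-preserving bijection from the constrained paths joining $x$ and $y$ onto \emph{all} paths joining $x$ and $2b-y$, whence $D_\Gamma C(u,x,y)=C^\beta_m\bigl(u,x-(2b-y)\bigr)=C^\beta_m\bigl(u,(x-b)+(y-b)\bigr)$. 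Since $(x-b)+(y-b)=|x-b|+|b-y|>\alpha$, a second application of Proposition~\ref{pr:decayC} yields the stated bound with $d(x,\Gamma)=x-b$ and $d(y,\Gamma)=y-b$. Taking $c=c_\beta=\tfrac{2}{\beta}\bigl(1-e^{-\frac{\alpha}{\sqrt 2}\frac{2\pi}{\beta}}\bigr)^{-1}$ then accounts for the stated dependence of the constant on $\alpha$ and $\beta$ and for its divergence as $\alpha\to0$.

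I do not expect a genuine obstacle here, since the statement is essentially the one-point specialisation of Proposition~\ref{pr:boundDGammaC}. The only point requiring a little care is the bookkeeping in the reflection step, namely verifying that after reflecting at the first hitting time one recovers \emph{all} paths from $x$ to $2b-y$ with no leftover constraint; this is true precisely because $x$ and $2b-y$ lie on opposite sides of $b$, so that every such path automatically meets $b$, and it is handled exactly as in the proof of Proposition~\ref{pr:boundDGammaC}.
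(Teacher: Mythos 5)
Your proof is correct and follows exactly the route of the paper's own (very terse) proof: reduce to $C^\beta_m$ via the observation that all contributing paths hit $b$, use the reflection principle at the first hitting time when $x$ and $y$ lie on the same side of $b$, and invoke Proposition~\ref{pr:decayC}, whose hypothesis is guaranteed by $|x-b|+|b-y|>\alpha$. Your explicit two-case bookkeeping merely spells out what the paper compresses into ``using the symmetry of the problem in the same way as in the proof of Proposition~\ref{pr:boundDGammaC}''.
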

\begin{proof}
We proceed as in the proof of Proposition \ref{pr:boundDGammaC}. In the expansion of $D_\Gamma C$ as a sum over paths joining $x$ and $y$, we have that all paths needs to pass through $b$. Furthermore, 
 the constraint $|x-b|+|b-y|>\alpha$ implies that the minimal length of these paths is $\alpha$. 
Using the symmetry of the problem in the same way as in the proof of Proposition \ref{pr:boundDGammaC} 
and the decay properties of $C^\beta_m$ stated in Proposition \ref{pr:decayC} we have proved the claim.
\end{proof}

For later purposes we need also the following bounds
\begin{lemma}\label{le:lpnormC}
Consider $C^\beta_m$ given in \eqref{eq:defC}. We have for $p\geq 1$  and for large values of $m$ that 
\[
\|C^\beta_m\|^u_p \leq \frac{c}{m^{\frac{1}{p}}}
\]
where the sub script $u$ indicates that the integral in the $p$ norm are computed on the $x$ direction only, keeping $u$ fixed and where the constant $c$ is uniform for $u\in [0,\beta]$. 
\end{lemma}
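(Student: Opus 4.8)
The plan is to reduce the estimate to a one–line scaling argument by representing $C^\beta_m$ as a sum of two–dimensional Euclidean propagators. Expanding the Bose factor in the first line of \eqref{eq:defC} as a geometric series yields, for $u\in(0,\beta)$,
\[
\frac{e^{-uw_p}+e^{-(\beta-u)w_p}}{1-e^{-\beta w_p}}=\sum_{n\in\mathbb{Z}}e^{-|u+n\beta|\,w_p},
\]
and combining this with the classical identity $\int_0^\infty\frac{e^{-\tau w_p}}{w_p}\cos(xp)\,dp=K_0\big(m\sqrt{\tau^2+x^2}\big)$ for $\tau\ge 0$ (the spatial Fourier transform of the two–dimensional Euclidean propagator) gives the image–sum representation
\[
C^\beta_m(u,x)=\frac{1}{2\pi}\sum_{n\in\mathbb{Z}}K_0\big(m\sqrt{(u+n\beta)^2+x^2}\big),
\]
which converges absolutely and locally uniformly thanks to the exponential decay of $K_0$. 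I would establish this formula first; the interchange of sum and integral is justified by absolute convergence for $u\in(0,\beta)$, and the endpoints $u\in\{0,\beta\}$ follow by continuity, the single term with $u+n\beta=0$ contributing the integrable logarithmic singularity $\tfrac{1}{2\pi}K_0(m|x|)$.

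With this in hand, the triangle inequality in $L^p(dx)$ and the substitution $x=y/m$ give
\[
\|C^\beta_m\|^u_p\le\frac{1}{2\pi}\sum_{n\in\mathbb{Z}}\Big(\int_{\mathbb{R}}K_0\big(m\sqrt{(u+n\beta)^2+x^2}\big)^p\,dx\Big)^{1/p}=\frac{1}{2\pi\,m^{1/p}}\sum_{n\in\mathbb{Z}}F_p\big(m|u+n\beta|\big)^{1/p},
\]
where $F_p(r):=\int_{\mathbb{R}}K_0\big(\sqrt{r^2+y^2}\big)^p\,dy$. It therefore remains to bound $\sum_n F_p(m|u+n\beta|)^{1/p}$ uniformly in $u\in[0,\beta]$ for $m$ large, and for this I will use two elementary properties of $F_p$. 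First, since $\sqrt{r^2+y^2}\ge|y|$ and $K_0$ is positive and decreasing, $F_p(r)\le F_p(0)=2\int_0^\infty K_0(y)^p\,dy$, which is finite for every $p\ge1$ because $K_0$ is $O(\log(1/y))$ as $y\to0$ and decays exponentially as $y\to\infty$. Second, from $K_0(z)\le C_0 e^{-z}$ for $z\ge1$ together with $\sqrt{r^2+y^2}\ge\tfrac12(r+|y|)$ one obtains $F_p(r)^{1/p}\le C_1 e^{-r/2}$ for $r\ge1$.

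To finish, split the series into the indices $n\in\{0,-1\}$ and the rest. For $u\in[0,\beta]$ only $n=0$ (as $u\to0$) and $n=-1$ (as $u\to\beta$) can make $|u+n\beta|$ small, and those two terms are each at most $F_p(0)^{1/p}$ by the first bound. For every other $n$ one has $|u+n\beta|\ge\beta$, so $m|u+n\beta|\ge m\beta\ge1$ as soon as $m\ge1/\beta$; more precisely $m|u+n\beta|\ge m\beta|n|$ for $n\ge1$ and $\ge m\beta(|n|-1)$ for $n\le-2$, so by the second bound these terms sum to at most $C_2 e^{-m\beta/2}$. Hence $\sum_n F_p(m|u+n\beta|)^{1/p}\le 2F_p(0)^{1/p}+C_2$ for all $m\ge1/\beta$, uniformly in $u$, and consequently $\|C^\beta_m\|^u_p\le c\,m^{-1/p}$ with $c=\tfrac{1}{2\pi}\big(2F_p(0)^{1/p}+C_2\big)$, a constant depending only on $p$ and $\beta$. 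The main work — and the only genuine obstacle — is deriving the image–sum representation and checking its endpoint behaviour; once that is available everything else is routine, the key point being that the exceptional image charges are exactly the two indices $n=0,-1$, while all the others lie at distance at least $\beta$ from $u$ and hence contribute a geometrically small remainder.
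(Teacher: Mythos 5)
Your proof is correct, and it takes a genuinely different route from the paper's. The paper splits $C^\beta_m = C^\infty_m + (C^\beta_m - C^\infty_m)$ by the triangle inequality, bounds $\|C^\infty_m\|_p^u$ directly from the explicit kernel $\frac{1}{2\pi}K_0(m\sqrt{x^2+u^2})$ and monotonicity of $K_0$, and then controls the thermal correction by estimating its $L^\infty$ and $L^1$ norms separately and interpolating via $\|f\|_p\le(\|f\|_\infty^{p-1}\|f\|_1)^{1/p}$, which is where the condition $\beta m>1$ enters. You instead derive the image-charge (Matsubara) representation $C^\beta_m(u,x)=\frac{1}{2\pi}\sum_{n\in\mathbb{Z}}K_0\bigl(m\sqrt{(u+n\beta)^2+x^2}\bigr)$ and reduce everything to the scaling $x\mapsto y/m$ plus elementary properties of $F_p(r)=\int K_0(\sqrt{r^2+y^2})^p\,dy$; your identification of $n\in\{0,-1\}$ as the only images that can approach $u$, with all others at distance $\ge\beta$, is exactly right, and your treatment of the endpoint singularity via $F_p(0)<\infty$ matches what the paper implicitly does with $K_0(m|x|)$ at $u=0$. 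The trade-off: your argument requires justifying the image-sum identity (which you do correctly via the geometric series for the Bose factor and the standard Fourier formula for $K_0$), but in return it makes the uniformity in $u\in[0,\beta]$ completely transparent and yields explicit exponential smallness of the tail, whereas the paper's interpolation argument is shorter on prerequisites but buries the uniformity in the separate $L^1$ and $L^\infty$ estimates. It is also worth noting that your representation is the Poisson-resummed form of the weight $W(t,u)$ appearing in the paper's path-integral formula \eqref{eq:path-int-rep}, so the two pictures are consistent.
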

\begin{proof}
Because of the symmetry of $C^\beta_m$ given in \eqref{eq:C-symmetry}, without loosing generality we restrict ourself to the case $u\in[0,\frac{\beta}{2}]$. 
Furthermore, triangle inequality implies that 
\[
\|C^\beta_m\|^u_p \leq \|C^\infty_m\|^u_p  +  \|C^\beta_m - C^\infty_m\|^u_p
\]
Let us recall that  
\[
C^\infty_m(u,x) = \Delta^+_m(iu,x) = \frac{1}{2\pi} K_0(m\sqrt{x^2+u^2})
\]
where $K_0$ is the modified Bessel function, which is positive and monotonically decreasing function, hence,
\[
|C^\infty_m(u,x)| \leq  \frac{1}{2\pi} K_0(m\sqrt{x^2})
\]
hence 
\[
\int dx \left|  C^\infty_m(u,x)\right|^p  \leq  \int dx \left| \frac{1}{2\pi} K_0(m\sqrt{x^2})\right|^p  \leq \frac{c^p}{m}
\]
where $c$ is a positive constant which does not depend on $m$. Hence

\[
\|C^\infty_m\|^u_p \leq \frac{c}{m^{\frac{1}{p}}}.
\] 
We discuss now the second contribution $C^\beta_m-C^\infty_m$, notice that, since $u\in[0,\beta/2]$, we have
\begin{align*}
|C^\beta_m(u,x)-C^\infty_m(u,x)|
&\leq  \frac{1}{4\pi}\int_{-\infty}^\infty   d p    \frac{e^{-\beta {\rm w}_p }}{{\rm w}_p (1-e^{-\beta {\rm w}_p}) } \left( e^{- u {\rm w}_p  }   +  e^{u {\rm w}_p }\right)   
\\
&\leq  \frac{1}{4\pi}\int_{-\infty}^\infty   d p    \frac{e^{-\frac{\beta}{2} {\rm w}_p }}{{\rm w}_p} \frac{\cosh(u {\rm w}_p)}{\sinh(\frac{\beta}{2}  {\rm w}_p) }
\\
&\leq  \frac{1}{4\pi}\int_{-\infty}^\infty   d p    \frac{e^{-\frac{\beta}{2} {\rm w}_p }}{{\rm w}_p} \left(1+\frac{2}{\beta {\rm w}_p}   \right) 
\\
&\leq  \frac{1}{2\pi}\int_{0}^\infty   d p    \frac{e^{-\frac{\beta}{2} p }}{m} \left(1+\frac{2}{\beta m}   \right) 
\\
&\leq  \frac{1}{2\pi} \frac{2}{\beta m }  \left(1+\frac{2}{\beta m}   \right)  
\end{align*}
Hence
\[
\|C^\beta_m-C^\infty_m\|_\infty^u \leq \frac{1}{2\pi} \frac{2}{\beta m }  \left(1+\frac{2}{\beta m}   \right) .
\]
We furthermore observe that $C^\beta_m(u,x)-C^\infty_m(u,x)$ is positive. This can be proved as in the proof of Lemma \ref{le:positiveC} noticing that 
\[
C^\beta_m(u,x)-C^\infty_m(u,x) = \int_0^\infty dx f({\rm w}_p) \cos (px) 
\]
where  $f({\rm w}_p) = \frac{1}{2\pi}\frac{e^{-\frac{\beta}{2} {\rm w}_p }}{{\rm w}_p} \frac{\cosh(u {\rm w}_p)}{\sinh(\frac{\beta}{2}  {\rm w}_p) }$ which is a positive monotonically decreasing function of ${\rm w}_p$ for every $u\in [0,\frac{\beta}{2}]$. Hence, 
\begin{align*}
\|C^\beta_m-C^\infty_m\|_1^u=\int_{-\infty}^\infty dx (C^\beta_m(u,x)-C^\infty_m(u,x))  
=
\frac{1}{2}    \frac{e^{-\frac{\beta}{2} m }}{m} \frac{\cosh(u m)}{\sinh(\frac{\beta}{2}  m) }
\leq    \frac{1}{2m} \left(1+\frac{2}{\beta m}   \right)  
\end{align*}
hence, for $u\in[0,\frac{\beta}{2}]$
\[
\|C^\beta_m-C^\infty_m\|_p^u \leq   \left((\|C^\beta_m-C^\infty_m\|_\infty^u)^{{p-1}}  \|C^\beta_m-C^\infty_m\|_1^u\right)^{\frac{1}{p}} \leq        \frac{c}{m} \frac{1}{\beta^{1-\frac{1}{p}}} \left(1+\frac{2}{\beta m}   \right)   \leq \frac{\tilde{c}}{m^{\frac{1}{p}}} 
\]
where the last inequality holds for
for $\beta m>1$ and for a suitable redefinition of the constant.
Combining these bounds we have proved the claim.
\end{proof}

\section{Cluster expansion and the adiabatic limit}\label{se:cluster-expansion}

In this section we discuss the existence of the following limit
\begin{equation}\label{eq:limith-1}
\lim_{h\to 1} \frac{\omega^\beta(A(\psi,\psi') U(\mathrm{i}\beta))}{\omega^\beta(U(\mathrm{i}\beta))},
\end{equation}
where the numerator and denominator of the ration at the right hand side are computed as in \eqref{eq:Fexp-Zexp}. Here $A(\psi,\psi')$ is given in \eqref{eq:A} where its connection with the generating functional $G(f)$ given in \eqref{eq:generating-functional} is also discussed.
We shall adapt to the present context, the cluster expansion proposed by Glimm Jaffe and Spencer \cite{GJS} which was also used in the context of Sine-Gordon theories in the euclidean domain in \cite{FS}.   See \cite{AR95} and reference therein for further generalizaiton of the original cluster expansion.

\subsection{Basic ideas of the method}
The basic idea of the cluster expansion is to analyze the limit \eqref{eq:limith-1} for a theory where the covariance $C^\beta_m(u,x-y)$ is replaced with a 
covariance $C^\Gamma(u,x,y)$ of a theory with vanishing boundary conditions on $\Gamma\subset\mathbb{Z}\subset \mathbb{R}$ which was constructed in section \ref{se:covariance-bc}. 
Notice that $\mathbb{R}\setminus \Gamma$ is a union of disjoint open intervals and
the correlation functions among two elements which are supported on two disjoint intervals vanishes if they are computed with $C^\Gamma(u,x,y)$ for this reason the elements of $\Gamma\subset\mathbb{Z}\subset \mathbb{R}$ are called {\bf bonds}. 
Furthermore the limit $h\to1$ can easily be taken in the case of $C^\Gamma$ when $\Gamma$ has sufficiently many elements. 
The idea is thus to control the limit obtained in \eqref{eq:limith-1} with a particular expansion of similar contributions computed with $C^\Gamma$ for various $\Gamma$.
In order to see how the limit can be taken for $C^\Gamma$ with a $\Gamma$ with sufficiently many elements we proceed as follows.
Consider now a generic finite set of bonds $\Gamma\subset\mathbb{Z}$,  $\mathbb{R}\setminus \Gamma$ is the union of $r=|\Gamma |+1$ disjoint open connected components $X_i$ 
\[
\bigcup_{i=1}^r X_i = \mathbb{R}\setminus \Gamma.
\]
Let us discuss 
$\omega^h_{C^\Gamma}(AU(\mathrm{i}\beta))$
where $\omega^{h_\Lambda}_{C^\Gamma}$ is the positive functional obtained with the covariance $C^\Gamma$ and $h_\Lambda$ is the characteristic function of the set $\Lambda \subset \mathbb{R}$.
We shall denote this contribution by
\[
F_{C^\Gamma}(\Lambda) := \omega^{h_{\Lambda}}_{C^\Gamma}(AU(\mathrm{i}\beta)).
\]
Hence, its explicit expression is as in the right hand side of equation \eqref{eq:Fexp-Zexp} where $C^\beta_m$ is replaced by $C^\Gamma$ and where $h_\Lambda$, the characteristic function of the set $\Lambda$, is used at the place of $h$. 
Similarly, the corresponding contribution with $A=1$ is denoted by
\[
Z_{C^\Gamma}(\Lambda) := \omega^{h_{\Lambda}}_{C^\Gamma}(U(\mathrm{i}\beta)).
\]
Its explicit expression is again as in \eqref{eq:Fexp-Zexp} with $C^\Gamma$ at the place of $C^\beta_m$ and with $h_\Lambda$ at the place of $h$.
With this notation, the adiabatic limit \eqref{eq:limith-1} we aim to control is 
\[
\lim_{h\to 1} \frac{\omega^\beta(A(\psi,\psi') U(\mathrm{i}\beta))}{\omega^\beta(U(i\beta))} = \lim_{\Lambda\to\mathbb{R}}  \frac{F_{C^\beta_m}(\Lambda)}{Z_{C^\beta_m}(\Lambda)}.
\]
Notice that in view of the clustering properties of $C^\Gamma$, $Z_{C^\Gamma}(\Lambda)$, factorizes in a product over the connected components $X_i$ of $\mathbb{R}\setminus \Gamma$,
namely
\[
Z_{C^\Gamma}(\Lambda)= \prod_i  Z_{C^{\Gamma\cap \overline{X_i}}}(\Lambda\cap X_i).
\]
Furthermore, we denote by $Y = \left. J(\supp(W(f))) \right|_{\Sigma_0}= \supp(A(\psi,\psi'))$, and we introduce the following two disjoint sets
$Y_1$  $Y_2$ formed by suitable unions of the sets $X_i$  introduced above 
\begin{align*}
Y_1:=\bigcup_{X_j \cap Y \neq \emptyset}  {{X}_j} ,
\qquad \qquad  
Y_2:=\bigcup_{X_j \cap Y = \emptyset}    {{X}_j}   .
\end{align*}
Notice that $Y_2$ is non empty only if $\Gamma \not\subset Y$. 
With this decomposition, if $Y_2\neq \emptyset$, $Y_2$ is not compact and furthermore we have that 
also $F_{C^\Gamma}(\Lambda)$ 
factorizes in a product of two contributions
namely
\[
F_{C^\Gamma}(\Lambda) = F_{C^{\Gamma}}(\Lambda\cap Y_1) F_{C^\Gamma}(\Lambda\cap Y_2)  
\]
we furthermore observe that $F_{C^\Gamma}(\Lambda\cap Y_2) = Z_{C^\Gamma}(\Lambda\cap Y_2) $ 
because $A$ is $1$ on $Y_2$, hence
\[
F_{C^\Gamma}(\Lambda) = F_{C^{\Gamma}}(\Lambda\cap Y_1) Z_{C^\Gamma}(\Lambda\cap Y_2).
\]
If $J\supp(W(f))|_{\Sigma_0}$ is contained in $Y_1$, which is a union of $X_i$ of compact support, 
the limit $\Lambda\to \mathbb{R}$ which correspond to the limit $h\to 1$ can be easily taken, namely
\[
\lim_{\Lambda\to \mathbb{R} } \frac{F_{C^\Gamma}(\Lambda)}{Z_{C^\Gamma}(\Lambda)}
=
\frac{F_{C^\Gamma}(Y_1)}{Z_{C^\Gamma}(Y_1)}
\]
where $h_{Y_1}$ is the characteristic function of the set $Y_1$.
In the next section we shall see how obtain \eqref{eq:limith-1} from the previous limit by means of the cluster expansion of $\frac{\omega^\beta(A(\psi,\psi') U(\mathrm{i}\beta))}{\omega^\beta(U(\mathrm{i}\beta))} = \frac{F_{C^\beta_m}}{Z_{C^\beta_m}}$.

\subsection{The details of the method}
We use now the cluster expansion and its properties discussed above to construct the adiabatic limit of the original theory.
We consider a covariance operator $C(s)$ which is a convex sum of $C^\Gamma$ over all possible $\Gamma \subset \mathbb{Z}$. More precisely, $s$ is the collection of elements, $s_b\in[0,1]$ for every $b\in \mathbb{Z}$,  
\[
s=\{s_b\}_{b\in\mathbb{Z}}.
\]
The covariance operator $C(s)$ depending on all these parameters is
\begin{equation}\label{eq:covariance-convex-sum}
C(s) = \sum_{\Gamma\subset\mathbb{Z}} 
\prod_{i\in \Gamma} s_i \prod_{j\in \Gamma^c} (1-s_j)  C^{\Gamma^c}
\end{equation}
where $\Gamma^c$ is the complement of $\Gamma \in \mathbb{Z}$.
With this definition we have that 
$C(\{0,0,0,\dots\})$ is the solution with zero Dirichlet data on every bond $i\in\mathbb{Z}$, while $C(\{1,1,1\dots\} )$ corresponds to the the original covariance $C^\beta_m$.
We substitute $C^\beta_m$ with $C(s)$ in the explicit expressions of $\omega^\beta(A(\psi,\psi') U(\mathrm{i}\beta))$ and of $\omega^\beta(U(\mathrm{i}\beta))$,  given in \eqref{eq:Fexp-Zexp} and  we use the following notation 
\begin{align*}
F(s,\Lambda)&:=F_{C(s)}(\Lambda)=\omega_{C(s)}^{h_\Lambda}(A(\psi,\psi') U(\mathrm{i}\beta))\\
Z(s,\Lambda)&:=Z_{C(s)}(\Lambda)=\omega_{C(s)}^{h_\Lambda}(U(\mathrm{i}\beta)).
\end{align*}
Notice that $F(0,0,\dots)$ and $Z(0,0,\dots )$ it is completely clustered while while $F(1,1,\dots) = \omega^\beta(A(\psi,\psi') U(\mathrm{i}\beta))$ and $Z(1,1,\dots ) = \omega^\beta(U(\mathrm{i}\beta))$.

\begin{prop}\label{pr:cluster-exp-00}
The {\bf cluster expansion} is then a sum over al possible finite set of bonds $\Ga\subset \mathbb{Z}$ and it is
\begin{equation}\label{eq:cluster-exp1}
F(s) = \sum_{\Ga\subset\mathbb{Z}; |\Ga|<\infty}  \int_{0\leq \sigma \leq s(\Ga)}  \partial^{\Ga} F(\sigma(\Ga))
d\sigma
\end{equation}
where
\[
s(\Ga) = \begin{cases}
s_b,\qquad b\in\Ga
\\
0,\qquad b\in\Ga^c.
\end{cases}
\] 
Furthermore $\sigma\leq s$ if and only if $\sigma_b\leq s_b$ for all $b$ and
\[
\partial^\Ga  = \prod_{b\in\Ga}   \frac{\partial}{\partial s_b}.
\]
Hence, at the right hand side of \eqref{eq:cluster-exp1}, the integral is taken over a region of dimension $|\Ga|$.
\end{prop}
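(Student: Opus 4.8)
The plan is to derive the cluster expansion \eqref{eq:cluster-exp1} as a purely combinatorial identity obtained by iterating the fundamental theorem of calculus in each of the independence parameters $s_b$, starting from the totally decoupled configuration $s=(0,0,\dots)$. First I would observe that $F(s)$ depends (for each fixed choice of the observable $A$, its support $Y$, and the region $\Lambda$) on only finitely many of the parameters $s_b$: indeed, the covariance $C(s)$ in \eqref{eq:covariance-convex-sum} only couples points through the factors $C^{\Gamma^c}$, and the explicit expression \eqref{eq:Fexp-Zexp} involves integrals of $h_\Lambda$ and of the fixed functions $\psi,\psi'$, so only bonds $b$ lying in (a bounded neighbourhood of) $\Lambda\cup Y$ can influence $F(s,\Lambda)$; changing $s_b$ for $b$ outside that finite set leaves $F$ unchanged. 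Hence $F$ is genuinely a smooth function of a finite collection of parameters, and the manipulations below are legitimate.

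Next I would set up the telescoping. Enumerate the finitely many relevant bonds as $b_1,\dots,b_M$. For a single parameter, the identity
\[
F(s_{b_1},s_{\mathrm{rest}}) = F(0,s_{\mathrm{rest}}) + \int_0^{s_{b_1}} \frac{\partial F}{\partial s_{b_1}}(\sigma_{b_1},s_{\mathrm{rest}})\, d\sigma_{b_1}
\]
holds by the fundamental theorem of calculus. Applying this successively to $b_1$, then to $b_2$ inside each of the two resulting terms, and so on, one obtains after $M$ steps a sum over all subsets $\Gamma\subseteq\{b_1,\dots,b_M\}$: the subset $\Gamma$ records which bonds got "differentiated and integrated" and which got "set to zero". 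The term indexed by $\Gamma$ is exactly $\int_{0\le\sigma(\Gamma)\le s(\Gamma)} \partial^\Gamma F(\sigma(\Gamma))\, d\sigma$, where $\sigma(\Gamma)$ is the configuration equal to the integration variable $\sigma_b$ on $b\in\Gamma$ and equal to $0$ off $\Gamma$, matching the notation in the statement. Since $F$ is independent of $s_b$ for $b$ outside the finite relevant set, the sum over subsets of $\{b_1,\dots,b_M\}$ may be rewritten as a sum over all finite $\Gamma\subset\mathbb{Z}$ (the extra terms with $\Gamma$ containing an irrelevant bond contribute a vanishing derivative $\partial^\Gamma F=0$ and drop out), which is precisely \eqref{eq:cluster-exp1}.

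The one point that needs a little care — and which I expect to be the only genuine obstacle — is justifying the interchange of the finite iterated integration/differentiation with the infinite sum over $\Gamma\subset\mathbb{Z}$ implicit in writing $C(s)=\sum_\Gamma\cdots$, together with the claim that $\partial^\Gamma F$ is well defined and continuous in $\sigma$: this requires knowing that the series defining $F(s,\Lambda)$ (the perturbative sum in \eqref{eq:Fexp-Zexp}, which is itself a convergent series by the estimates of Theorem \ref{thm:convergence} and Theorem \ref{thm:convergence-V} adapted to the thermal covariance as discussed in subsection \ref{se:thermal-state-free-theory}) converges uniformly enough, and differentiably, in the finitely many parameters $s_b\in[0,1]$. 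Here one uses that $C(s)$ is a convex combination of the covariances $C^{\Gamma^c}$, each of which is dominated (in the sense needed for the conditioning/inverse-conditioning estimates of Theorem \ref{th:control}) by $C^\beta_m$ plus a bounded correction coming from the image-charge terms, whose diagonal values are uniformly bounded by the decay estimates of Proposition \ref{pr:decayC} and Proposition \ref{pr:boundDGammaC}; consequently the bounds of Theorem \ref{thm:convergence-V} apply uniformly for all $s\in[0,1]^{\mathbb{Z}}$, and each $\partial/\partial s_b$ brings down a factor controlled by $\|D_{\{b\}}C\|$, which is again bounded. With this uniform control, dominated convergence justifies differentiating the series term by term and the iterated-integral identity above, completing the proof.
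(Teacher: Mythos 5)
Your proposal is correct and follows essentially the same route as the paper: the paper writes $F(s)=\prod_{i}(E_i+\delta_i)F(s)$, where $E_i$ evaluates the $i$-th parameter at $0$ and $\delta_i$ is the fundamental-theorem-of-calculus remainder, and expands the product into the sum over subsets $\Gamma$ — exactly your telescoping argument. Your additional remarks on why only finitely many bonds are relevant and on the uniform control needed to differentiate the perturbative series term by term are sensible refinements of points the paper leaves implicit (it simply refers to Proposition 3.2 of Glimm--Jaffe--Spencer), but they do not change the substance of the argument.
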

The proof of \eqref{eq:cluster-exp1} is given in Proposition 3.2 of \cite{GJS}, here we recall the main steps of this proof.
\begin{proof}
The essential idea of the proof is the following, notice that 
\begin{align*}
F(s_0,0,0\dots) &= F(0,\dots) + \int_0^{s_0} d\tilde{s}_0  \frac{d}{d\tilde{s}_0}F(\tilde{s}_0,0,\dots).
\end{align*}
This formula can be written as
\[
F(s_0,0,0\dots) = (E_0 +\delta_0) F
\]
where $E_i$ and $\delta_i$ are two operators acting on the $i-$th components of the $s$ coordinates.
$E_iF(s):=F(\dots,s_{i-1},0,s_{i+1}\dots)$ is the evaluation in $0$ of the $i-$th component of the entry of $F$ and $\delta_i F(s) =  \int_0^{s_i}   \frac{d}{ds_i} F(\dots,s_{i-1},\sigma,s_{i+1}\dots) d\sigma$.
Hence, if one uses this formula for every bond one gets the desired expansion. 
\begin{align*}
F(s) &= \prod_{i\in\mathbb{Z}} (E_i +\delta_i) F(s)\\
&= \sum_{\Gamma\subset\mathbb{Z}} \prod_{i\in\Gamma}\delta_i F(s(\Gamma))\\
&=\sum_{\Ga\subset\mathbb{Z}; |\Ga|<\infty}  \int_{0\leq \sigma \leq s(\Ga)}  \partial^{\Ga} F(\sigma(\Ga))
d\sigma
\end{align*}
\end{proof}

Consider now $\Gamma\subset \mathbb{Z}$ with finite number of elements and let $\Gamma^c$ be its complement in $\mathbb{Z}$. Divide the space $\mathbb{R}\setminus \Gamma^{c}$ in connected components $X_i$ 
\begin{equation}\label{eq:space-decomposition}
\mathbb{R}\setminus \Gamma^{c} = \cup_{i}X_i.
\end{equation}
Notice that $s(\Gamma)_j$ vanishes for every $j$ on  $\Gamma^c$ hence, if $\supp A(\psi,\psi')\subset X_0$, we have that 
\begin{equation}\label{eq:combine}
F(s(\Gamma),\Lambda) = \prod_i F(s(\Gamma \cap X_i),\Lambda \cap {X_i})
\end{equation}
if this property holds,  we say that $F(s)$ {\bf decouples at $s=0$}. Similarly, we have also that $Z(s)$ decouples at $s=0$.
We use now this property in \eqref{eq:cluster-exp1} in order to partially resum the series.
 
We operate as follows, suppose now that $Y_0 = \supp A(\psi,\psi')$, for a generic $\Gamma$, consider the decomposition $X_i$ of $\mathbb{R}\setminus \Gamma^c$ given in \eqref{eq:space-decomposition}. We combine these sets in two disjoint sets
$Y_1$  $Y_2$ formed by suitable unions of the sets $X_i$ introduced above 
\begin{equation}\label{eq:Y12}
Y_1:=\bigcup_{X_j \cap Y_0 \neq \emptyset}  {{X}_j} ,
\qquad \qquad  
Y_2:=\bigcup_{X_j \cap Y_0 = \emptyset}    {{X}_j}   . 
\end{equation}
Introduce now $\Gamma_i= \Gamma\cap Y_i$ we have that  \eqref{eq:cluster-exp1} with \eqref{eq:combine} gives
\begin{equation}\label{eq:cluster-exp2}
F(s,\Lambda) = \sum_{\Ga} \prod_{i\in\{1,2\}}\int_{0\leq \sigma \leq s(\Ga_i)}  \partial^{\Ga_i} F(\sigma(\Ga_i),\Lambda\cap Y_i)
d\sigma.
\end{equation}
A similar decomposition holds also for $Z(s,\Lambda)$ and, because of the support of $A(\psi,\psi')$ we have that the second factors in each element of the sum in $F$ coincides with those of $Z$.

There are various subsets $\Gamma$ which lead to the same  $Y_1$ and $\Gamma_1$,  we partially resum over all these $\Gamma$s. 
We obtain 
\[
F(s,\Lambda) = \sum_{Y_1,\Gamma_1}  \left(\; \int\limits_{0}^{s(\Ga_1)}  \partial^{\Ga_1} F(\sigma_1(\Ga_1),   
\Lambda\cap Y_1)  d\sigma_1
\left(\sum_{\Gamma_2}
\int\limits_{0}^{s(\Ga_2)}  \partial^{\Ga_2} Z(\sigma_2(\Ga_2),    \Lambda\cap Y_2)  d\sigma
\right)\right).
\]
The sum over all possible $\Gamma_2 \subset \mathbb{Z}\setminus Y_1$ can be taken and using \eqref{eq:cluster-exp1} it gives 
\[
\sum_{\Gamma_2}\int\limits_{0}^{s(\Ga_2)}  \partial^{\Ga_2} Z(\sigma_2(\Ga_2),    \Lambda\cap Y_2)  d\sigma = F(s(\mathbb{Z}\setminus Y_1 ),\Lambda\setminus Y_1)
\] furthermore, if $s_b=1$ for every $b\in\mathbb{Z}$,
\begin{align*}
F(s(\mathbb{Z}\setminus Y_1 ),\Lambda\setminus Y_1) &= Z(s(\mathbb{Z}\setminus Y_1 ),\Lambda\setminus Y_1)\\
&=Z_{C^{\partial Y_1}}(\Lambda\setminus Y_1)
\end{align*}
where $Z_{C^{\partial Y_1}}(\Lambda\setminus Y_1)$ is computed with the covariance $C^{\partial Y_1}$ which has bonds in $\partial Y_1$. 
Hence, calling $Y=Y_1$ and $\Gamma=\Gamma_1$ 
\begin{equation}\label{eq:cluster-exp3}
F(s,\Lambda) = \sum_{{Y,\Ga}} 
 \int\limits_{0}^{s(\Ga)}  \partial^{\Ga} F(\sigma(\Ga),   
\Lambda\cap Y)  d\sigma   Z_{C^{\partial Y}}(\Lambda\setminus Y).
\end{equation}
The cluster expansion is now obtained evaluating $F$ at $s=1$ and dividing by $Z(\Lambda)$. We have actually the following proposition
\begin{prop}\label{pr:cluster-exp}
The expectation value of $\omega^{\beta,V^{h_\Lambda}}(W(f))$ where $h_\Lambda$ is the characteristic function of the interval $\Lambda\subset \mathbb{R}$ can be written  as
\begin{equation}\label{eq:cluster-exp4} 
S(\Lambda):= \frac{\omega^{\beta, h_\Lambda}(A(\psi,\psi') U(i\beta))}{\omega^{\beta, h_\Lambda}(U(i\beta))}  = \sum_{{Y,\Ga}} 
 \int\limits_{0}^{s(\Ga)}  \partial^{\Ga} 
F(\sigma,\Lambda\cap Y)    d\sigma \;   \frac{Z_{C^{\partial Y}}(\Lambda\setminus Y)}{Z(\Lambda)}
\end{equation}
where $s(\Gamma)_i$ is $1$ if $i\in\Gamma$ and $0$ otherwise.
The sum in the previous formula is taken over all possible $Y,\Ga$ with the following properties following from 
\eqref{eq:Y12}. Let $Y_0$ be a fixed compact interval which contains the support of $A(\psi,\psi')$ on $\Sigma_0$ then
\begin{itemize}
\item[a)] $Y_0\subset Y$ and $Y=(a,b)\subset \mathbb{R}$ with $a,b\in\mathbb{Z}$, 
\item[b)] $\Ga\subset Y$, 
\item[c)] $(\Ga^c\cap Y) \subset Y_0$.
\end{itemize}
\end{prop}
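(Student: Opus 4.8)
The plan is to read off \eqref{eq:cluster-exp4} from the identity \eqref{eq:cluster-exp3}, which has already been derived, by specialising it to $s\equiv 1$ and normalising. Recall how \eqref{eq:cluster-exp3} was obtained: one applies the bond-by-bond cluster expansion \eqref{eq:cluster-exp1} of Proposition \ref{pr:cluster-exp-00} to $F(s,\Lambda)$; one uses that $F$ and $Z$ decouple at $s=0$, so that each summand factorises over the connected components $X_i$ of $\mathbb{R}\setminus\Gamma^c$ as in \eqref{eq:combine}; one groups these components into $Y_1$ (those meeting the fixed compact interval $Y_0$ containing $\supp A(\psi,\psi')$) and $Y_2$ (the rest) following \eqref{eq:Y12}; and one resums, again by \eqref{eq:cluster-exp1} but now over bonds in $\mathbb{Z}\setminus Y_1$, the factor carrying $\Gamma_2$, which collapses to $F(s(\mathbb{Z}\setminus Y_1),\Lambda\setminus Y_1)$ and --- once $s=1$, and since $A\equiv 1$ on $Y_2$ --- to $Z_{C^{\partial Y_1}}(\Lambda\setminus Y_1)$.

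First I would put $s_b=1$ for every $b\in\mathbb{Z}$ in \eqref{eq:cluster-exp3}. Then $s(\Gamma)_b$ equals $1$ for $b\in\Gamma$ and $0$ otherwise, so the inner integration runs over $[0,1]^{\Gamma}$, exactly the domain appearing in \eqref{eq:cluster-exp4}; and the left-hand side becomes $F(1,\dots;\Lambda)=\omega^{\beta,h_\Lambda}(A(\psi,\psi')\,U(i\beta))$ by the identification recorded just before Proposition \ref{pr:cluster-exp-00}. Dividing both sides by $Z(\Lambda)=Z(1,\dots;\Lambda)=\omega^{\beta,h_\Lambda}(U(i\beta))$ turns the left-hand side into $S(\Lambda)$ and leaves the ratio $Z_{C^{\partial Y}}(\Lambda\setminus Y)/Z(\Lambda)$ on the right, which is the claimed formula. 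The first equality in \eqref{eq:cluster-exp4}, $\omega^{\beta,V^{h_\Lambda}}(W(f))=S(\Lambda)$, is not new: it is the net result of the steps from \eqref{eq:generating-functional} to \eqref{eq:Fexp-Zexp} (writing the interacting state as a ratio, choosing $\chi=\theta$, and the time-slice reduction of $W(f)$ to $A(\psi,\psi')$ supported near $\Sigma_0$).

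Next I would make the index set explicit, i.e.\ prove (a)--(c). Since the bonds lie in $\mathbb{Z}$, each component $X_i$ of $\mathbb{R}\setminus\Gamma^c$ is an open interval whose endpoints lie in $\Gamma^c\subset\mathbb{Z}$ or at $\pm\infty$; because $\Gamma^c=\mathbb{Z}\setminus\Gamma$ has points on both sides of $Y_0$, the finitely many $X_i$ that meet $Y_0$ are bounded and consecutive, so their union is, up to a finite set of points (which does not affect $F(\,\cdot\,,\Lambda\cap Y)$), a single open interval $Y=(a,b)$ with $a,b\in\mathbb{Z}$ and $Y_0\subset Y$ --- this is (a). After renaming $\Gamma:=\Gamma_1=\Gamma_{\mathrm{old}}\cap Y$ one has $\Gamma\subset Y$ --- this is (b); in particular $a,b\notin\Gamma$. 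For (c): an integer $p\in Y$ with $p\notin\Gamma$ lies in $\Gamma^c$, hence is the common boundary point of two consecutive components $X_i,X_{i+1}$, both contained in $Y$ and hence both meeting $Y_0$; as $Y_0$ is an interval it then contains a point $<p$ and a point $>p$, so $p\in Y_0$. Thus $\Gamma^c\cap Y\subseteq Y_0$. Conversely I would check that every $(Y,\Gamma)$ obeying (a)--(c) does arise, by identifying the set of $\Gamma_{\mathrm{old}}$ mapping to it and confirming that summing over that set is exactly the resummation producing $Z_{C^{\partial Y}}(\Lambda\setminus Y)$.

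The hard part --- indeed the only place genuine care is required --- is this last combinatorial point: pinning down the fibre of the map $\Gamma_{\mathrm{old}}\mapsto(Y_1,\Gamma_1)$ and verifying that summing over it is complete and without repetition, so that the inner sum collapses to $Z_{C^{\partial Y}}(\Lambda\setminus Y)$ with multiplicity one and nothing missing. This rests essentially on one-dimensionality --- $\mathbb{R}\setminus\Gamma^c$ is a disjoint union of intervals --- and on $Y_0$ being an interval; no analysis is involved. In particular this proposition asserts no convergence: the convergence of \eqref{eq:cluster-exp4} and the existence of the limit $\Lambda\to\mathbb{R}$ are treated in the subsequent lemmata.
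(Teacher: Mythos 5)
Your proposal is correct and follows essentially the same route as the paper, which does not give a separate proof but obtains the proposition directly from \eqref{eq:cluster-exp3} by setting $s\equiv 1$ and dividing by $Z(\Lambda)$, exactly as you do. Your additional verification of properties (a)--(c) of the index set (that $Y$ is an interval with integer endpoints, $\Gamma\subset Y$, and $\Gamma^c\cap Y\subset Y_0$) is a sound elaboration of what the paper merely asserts as ``following from \eqref{eq:Y12}''.
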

The last requirement in the choice of $\Gamma$ over which the sum in \eqref{eq:cluster-exp4} is taken implies in particular that for 
a given $Y$, $\Gamma$ is than chosen in such a way that all the bonds in $Y$ outside $Y_0$ are in $\Gamma$, namely 
\[
(\mathbb{Z} \cap  ({Y} \setminus Y_0))\subset \Ga
\]

\begin{thm}\label{thm:adiabatic-limit}
Under the hypothesis of the Proposition \ref{pr:cluster-exp},
it holds that the limit 
\[
\omega^{\beta,V}(W(f)) = \lim_{\Lambda \to \mathbb{R}} S(\Lambda) 
\]
exists.
\end{thm}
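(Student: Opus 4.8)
The plan is to run the Glimm--Jaffe--Spencer argument in the form already prepared in Proposition~\ref{pr:cluster-exp}. Starting from the cluster expansion \eqref{eq:cluster-exp4},
\[
S(\Lambda)\;=\;\sum_{Y,\Gamma}\Bigl(\,\int_{0}^{s(\Gamma)}\!\partial^{\Gamma}F(\sigma,\Lambda\cap Y)\,d\sigma\Bigr)\,\frac{Z_{C^{\partial Y}}(\Lambda\setminus Y)}{Z(\Lambda)},
\]
the idea is to prove that each summand has a limit as $\Lambda\to\mathbb{R}$ and that the series converges absolutely and uniformly in $\Lambda$; the limit may then be taken term by term, and the resulting value, which does not depend on the chosen exhaustion of $\mathbb{R}$, is by definition $\omega^{\beta,V}(W(f))$.

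For the term-wise limit, recall from conditions (a)--(c) of Proposition~\ref{pr:cluster-exp} that $Y=(a,b)$ is a bounded interval with $a,b\in\mathbb{Z}$, $Y_{0}\subset Y$, $\Gamma\subset Y$. Hence, as soon as $\Lambda\supset Y$, the space cutoff $h_{\Lambda\cap Y}$ equals $h_{Y}$, so the factor $\int_{0}^{s(\Gamma)}\partial^{\Gamma}F(\sigma,\Lambda\cap Y)\,d\sigma$ stabilises to a number $c_{Y,\Gamma}$ independent of $\Lambda$, and all the $\Lambda$-dependence of the summand sits in $Z_{C^{\partial Y}}(\Lambda\setminus Y)/Z(\Lambda)$. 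Since $C^{\partial Y}$ carries Dirichlet data on $\partial Y=\{a,b\}$, it decouples across these points, so $Z_{C^{\partial Y}}(\Lambda\setminus Y)=Z_{C^{\partial Y}}(\Lambda)/Z_{C^{\partial Y}}(Y)$ with $Z_{C^{\partial Y}}(Y)$ a fixed nonzero constant, and it remains to show that $\lim_{\Lambda\to\mathbb{R}}Z_{C^{\partial Y}}(\Lambda)/Z(\Lambda)$ exists. I would get this by interpolating the two bonds at $\partial Y$: letting $C(t)$ be the covariance \eqref{eq:covariance-convex-sum} with the bond parameters at $a$ and $b$ equal to $1-t$ and all others equal to $1$, one has $C(0)=C^{\beta}_{m}$, $C(1)=C^{\partial Y}$ and $\log\bigl(Z_{C^{\partial Y}}(\Lambda)/Z(\Lambda)\bigr)=-\int_{0}^{1}\frac{d}{dt}\log Z_{C(t)}(\Lambda)\,dt$. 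The integrand is governed by the two bond kernels $\partial_{s_{a}}C,\partial_{s_{b}}C$, which by Propositions~\ref{pr:boundDGammaC}--\ref{pr:boundDGammaC1point} are localised near the \emph{fixed} points $a,b$ and decay exponentially; together with the $L^{p}$ estimate of Lemma~\ref{le:lpnormC} this forces the integrand to have a finite limit as $\Lambda\to\mathbb{R}$ (the cluster estimates also providing $Z_{C(t)}(\Lambda)\neq0$, so that the logarithm is legitimate).

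The main work is the uniform-in-$\Lambda$ summability. By condition~(c), $\Gamma$ contains every integer bond of $Y\setminus Y_{0}$, so for fixed $Y$ there are at most $2^{|\mathbb{Z}\cap Y_{0}|}$ admissible $\Gamma$ and $|Y|$ is bounded by $|\Gamma|$ up to a fixed constant. Each derivative $\partial_{s_{b}}$ acting on $F$ brings down a factor controlled by the bond kernel $D_{\{b\}}C$, which by Proposition~\ref{pr:boundDGammaC1point} decays like $e^{-\frac{m}{\sqrt{2}}(\text{gap across }b)}$; combining this with the convergence of the perturbative series in $\lambda$ established in Theorem~\ref{thm:convergence-V} (valid for $h_{\Lambda\cap Y}$, a characteristic function supported in $Y$ uniformly in $\Lambda$) is precisely the estimate packaged in Lemma~\ref{le:cluster-convergence-2}, giving a bound $|c_{Y,\Gamma}|\le C\,K^{|\Gamma|}$ with $K$ small enough that the sum over the finitely many $\Gamma$ for each $Y$ and then over the two endpoints of $Y$ outside $Y_{0}$ (a convergent geometric series) is finite. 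The ratio $Z_{C^{\partial Y}}(\Lambda\setminus Y)/Z(\Lambda)$ is bounded uniformly in $\Lambda$ and $Y$ by the same estimates. With a fixed summable dominating sequence, dominated convergence over the index $(Y,\Gamma)$ yields $\lim_{\Lambda\to\mathbb{R}}S(\Lambda)=\sum_{Y,\Gamma}c_{Y,\Gamma}\,\lim_{\Lambda\to\mathbb{R}}\bigl(Z_{C^{\partial Y}}(\Lambda\setminus Y)/Z(\Lambda)\bigr)$.

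The hard part will be this last step. What has to be reconciled, uniformly in $\Lambda$, inside $\partial^{\Gamma}F(\sigma,\Lambda\cap Y)$ are two estimates of very different character: the exponential decay of the covariance bond kernels $D_{\Gamma}C$ from Propositions~\ref{pr:decayC}--\ref{pr:boundDGammaC1point}, which controls the combinatorics of the bond set $\Gamma$, and the convergence of the Sine--Gordon $S$-matrix series from Theorem~\ref{thm:convergence-V}, which controls the vertices sitting between the bonds. The one-dimensionality of the bond lattice $\mathbb{Z}\subset\mathbb{R}$ --- the payoff of working in Lorentzian signature and reducing to $\Sigma_{0}$ via the time-slice property --- is exactly what turns these bond sums into geometric rather than merely convergent series, and carrying this out is the content of Lemma~\ref{le:cluster-convergence-2} together with the technical lemmata of Section~\ref{se:lemmata}.
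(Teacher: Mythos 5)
Your proposal follows essentially the same route as the paper: the cluster expansion of Proposition~\ref{pr:cluster-exp}, the observation that conditions (a)--(c) bound the number of admissible $\Gamma$ for fixed $Y$ by $2^{|Y_0|}$ and force $|Y|-|Y_0|\le|\Gamma|\le|Y|$, and then absolute summability from Lemmas~\ref{le:cluster-convergence-1} and \ref{le:cluster-convergence-2} (the latter applied with $Y$ in place of $\Lambda$), with the geometric sum over the endpoints of $Y$ closed by taking $m$ large. Two caveats. First, your claim that $Z_{C^{\partial Y}}(\Lambda\setminus Y)/Z(\Lambda)$ is ``bounded uniformly in $\Lambda$ \emph{and} $Y$'' overstates Lemma~\ref{le:cluster-convergence-1}, which only gives $e^{k|Y|}$ with $k$ independent of $m$; this exponential growth in $|Y|$ is precisely what the $e^{-k_1|\Gamma|}$ decay of Lemma~\ref{le:cluster-convergence-2} must beat, and it is the reason the threshold on $m$ enters --- your ``$K$ small enough'' absorbs this, but the uniformity statement as written is false. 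Second, your interpolation argument for the term-wise limit of $Z_{C^{\partial Y}}(\Lambda\setminus Y)/Z(\Lambda)$ does not close that step: $\frac{d}{dt}\log Z_{C(t)}(\Lambda)$ is a \emph{normalized} correlation function of the interacting measure, and establishing that it has a finite limit uniformly in $t$ as $\Lambda\to\mathbb{R}$ is itself an adiabatic-limit statement requiring the same cluster-expansion machinery; the pointwise decay of the bond kernels controls the numerator $\partial_t Z_{C(t)}(\Lambda)$ but not the ratio. The paper is admittedly equally terse here, appealing only to ``the clustering properties discussed above,'' so this is a gap you share with the source rather than one you introduced.
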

\begin{proof}
The limit $\La\to \mathbb{R}$ in each term of the sum on $Y,\Ga$ in \eqref{eq:cluster-exp4} can be taken because of the clustering  properties discussed above.
To prove the converge of the sum we make the following observations.
Condition $a)$ in Proposition \ref{pr:cluster-exp} implies that 
the number of $Y$ in the sum \eqref{eq:cluster-exp3} with fixed volume $|Y|$ is bounded by $|Y|$. 
Furthermore, once $Y$ is chosen for $Y_0$ of compact support, conditions $b)$ and $c)$ satisfied by the elements of the sum given in equation \eqref{eq:cluster-exp3} in Proposition \ref{pr:cluster-exp}, imply that the number of possible $\Ga$ is bounded by $2^{|Y_0|}$ which does not depend on $\Lambda$. 
With this two observations, Lemma \ref{le:cluster-convergence-1} and Lemma \ref{le:cluster-convergence-2} imply now that for sufficiently large mass $m$ the sum in 
\eqref{eq:cluster-exp3} converges. (Notice in particular that for each term in the sum \eqref{eq:cluster-exp3} every $h$ is bounded to be supported in $Y$. Hence the estimate of Lemma \ref{le:cluster-convergence-2} is used with $Y$ at the place of $\Lambda$).
In the proof we have also used that $|\Gamma|$ is essentially the volume $|Y|$, more precisely
\[
|Y| - |Y_0| \leq |\Gamma| \leq |Y|.
\]
and $|Y_0|$ is the volume of $Y_0$ which depends only on the support of $\psi$ and $\psi'$ or of $f$.
\end{proof}

We now estimate $Z_{C^{\partial Y}}(\Lambda\setminus Y)/Z(\Lambda)$ and the sum over $Y,\Ga$ with fixed $|Y|$.
We can do it using proposition 5.2 of \cite{GJS}, see also a similar analysis in \cite{FS}.

\begin{lemma}\label{le:cluster-convergence-1}
It holds that
\[
\left|\frac{Z_{C^{\partial Y}}(\Lambda\setminus Y)}{Z(\Lambda)}\right| \leq e^{k |Y|}
\]
where $k$ does not depend on $\beta$, on $\Lambda$ and on $m$ for sufficiently large $m$ and for small $a$. 
\end{lemma}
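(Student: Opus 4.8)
The plan is to express both $Z_{C^{\partial Y}}(\Lambda\setminus Y)$ and $Z(\Lambda)$ in their explicit perturbative form \eqref{eq:Fexp-Zexp} (with the appropriate covariance in place of $C^\beta_m$), and then to bound the ratio by the following three-step argument. First I would use the decoupling/clustering property established in section \ref{se:cluster-expansion}: since $\Lambda \setminus Y$ is the union of connected components of $\mathbb{R}\setminus\partial Y$ lying outside $Y$, and since $C^{\partial Y}$ has vanishing boundary data on $\partial Y$, the partition function $Z_{C^{\partial Y}}(\Lambda\setminus Y)$ factorizes over those components. Meanwhile $Z(\Lambda)$ involves the full covariance $C^\beta_m$ without boundary conditions. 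The key is therefore to compare the partition function \emph{with} a Dirichlet cut along $\partial Y$ to the one \emph{without}, controlling the difference by the constants $K$, $C$ coming from the estimates of Theorem \ref{thm:convergence-V} and the decay bounds for covariances in Proposition \ref{pr:decayC} and Proposition \ref{pr:boundDGammaC}.

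Concretely, I would proceed as follows. The ratio $Z_{C^{\partial Y}}(\Lambda\setminus Y)/Z(\Lambda)$ is manipulated by first inserting the two cuts at the endpoints $a,b$ of $Y=(a,b)$ into $Z(\Lambda)$: writing $Z(\Lambda) = Z_{C^{\partial Y}}(\Lambda\cap Y)\, Z_{C^{\partial Y}}(\Lambda\setminus Y) + (\text{correction})$, the correction terms are governed by the quantities $D_\Gamma C$ of \eqref{eq:DGammaC} with $\Gamma\subset\{a,b\}$, each of which decays like $e^{-\frac{m}{\sqrt 2}d(\cdot,\Gamma)}/m$ by Proposition \ref{pr:boundDGammaC1point} and Proposition \ref{pr:boundDGammaC}. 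For $m$ large, these corrections are small uniformly in $\Lambda$, so that, after dividing, $Z_{C^{\partial Y}}(\Lambda\setminus Y)/Z(\Lambda)$ is comparable to $1/Z_{C^{\partial Y}}(\Lambda\cap Y)$. Now $Z_{C^{\partial Y}}(\Lambda\cap Y)$ is a partition function over a region contained in $Y$; using the convergence estimate of Theorem \ref{thm:convergence-V} (valid also for the thermal covariance, as explained in subsection \ref{se:thermal-state-free-theory}) one gets a two-sided bound $e^{-k_1|Y|}\le |Z_{C^{\partial Y}}(\Lambda\cap Y)| \le e^{k_1|Y|}$ with $k_1$ independent of $\beta,\Lambda,m$ for $m$ large and $a$ small, since $\mathcal{V}_g$ for the relevant cutoff is bounded by $|Y|$ and the geometric series in $\lambda\mathcal V_g$ contributes at most $e^{\text{const}\cdot|Y|}$. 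Combining these gives $|Z_{C^{\partial Y}}(\Lambda\setminus Y)/Z(\Lambda)|\le e^{k|Y|}$ for a suitable $k$.

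The main obstacle I anticipate is the lower bound on the partition functions $Z_{C^{\partial Y}}(\Lambda\cap Y)$ — i.e. showing they cannot be too small — since Theorem \ref{thm:convergence-V} and the Cauchy--Schwarz/conditioning machinery only directly give \emph{upper} bounds on absolute values of the perturbative terms. The resolution should come from the positivity of $C^\beta_m$ (Lemma \ref{le:positiveC}) and of $C^{\partial Y}$ (which, being built from $C^\beta_m$ by the image-charge method, retains positivity on the relevant domain), together with the explicit cosine form of the interaction: one isolates the leading (vacuum) term, which is strictly positive and of order $e^{-\text{const}\cdot|Y|}$, and shows the remaining terms are a relatively small perturbation when $m$ is large and $a$ small, using Jensen's inequality or a direct sign analysis of the expansion \eqref{eq:Fexp-Zexp}. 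Care is needed to keep all constants uniform in $\beta$: here the uniform-in-$u$ bounds of Proposition \ref{pr:decayC} and Lemma \ref{le:lpnormC} are exactly what is required, since the $u$-integration is over the finite simplex $\beta\mathcal S_n$ and the per-vertex estimates do not degrade as $\beta$ varies.
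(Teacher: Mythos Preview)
Your overall skeleton --- reduce to bounding $1/Z_{C^{\partial Y}}(\Lambda\cap Y)$ and then get a lower bound on that partition function --- matches the paper's, but both steps in your execution miss the clean tools the paper actually uses, and the first step has a real gap.

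For the reduction, you propose to write $Z(\Lambda) = Z_{C^{\partial Y}}(\Lambda\cap Y)\,Z_{C^{\partial Y}}(\Lambda\setminus Y) + (\text{correction})$ and control the correction via the decay of $D_\Gamma C$. The problem is that this ``correction'' is not a single term but the entire contribution of all perturbative orders in which at least one covariance line crosses $\partial Y$; its combinatorics is precisely what the full cluster expansion (Lemma \ref{le:cluster-convergence-2}) is designed to handle, so appealing to it here is circular. The paper avoids this completely by a monotonicity argument: from the path-integral representation \eqref{eq:Cb-path}, \eqref{eq:Cab-path} one has $C^{\partial Y}\le C^\beta_m$, and then conditioning (Proposition \ref{pr:conditioning}) gives directly
\[
Z(\Lambda)\;\ge\;Z_{C^{\partial Y}}(\Lambda)\;=\;Z_{C^{\partial Y}}(Y)\,Z_{C^{\partial Y}}(\Lambda\setminus Y),
\]
with no error term at all. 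This single inequality replaces your entire correction analysis.

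For the lower bound on $Z_{C^{\partial Y}}(Y)$, your description is vague and the claim that the ``leading (vacuum) term \dots\ is of order $e^{-\text{const}\cdot|Y|}$'' is not right --- the $n=0$ term is simply $1$. The paper's trick is again monotonicity in the other direction: $C^{\partial Y}\ge C(0,\dots)$ (the covariance with Dirichlet data on \emph{all} of $\mathbb Z$), so by conditioning
\[
Z_{C^{\partial Y}}(Y)\;\ge\;Z_{C(0)}(Y)\;=\;\bigl(Z_{C(0)}(Q)\bigr)^{|Y|},\qquad Q=[0,1],
\]
where the last equality is the exact factorization of the fully decoupled theory over unit intervals. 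Since $Z_{C(0)}(Q)$ is a single fixed positive number (independent of $\Lambda$, $\beta$, and bounded away from zero for large $m$ and small $a$), this yields $1/Z_{C^{\partial Y}}(Y)\le e^{k|Y|}$ immediately. You should replace your Jensen/sign-analysis sketch with this second application of conditioning.
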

\begin{proof}
We present a detailed proof in the case $Y=[l_1,l_2] \subset \Lambda$ with $l_1<l_2$ and $l_1,l_2\in\mathbb{Z}$.
This proof can be obtained noticing that 
\begin{equation}\label{eq:ineq-setp}
C^{\partial Y} \leq C^\beta_m.
\end{equation}
This inequality can be proved using the path integral representation of $C^\beta_m$ given in \eqref{eq:path-int-rep} and observing the $C^{\partial Y}$ is a similar integral \eqref{eq:Cb-path} which is however taken over a smaller set of paths. Furthermore, as discussed after \eqref{eq:path-int-rep}, the weight $e^{- \frac{m^2}{2} t} W(t,u)$ which appears in \eqref{eq:path-int-rep} and in \eqref{eq:Cb-path} is  positive.
Now, knowing \eqref{eq:ineq-setp}, conditioning and the clustering properties of $C^{\partial Y}$ used in a similar way as in the proof of Lemma \ref{le:omega(U)} imply that 
\[
Z(\Lambda)= \omega^{\beta, h_\Lambda}(U(i\beta)) \geq \omega_{C^{\partial Y}}^{\beta,h_\Lambda}(U(i\beta))
=
\omega_{C^{\partial Y}}^{\beta, h_Y}(U(i\beta))\omega_{C^{\partial Y}}^{\beta,h_{\Lambda \setminus Y}}(U(i\beta))
= Z_{C^{\partial Y}}(\Lambda\setminus Y)Z_{C^{\partial Y}}(Y)
\]
from which
\begin{equation}\label{eq:ineq-1}
\left|\frac{Z_{C^{\partial Y}}(\Lambda\setminus Y)}{Z(\Lambda)}\right|\leq \left|\frac{1}{Z_{C^{\partial Y}}(Y)}\right|.
\end{equation}
The last step can be accomplished as follows. Similarly to \eqref{eq:ineq-setp} we have that 
\[
C^{\partial Y} \geq C(0,\dots)
\]
hence again conditioning and the clustering properties of $C(0)$ give 
\[
Z_{C^{\partial Y}}(Y)\geq Z_{C(0)}(Y)= (Z_{C(0)}(Q))^{|Y|} 
\]
where $Q=[0,1]$ and $|Y|=l_2-l_1$ is the volume of $|Y|$, this chain of inequalities implies that
\begin{equation}\label{eq:ineq-2}
\left|\frac{1}{Z_{C^{\partial Y}}(Y)}\right| \leq e^{k| Y|} \leq e^{k|\Lambda|}
\end{equation}
for some constant $k$.
Combining \eqref{eq:ineq-1} with  \eqref{eq:ineq-2} we have proved the claim.
\end{proof}

The limit $\La\to \mathbb{R}$ in each term of the sum on $Y,\Ga$ in \eqref{eq:cluster-exp3} can be taken because of the clustering  properties discussed above.
The sum can be proven to converge thanks to the following observation (Proposition 5.3 in \cite{GJS} and Proposition 2.2 of \cite{FS})

\begin{lemma} \label{le:cluster-convergence-2}
Consider $\Gamma$ a finite set of elements of $\mathbb{Z}$,
\begin{equation}\label{eq:basic}
\left| \int\limits_{0}^{s(\Ga)}  \partial^{\Ga} 
F(\sigma,\Lambda)
   d\sigma \;   
\right|  
\leq e^{- k_1|\Ga| + k_2 |\La|}
\end{equation}
$k_2$ does not depend on $m$. 
$k_1$ depends on the mass $m$. For any $k_1$ there exists $m_0(k_1)$ such that for any $m>m_0(k_1)$ the bound holds.
\end{lemma}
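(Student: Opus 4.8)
The plan is to substitute the interpolated covariance $C(\sigma)$ of \eqref{eq:covariance-convex-sum} for $C^\beta_m$ in the explicit expansion \eqref{eq:Fexp-Zexp} and to exploit that at $\sigma=\sigma(\Gamma)$ the covariance has vanishing Dirichlet data on all of $\mathbb{Z}\setminus\Gamma$. Then $\mathbb{R}$ splits into finitely many \emph{thick blocks} $X_i$, each a maximal run of consecutive integers lying in $\Gamma$ (so $\ell_i:=|\Gamma\cap X_i|$ and $|X_i|=\ell_i+1$), together with an infinite family of bond-free unit cells $Q$. Since $C(\sigma(\Gamma))$ decouples across all of these, $F$ decouples at $s=0$ in the sense of \eqref{eq:combine}, and since $\partial^{\Gamma}F(\sigma(\Gamma))$ factorises over the pieces while $\int_0^{s(\Gamma)}d\sigma$ factorises accordingly, one gets
\[
\Big|\int_0^{s(\Gamma)}\partial^{\Gamma}F(\sigma,\Lambda)\,d\sigma\Big|\;\le\;\Big(\prod_i\Big|\int_0^{s(\Gamma\cap X_i)}\partial^{\Gamma\cap X_i}F(\sigma,\Lambda\cap X_i)\,d\sigma\Big|\Big)\Big(\prod_{Q\subset\Lambda}|F(0,\Lambda\cap Q)|\Big),
\]
the last product running over the $O(|\Lambda|)$ bond-free cells meeting $\Lambda$. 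Each $F(0,\Lambda\cap Q)$ is a one-cell Dirichlet partition function, bounded by a constant $e^{k_2}$ uniform in $m$ for $m\ge m_0$ exactly as in the proof of Lemma \ref{le:cluster-convergence-1} (Theorem \ref{thm:convergence} in the characteristic-function form noted after Lemma \ref{le:Cauchy-Determinant}), so the second product is $\le e^{k_2|\Lambda|}$; it then remains to show that for each block $\big|\int_0^{s(\Gamma\cap X_i)}\partial^{\Gamma\cap X_i}F(\sigma,\Lambda\cap X_i)\,d\sigma\big|\le e^{k_2}e^{-k_1\ell_i}$ with $k_1\to\infty$ as $m\to\infty$. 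Multiplying over blocks and using $\#\{X_i\}\le|\Gamma|$, $\sum_i\ell_i=|\Gamma|$ then yields \eqref{eq:basic}.

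For the block estimate I would expand $\partial^{\Gamma\cap X_i}$ by the chain rule (Fa\`a di Bruno): each $\partial/\partial s_b$ pulls a factor $-\sum_{p<q}a_pa_q\hbar\,\partial_{s_b}C(\sigma)(u_p-u_q,x_p-x_q)$ (or the analogous source term) out of the exponential without touching it, so $\partial^{\Gamma\cap X_i}F$ becomes a finite sum over assignments of the bonds of $\Gamma\cap X_i$ to the covariance lines of the $n$-fold expansion, a line carrying bonds $b_1<\dots<b_k$ being replaced by $\partial_{s_{b_1}}\!\cdots\partial_{s_{b_k}}C(\sigma)$. Differentiating \eqref{eq:covariance-convex-sum} identifies $\partial_{s_b}C(s)$, up to a convex prefactor, with a difference $C^{J}-C^{J\cup\{b\}}$, which by the path-integral representation \eqref{eq:Cb-path} is a sum over paths touching $b$ and is hence pointwise bounded, uniformly in $s\in[0,1]^{\mathbb{Z}}$, by $D_{\{b\}}C$ of \eqref{eq:DGammaC}; iterating, $|\partial_{s_{b_1}}\!\cdots\partial_{s_{b_k}}C(s)|\le c^k|D_{\{b_1,\dots,b_k\}}C|$. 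Two consequences enter. First, $D_{\{b_1,\dots,b_k\}}C\le c\,C^\beta_m$ pointwise (the path integral is restricted to fewer paths, with positive integrand), so by Lemma \ref{le:lpnormC} each differentiated line has one-dimensional $L^1$-norm $O(1/m)$; second, since the bonds are integers, a line carrying two or more of them additionally carries the factor $e^{-\frac{m}{\sqrt2}\Delta(\{b_1,\dots,b_k\})}\le e^{-\frac{m}{\sqrt2}(k-1)}$ from Proposition \ref{pr:boundDGammaC}. Thus, whatever the assignment, integrating out the free endpoint vertices of the differentiated lines produces a factor bounded by $(c/m)^{\ell_i}$, the worst case being each bond on its own line.

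The remaining, un-differentiated exponential $e^{-\sum_{p<q}a_pa_q\hbar C(\sigma)}$ together with the $\pm a$ sum is not bounded by $1$ and must be controlled as in Theorem \ref{thm:convergence}: one writes the term as ${\rm{ev}}_0$ of a product in the commutative $*$-algebra $(\mathcal{F}^V,\cdot_{C(\sigma)})$ with the $D$-factors as extra insertions, and applies the Cauchy--Schwarz inequality together with the Cauchy-determinant estimate of Lemma \ref{le:Cauchy-Determinant} — now on the block $X_i$ in place of a small diamond, using the positivity of $C^\beta_m$ (Lemma \ref{le:positiveC}) and the bound $\|C^\beta_m\|^u_p\le c\,m^{-1/p}$ of Lemma \ref{le:lpnormC} for the one-dimensional spatial integrals. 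Since the simplex $\beta\mathcal{S}_n$ carries the volume $\beta^n/n!$ and, after the Cauchy-determinant bound, the vertex integrals carry at most $(C|X_i|)^n(n!)^{1/p}$, the $n$-sum is dominated by $\sum_n(C'|X_i|)^n/(n!)^{1-1/p}$, which converges (for $p>1$) to a quantity $\le e^{k_2|X_i|}=e^{k_2(\ell_i+1)}$, again by the per-cell/inverse-conditioning argument of the first paragraph. Collecting, each of the $\ell_i$ bonds contributes a factor $O(1/m)$ and the block activity is $\le e^{k_2(\ell_i+1)}$, so for $m$ large the block integral is $\le e^{k_2}(c\,e^{k_2}/m)^{\ell_i}=e^{k_2}e^{-k_1\ell_i}$ with $k_1=\log(m/(c\,e^{k_2}))$, and $k_2$ is manifestly $m$-independent.

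The \emph{main obstacle} is the combinatorial bookkeeping in the last two steps: one must organise the chain-rule expansion of $\partial^{\Gamma\cap X_i}$ so that the entropy of the sum over bond-to-line assignments stays bounded by $C^{\ell_i}$ times a harmless power of $n$, and so that these combinatorial factors are absorbed into the $(c/m)^{\ell_i}$ decay (or into a fixed fraction of the $e^{-\frac{m}{\sqrt2}(k-1)}$ gains) without spoiling the $(n!)^{1/p-1}$ needed to resum the perturbative series through the Cauchy-determinant bound. This is precisely the point where the smallness of $a$ (so that $\alpha=a^2\hbar/(4\pi)<1$ and Lemma \ref{le:Cauchy-Determinant} is available) and the largeness of $m$ are both genuinely used. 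The combinatorial scheme is that of Proposition 5.3 of \cite{GJS} and Proposition 2.2 of \cite{FS}; the content of the present lemma is that, with the one-dimensional decay estimates of Section \ref{se:covariance-bc} — Propositions \ref{pr:decayC}, \ref{pr:boundDGammaC}, \ref{pr:boundDGammaC1point} and Lemma \ref{le:lpnormC} — replacing the two-dimensional Euclidean inputs, that scheme goes through with $k_1$ as large as desired once $m$ is large enough.
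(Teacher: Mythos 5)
Your outline follows essentially the same route as the paper: expand $\partial^{\Gamma}$ by the Leibniz rule so that each derivative pulls a factor $\partial^{\gamma}C(\sigma)$ out of the exponential, bound these factors through the path-integral representation by $D_{\gamma}C$ (Propositions \ref{pr:boundDGammaC}, \ref{pr:boundDGammaC1point}, Lemma \ref{le:lpnormC}), separate the resulting insertions from the time-ordered exponential of the potential by Cauchy--Schwarz, control the latter by the stability bound $e^{c|\Lambda|}$ of Lemma \ref{le:omega(U)}, resum the perturbative series via the Cauchy-determinant/conditioning machinery, and absorb the combinatorics of bond-to-line assignments into the $m$-dependent decay. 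The one organizational difference is that you factorize over the connected components of $\mathbb{R}\setminus\Gamma^{c}$ at the outset (which is legitimate, since for every $\sigma\leq s(\Gamma)$ the covariance $C(\sigma)$ has Dirichlet data on all of $\Gamma^{c}$), whereas the paper keeps the estimate global and instead localizes each covariance line on unit intervals, tracking the multiplicities $N(I)$ through Lemmas \ref{le:primo}, \ref{le:secondo}, \ref{le:quarto} and the partition count of Lemma \ref{le:combinatoric}. Your block decomposition buys a cleaner separation of the $e^{-k_{1}|\Gamma|}$ and $e^{k_{2}|\Lambda|}$ factors; the price is that the entropy of the chain-rule expansion inside each block, which you correctly identify as the main obstacle, still has to be done exactly as in the paper's Lemmata of Section \ref{se:lemmata} (note also that each differentiated line costs $m^{-1/r}$ in the relevant $L^{r}$ norm, not $m^{-1}$; this still lets $k_{1}\to\infty$, which is all the statement requires).

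The one genuine omission is the observable $A(\psi,\psi')$ sitting inside $F(\sigma,\Lambda)=\omega^{h_\Lambda}_{C(\sigma)}(A(\psi,\psi')U(i\beta))$: your argument treats $F$ as a pure partition function. This matters for two reasons. First, the source terms in \eqref{eq:Fexp-Zexp} are themselves built from the covariance, so $\partial^{\Gamma}$ acts on the observable--vertex contractions as well as on the vertex--vertex ones; these lines must enter the same bond-to-line bookkeeping. Second, one must check that the source contributes only a harmless multiplicative factor (of the form $e^{k\|f\|_\infty}$, uniform in $\Gamma$). The paper handles both points by the current/divergence identity that rewrites $k(u,x)$ of \eqref{eq:k1} as a smeared covariance plus a smooth function $g$, absorbing $e^{g}$ into a rescaled $h$ and promoting the remainder to an extra vertex $\mathcal{W}_{0}=\phi(f)$ in the expansion \eqref{eq:L}. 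Without some such step your per-block estimate only covers the blocks disjoint from $\supp A(\psi,\psi')$, and the block containing the observable is not controlled.
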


\begin{proof} 
We start analyzing how $F(\sigma,\Lambda)$ depends on $\psi$ and $\psi'$. 
To this end, recalling \eqref{eq:Fexp-Zexp}, we observe that $(\psi,\psi')$ enters 
$\omega^\beta(A(\psi,\psi') U(i\beta))$ as exponential of terms proportional to
\begin{equation}\label{eq:k1}
k(u, {x}) =  {-  \int d{x}_0 \psi({x}_0) C^\beta_m(u,{x}_0-{x})}   
{+\mathrm{i} \int d{x}_0 \psi'({x}_0) \partial_u C^\beta_m(u,{x}_0-{x})}. 
\end{equation}
Notice that this contribution can be written as 
\begin{equation}\label{eq:k2}
k(u, {x}) =   \int dx_0 du_0 f(u_0,x_0) \tilde{C}(u_0-u,{x}_0-{x})   + g(u,x)
\end{equation}
for two suitable smooth functions $f$ and $g$ supported on $[0,\beta]\times \mathbb{R}$ with compact support in the second variable. Here $\tilde{C}(u,x) = {C}^\beta_m(u,x)$ if $u\in[0,\beta]$ and  $\tilde{C}(u,x) = {C}^\beta_m(-u,x)$ if $u\in[-\beta,0)$. 
To find the precise forms of $f$ and $g$ we construct a current $J$ on $[0,\beta]\times \mathbb{R}$ whose component are
$$
J_j(u_0,x_0) = p(u_0,x_0) \partial_j  \tilde{C}(u_0-u,{x}_0-{x}) - \partial_j p(u_0,x_0)   \tilde{C}(u_0-u,{x}_0-{x}), \qquad j\in \{0,1\}.
$$
$p\in C^\infty([0,\beta\times \mathbb{R}])$ and $p$ is compact in space.
We notice that the divergence of the corresponding current gives 
\begin{align*}
\partial_j J_j(u_0,x_0) 
&= p(u_0,x_0) \Delta_{u,x}   \tilde{C}(u_0-u,{x}_0-{x}) - \Delta_{u,x} p(u_0,x_0)  \tilde{C}(u_0-u,{x}_0-{x}) 
\\
&=   m^2  p(u_0,x_0)    \tilde{C}(u_0-u,{x}_0-{x}) + p(u_0,x_0)  \delta(u_0-u)\delta({x}_0-{x})
 - \Delta_{u,x} p(u_0,x_0)  \tilde{C}(u_0-u,{x}_0-{x})
\end{align*}
where $\Delta_{u,x}= \partial_u^2+\partial_x^2$ and where we used the fact that $\Delta_{u,x}\tilde{C}(u,x) = m^2\tilde{C}(u,x)  + \delta(u) \delta(x)$. Here $\delta$ are ordinary delta functions.
We observe also that, for $u\in[0,\beta]$ in view of the symmetry satisfied by $C^\beta_m$,  
\[
\tilde{C}(\beta-u,{x}) =\tilde{C}(-u,{x}), 
\qquad
\partial_u \tilde{C}(\beta-u,{x}) =  \partial_u\tilde{C}(-u,{x}).
\]
Hence divergence theorem implies that 
\begin{align*}
\int_{\mathbb{R}}\int_{(0,\beta)} du_0 dx_0 \partial_j J_j(u_0,x_0) 
=& 
-\int dx_0  \left( p(\beta,x_0) - p(0,x_0) \right) \partial_u  \tilde{C}^\beta_m(u,{x}_0-{x}) 
\\
&- \int dx_0 \left(\partial_u p(\beta,x_0) - \partial_u p(0,x_0) \right) \tilde{C}^\beta_m(u,{x}_0-{x}).
\end{align*}
Choosing 
\[
p(u_0,x_0) =    \frac{\mathrm{i}}{\beta} (u_0-\frac{\beta}{2})  \psi'(x_0)- \frac{1}{2 \beta}(u_0-\frac{\beta}{2})^2 \psi(x_0)
\]
the boundary integrals coincide with $k(u,x)$ in \eqref{eq:k1}. The searched functions $f$ and $g$ in 
\eqref{eq:k2}
  are thus given in terms of $p$ as
\[
f=m^2p - \Delta_{u,x} p , \qquad g=p.
\] 
Recalling \eqref{eq:Fexp-Zexp}, we notice that the terms $e^{g}$ obtained as the exponential of \eqref{eq:k2} can be seen as a rescaling of $h$ to $\tilde{h}=e^{g}h$. For simplicity we shall perform the subsequent analysis discarding this factor, the correct estimate can be restored a posteriori.  
The relevant contribution which requires an estimate is thus 
$\omega(\mathcal{W}(f)U(\mathrm{i}\beta))$.
Here $\mathcal{W}(f)= \exp_{\tilde{C}}{\mathrm{i}\phi(f)}$, the exponential of the linear field $i\phi(f)$ taken with respect to $\tilde{C}$. We discuss the detail of the proof in the case of $f \in C^{\infty}_0([0,\beta]\times \mathbb{R})$. 
The same results can then be obtained also in the desired limit.
Hence, by direct inspection, we have that 
\begin{equation}
F(\sigma,\Lambda)  = {\rm{ev}}_0\left(\mathcal{W}(f)\cdot_{\tilde{C}(\sigma)} \exp_{\tilde{C}(\sigma)} 
\frac{2\lambda \dot{V}_{\beta}^{{h}}}{\hbar}\right)
\end{equation}
for $\mathcal{W}(f)= \exp_{\tilde{C}(\sigma)}{\mathrm{i}\phi(f)}$ where $f$ is a smooth compactly supported function on $[0,\beta]\times\mathbb{R}$.
Furthermore, $\dot{V}_\beta^h = \int dx \int_0^\beta du h(x)  \lambda \cos(a \varphi(x)) $.
We have that
\begin{align*}
F(\sigma,\Lambda)  &= {\rm{ev}}_0\left(\exp_{\tilde{C}(\sigma)}{\mathrm{i}\phi(f)}\cdot_{\tilde{C}(\sigma)} \exp_{\tilde{C}(\sigma)} \frac{2\lambda \dot{V}_{\beta}^h}{\hbar}\right)\\
&= {\rm{ev}}_0\left(\exp_{\tilde{C}(\sigma)} \left(\mathrm{i}\phi(f)+\frac{2\lambda \dot{V}_{\beta}^h}{\hbar}\right)\right)\\
&= {\rm{ev}}_0
\left(
e^{\sum_{i<j}\Gamma^{ij}_{\tilde{C}(\sigma)}} \exp_\otimes{ \left( \mathrm{i}\phi(f) +
\frac{2\lambda \dot{V}_{\beta}^h}{\hbar}\right)}\right)
\end{align*}
its derivative along a generic $\Gamma$ is
\begin{align*}
 \partial^\Gamma F(\sigma,\Lambda)  &=  \sum\limits_{\pi\in\mathcal{P}(\Gamma)}  
 {\rm{ev}}_0
\left(\prod_{\gamma \in \pi} \left( \sum_{i<j}\Gamma^{ij}_{\partial^\gamma \tilde{C}(\sigma)} \right)
e^{\sum_{i<j}\Gamma^{ij}_{\tilde{C}(\sigma)}} \exp_\otimes{ \left( \mathrm{i}\phi(f) +
\frac{2\lambda \dot{V}_{\beta}^h}{\hbar}\right)}\right)
\end{align*}
we can now localize the spatial variables of the points $x,y$ in the factors $\partial^\gamma \tilde{C}(\sigma)(x,y)$ present in $ \partial^\Gamma F(\sigma,\Lambda) $ 
 on spatial intervals of unit length. Namely on elements of
\[
\mathcal{I}:=\{[l,l+1]\subset \mathbb{R} \mid l\in\mathbb{Z}\}
\]
 and further decompose it as a sum over all possible intervals, 
 actually,
we are using
\[
\partial^\gamma C(s)_{ij} = \sum_{I,J \in \mathcal{I}} \partial^\gamma C(s)_{ij}^{I,J}
\]
where 
$\partial^\gamma C(s)_{ij}^{I_1,I_2}=\partial^\gamma C(s)_{ij} h_{I_1} h_{I_2}$ 
and $h_{I}$ is the characteristic function of the interval $I$. Hence $\partial^\gamma C(s)_{ij}^{I,J}$ is supported on $(0,\beta) \times I \times (0,\beta) \times J $.
With this decomposition
\begin{align}
 \partial^\Gamma F(\sigma,\Lambda)  &=  \sum\limits_{\pi\in\mathcal{P}(\Gamma)}  
 {\rm{ev}}_0
\left(\prod_{\gamma \in \pi} \left( \sum_{I,J}\sum_{i<j}\Gamma^{ij}_{\partial^\gamma \tilde{C}^{I,J}(\sigma)} \right)
e^{\sum_{i<j}\Gamma^{ij}_{\tilde{C}(\sigma)}} \exp_\otimes{ \left( \mathrm{i}\phi(f) +
\frac{2\lambda \dot{V}_{\beta}^h}{\hbar}\right)}\right)
\notag
\\
 &=  \sum\limits_{\pi\in\mathcal{P}(\Gamma)}  
 \sum_{\{(I_\gamma,J_\gamma) \}}
 {\rm{ev}}_0
\left(       \prod_{\gamma \in \pi} \left( \sum_{i<j}\Gamma^{ij}_{\partial^\gamma \tilde{C}^{I_\gamma,J_\gamma}(\sigma)} \right)
e^{\sum_{i<j}\Gamma^{ij}_{\tilde{C}(\sigma)}} \exp_\otimes{ \left( \mathrm{i}\phi(f) +
\frac{2\lambda \dot{V}_{\beta}^h}{\hbar}\right)}\right)
\notag
\\
&=  \sum\limits_{\pi\in\mathcal{P}(\Gamma)}  
\sum_{\{(I_\gamma,J_\gamma)\}}
 {\rm{ev}}_0 
\left( R_{\pi,\{(I_\gamma,J_\gamma)\}} \cdot_{\tilde{C}(\sigma)}
\exp_{\tilde{C}(\sigma)} \left(\mathrm{i}\phi(f)+\frac{2\lambda \dot{V}_{\beta}^h}{\hbar}\right)
\right)
\label{eq:decompositioninR}
\end{align}
where in the second equality we have interchanged the product over $\gamma\in\pi$ with the sum over copies of intervals.
Hence, after the interchange, the sum is over all possible $|\pi|$ couples of intervals and each element in a generic couple is labelled by $\gamma$. Namely the set over which this sum is taken is
\[
{\{(I_\gamma,J_\gamma) \}_{\gamma\in \pi}   \in (\mathcal{I}\times \mathcal{I})^{|\pi|}}.
\]
Hence, $ \partial^\Gamma F(\sigma,\Lambda)$ is a sum of terms indexed by all possible partitions and by the $|\pi|$ couples of localization intervals $\{(I_\gamma,J_\gamma)\}$. Furthermore $R_{\pi,\{(I_\gamma,J_\gamma)\}}$ is itself a sum of contributions of the form 
\begin{equation}\label{eq:L}
L(w')   =  \int_{( (0,\beta) \times \mathbb{R})^k}  \mathcal{W}_{\theta_1}(\mathbf{x}_1)\cdot_{\tilde{C}(\sigma)} \dots \cdot_{\tilde{C}(\sigma)} \mathcal{W}_{\theta_k}(\mathbf{x}_k)  w'(\mathbf{x}_1,\dots,\mathbf{x}_k)\prod_{i=1}^k d^i\mathbf{x}_i 
\end{equation}
where $\mathbf{x}_i=(u_i,x_i)$ and $\theta_i \in \{-a,0,a\}$ and
\[
\mathcal{W}_\theta(\mathbf{x}) =
\begin{cases}
 \phi(\mathbf{x}), \qquad  &\text{ if } \theta=0\\
 \frac{\lambda \theta \exp(i a\phi(\mathbf{x}))}{\hbar}, \qquad  &\text{ if } \theta=a > 0 \\
 \frac{\lambda \theta \exp(-i a\phi(\mathbf{x}))}{\hbar}, \qquad  &\text{ if } \theta=-a <  0 
\end{cases}
\]
 and
\begin{equation}\label{eq:w'}
w'(\mathbf{x}_1,\dots, \mathbf{x}_k) = w(\mathbf{x}_1,\dots, \mathbf{x}_k) \prod_{\gamma\in\pi} \partial^\gamma \tilde{C}^{I_\gamma,J_\gamma}(\sigma)(\mathbf{x}_{\gamma},\mathbf{y}_{\gamma}) 
\end{equation}
$w$ being a tensor product of $h$ or $f$ and $\mathbf{x}_\gamma\in (\mathbf{x}_1,\dots, \mathbf{x}_k)$ and $\mathbf{y}_\gamma\in(\mathbf{x}_1,\dots, \mathbf{x}_k)$.
With the constraint that $\mathbf{x}_\gamma\neq \mathbf{y}_\gamma$ and  for every $\mathbf{x}\in(\mathbf{x}_1,\dots, \mathbf{x}_k) $ there exists at least one $\gamma\in\pi$ such that either $\mathbf{x}=\mathbf{x}_\gamma$ or $\mathbf{x}=\mathbf{y}_\gamma$. Furthermore $k\leq 2|\Gamma|$ and there are at most $3^{2|\Gamma|}|\pi|!$ contributions of the form $L$ in $R_{\pi,\{(I_\gamma,J_\gamma)\}}$. These contributions,  
arise by application of $\prod_{\gamma \in \pi} \left( \sum_{i<j}\Gamma^{ij}_{\partial^\gamma \tilde{C}^{I_\gamma,J_\gamma}(\sigma)} \right)$ to
$\exp_\otimes{ \left( \mathrm{i}\phi(f) +
\frac{2\lambda \dot{V}_{\beta}^h}{\hbar}\right)}$
 and thus by using the Leibniz rule.

We study one of this contribution to $\partial^\Gamma F(\sigma,\Lambda)$, namely we analyze
\[
K = {\rm{ev}}_0 \left( L\cdot_{\tilde{C}(\sigma)} \exp_{\tilde{C}(\sigma)} \left(-\mathrm{i}\phi(f)+\frac{2\lambda \dot{V}_{\beta}^h}{\hbar}\right)  \right)
\]  
We can use Cauchy-Schwarz inequality to bound the resulting term
\begin{align}\label{eq:LexpU}
K
& \leq  
\left( {\rm{ev}}_0 
\left( L^* \cdot_{\tilde{C}(\sigma)} L\right)\right)^{\frac{1}{2}}
\left( {\rm{ev}}_0  \left(\exp_{\tilde{C}(\sigma)} \left(\frac{4\lambda \dot{V}_{\beta}^h}{\hbar}\right)
\right)
\right)^{\frac{1}{2}}.
\end{align}
We have that the second factor does not depend on $\pi$ nor on the particular form of $L$ and it can be controlled using results of Lemma \ref{le:omega(U)}
\begin{equation}\label{eq:expU}
{\rm{ev}}_0  \left(\exp_{\tilde{C}(\sigma)} \left(\frac{4\lambda \dot{V}_{\beta}^h}{\hbar}\right)
\right)
\leq 
c e^{c|\Lambda|}.
\end{equation}
We are thus left with the remaining contribution
\[
K_1= 
{\rm{ev}}_0 
\left( L^* \cdot_{\tilde{C}(\sigma)} L\right).
\]
Notice that we can use conditioning and inverse conditioning to control ${\rm{ev}}_0 
\left( L\right).$
Actually, recalling \eqref{eq:expU} and by the Cauchy-Schwarz inequality as in \eqref{eq:LexpU}, 
we get a corollary of Lemma \ref{le:primo}, namely we have that 
\[
K_1\leq  c \|w'\|_{2p} \prod_{I} (N(I)!)^q c^{N(I)}  e^{\frac{c}{2}|\Lambda|}.
\]
Furthermore, if $w'$ is of the form \eqref{eq:w'} by an application of H\"older inequality we have that 
\[
\|w'\|_{2p} 
\leq
\|w\|_{p_0} \left\| \prod_{\gamma\in\pi} \partial^\gamma \tilde{C}^{I_\gamma,J_\gamma}(\sigma)\right\|_r 
\leq  \beta^{k/{p_0}} e^{\frac{k}{p_0}\|f\|_\infty }    \left\| \prod_{\gamma\in\pi} \partial^\gamma \tilde{C}^{I_\gamma,J_\gamma}(\sigma)\right\|_r 
\] 
where $1/r+1/p_0 = 1/(2p)$ and where we used the fact that $w$ is a product of various $h$ and $f$ hence $\|w\|_\infty \leq (1+\|f\|_\infty)^k\leq \exp{k\|f\|_\infty }$. To estimate $\| \prod_{\gamma\in\pi} \partial^\gamma \tilde{C}^{I_\gamma,J_\gamma}(\sigma)\|_r $ we use 
the decay properties of $\partial^\gamma C(s)$ established in 
Proposition \ref{pr:boundDGammaC} and Proposition \ref{pr:boundDGammaC1point} for distant $I_\gamma$ and $J_\gamma$. 
The detailed estimate needed in the proof can be found in Lemma \ref{le:terzo}.
We furthermore need an estimate of $M=M(\pi,\{(I_\gamma,J_\gamma)\} )$ the number of contributions of the form $L$ in $R_{\pi,\{(I_\gamma,J_\gamma)\}}$. This is given in Lemma \ref{le:secondo}.

With the Lemmata given in section \ref{se:lemmata}, we can now conclude the proof. Actually, from Lemma \ref{le:omega(U)}, Lemma \ref{le:primo} and by Cauchy-Schwarz inequality, we have for a suitable constant 
\begin{align*}
|\partial^\Gamma F(\sigma,\Lambda)|
\leq \sum\limits_{\pi\in\mathcal{P}(\Gamma)}  \sum_{\{(I_\gamma,J_\gamma) \}}
M(\pi,\{(I_\gamma,J_\gamma)\} )  \|w'\|_{2p} 
\prod_I (N(I)!)^{q} e^{c |\Lambda| }.
\end{align*}
Where $N(I)$ is the the number of $I_{\gamma}$ or $J_{\gamma}$ equal to $I$ in $\{(I_\gamma,J_\gamma) \}$.
Recalling the form $w'$, we obtain
\begin{align*}
|\partial^\Gamma F(\sigma,\Lambda)|
\leq \sum\limits_{\pi\in\mathcal{P}(\Gamma)}  \sum_{\{(I_\gamma,J_\gamma) \}}
M(\pi,\{(I_\gamma,J_\gamma)\} )\left\| \prod_{\gamma\in\pi} \partial^\gamma \tilde{C}^{I_\gamma,J_\gamma}(\sigma)\right\|_r \|w\|_{p_0} 
\prod_I (N(I)!)^{q} e^{c |\Lambda| }.
\end{align*}
Hence, using Lemma \ref{le:terzo}, we have
\begin{align*}
|\partial^\Gamma F(\sigma,\Lambda)|
\leq \sum\limits_{\pi\in\mathcal{P}(\Gamma)}  \sum_{\{(I_\gamma,J_\gamma) \}}
 \|w\|_{p_0} 
\prod_I (N(I)!)^{q} e^{c |\Lambda| }
e^{-c_2|\Gamma|} \prod_{\gamma\in\pi} c e^{-\frac{K(m)}{c} d(\gamma,I_\gamma,J_\gamma) }M(\pi,\{(I_\gamma,J_\gamma)\} )
\end{align*}
hence we can decouple the difficult sums
\begin{align*}
|\partial^\Gamma F(\sigma,\Lambda)|
\leq & 
 e^{\frac{2|\Gamma|}{p_0}(\|f\|_\infty+\log \beta) }  e^{c |\Lambda| }
e^{-c_2|\Gamma|} 
\left( \sum\limits_{\pi\in\mathcal{P}(\Gamma)}  
e^{-\frac{m}{2\sqrt{2}} \Delta(\gamma) }
\right)\\
&\cdot
\sup_{\pi\in\mathcal{P}(\Gamma)}
\left( 
\sum_{\{(I_\gamma,J_\gamma) \}}
 \prod_I (N(I)!)^{q}  \prod_{\gamma\in\pi} c e^{-\frac{K(m)}{2c} d(\gamma,I_\gamma,J_\gamma) }M(\pi,\{(I_\gamma,J_\gamma)\} )
\right)
\end{align*}
where we have used the fact that $w$ is a tensor product of $f$ and $h$.
The sum over $\mathcal{P}(\Gamma)$ is controlled by the results of Lemma \ref{le:combinatoric}. We have
\begin{align*}
|\partial^\Gamma F(\sigma,\Lambda)|
\leq & 
 e^{c |\Lambda| }
e^{-c_2|\Gamma|} 
c_1 e^{c_2' |\Gamma|}
\\
&\cdot
\sum_{\{(I_\gamma,J_\gamma) \}}
\sup_{\pi\in\mathcal{P}(\Gamma)}
\left( 
\prod_I (N(I)!)^{q}  \prod_{\gamma\in\pi} c e^{-\frac{K(m)}{2c} d(\gamma,I_\gamma,J_\gamma) }M(\pi,\{(I_\gamma,J_\gamma)\} )
\right)
\end{align*}
with Lemma \ref{le:secondo} we can bound $M(\pi,\{(I_\gamma,J_\gamma)\}$ and
 have that 
\begin{align*}
|\partial^\Gamma F(\sigma,\Lambda)|
\leq & 
 \|w\|_{p_0} e^{c |\Lambda| }
e^{-c_2|\Gamma|} 
c_1 e^{c_2' |\Gamma|}
e^{c_1 |\Gamma|} 
\\
&\cdot
\sum_{\{(I_\gamma,J_\gamma) \}}
\sup_{\pi\in\mathcal{P}(\Gamma)}
\left( 
\prod_I (N(I)!)^{q+1}  \prod_{\gamma\in\pi} c e^{-\frac{K(m)}{2c} d(\gamma,I_\gamma,J_\gamma) }
\right)
\end{align*}
Finally Lemma \ref{le:quarto} can be used to control the sum over $\{(I_\gamma,J_\gamma)\}$. It gives  
 that
\begin{align*}
|\partial^\Gamma F(\sigma,\Lambda)|
\leq & 
 e^{c |\Lambda| }
e^{-c_2|\Gamma|} 
c_1 e^{c_2' |\Gamma|}
e^{c_1 |\Gamma|} 
\end{align*}
$c_2$ is the only constant which depends on the mass and furthermore it diverges in the limit $m\to \infty$.
We  have thus concluded the proof of Lemma \ref{le:cluster-convergence-2}.
\end{proof}

\subsection{Technical lemmata}\label{se:lemmata}

We collect in this subsection some more technical lemma used in particular in the proof of Lemma \ref{le:cluster-convergence-2}.

\begin{lemma}\label{le:primo} (Lemma 2.4 in \cite{FS})
Consider $w' \in L^{p}(\mathbb{R}^{2k})$ with $p>1$ a function which is supported in the cartesian product
$
\prod_{i=1}^k (0,\beta)\times I_i
$
where $I_i\in \mathcal{I}$ are intervals of unit length whose extreme points are in $\mathbb{Z}$. Let $L$ be as in \eqref{eq:L} and denote by $N(I)$ the number of elements $j$ of $\{1,\dots, k\}$ such that $I_j=I$. It holds that 
\[
{\rm{ev}}_0 (L) \leq \|w'\|_p \prod_{I} (N(I)!)^q c^{N(I)}
\] 
where $1/p+1/q =1$ and $c$ is a suitable constant.
\end{lemma}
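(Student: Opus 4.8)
The statement to prove is Lemma~\ref{le:primo}, which bounds $\mathrm{ev}_0(L)$ for $L$ of the form \eqref{eq:L} in terms of $\|w'\|_p$ and a product of factorials $(N(I)!)^q$ attached to each unit interval $I\in\mathcal I$. The strategy is exactly the one used in the Cauchy-determinant estimate (Lemma~\ref{le:Cauchy-Determinant}), adapted to the thermal covariance $\tilde C(\sigma)$: expand $\mathrm{ev}_0(L)$ explicitly, apply H\"older to split off $\|w'\|_p$, and then control the remaining $\|\cdot\|_q$-norm of the product of exponentiated covariances by a Cauchy-determinant argument interval by interval. First I would write out $\mathrm{ev}_0(L)$: since $\mathrm{ev}_0$ evaluates at $\varphi=0$ and acts on products $\cdot_{\tilde C(\sigma)}$, the vertex factors $\mathcal W_{\theta_i}(\mathbf x_i)$ with $\theta_i=\pm a$ contribute $e^{-\sum_{i<j}\theta_i\theta_j\hbar\, \tilde C(\sigma)(\mathbf x_i,\mathbf x_j)}$ (up to the coupling prefactors $(\lambda\theta_i/\hbar)$), while the factors with $\theta_i=0$ (the field insertions $\phi(\mathbf x_i)$) produce, via Wick's theorem for the Gaussian functional $\mathrm{ev}_0$ on $\cdot_{\tilde C(\sigma)}$, a polynomial in the $\tilde C(\sigma)(\mathbf x_i,\mathbf x_j)$; in both cases one gets an integral of $w'$ against a kernel that is bounded by a product over pairs of positive factors of the form $e^{\alpha |\text{something}|}$ or powers of $\tilde C(\sigma)$. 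Since $\tilde C(\sigma)$ is dominated by $C^\beta_m$ (by conditioning, using $\tilde C(\sigma)\le C^\beta_m$ as in \eqref{eq:ineq-setp}) and $C^\beta_m$ has the integrability established in Lemma~\ref{le:lpnormC}, the kernel lies in $L^q_{\mathrm{loc}}$ for the relevant $q$.

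\textbf{Key steps.} (1) Expand $\mathrm{ev}_0(L)$ into an absolutely convergent integral $\int w'(\mathbf x_1,\dots,\mathbf x_k)\,\mathcal K(\mathbf x_1,\dots,\mathbf x_k)\prod d^2\mathbf x_i$ with $\mathcal K\ge 0$ explicit, as just described. (2) Apply H\"older's inequality with exponents $p,q$:
\[
|\mathrm{ev}_0(L)|\le \|w'\|_p\,\|\mathcal K\|_q,
\]
where the $L^q$-norm of $\mathcal K$ is taken over the support $\prod_i(0,\beta)\times I_i$. (3) Bound $\|\mathcal K\|_q$ by a Cauchy-determinant / Fr\"ohlich-type estimate: pass to null (or here simply spatial) coordinates, use the factorization of $|x^2|$ and the Cauchy determinant identity to rewrite the vertex contribution as a product of determinants $\det\mathcal D$, expand each determinant as a sum over permutations, use $(\sum_i|a_i|)^{\alpha q}\le\sum_i|a_i|^{\alpha q}$ when $\alpha q<1$, and integrate; the localization to unit intervals $I_i$ means each such integral is over a compact set of controlled size, giving a factor $c^{N(I)}$ per interval and $(N(I)!)$ from counting permutations that stay within interval $I$ (distant intervals contribute exponentially small factors by Proposition~\ref{pr:decayC} and Lemma~\ref{le:lpnormC}, which can be absorbed into $c$). (4) Collect the per-interval bounds into $\prod_I (N(I)!)^q c^{N(I)}$, noting $\sum_I N(I)=k$ so the constant is uniform.

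\textbf{Main obstacle.} The routine part is the H\"older split; the delicate part is step~(3): carefully organizing the Cauchy-determinant expansion so that the factorials $(N(I)!)$ appear with the correct power $q$ and the combinatorial count of permutations is bounded interval-by-interval rather than globally (a global count would give $k!$, which is too large). This requires using the clustering/exponential decay of $C^\beta_m$ (Proposition~\ref{pr:decayC}) to decouple the intervals: pairs $(\mathbf x_i,\mathbf x_j)$ with $I_i\ne I_j$ contribute a factor that decays exponentially in the separation of $I_i$ and $I_j$, so the sum over permutations effectively reduces to permutations preserving the interval-partition up to a geometrically summable correction, and only within each block of size $N(I)$ does one pay $N(I)!$. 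Since the analogous statement for the Euclidean case is precisely Lemma~2.4 of~\cite{FS}, I would follow that argument closely, the only new input being that all the required integrability and positivity/decay properties of $\tilde C(\sigma)\le C^\beta_m$ are furnished by Lemma~\ref{le:lpnormC}, Proposition~\ref{pr:decayC}, and conditioning (Theorem~\ref{th:control}), all established above.
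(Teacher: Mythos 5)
Your plan diverges from the paper's proof at the crucial point, and the divergence introduces a genuine gap. Your step (2) is a single global H\"older split $|\mathrm{ev}_0(L)|\le\|w'\|_p\,\|\mathcal K\|_q$ with $q$ the \emph{conjugate} exponent of the given $p$. For this to make sense you need $\mathcal K\in L^q$ on $\prod_i(0,\beta)\times I_i$. But the kernel contains, for each pair of opposite charges, a factor $e^{+a^2\hbar\tilde C(\sigma)(\mathbf x_i,\mathbf x_j)}\sim|\mathbf x_i-\mathbf x_j|^{-2\alpha}$ near coincidence ($\alpha=a^2\hbar/4\pi$), so $\mathcal K^q$ is locally integrable only if $\alpha q<1$. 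The lemma assumes only $p>1$, hence $q=p/(p-1)$ can be arbitrarily large and in particular exceed $1/\alpha$, in which case $\|\mathcal K\|_q=\infty$ and your inequality is vacuous. (Note that in Lemma \ref{le:Cauchy-Determinant} the exponents sit the other way around: the singular determinant gets the $L^p$-norm with $p$ \emph{freely chosen} in $[1,\alpha^{-1})$, and the smearing function gets the conjugate norm; you cannot transplant that estimate to a situation where the exponent on the kernel is forced by the regularity of $w'$.) A secondary symptom of the same mismatch: a direct permutation count inside $\|\mathcal K\|_q$ would produce $(N(I)!)^{1/q}$, whereas the lemma asserts $(N(I)!)^{q}$; the exponent $q$ on the factorial does not arise from the mechanism you describe.

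The paper's proof avoids the forced conjugate exponent altogether. It approximates $|w'|$ in $L^p$ by finite sums of nonnegative tensor products $f_1^j\otimes\dots\otimes f_k^j$ with $f_i^j\in L^p((0,\beta)\times I_i)$, so that $\|w'\|_p=\lim_N\sum_j\prod_i\|f_i^j\|_p$, and then estimates each smeared expectation value $|\mathrm{ev}_0(\mathcal W_{\theta_1}(f_1^j)\cdots\mathcal W_{\theta_k}(f_k^j))|$ directly. The decoupling of distinct intervals is done not by exploiting exponential decay of $C^\beta_m$ and resumming permutations (your step (3)), but by an iterated Cauchy--Schwarz (even versus odd intervals) followed by conditioning/inverse conditioning against a covariance with Dirichlet data separating the intervals; a further Cauchy--Schwarz splits the linear insertions $\mathcal W_0=\phi$ (estimated by $C^nn!$ via Wick's theorem) from the vertex operators (estimated by the Cauchy determinant argument, now with an exponent chosen freely in $[1,\alpha^{-1})$ inside each compact interval). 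The $(N(I)!)^q$ and $c^{N(I)}$ then come out of the Stirling bound $\sqrt{(2n)!}\le c^nn!$ accumulated over the Cauchy--Schwarz steps. To repair your argument you would have to either adopt this tensor-product/conditioning route or justify an interpolation that keeps the exponent hitting the singular kernel strictly below $1/\alpha$; as written, the estimate fails for $p$ close to $1$.
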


\begin{proof}
We need to estimate
\[
X=\int_{(0,\beta)\times I_1}  d\mathbf{x}_1  
\dots
\int_{(0,\beta)\times I_k}  d\mathbf{x}_k
\;
|{\rm{ev}}_0 ( \mathcal{W}_{\theta_1}(\mathbf{x}_1)\cdot_{\tilde{C}(\sigma)} \dots \cdot_{\tilde{C}(\sigma)} \mathcal{W}_{\theta_k}(\mathbf{x}_k))||w'(\mathbf{x}_1,\dots, \mathbf{x}_k)|.
\]
We can find a finite series of elements of non overlapping support of the form $F^j=f_1^j\otimes \dots \otimes f_k^j$ with $0\leq f_i^j\in L^p((0,\beta)\times I_i)$, which converges to $|w'|$ in $L^p$. Hence $\|w'\|_p = \lim_{N\to\infty}\sum_{j=1}^N \|F^j\|_p$   and  $\|F^j\|_p=\prod_{i=1}^k \|f^j_i\|_p$ .  
We thus start considering 
\[
X^j=
|{\rm{ev}}_0 ( \mathcal{W}_{\theta_1}(f^j_1)\cdot_{\tilde{C}(\sigma)} \dots \cdot_{\tilde{C}(\sigma)} \mathcal{W}_{\theta_k}(f^j_k))|.
\]
Notice that $\{I_i\}_i$ is made of $l\leq k$ different disjoint open intervals $\{J_i\in\mathcal{I}\}$ which are labeled in such a way that $J_i< J_{i+1}$.
We use a technique similar to the one used to prove Lemma \ref{le:omega(U)} to factorize the expectation values of products of $\mathcal{W}_\theta$ in a product of expectation values of $\mathcal{W}_\theta$ supported in the same interval. 
First of all we use Cauchy-Schwarz inequality to factorize the expectation values of fields supported on even intervals $E=\cup_i J_{2i}$ from those supported on odd intervals $O\cup_i J_{2i+i}$. In this way we are sure that intervals are not touching. 
We now factorize the expectation values of different odd intervals using inverse conditioning and a covariance with boundary conditions in a collection of points, one for every even interval. By conditioning and inverse  conditioning, these single factors can be controlled with a covariance $a\log((x^2+u^2)/\mu^2)$ where $\mu$ is a suitable constant as in the proof of Lemma \ref{le:omega(U)}. We do the same with the even intervals. 
Finally, in each factor, we use again Cauchy-Schwarz inequality to separated the fields $\mathcal{W}_0=\phi$ from $\mathcal{W}_\theta$ with $\theta\in\{a,-a\}$.  The factor which contains only $\mathcal{W}_0$ can be directly estimated, with  $C^n n!$ where $n$ is the number of fields in the factor. The factor which contains only $\mathcal{W}_\theta$ can be treated using results similar to the one obtained in Lemma \ref{le:Cauchy-Determinant}. Collecting all these estimate and observing that by Stirling approximation 
 $\sqrt{(2n)!}
 \leq c^{n} n!$ for a constant $c=2e$ we have that
\[
X^j \leq  \prod_{I} \left( (N(I)!)^q c^{M(I)}  \right) \prod_{i=1}^k {\|f^j_i\|_p}
\] 
where $1/p+1/q =1$ and $c$ is a suitable constant.
Summing over $j$ we get the assertion.
\end{proof}

\begin{lemma}\label{le:secondo} (replacement of Lemma 2.6 of \cite{FS} or Lemma 10.1 of \cite{GJS})
Consider $R_{\pi,\{(I_\gamma,J_\gamma)\}}$ of the form \eqref{eq:decompositioninR}. Let $M=M(\pi,\{(I_\gamma,J_\gamma)\}$ the number of contributions of the form $L$ given in \eqref{eq:L} in $R_{\pi,\{(I_\gamma,J_\gamma)\}}$.
Let $N(I)$ be the number of $I_{\gamma}$ or $J_{\gamma}$ in $\{(I_\gamma,J_\gamma)\}$ such that $I=I_{\gamma}$ or $I=J_\gamma$. It holds that 
\[
M \leq e^{c_1 |\Gamma|} \prod_I N(I)!
\]
for a suitable constant.
\end{lemma}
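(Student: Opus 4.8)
The plan is to trace carefully how the contributions of the form $L$ in \eqref{eq:L} arise from the differentiation process that produces $\partial^\Gamma F(\sigma,\Lambda)$, and to count them combinatorially. Recall from \eqref{eq:decompositioninR} that $R_{\pi,\{(I_\gamma,J_\gamma)\}}$ is obtained by applying $\prod_{\gamma\in\pi}\big(\sum_{i<j}\Gamma^{ij}_{\partial^\gamma\tilde C^{I_\gamma,J_\gamma}(\sigma)}\big)$ to the tensor exponential $\exp_\otimes\big(i\phi(f)+\tfrac{2\lambda\dot V^h_\beta}{\hbar}\big)$ and then using the Leibniz rule to distribute the functional derivatives. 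Each factor $\Gamma^{ij}_{\partial^\gamma\tilde C^{I_\gamma,J_\gamma}(\sigma)}$ contributes two functional derivatives $\delta/\delta\varphi(\mathbf{x}_\gamma)$ and $\delta/\delta\varphi(\mathbf{y}_\gamma)$, one localized in $I_\gamma$ and one in $J_\gamma$. Hitting the exponential, each such derivative either lands on a fresh copy of $i\phi(f)$ or of $\tfrac{2\lambda}{\hbar}\dot V^h_\beta=\tfrac{\lambda}{\hbar}(V^h_a+V^h_{-a})$ pulled down from the exponential, or onto a vertex/linear field already created by an earlier derivative. The resulting $L$ is determined by: (i) the bijective pairing of the $2|\pi|$ endpoints $\{\mathbf{x}_\gamma,\mathbf{y}_\gamma\}_{\gamma\in\pi}$ with the points $\mathbf{x}_1,\dots,\mathbf{x}_k$ appearing in \eqref{eq:L}, subject to the constraint stated after \eqref{eq:w'} (every $\mathbf{x}_i$ is some $\mathbf{x}_\gamma$ or $\mathbf{y}_\gamma$, and $\mathbf{x}_\gamma\ne\mathbf{y}_\gamma$); and (ii) the choice $\theta_i\in\{-a,0,a\}$ at each point, recording whether the corresponding factor pulled down from the exponential was $\phi$, $V^h_a$ or $V^h_{-a}$ (the sign being fixed once we also know whether a derivative of $\phi$ versus a derivative of an exponential vertex was applied).

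The counting then proceeds in two stages. First, the choices $\theta_i$ give at most $3^k\le 3^{2|\Gamma|}\le e^{c_1|\Gamma|}$ possibilities, since $k\le 2|\Gamma|$ (already noted in the text) — this is absorbed into the $e^{c_1|\Gamma|}$ prefactor. Second, and this is the heart of the estimate, I must bound the number of admissible pairings of the endpoints. Group the $2|\pi|$ endpoints by the unit interval they are localized in: interval $I\in\mathcal I$ receives exactly $N(I)$ endpoints by definition of $N(I)$. Two endpoints localized in the same interval $I$ may be identified to the same point $\mathbf{x}_i$ (building up a higher power of a vertex at that point) or kept distinct; in any case the number of ways to organize the $N(I)$ endpoints in interval $I$ into a set of points with multiplicities, together with the matching data connecting them to the endpoints, is bounded by $N(I)!$ (the number of orderings of $N(I)$ labeled objects). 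Taking the product over all intervals $I$ gives the factor $\prod_I N(I)!$. Multiplying the two stages yields $M\le e^{c_1|\Gamma|}\prod_I N(I)!$, which is exactly the claim, after renaming the constant.

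The main obstacle will be stage two: making precise the claim that the pairing/identification data within a single interval is controlled by $N(I)!$ rather than something worse like $(N(I)!)^2$. The subtlety is that each endpoint carries both a ``which $\gamma$ and which side'' label and a ``which point $\mathbf{x}_i$ it is glued to'' choice, so a naive count could double the factorial. The resolution is that the partition $\pi$ and the assignment $\gamma\mapsto(I_\gamma,J_\gamma)$ are \emph{fixed} in the statement — we are counting $L$'s for a given $(\pi,\{(I_\gamma,J_\gamma)\})$ — so the $\gamma$-labels are already determined; only the gluing to the $\mathbf{x}_i$ and the $\theta_i$ choices are free, and the gluing within interval $I$ is a surjection from $N(I)$ labeled endpoints onto an unordered set of points, whose count is $\le N(I)!$ (in fact the $N(I)$-th Bell-type number, bounded by $N(I)!$). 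Once this bookkeeping is set up cleanly — essentially following the proof of Lemma 2.6 of \cite{FS} / Lemma 10.1 of \cite{GJS} but in our one-dimensional spatial setting — the inequality follows; I would therefore present the combinatorial identification carefully and then invoke these references for the routine verification that no extra factorial appears.
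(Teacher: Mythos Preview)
Your proposal is correct and follows essentially the same approach as the paper's proof: group the $2|\pi|\le 2|\Gamma|$ functional derivatives by the interval $I$ in which they are localized, bound the Leibniz/Fa\`a di Bruno count of terms arising from $N(I)$ derivatives hitting $\exp\mathcal{W}$ by $N(I)!$, and absorb the $3^{N(I)}$ (or $3^k$) choices of $\theta\in\{-a,0,a\}$ into $3^{\sum_I N(I)}\le 3^{2|\Gamma|}=e^{c_1|\Gamma|}$. The paper's version is shorter precisely because it does not pause over the $(N(I)!)^2$ worry you raise; since $\pi$ and $\{(I_\gamma,J_\gamma)\}$ are fixed, the only freedom per interval is how the $N(I)$ derivatives distribute over copies of $\mathcal{W}$ pulled down from the exponential, which is the Bell-number count $\le N(I)!$, so your extended discussion in the last paragraph can be safely condensed.
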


\begin{proof}
Applying $N(I)$ functional derivatives to $\exp \mathcal{W}$ we obtain at most
$N(I)!$ terms which contain products of functional derivatives of $\mathcal{W}$. 
Furthermore we recall that $\mathcal{W}$ is a linear combination of three local fields,  $\phi(f)$, $V_a$ and $V_{-a}$, hence, taking into account this decomposition in total there are at most $3^{N(I)} N(I)!$ contributions of the form $L$ given in \eqref{eq:L}.
We observe that in $R_{\pi \{(I_\gamma,J_\gamma)\}}$, 
for a given partition $\pi$ of $\Gamma$ in the product $X=\prod_{\gamma \in \pi} \Gamma_{\partial^\gamma \tilde{C}^{I_\gamma,J_\gamma}}$ there is a the product of $2|\pi|\leq 2|\Gamma|$ functional derivatives. 
Notice that we can combine the functional derivatives present in $X$ as a product over all possible intervals $I\in \mathcal{I}$ and for every interval $I$ we have $N(I)$ functional derivatives. 
Hence, we have that $\sum N(I) \leq 2|\Gamma|$, and we conclude that there exists a suitable constant $c_1$ such that
\[
M \leq \prod_I  3^{N(I)} N(I)! \leq  e^{c_1 |\Gamma|} \prod_I N(I)!.
\]
\end{proof}

\begin{lemma}\label{le:terzo} (replacement of Lemma 2.7 of \cite{FS}) Consider the product $\prod_{\gamma\in\pi} \partial^\gamma \tilde{C}^{I_\gamma,J_\gamma}(\sigma)$ present in $L$ in \eqref{eq:L} through $w'$ in \eqref{eq:w'}. It holds that
\[
\left\| \prod_{\gamma\in\pi} \partial^\gamma \tilde{C}^{I_\gamma,J_\gamma}(\sigma)\right\|_r  \leq 
e^{-c_2|\Gamma|} \prod_{\gamma\in\pi} c e^{-\frac{K(m)}{c} d(\gamma,I_\gamma,J_\gamma) }
\]
where $d(\gamma,I_\gamma,J_\gamma) =\Delta(\gamma) + \sup_{b\in\gamma} dist(b,I_\gamma)+ dist(b,J_\gamma)  $ and
$\Delta(\gamma)$ is  the minimal length of a path in $\mathbb{R}$ which starts and ends in $\gamma$ and which touches all the points of $\gamma$. $K(m)$ is a positive continuous function of $m$ which vanishes in the limit $m\to\infty$.
Furthermore $c_2$ tends to $\infty$ when $m\to \infty$ too. 
\end{lemma}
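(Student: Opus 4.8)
The plan is to reduce the $L^r$ estimate to the pointwise Wiener--path bounds of Proposition \ref{pr:boundDGammaC} and Proposition \ref{pr:boundDGammaC1point}. The first step is to rewrite $\partial^\gamma\tilde C^{I_\gamma,J_\gamma}(\sigma)$ using the explicit convex combination \eqref{eq:covariance-convex-sum}. Since each covariance $C^{\Gamma^c}$ occurring there is independent of $s$, the operator $\partial^\gamma=\prod_{b\in\gamma}\partial/\partial s_b$ acts only on the coefficients $\prod_{i\in\Gamma}s_i\prod_{j\notin\Gamma}(1-s_j)$; carrying out the differentiation and regrouping the terms according to $\Gamma':=\Gamma\setminus\gamma\subset\gamma^c$ one obtains
\[
\partial^\gamma \tilde C(\sigma)=\sum_{\Gamma'\subset\gamma^c}\Big(\prod_{i\in\Gamma'}\sigma_i\prod_{j\in\gamma^c\setminus\Gamma'}(1-\sigma_j)\Big)\,E_{\Gamma'},
\]
where, by the elementary inclusion--exclusion identity over the subsets of $\gamma$, $E_{\Gamma'}$ is the Wiener integral with weight $e^{-\frac{m^2}{2}T}W(T,u)$ over the paths that visit \emph{every} point of $\gamma$ and additionally avoid $\gamma^c\setminus\Gamma'$. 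The coefficients are nonnegative for $\sigma\in[0,1]^{\mathbb Z}$ and sum to one, and dropping the extra Dirichlet constraints only enlarges the admissible path set, so—using positivity of the heat weight recorded after \eqref{eq:path-int-rep}—one obtains the pointwise domination $0\le \partial^\gamma\tilde C(\sigma)(u,x,y)\le D_\gamma C(u,x,y)$, with $D_\gamma C$ as in \eqref{eq:DGammaC}, uniformly in $\sigma$ and in $u\in[0,\beta]$ (the even reflection in $u$ defining $\tilde C$ plays no role).

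Next I would feed in the decay estimates: Proposition \ref{pr:boundDGammaC} when $|\gamma|\ge 2$ and Proposition \ref{pr:boundDGammaC1point} when $|\gamma|=1$, so that after restricting to $I_\gamma\times J_\gamma$,
\[
\big|\partial^\gamma\tilde C^{I_\gamma,J_\gamma}(\sigma)(u,x,y)\big|\le \frac{c}{m}\,e^{-\frac{m}{\sqrt2}\Delta(\gamma)}\,e^{-\frac{m}{\sqrt2}d(x,\gamma)}\,e^{-\frac{m}{\sqrt2}d(y,\gamma)}\,h_{I_\gamma}(x)h_{J_\gamma}(y),
\]
the finitely many pairs $(I_\gamma,J_\gamma)$ lying within distance $\alpha$ of $\gamma$—where the constant in Proposition \ref{pr:boundDGammaC1point} degenerates—being handled separately by dominating $\partial^\gamma\tilde C$ there through the $L^r$ bound for $C^\beta_m$ of Lemma \ref{le:lpnormC}, which costs only a fixed number of $O(m^{-1/r})$ factors absorbed into the constants. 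The right-hand side factorises as a constant times a function of $x$ times a function of $y$; since, by \eqref{eq:w'}, every integration variable $\mathbf x_i$ is an endpoint of at least one $\gamma\in\pi$ and is confined by the characteristic functions to a spatial interval of unit length, raising $\prod_{\gamma\in\pi}|\partial^\gamma\tilde C^{I_\gamma,J_\gamma}(\sigma)|$ to the $r$-th power and integrating makes the integrand a product over $i$ of a bounded exponential in $\mathbf x_i$ raised to the bounded multiplicity with which $\mathbf x_i$ occurs; the integral then factorises over the $\mathbf x_i$, each unit-interval factor being a finite constant. Using $d(x,\gamma)\ge\mathrm{dist}(I_\gamma,\gamma)$ for $x\in I_\gamma$ up to the fixed length-one slack, and $\Delta(\gamma)=\mathrm{diam}(\gamma)$, this yields $\big\|\prod_{\gamma\in\pi}\partial^\gamma\tilde C^{I_\gamma,J_\gamma}(\sigma)\big\|_r\le C^{k/r}\prod_{\gamma\in\pi}\tfrac cm\,e^{-\frac{m}{\sqrt2}(\Delta(\gamma)+\mathrm{dist}(I_\gamma,\gamma)+\mathrm{dist}(J_\gamma,\gamma))}$.

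It remains to split each per-cluster factor into the volume factor and the residual spatial decay. Writing $\tfrac{m}{\sqrt2}=A+B$ with $A,B>0$, and using that the bonds lie in $\mathbb Z$, so $\Delta(\gamma)\ge|\gamma|-1$, the $A$-part together with the prefactor gives $\tfrac cm\,e^{-A\Delta(\gamma)}\le\tfrac cm\,e^{-A(|\gamma|-1)}$ when $|\gamma|\ge2$ and the bare $\tfrac cm$ when $|\gamma|=1$; in either case this is $\le e^{-c_2|\gamma|}$ once $m$ is large (the binding constraint being $\tfrac cm\le e^{-c_2}$, whence $c_2$ may be taken $=\log(m/c)\to\infty$). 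Multiplying over $\gamma\in\pi$ and invoking $\sum_{\gamma\in\pi}|\gamma|=|\Gamma|$ (the stray $C^{|\pi|}\le C^{|\Gamma|}$ being absorbed into $c_2$) produces $e^{-c_2|\Gamma|}$, while the $B$-part retains $\prod_{\gamma\in\pi}e^{-\frac{K(m)}{c}\,d(\gamma,I_\gamma,J_\gamma)}$, since $d(\gamma,I_\gamma,J_\gamma)\le c'(\Delta(\gamma)+\mathrm{dist}(I_\gamma,\gamma)+\mathrm{dist}(J_\gamma,\gamma))$; this residual decay is what will later be summed over the interval configurations. Assembling the three contributions gives the asserted inequality. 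The main obstacle is the first step—establishing cleanly, from the inclusion--exclusion rewriting of $\partial^\gamma$ acting on \eqref{eq:covariance-convex-sum} together with positivity of the heat weight, the pointwise bound $|\partial^\gamma\tilde C(\sigma)|\le D_\gamma C$—after which the argument is bookkeeping of constants in the spirit of \cite{GJS,FS}; a secondary nuisance is the uniform control of the short-distance behaviour of $D_\gamma C$ when $I_\gamma,J_\gamma$ sit on top of $\gamma$.
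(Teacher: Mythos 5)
Your proposal is correct and follows essentially the same route as the paper: both derive the pointwise domination $0\le\partial^\gamma C(\sigma)\le D_\gamma C$ from the Wiener-path representation of the convex combination \eqref{eq:covariance-convex-sum} together with positivity of the weight $e^{-m^2T/2}W(T,u)$ (the paper differentiates $\prod_b(s_b+(1-s_b)J_b^T)$ directly rather than via your inclusion--exclusion regrouping, but the outcome is identical), then invoke Propositions \ref{pr:boundDGammaC} and \ref{pr:boundDGammaC1point}, fall back on Lemma \ref{le:lpnormC} for the short-distance single-bond case, and extract $e^{-c_2|\Gamma|}$ from the $1/m$ prefactors for large $m$. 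Your write-up is in fact more explicit than the paper's in the final $L^r$ bookkeeping over unit intervals and in the splitting of the exponent into the volume and residual-decay parts.
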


\begin{proof}
To prove this Lemma we need some further control on $C(s)$ and on $\partial^\Gamma C(s)$. 
 Let us start recalling the form of $C(s)$ given in \eqref{eq:covariance-convex-sum}
\[
C(s) = \sum_{\Gamma\subset\mathbb{Z}} 
\prod_{i\in \Gamma} s_i \prod_{j\in \Gamma^c} (1-s_j)  C^{\Gamma^c}.
\]
Proceeding as in \eqref{eq:Cb-path} and \eqref{eq:Cab-path} we have that 
\[
C^{\Gamma^c} =  
\int_0^{\infty} \int  e^{- \frac{m^2}{2} T} W(T,u) \prod_{p\in \Gamma^c}
J^T_p(\omega)d\mu_{x,y}^T(\omega)  dT   
\]
and thus substituting in \eqref{eq:covariance-convex-sum}, considering the combinatoric of a product of binomials we get
\begin{align*}
C(s) &= \sum_{\Gamma\subset\mathbb{Z}} 
\prod_{i\in \Gamma} s_i \prod_{j\in \Gamma^c} (1-s_j)  \int_0^{\infty} \int  e^{- \frac{m^2}{2} T} W(T,u) \prod_{p\in \Gamma^c}
J^T_p(\omega)d\mu_{x,y}^T(\omega)  dT 
\\
&= 
\int_0^{\infty} \int  e^{- \frac{m^2}{2} T} W(T,u) \prod_{b\in \mathbb{Z}}
(s_b+(1-s_b)J^T_b(\omega))d\mu_{x,y}^T(\omega)  dT.
\end{align*}
Now we can take the derivatives with respect to any $s_i$ to obtain
\begin{align*}
\frac{\partial}{\partial s_i} C(s) =
\int_0^{\infty} \int  e^{- \frac{m^2}{2} T} W(T,u) 
(1-J^T_i(\omega))
\prod_{b\in \mathbb{Z}, b\neq i}
(s_b+(1-s_b)J^T_b(\omega))d\mu_{x,y}^T(\omega)  dT
\end{align*}
furthermore
\begin{align*}
\frac{\partial^2}{\partial s_i^2} C(s)  = 0 
\end{align*}
and
\begin{align}\label{eq:defPartialGammaC}
\partial^\Gamma  C(s) = 
\int_0^{\infty} \int  e^{- \frac{m^2}{2} T} W(T,u) 
\prod_{i\in\Gamma}(1-J^T_i(\omega))
\prod_{b\in \Gamma^c}
(s_b+(1-s_b)J^T_b(\omega))d\mu_{x,y}^T(\omega)  dT.
\end{align}
The factors $\prod_{i\in\Gamma}(1-J^T_i(\omega))$ forces the Wiener integral to be taken over the paths that touches all the bonds in $\Gamma$.

In view of the positivity of $W(u,T)$ discussed after \eqref{eq:path-int-rep},
we thus have that $\partial^\Ga C(x,y)$ is controlled by $D_\Gamma C$ given \eqref{eq:DGammaC} and, if $\Gamma$ has more than one point
the latter is bounded thanks to Proposition \ref{pr:boundDGammaC}
\begin{equation}\label{eq:estimatepartialGamma}
0\leq \partial^\Ga C(x,y) 
\leq
c \frac{1}{m} e^{-\frac{m}{\sqrt{2}} \Delta(\Gamma)} e^{-\frac{m}{\sqrt{2}} d(x,\Gamma)} e^{-\frac{m}{\sqrt{2}} d(y,\Gamma)}, \qquad |\Ga|>1
\end{equation}
where here $\Delta(\Gamma)$ is the maximum distance of any two points in $\Gamma$. 
If $\Gamma$ is formed by a single element, we may use Proposition \ref{pr:boundDGammaC1point} to get that 
\begin{equation}\label{eq:estimatepartialGamma1pt}
0\leq \partial^\Ga C(x,y) 
\leq
c \frac{1}{m} e^{-\frac{m}{\sqrt{2}} d(x,\Gamma)} e^{-\frac{m}{\sqrt{2}} d(y,\Gamma)} , \qquad \Ga=\{b\}, |x-b|+|b-y|>\alpha
\end{equation}
Furthermore, recalling Lemma \ref{le:lpnormC} we can derive the following bounds   
\[
\|\partial^\Ga C(x,y)\|_p \leq \frac{c}{m^{\frac{1}{p}}}
\]
which is then used when $\Ga$ contains a single element $b$ and when $|x-b|+|b-y|<\alpha$ to get that also in this case the limit $m\to\infty$ vanishes. 
The proof is a direct consequence of these estimates, with the observation that there exists an $m_0$ such that for every $m>m_0$ $1/m^p\leq e^{-cm_0}$.
\end{proof}

\begin{lemma}\label{le:quarto} (replacement of Lemma 10.2 of \cite{GJS}) 
Let $\pi \in \mathcal{P}(\Gamma)$ and $r>0$, there exists a constant $c$ such that
\[
\sum_{\{(I_\gamma,J_\gamma)\}} \prod_{\gamma\in \pi} e^{- \frac{K(m)}{c} d(\gamma,I_\gamma,J_\gamma) }  \prod_{I} (N(I)!)^r \leq e^{c |\Gamma|}.
\]
Where $N(I)$ be the number of $I_{\gamma}$ or $J_{\gamma}$ in $\{(I_\gamma,J_\gamma)\}$ such that $I=I_{\gamma}$ or $I=J_\gamma$. and where $K(m)$ is the positive continuous function of $m$ which vanishes in the limit $m\to\infty$ obtained in Lemma \ref{le:terzo}.
\end{lemma}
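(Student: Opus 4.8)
The plan is to reduce the sum over the pairs of intervals $\{(I_\gamma,J_\gamma)\}$ to a product of independent sums, one for each bond $\gamma\in\pi$, after disentangling the combinatorial factor $\prod_I (N(I)!)^r$. First I would observe that $\sum_\gamma d(\gamma,I_\gamma,J_\gamma)\geq \sum_\gamma\bigl(\operatorname{dist}(b_\gamma,I_\gamma)+\operatorname{dist}(b_\gamma,J_\gamma)\bigr)$ for a choice of point $b_\gamma\in\gamma$, and moreover $\sum_{\gamma\in\pi}\Delta(\gamma)$ is bounded below by a constant times $|\Gamma|$ whenever the $\gamma$ have at least two points (and is $\geq 0$ otherwise), so that a fraction of the exponential weight, say $e^{-\frac{K(m)}{2c}\sum_\gamma \Delta(\gamma)}$, can be peeled off and absorbed into an overall factor $e^{c|\Gamma|}$ after enlarging the mass $m_0(c)$ if necessary; this leaves a remaining factor $\prod_\gamma e^{-\frac{K(m)}{2c}(\operatorname{dist}(b_\gamma,I_\gamma)+\operatorname{dist}(b_\gamma,J_\gamma))}$ with which to pay for the sum over interval positions.

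Next I would deal with the factor $\prod_I (N(I)!)^r$. The standard trick here (as in Lemma 10.2 of \cite{GJS}) is to use the elementary inequality $N(I)!\leq e^{N(I)}\prod_{\text{occurrences}} (\text{a bounded factor})^{?}$ — more precisely, distribute the combinatorial weight over the individual choices: since $\sum_I N(I)=2|\pi|\leq 2|\Gamma|$, one bounds $\prod_I (N(I)!)^r\leq \prod_I e^{r N(I)\log N(I)}$ and then exploits that in the sum over $\{(I_\gamma,J_\gamma)\}$ the number of tuples with prescribed multiplicities $\{N(I)\}$ is controlled, so that a uniform exponential-in-$|\Gamma|$ factor $e^{c'|\Gamma|}$ absorbs the whole combinatorial contribution, provided each single-interval sum $\sum_{I}e^{-\frac{K(m)}{2c}\operatorname{dist}(b,I)}$ converges — which it does, being a geometric-type series in $\mathbb Z$, with value bounded by $c/(1-e^{-K(m)/(2c)})$, hence by a constant uniformly in $m\geq m_0$. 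Concretely, I would fix an ordering of the bonds in $\pi$, sum over $I_\gamma$ and $J_\gamma$ one at a time, picking up a factor at most $\bigl(\sum_{I}e^{-\frac{K(m)}{2c}\operatorname{dist}(b_\gamma,I)}\bigr)^2\leq c_0$ per bond, and then collect $c_0^{|\pi|}\leq c_0^{|\Gamma|}$.

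The one subtlety is that $N(I)$ depends on the tuple $\{(I_\gamma,J_\gamma)\}$ being summed, so the combinatorial factor cannot simply be pulled outside the sum. The cleanest way around this is to bound $(N(I)!)^r\leq (2|\Gamma|)^{rN(I)}$ crudely — but that is too lossy — so instead I would use the sharper device: write $(N!)^r\leq \prod_{k=1}^N k^r$ and assign the $k$-th such factor to the $k$-th bond-endpoint landing in $I$; since each endpoint contributes a factor at most $(2|\Gamma|)^r$, and there are $2|\pi|$ endpoints, one gets $\prod_I (N(I)!)^r\leq (2|\Gamma|)^{2r|\Gamma|}$, which is not exponential — so one really must keep part of the $e^{-\frac{K(m)}{c}\Delta(\gamma)}$ weight to beat it. This is the main obstacle: balancing the super-exponential $\prod_I(N(I)!)^r$ against the available Gaussian/exponential decay. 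The resolution is that $\Delta(\gamma)$ grows at least linearly in the number of distinct intervals touched by $\gamma$, so $\sum_\gamma\Delta(\gamma)\gtrsim$ (number of distinct occupied intervals) $\gtrsim \sum_I \log N(I)$ up to constants when intervals are reused, and for $m$ large the prefactor $K(m)$ is still enough — wait, $K(m)\to 0$, so this is delicate; one instead argues that the decay constant in Lemma \ref{le:terzo} is actually $m/(2\sqrt2)$ in the relevant regime (the $\Delta(\gamma)$-decay, which does not degrade), and only the residual distances carry the vanishing $K(m)$. Taking $m$ large enough that $\tfrac{m}{2\sqrt2}$ dominates $r\log(2|\Gamma|)/|\Gamma|$ uniformly, the combinatorial factor is absorbed and one concludes $\sum_{\{(I_\gamma,J_\gamma)\}}\prod_\gamma e^{-\frac{K(m)}{c}d(\gamma,I_\gamma,J_\gamma)}\prod_I(N(I)!)^r\leq e^{c|\Gamma|}$, as claimed.
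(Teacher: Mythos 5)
Your proposal correctly identifies the crux of the lemma --- that $\prod_I (N(I)!)^r$ is super\-/exponential in the worst case and must be paid for by the distance decay, and that $N(I)$ depends on the configuration being summed --- and your observation that each single\-/interval sum $\sum_I e^{-\kappa\,\mathrm{dist}(b_\gamma,I)}$ is a convergent geometric\-/type series is indeed the final step of the argument. But the proposal never produces the estimate that actually controls the factorial, and the substitutes you offer do not work. The key point, which is the content of the paper's proof, is a packing argument: since the $\gamma\in\pi$ are pairwise \emph{disjoint} subsets of $\mathbb{Z}$, at most $O(s)$ of the bonds $\gamma$ with an endpoint assigned to a fixed interval $I$ can have $d(\gamma,I_\gamma,J_\gamma)< s$. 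Hence $N(I)\leq C\sup_{\gamma:\,I\in\{I_\gamma,J_\gamma\}}d(\gamma,I_\gamma,J_\gamma)$ and, summing the ordered distances, $\sum_{\gamma:\,I\in\{I_\gamma,J_\gamma\}}d(\gamma,I_\gamma,J_\gamma)\geq C\,N(I)^2$. This \emph{quadratic} lower bound is what beats the factorial, via $(N(I)!)^r\leq N(I)^{rN(I)}\leq e^{rN(I)^2}$, giving $\prod_I(N(I)!)^r\leq e^{c|\Gamma|}\prod_{\gamma\in\pi}e^{c\,d(\gamma,I_\gamma,J_\gamma)}$ with $c$ independent of $m$ and of $\Gamma$; half the decay weight then absorbs the factorial and the other half pays for the sum over interval positions.

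Your proposed replacements fall short in two ways. First, the inequality $\sum_\gamma\Delta(\gamma)\gtrsim\sum_I\log N(I)$ is both unjustified (each $\gamma$ touches only the two intervals $I_\gamma,J_\gamma$, so $\Delta(\gamma)$ says nothing directly about how many bonds crowd into one interval) and far too weak: you need to dominate $\sum_I N(I)\log N(I)$, not $\sum_I\log N(I)$. Second, the fallback of ``taking $m$ large enough that $\frac{m}{2\sqrt 2}$ dominates $r\log(2|\Gamma|)/|\Gamma|$'' is not available: the constant in $e^{c|\Gamma|}$ must be uniform in $\Gamma$ because Theorem \ref{thm:adiabatic-limit} sums over all finite $\Gamma$, so $m$ cannot be chosen depending on $|\Gamma|$, and the crude bound $(2|\Gamma|)^{2r|\Gamma|}$ cannot be salvaged this way. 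Without the $N(I)^2$ estimate the proof does not close.
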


\begin{proof}
The proof follows in a similar way as the proof of Lemma 10.2 in \cite{GJS} with some differences due to the one dimensional nature of the clustering we are considering here.

We shall prove that there exists a suitable constant, independent on $m$ such that
\[
\prod_{I} (N(I)!)^r  \leq  e^{c|\Gamma|}\prod_{\gamma\in\pi } e^{ c d(\gamma,I_\gamma,J_\gamma) } 
\]
with this estimate, if $m$ is large enough, the sum over various $I$ can be taken thanks to the weights $e^{- \frac{K(m)}{c} d(\gamma,I_\gamma,J_\gamma) }$.
We observe that for a fixed $I$ we can control $N(I)$ with the distance of $\gamma$ from $I$ namely 
\[
N(I) \leq C s , \qquad s = \sup_{\{\gamma| I\in \{I_\gamma,J_\gamma\}\} } d(\gamma,I_\gamma,J_\gamma).
\]
Furthermore,   since different $\gamma$s in $\pi$ are non overlapping,  the number of $\gamma$ in $\{\gamma\in\pi| I\in \{I_\gamma,J_\gamma\} ,  d(\gamma,I_\gamma,J_\gamma) < r \}$, is controlled by $r$, hence
\[
\sum_{\{\gamma| I\in \{I_\gamma,J_\gamma\}\}}  d(\gamma,I_\gamma,J_\gamma)        \geq  C N(I)^2
\]
hence 
\begin{align*}
\prod_I (N(I)!)^r&\leq \prod_I N(I)^{r N(I)} \leq e^{r \sum_I  N(I)^2} 
\\ &\leq e^{ c \sum_I \sum_{\{\gamma| I\in \{I_\gamma,J_\gamma\}\}} d(\gamma,I_\gamma,J_\gamma)} e^{c|\Gamma|}
\\ &\leq e^{c|\Gamma|}\prod_{\gamma\in\pi } e^{ c d(\gamma,I_\gamma,J_\gamma)}  .
\end{align*}
\end{proof}

We finally prove a combinatoric lemma which is a simplified version of Proposition 8.2 in \cite{GJS}. 
\begin{lemma}\label{le:combinatoric}
Let $\Gamma\subset \mathbb{Z}$, if $m$ is large enough, there exist two positive constants $c_1$ and $c_2$ such that
\[
\sum\limits_{\pi\in\mathcal{P}(\Gamma)}  
 \prod_{\gamma \in \pi}  e^{-\frac{m}{2\sqrt{2}} \Delta(\gamma)} \leq c_1 e^{c_2 |\Gamma|}
\]
where the sum is taken over all possible partitions $\pi$ of $\Gamma$. Furthermore $\Delta(\gamma)=\sup_{x,y\in\gamma}|x-y|$.  
\end{lemma}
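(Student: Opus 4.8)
The plan is to read the sum as the partition function of a one‑dimensional polymer gas and to control it by a standard Kirkwood--Salsburg type induction, the way Proposition~8.2 of \cite{GJS} is proved, but with the drastic simplification coming from the fact that our ``bonds'' sit on a line. Write $\epsilon:=\tfrac{m}{2\sqrt2}$ and, for a finite $\gamma\subset\mathbb{Z}$, $w(\gamma):=e^{-\epsilon\Delta(\gamma)}$, so that the quantity to bound is $Z(\Gamma):=\sum_{\pi\in\mathcal P(\Gamma)}\prod_{\gamma\in\pi}w(\gamma)$. The decisive input is the \emph{single‑point estimate}
\[
K:=\sup_{b\in\mathbb{Z}}\ \sum_{\substack{\gamma\subset\mathbb{Z}\ \text{finite}\\ b\in\gamma}} w(\gamma)\ <\ \infty ,
\]
valid as soon as $m$ is large enough that $\epsilon>\log 2$. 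First I would prove this by a crude count using one‑dimensionality: a finite set $\gamma\ni b$ with $\Delta(\gamma)=D$ satisfies $\gamma\subseteq[\min\gamma,\min\gamma+D]$ with $\min\gamma\in\{b-D,\dots,b\}$, so once $\min\gamma$ is fixed ($D+1$ choices) the remaining $D-1$ interior integers are chosen freely; hence there are at most $(D+1)2^{D-1}$ such $\gamma$ for each $D\ge1$, and exactly one, $\gamma=\{b\}$, for $D=0$. Therefore $K\le 1+\sum_{D\ge1}(D+1)2^{D-1}e^{-\epsilon D}$, a convergent series precisely when $2e^{-\epsilon}<1$; moreover $K=K(m)$ is finite and $K(m)\to1$ as $m\to\infty$.

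Next I would set $C:=\max(K,1)$ and prove by induction on $|\Gamma|$ that $Z(\Gamma)\le C^{|\Gamma|}$. The base case $\Gamma=\emptyset$ is $Z(\emptyset)=1=C^{0}$. For $|\Gamma|\ge1$ let $b:=\min\Gamma$; conditioning on the block $\gamma\in\pi$ that contains $b$ gives the recursion $Z(\Gamma)=\sum_{\gamma\ni b,\ \gamma\subseteq\Gamma}w(\gamma)\,Z(\Gamma\setminus\gamma)$. Since $|\gamma|\ge1$, the set $\Gamma\setminus\gamma$ has strictly smaller cardinality, so the inductive hypothesis and the crude over‑count $\{\gamma\subseteq\Gamma\}\subseteq\{\gamma\subseteq\mathbb{Z}\ \text{finite}\}$ yield
\[
Z(\Gamma)\ \le\ \sum_{\gamma\ni b}w(\gamma)\,C^{|\Gamma|-|\gamma|}\ =\ C^{|\Gamma|}\sum_{\gamma\ni b}w(\gamma)\,C^{-|\gamma|}\ \le\ C^{|\Gamma|}\,C^{-1}\!\!\sum_{\gamma\ni b}w(\gamma)\ \le\ C^{|\Gamma|}\,\frac{K}{C}\ \le\ C^{|\Gamma|},
\]
where the middle inequality uses $C\ge1$ and $|\gamma|\ge1$, and the last uses $C\ge K$. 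This establishes the lemma with $c_1=1$ and $c_2=\log\max(K(m),1)$.

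I do not expect a real obstacle here. The only point requiring care is getting the exponential‑in‑$D$ count of subsets of a fixed diameter correct — this is exactly where working on $\mathbb{R}$ rather than on $\mathbb{Z}^2$ as in \cite{GJS} makes the argument short — together with checking that the summability threshold $\epsilon>\log2$ is indeed covered by the hypothesis ``$m$ large enough''. It is also worth recording, for the use made of this lemma in the proof of Lemma~\ref{le:cluster-convergence-2}, that $c_2=\log\max(K(m),1)$ stays bounded, and in fact tends to $0$, as $m\to\infty$, so it does not interfere with the large‑mass estimates assembled there.
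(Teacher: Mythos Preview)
Your proof is correct and takes a genuinely different route from the paper's. The paper groups partitions $\pi$ by the total diameter $r=\sum_{\gamma\in\pi}\Delta(\gamma)$: it first bounds the number $n(A)$ of partitions with a prescribed ordered diameter sequence $A=(a_1,a_2,\dots)$, $\sum a_i=r$, by $e^{(1+\log 2)r}$ (counting, for each block in turn, the $\le 2^{a_i-1}$ choices of its remaining points and the $\le a_i+1$ choices for the smallest element of the next block), then bounds the number of admissible sequences $A$ with at most $|\Gamma|$ entries by $2^{|\Gamma|}2^{r-1}$, and finally sums the geometric series in $r$. Your argument instead reads $Z(\Gamma)$ as a polymer partition function and runs the standard Kirkwood--Salsburg induction on $|\Gamma|$ after establishing the single-anchor bound $K=\sup_b\sum_{\gamma\ni b}w(\gamma)<\infty$. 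Your approach is shorter, yields explicit constants $c_1=1$, $c_2=\log K(m)$ (note $K(m)>1$ always, so the $\max$ is superfluous), and requires only $\epsilon>\log 2$, whereas the paper's summation needs the larger threshold $\epsilon>1+2\log 2$; on the other hand the paper's counting argument makes the dependence on the total diameter $r$ explicit, which is closer in spirit to the two-dimensional combinatorics of \cite{GJS} that it is adapting. Both methods give what Lemma~\ref{le:cluster-convergence-2} needs, and your observation that $c_2(m)\to0$ is a useful addendum.
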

\begin{proof}
Fix $r\in \mathbb{N}$ and let $A=\{a_i\in\mathbb{N}\cup 0\}_i$ be a finite sequence of natural number such that $\sum_i a_i = r$. 
The number of elements of $A$ is bounded by $|\Gamma|$. 
Let us estimate $n(A)$, which is the number of partitions $\pi \in \mathcal{P}(\Gamma)$ such that $\Delta(\gamma_i) = a_i$ where $\{\gamma_i\}_i=\pi$ and $\gamma_i\neq \gamma_j$.
Furthermore, $\gamma_i$ are labelled in such a way that the smallest element in $\gamma_i$ is smaller than the smallest element in $\gamma_{i+1}$.
Let $b_1$ be the smallest element in $\gamma_1$, notice that it coincides with the smallest element of $\Gamma$. 
We observe that since $\Delta(\gamma_1) = a_1$ every $b'\in \Gamma$ such that $b'>b_1+a_1$ cannot be in $\gamma_1$, furthermore every $b''\in (b_1,b_1+a_1)\cap \Gamma$ may or may not be in $\gamma_1$. Hence there are at most $2^{a_1-1}$ possible choices of $\gamma_1$.
Let $b_2$ be the smallest element of $\gamma_2$, notice that $b_2$ must be either in $I= (b_1,b_1+a_1)\cap \Gamma$ or it must be the smallest element of the complement of $\overline{I}= (b_1,b_1+a_1]$, namely in $\Gamma\setminus \overline{I}$.  Hence there are at most $a_1+1$ possibilities to chose $b_2$. If we continue in this way we get that the number of possible partitions $n(A)$ is bounded by
\begin{equation}\label{eq:n(A)}
n(A) \leq 2^{\sum_i a_i-1} \prod_i (1+a_i) \leq  e^{(1+\log(2))\sum_i a_i }= e^{(1+\log(2)) r}.
\end{equation}
Let $n_{|\Gamma|, r}$ be the number of possible sequences  $\{a_i\in\mathbb{N}\}_i$ such that $\sum_{i}a_i=r$ with $a_i\geq 0$ and with at most $|\Gamma|$ elements. 
We have that 
\begin{equation}\label{eq:n_r}
n_{|\Gamma|, r} \leq 2^{|\Gamma|}2^{r-1}
\end{equation}
Actually, we observe preliminarily that the number of possible sequences $\{c_i \in \mathbb{N}\setminus 0\}_i$ such that  $\sum_i c_i = r$ is bounded by $2^{r-1}$, actually, the first unit in $r$ is for sure in $c_1$, suppose that the $j-$th unit in $r$ is in $c_i$ for some $i$ then the $j+1-$th unit in $r$ may be either in $c_i$ or in $c_{i+1}$, there are only $2$ possibilities, hence the number of possible $\{c_i\}_i$ is $2^{r-1}$.  
Now, let $A=\{a_i\in \mathbb{N}\}_i$ be a finite sequence of positive natural numbers. We denote by $c(A)$ the subsequence obtained extracting the non vanishing elements of $A$. 
Fix now $C=\{c_i \in \mathbb{N}\setminus 0\}_i$ be a subsequence of strictly positive natural number such that $\sum_i {c}_i = r$. 
We observe that the number of possible $A$s with at most $|\Gamma|$ elements is bounded by $2^{|\Gamma|}$ because the $i-$th element of $A$ may be either $0$ or in $(0,\infty)$ and there are at most $|\Gamma|$ elements in $A$. 
Combining these observations we get the bound in \eqref{eq:n_r}.

With, \eqref{eq:n(A)} and \eqref{eq:n_r} we can now estimate
\[
\sum\limits_{\pi\in\mathcal{P}(\Gamma)}  
 \prod_{\gamma \in \pi}  e^{-\frac{m}{2\sqrt{2}} \Delta(\gamma)} \leq 
\sum_{r=0}^\infty 2^{|\Gamma|}2^{r-1}  e^{(1+\log(2)) r}  e^{-\frac{m}{2\sqrt{2}} r} 
\]
notice that the sum at the right hand side of the previous inequality can be taken if $m$ is large enough. In this way we get the assertion.
\end{proof}

\begin{lemma}\label{le:omega(U)}
Let $h$ be the characteristic function of the connected closed set $\Lambda = [-l,k]\subset \mathbb{R}$ with $l,k\in\mathbb{N}$. 
There exists a suitable positive constant $c$ such that
\begin{equation}\label{eq:stability}
|\omega^{\beta,h}(U(i\beta))| = |Z_{C^\beta_m}(\Lambda)|  \leq e^{c|\Lambda|}
\end{equation}
where $|\Lambda|$ denotes the volume of the set $\Lambda$ which here coincides with $k+l$.
\end{lemma}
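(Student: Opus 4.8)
The plan is to start from the Fredenhagen--Lindner expression \eqref{eq:Fexp-Zexp} with $A=1$ and $h=h_\Lambda$, and to symmetrise the simplex $\beta\mathcal{S}_n$ into the cube $[0,\beta]^n$ (which produces the missing $1/n!$), so that
\[
Z_{C^\beta_m}(\Lambda)=\sum_n\frac1{n!}\frac{\lambda^n}{\hbar^n}\,{\rm{ev}}_0\big((\dot V^{h_\Lambda}_\beta)^{\cdot_{\hbar\tilde C} n}\big)
={\rm{ev}}_0\!\left(\exp_{\hbar\tilde C}\!\Big(\tfrac{\lambda}{\hbar}\dot V^{h_\Lambda}_\beta\Big)\right),\qquad
\dot V^{h_\Lambda}_\beta:=\int_0^\beta\!du\int_{\mathbb R}h_\Lambda(x)\cos(a\varphi(u,x))\,dx,
\]
where $\tilde C$ is the thermal covariance $C^\beta_m$ of \eqref{eq:defC} regarded on the strip $(0,\beta)\times\mathbb R$ and ${\rm{ev}}_0$ is the evaluation in the commutative $*$-algebra built with $\tilde C$, the thermal analogue of $(\mathcal F^V,\cdot_{\hbar H_m})$ of Section~\ref{se:convergence-S(V)} discussed in subsection~\ref{se:thermal-state-free-theory}. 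Since $C^\beta_m$ is real and positive on $(0,\beta)\times\mathbb R$ (Lemma~\ref{le:positiveC}) and $\lambda$ is real, this is a well defined nonnegative number once the series converges, and convergence (together with all the quantitative estimates below) will come, exactly as in the proofs of Theorems~\ref{thm:convergence} and~\ref{thm:convergence-V}, after replacing $C^\beta_m$ by a logarithmic covariance via conditioning; the only genuinely new point is the \emph{linear} dependence of the exponent on the volume $|\Lambda|=k+l$.

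First I would reduce to a single unit cell. Writing $h_\Lambda=\sum_{Q\in\mathcal I,\ Q\subset\Lambda}g_Q$ (a sum of $|\Lambda|$ characteristic functions of unit intervals) and $\dot V^{h_\Lambda}_\beta=\sum_Q\dot V^{g_Q}_\beta$, I want to switch off the interaction between different cells by replacing $\tilde C$ by the covariance $C(0)=C^{\mathbb Z}$ of \eqref{eq:covariance-convex-sum} with vanishing Dirichlet data on every bond $i\in\mathbb Z$. By the path integral representations \eqref{eq:path-int-rep}, \eqref{eq:Cab-path} one has $C(0)\le C^\beta_m$ as bi-distributions (the Wiener integral defining $C(0)$ runs over fewer paths and the weight $e^{-\frac{m^2}{2}T}W(T,u)$ is nonnegative), while on the interior of each cell the difference $C^\beta_m-C(0)$ is a bounded image-charge correction with the short-distance logarithm already removed by the normal ordering, so, exactly as in Section~\ref{se:comparison} and Proposition~\ref{prop:comparison}, it admits a decomposition $P-N$ with $P,N\ge0$ and $\sup_x N(x,x)\le K'$. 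Inverse conditioning (Proposition~\ref{pr:inverse-conditioning2}, as packaged in Theorem~\ref{th:control}) then gives
\[
Z_{C^\beta_m}(\Lambda)\ \le\ 2\,{\rm{ev}}_0\!\left(\exp_{\hbar C(0)}\!\Big(2e^{a^2K'/2}\tfrac{\lambda}{\hbar}\dot V^{h_\Lambda}_\beta\Big)\right),
\]
and since $C(0)$ is block-diagonal with respect to the cells $Q$, the right-hand side factorises over $\mathbb Z$ and equals, by translation invariance, $\big(2^{1/|\Lambda|}\,z_0\big)^{|\Lambda|}$ with $z_0:={\rm{ev}}_0\!\big(\exp_{\hbar C(0)}(2e^{a^2K'/2}\tfrac\lambda\hbar\dot V^{g_{Q_0}}_\beta)\big)$ and $Q_0=[0,1]$.

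It then remains to check that the single-cell factor $z_0$ is a finite positive constant; granting this, $|Z_{C^\beta_m}(\Lambda)|\le(2z_0)^{|\Lambda|}=e^{c|\Lambda|}$ with $c=\log(2z_0)$, which is the assertion. Finiteness of $z_0$ is an instance of the convergence argument of Theorem~\ref{thm:convergence} on the bounded region $[0,\beta]\times Q_0$: one further conditioning step compares $C(0)$ (which on $Q_0$ differs from $C^\beta_m$ only by smooth image-charge terms, hence again only through a $P-N$ decomposition with bounded $N$) with a massless logarithmic covariance $\mathcal H^\mu$ with $\mu$ large enough that $[0,\beta]\times Q_0$ sits inside the corresponding positivity diamond, after which the Cauchy--Schwarz inequality \eqref{eq:C*inequality-1} together with the Cauchy-determinant estimate of Lemma~\ref{le:Cauchy-Determinant} (in the $L^q$ version noted right after that lemma, applicable since $\alpha=a^2\hbar/4\pi<1$) bounds the $n$-th term of $z_0$ by $C^n/(n!)^{1-1/p}$ for any $p\in[1,\alpha^{-1})$, so the series defining $z_0$ converges; positivity of $z_0$ is clear since it is ${\rm{ev}}_0$ of a time-ordered exponential of a real functional built with a positive covariance.

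The main obstacle is precisely the passage from $C^\beta_m$ to the block-diagonal $C(0)$ with uniform control of the ultraviolet behaviour: one must verify that $C^\beta_m-C(0)$ genuinely has the $P-N$ structure required by Theorem~\ref{th:control} uniformly in the cell, i.e.\ that its only singularity is the one already subtracted by the normal ordering and that the negative part stays controlled up to the bonds. This is where the exponential decay of $C^\beta_m$ (Proposition~\ref{pr:decayC}) and the short-distance bounds underlying Lemma~\ref{le:lpnormC} enter, and it is the thermal Lorentzian counterpart of the Euclidean stability estimate of Glimm--Jaffe--Spencer (Proposition~5.2 of \cite{GJS}) and Fr\"ohlich--Seiler \cite{FS}; everything else is bookkeeping already performed in Sections~\ref{se:convergence-S(V)} and~\ref{se:covariance-bc}.
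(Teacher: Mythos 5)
Your overall strategy is the same as the paper's (rewrite $Z_{C^\beta_m}(\Lambda)$ as ${\rm{ev}}_0$ of an exponential in a commutative algebra, decouple the unit cells by Dirichlet boundary conditions, apply inverse conditioning, factorise by translation invariance, and control the single--cell factor by a further comparison with the massless logarithmic covariance plus the Cauchy determinant). However, there is a genuine gap exactly at the step you yourself flag as ``the main obstacle'': you place the Dirichlet bonds at every integer, i.e.\ at the endpoints of the cells $Q=[l,l+1]$ that support the interaction. The image--charge representation \eqref{eq:Cb}--\eqref{eq:CX} shows that the difference $C^\beta_m-C(0)$ contains the term $C^\beta_m(u,x_1+x_2-2b)$ for each bond $b$, whose coincidence limit $C^\beta_m(0,2(x-b))$ diverges logarithmically as $x\to b$. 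This singularity is \emph{not} the one subtracted by normal ordering (which removes only the direct singularity at $x_1=x_2$ away from the boundary), so $\sup_x N(x,x)=\infty$ on the closed support of $\dot V^{g_Q}_\beta$, and Theorem~\ref{th:control} / Proposition~\ref{pr:inverse-conditioning2} cannot be applied with a finite constant $K'$. The Euclidean analogue in \cite{GJS,FS} has the same feature, which is why a Dirichlet decoupling whose bonds touch the interaction region is never used there directly.

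The paper's proof inserts precisely the missing step: before decoupling, it splits $\dot V_\beta^{h}=\dot V_\beta^{h_1}+\dot V_\beta^{h_2}$ into two translates of a ``checkerboard'' configuration of half--unit cells ($D_e$ and $D_e+(0,1)$) and uses Cauchy--Schwarz/H\"older together with translation invariance to reduce to a single family, paying only a factor $2^N$. The point of this reduction is that each family of cells is separated by gaps, and the Dirichlet bonds $\mathcal B=\{2\mathbb Z-\tfrac12\}$ can be placed inside those gaps, away from the support of $\dot V_\beta^{h_1}$; there the diagonal of $\tilde C^\beta_m-\overline C^\beta_m$ is bounded and inverse conditioning applies with a finite $K$. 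After that the factorisation over the $|\Lambda|/2$ components and the finiteness of the single factor (via the chain $\overline C^\beta_m\le C^\beta_m$, then $C^\beta_m$ vs.\ $C^\infty_m$, then $C^\infty_m$ vs.\ the logarithmic covariance, as in Theorem~\ref{thm:convergence}) go through as you describe. To repair your argument you would need to either adopt this even/odd splitting or otherwise move the bonds strictly off the closed supports of the interaction cells.
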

\begin{proof} 
Recalling \eqref{eq:Fexp-Zexp} we have that
\[
 \begin{aligned}
\omega^\beta(U(i\beta))
&=
\sum_n  \frac{1}{2^n} \frac{\lambda^n}{\hbar^n}\int_{\beta  \mathcal{S}_n}  dU  
\int dX h^{\otimes n}(X)
\sum_{a_i = \pm a}
e^{-\sum_{1\leq i<j\leq n} a_ia_j\hbar C^\beta_m(u_i-u_j,x_i-x_j)} 
\\
&=
\sum_n\frac{1}{n!} \frac{\lambda^n}{\hbar^n} \int_{\frac{\beta}{2}  \mathcal{S}_n}  dU  
\int dX h^{\otimes n}(X)
\sum_{a_i = \pm a}
e^{-\sum_{1\leq i<j\leq n} a_ia_j\hbar {C}^\beta_m(u_i-u_j,x_i-x_j)} 
\end{aligned}
\]
where we used the symmetry property of ${C}_m^\beta$ (KMS condition) to reduce the domain over which the $dU$ integration is taken.
Furthermore, introducing the covariance $\tilde{C}_m^\beta(u,x)$ which is $\tilde{C}_m^\beta(u,x) = {C}_m^\beta(u,x)$ if $u>0$ and  $\tilde{C}_m^\beta(u,x) = {C}_m^\beta(-u,x)$ for negative $u$, we have that
\[
 \begin{aligned}
\omega^\beta(U(i\beta))
&=
\sum_n  \frac{1}{n!} \frac{\lambda^n}{\hbar^n} \int_{  (0,\frac{\beta}{2})^n}  dU  
\int dX h^{\otimes n}(X)
\sum_{a_i = \pm a}
e^{-\sum_{1\leq i<j\leq n} a_ia_j\hbar \tilde{C}^\beta_m(u_i-u_j,x_i-x_j)} 
\end{aligned}
\]
Hence, we may observed that
\[
\omega^\beta(U(i\beta)) = {\rm{ev}}_0(\exp_{\tilde{C}_m^\beta} \frac{2\lambda \dot{V}_{\beta}^h}{\hbar})
\]
where the exponential is computed with respect to the product obtained with $\hbar \tilde{C}_m^\beta$ and where
\[
\dot{V}_{\beta}^h = \int_0^{\frac{\beta}{2}} du \int  d\mathbf{x} \cos(a \phi(u,\mathbf{x})) h(\mathbf{x}).
\]
We now decompose $\dot{V}_{\beta}^h = \dot{V}_{\beta}^{h_1}+\dot{V}_{\beta}^{h_2}$ in contributions with disjoint supports, 
which are such that 
\begin{align*}
\supp{\dot{V}_{\beta}^{h_1}} \subset D_e, \qquad
\supp{\dot{V}_{\beta}^{h_2}} \subset D_e+(0,1), \qquad
\end{align*}
where $D_e=\{ (u,x)\in (0,\beta)\times \mathbb{R} \mid  x \mod 2 \in [0,1/2] \cup [-1,-1/2]\}$.
Notice that $\dot{V}_{\beta}^{h_2}$ coincides to $\dot{V}_{\beta}^{h_1}$ up to a spacelike translation. 
We now observe that in view of translation invariance of $\tilde{C}_m^\beta$, 
we can proceed as in the prof of Theorem \ref{thm:convergence-V}.
Actually, $\dot{V}_{\beta}^{h_i}$ can be obtained as (the limit of) a sum of a positive functional plus a constant functional similarly to \eqref{eq:decomposition-of-V}.
\[
\dot{V}_{\beta}^{h_i} = \lim_{n\to\infty} P_n^{h_i} + c_n^{h_i}. 
\]
Hence, using the fact that $c_n^{h}$ is a constant functional, we get
\[
{\rm{ev}}_0\left(\exp_{\tilde{C}_m^\beta} \frac{2\lambda \dot{V}_{\beta}^h}{\hbar}\right)
= 
  \lim_{n\to\infty}
  e^{\frac{2\lambda}{\hbar}c^{h}_n}
  {\rm{ev}}_0
  \left( \exp_{\tilde{C}_m^\beta}
   \frac{2\lambda}{\hbar}P^{h}_n 
  \right).
\]
Observe that the $N-$th order in $\lambda $contribution to $  {\rm{ev}}_0
  \left( \exp_{\tilde{C}_m^\beta}
   \frac{2\lambda}{\hbar}P^{h}_n 
  \right)$ can be bounded by an application of Proposition \ref{prop:multi-holder}.
 We get  
\[
{\rm{ev}}_0(\underbrace{P_n^{h} \cdot_{\tilde{C}_m^\beta} \dots \cdot_{\tilde{C}_m^\beta} P_n^{h}}_{N}) 
=
{\rm{ev}}_0(\underbrace{\sum_{i=1}^2 P_n^{h_i}  \cdot_{\tilde{C}_m^\beta} \dots \cdot_{\tilde{C}_m^\beta} \sum_{i=1}^2 P_n^{h_i} }_{N}) 
\leq {{2}}^N {\rm{ev}}_0(\underbrace{P_n^{h_1} \cdot_{\tilde{C}_m^\beta} \dots \cdot_{\tilde{C}_m^\beta} P_n^{h_1}}_{N}).
\]
Taking the sum of the power series in $\lambda$ and the limit $n\to\infty$ we get
\begin{align*}
|\omega^\beta(U(i\beta))| 
= {\rm{ev}}_0
  \left( \exp_{\tilde{C}_m^\beta}
   \frac{2\lambda}{\hbar}P^{h}_n 
  \right)
&\leq {\rm{ev}}_0\left( \exp_{{\tilde{C}_m^\beta}}\left(\frac{4 \lambda}{\hbar}\dot{V}_{\beta}^{h_1}\right)\right).
\end{align*}
% Hence, we have now
% \begin{align*}
% |\omega^\beta(U(i\beta))| &\leq {\rm{ev}}_0\left( \exp_{{\tilde{C}_m^\beta}}\left(\frac{4 \lambda}{\hbar}\dot{V}_{\beta}^{h_1}\right)\right).
% \end{align*}
Modify now the product $\cdot_{\tilde{C}_m^\beta}$ imposing boundary conditions on $\mathcal{B}=\{2\mathbb{Z}-\frac{1}{2}\}$ and denote by 
 $\overline{C}_m^\beta$ the corresponding covariance.
Since $\mathcal{B}$ does not intersect $D_e$, the points over which $\overline{C}_m^\beta-\tilde{C}_m^\beta$ diverges are outside of the support of $\dot{V}_{\beta}^{h_1}$. 
Notice that also that the covariance $\overline{C}(u,x)$ with boundary conditions on $\mathcal{B}$ is smaller than $\tilde{C}_m^\beta$ as can be seen from its definition in terms of the sum over paths similarly to what it is done in \eqref{eq:Cab-path}.
Hence, if we apply inverse conditioning, namely by Proposition \ref{pr:inverse-conditioning}, we have that 
\[
|\omega^\beta(U(i\beta))| \leq 2 {\rm{ev}}_0\left(\exp_{\overline{C}_m^\beta}  \left( 8e^{a^2 \frac{K}{2}} \frac{\lambda \dot{V}_{\beta}^{h_1}}{\hbar} \right)\right).
\]
however, we have also that $\overline{C}_m^\beta(u,x_i,x_j)$ vanishes if $|x_i-x_j|>1$, if we denote by $\mathcal{I}$ the set of connected components of $D_e$, we have that the expectation values of the product of two elements supported on two different elements of $\mathcal{I}$ factorizes, hence 
\begin{equation}\label{eq:sss}
|\omega^\beta(U(i\beta))| \leq 2 \prod_{\{X \in \mathcal{I} | X \subset \Lambda \}}
{\rm{ev}}_0\left(\exp_{\overline{C}_m^\beta}  \left( 8e^{a^2 \frac{K}{2}} \frac{\lambda \dot{V}_{\beta}^{h_X}}{\hbar} \right)\right)
\end{equation}
where $h_X$ is the characteristic function of the set $X$. We observe that, in view of the periodicity of $\overline{C}_m^\beta$ ($\overline{C}_m^\beta(u,x_1,x_2)=\overline{C}_m^\beta(u,x_1+2n,x_2+2n)$, $n\in\mathbb{Z}$, 
${\rm{ev}}_0(\exp_{\overline{C}_m^\beta} \frac{\lambda \dot{V}_{\beta}^{h_X}}{\hbar})$ does not depend on $X$.
Observing that there are $|\Lambda|/2$ elements $\{X \in \mathcal{I} | X \subset \Lambda \}$ and thus there are $|\Lambda|/2$ equal factors in the product at the right hand side of \eqref{eq:sss} the claim holds if ${\rm{ev}}_0(\exp_{\overline{C}_m^\beta} \frac{\lambda\dot{V}_{\beta}^{h_X}}{\hbar})$ is finite for every $\lambda$. 
To prove that ${\rm{ev}}_0(\exp_{\overline{C}_m^\beta} \frac{\lambda \dot{V}_{\beta}^{h_X}}{\hbar})$ is finite we use conditioning and inverse condition as in the proof of Theorem \ref{thm:convergence} three times. Actually,  we control ${\rm{ev}}_0(\exp_{\overline{C}_m^\beta} \frac{\lambda \dot{V}_{\beta}^{h_X}}{\hbar})$ with  ${\rm{ev}}_0(\exp_{{C}_m^\beta} \frac{\lambda \dot{V}_{\beta}^{h_X}}{\hbar})$ by means of conditioning because ${C}_m^\beta-\overline{C}_m^\beta$ is positive (as can be seen using the path integral representation). Afterwords, we control 
${\rm{ev}}_0(\exp_{{C}_m^\beta} \frac{\lambda \dot{V}_{\beta}^{h_X}}{\hbar})$ with 
${\rm{ev}}_0(\exp_{{C}_m^\infty} \frac{\lambda \dot{V}_{\beta}^{h_X}}{\hbar})$ by means of inverse conditioning because 
${C}_m^\infty-{C}_m^\beta$ is negative.
Finally, ${\rm{ev}}_0(\exp_{{C}_m^\infty} \frac{\lambda \dot{V}_{\beta}^{h_X}}{\hbar})$ can be controlled with
${\rm{ev}}_0(\exp_{{C}_{0,\mu}} \frac{\lambda \dot{V}_{\beta}^{h_X}}{\hbar})$ for a sufficiently large $\mu$ where ${C}_{0,\mu}(u,x) = A \log(\frac{u^2+x^2}{\mu^2})$ by means of inverse conditioning because
${C}_m^\infty-{C}_{0,\mu}$ is positive.
The obtained contribution ${\rm{ev}}_0(\exp_{{C}_{0,\mu}} \frac{\lambda \dot{V}_{\beta}^{h_X}}{\hbar})$ can be shown to be bounded as in  Theorem \ref{thm:convergence} or as in the similar analysis performed in the Euclidean case, see e.g. \cite{Froe76}.
\end{proof}

\section*{Acknowledgments}

The authors would like to thank D. Cadamuro,  K. Fredenhagen and J. Fr\"ohhlich for illuminating discussions on the subject of the present paper.
The authors are also grateful to the referee for pointing out some weaknesses in an earlier version of the  manuscript and for his help in improving the presentation of the obtained results.
The research presented in this paper was possible and was conducted in part thank to the hospitality of various institutions.
In particular, DB and KR would like to thank Centre de recherches mathematiques in Montreal for the hospitality during the Thematic Semester ``Mathematical Challenges in Many-Body Physics and Quantum Information (fall 2018)''.
NP and KR would like to thank Mainz Institute for Theoretical Physics for the hospitality during the program ``The Mysterious Universe: Dark Matter - Dark Energy - Cosmic Magnetic Fields'' (summer 2019).
All three authors would like to thank Perimeter Institute for the hospitality during the Emmy Noether Workshop: ``The Structure of Quantum Space Time'' (fall 2019).

\appendix 
\section{Jensen inequality in the functional approach}

The result of this section are valid in the commutative $*-$algebra of regular functionals with finitely many non vanishing functional derivatives equipped with the pointwise product or a deformation of the pointwise product. 
Under certain hypotheses, as we shall see in the last proposition, Jensen inequality holds also in the case of local functionals.

\begin{prop}\label{pr:jensen} (Jensen inequality)
Consider $G\in \mathcal{A}=(\mathcal{F}_{\normalfont\text{reg}},\cdot)$ a selfadjoint functional over a smooth field configuration $\varphi\in C^\infty(M;\mathbb{R})$.
Let $f:\mathbb{R} \to \mathbb{R}$ a smooth convex function.  Let $\omega$ a state (a positive normalized linear functional over $\mathcal{F}_{\normalfont\text{reg}}$).  
It holds that
\[
\omega(f(G)) \geq f (\omega(G)), 
\]
where $f(G)(\varphi) := f(G(\varphi))$. 
\end{prop}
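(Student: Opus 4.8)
The plan is to reduce Jensen's inequality in the functional setting to the classical, finite-dimensional Jensen inequality, exploiting the fact that $G\in\mathcal{F}_{\mathrm{reg}}$ has only finitely many non-vanishing functional derivatives. First I would fix a field configuration $\varphi\in C^\infty(M;\mathbb{R})$ and observe that the statement $\omega(f(G))\geq f(\omega(G))$ is an inequality between numbers, so it suffices to work with the real-valued quantities $\omega(G)$ and $\omega(f(G))$. The key idea is the standard supporting-line characterisation of convexity: since $f$ is smooth and convex, for the real number $c:=\omega(G)$ there is a slope $k=f'(c)$ with $f(t)\geq f(c)+k(t-c)$ for all $t\in\mathbb{R}$. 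Substituting $t=G(\varphi)$ pointwise in field configuration space gives the functional inequality
\[
f(G)\geq f(c)\,\mathds{1} + k\,(G - c\,\mathds{1})
\]
as an inequality of functionals (meaning: it holds after evaluation on every $\varphi$).

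The second step is to apply the state $\omega$ to this inequality. Here I would use that $\omega$ is positive and linear on $(\mathcal{F}_{\mathrm{reg}},\cdot)$: positivity means $\omega$ respects the order relation $F_1\geq F_2$ (i.e. $F_1-F_2$ is a sum of squares, or more precisely $(F_1-F_2)(\varphi)\geq 0$ for all $\varphi$ implies it is of the form $A^*\cdot A$ in the commutative pointwise-product algebra, since a non-negative function has a non-negative square root — this is where the commutative pointwise (or deformed pointwise) product structure is essential). Normalisation gives $\omega(\mathds{1})=1$. Applying $\omega$ to the displayed inequality yields
\[
\omega(f(G)) \geq f(c) + k\big(\omega(G) - c\big) = f(c) = f(\omega(G)),
\]
which is the claim. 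The cancellation of the linear term is exactly the point, and it is why the supporting line at the specific point $c=\omega(G)$ must be chosen.

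The main obstacle — and the step that needs care rather than cleverness — is justifying that the pointwise inequality $f(G)\geq f(c)\mathds{1}+k(G-c\mathds{1})$ actually lands inside the algebra $\mathcal{F}_{\mathrm{reg}}$ on which $\omega$ is defined, and that $\omega$ is genuinely order-preserving there. One must check that $f(G)$, defined by $f(G)(\varphi)=f(G(\varphi))$, is again a regular functional with finitely many non-vanishing derivatives: since $G$ has, say, derivatives vanishing beyond order $N$, the chain rule shows $f(G)^{(k)}$ is a finite sum of products of the $G^{(j)}$'s with coefficients $f^{(l)}(G)$, and one needs $f$ smooth (granted) plus the observation that such finite products and compositions stay in $\mathcal{F}_{\mathrm{reg}}$ — this is a routine but necessary verification. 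For order-preservation of $\omega$: given $H\geq 0$ pointwise, write $H=(\sqrt H)\cdot(\sqrt H)$ in the commutative algebra, note $\sqrt H$ is real hence self-adjoint, so $\omega(H)=\omega((\sqrt H)^*\cdot \sqrt H)\geq 0$ by positivity of the state; one should make sure $\sqrt H\in\mathcal{F}_{\mathrm{reg}}$, which again uses smoothness of $t\mapsto\sqrt t$ away from zero and a small regularisation argument (e.g. $\sqrt{H+\epsilon}$ and $\epsilon\to 0$) if $H$ has zeros. Once these bookkeeping points are settled, the proof is immediate. I would flag that the extension to \emph{local} functionals, mentioned in the paragraph preceding the proposition, requires the additional hypotheses alluded to there and is treated separately.
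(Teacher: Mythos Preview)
Your approach is essentially the same as the paper's: both use the supporting tangent line $f(t)\geq f(c)+f'(c)(t-c)$ at $c=\omega(G)$, apply it pointwise in $\varphi$ to obtain a nonnegative functional, and then invoke positivity and normalisation of $\omega$. One minor slip: you say $G\in\mathcal{F}_{\mathrm{reg}}$ has finitely many nonzero functional derivatives, but that is only required for the polynomial subalgebra $\mathcal{F}_{\mathrm{reg}}^p$---this does not affect your argument, and in fact your discussion of the $\sqrt{H+\epsilon}$ regularisation is more careful than the paper, which simply asserts that positivity of the state preserves the pointwise inequality.
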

\begin{proof}
Since $f$ is a smooth convex function, for every copy of selfadjoint elements $F,G \in \mathcal{F}_{\normalfont\text{reg}}$ and for every field configuration $\varphi$ it holds that 
\[
f(G(\varphi))-f(F(\varphi)) \geq f'(F(\varphi)) (G(\varphi)-F(\varphi)),
\]
this means that $f(G)-f(F) - f'(F) \cdot (G-F)$ is a positive element of $\mathcal{F}$.
Hence choosing $F$ to be a constant functional equal to $\omega(G)$, we have
\[
f(G)-f(\omega(G)) \geq f'(\omega(G)) (G-\omega(G)),
\]
the positivity of the state implies that
\[
\omega(f(G)-f(\omega(G))) \geq \omega(f'(\omega(G)) (G-\omega(G))),
\]
the state is linear and normalized hence
\[
\omega(f(G)) - f(\omega(G))) \geq 0.
\]
\end{proof}
The same proof holds also for more singular functionals and in the case of deformed algebras.
To see this in details we shall here consider the polynomial subalgebra $\mathcal{A}^p\subset \mathcal{A}=(\mathcal{F}_{\normalfont\text{reg}},\cdot)$ which is the smallest subalgebra of $\mathcal{A}$ which contains the linear fields.
Furthermore, let us consider $\mathcal{A}^p_w$ which is the algebra containing the same elements as 
$\mathcal{A}^p$ and whose product is $\cdot_w$, a deformation of the pointwise product realized in the following way
\begin{equation}\label{eq:product-exponential_2}
F\cdot_w G = \mathcal{M}\; e^{\Gamma_w}(F\otimes G), \qquad   \Gamma_w = \int d\mu_xd\mu_y w(x,y) \frac{\delta}{\delta \varphi(x)} \otimes \frac{\delta}{\delta \varphi(y)}
\end{equation}
where $w\in \mathcal{D}'(M^2)$  is a symmetric, real, positive distribution (for every $f,g\in\mathcal{D}(M)$$w(f,g)=w(g,f)=\overline{w(\overline{f},\overline{g})}$, $w(\overline{h},h)\geq 0$) of order $\hbar$. 
As usual, if $w$ is sufficiently regular, the product $\cdot_w$ can be extended to more singular objects like local observables.

Notice that $\mathcal{A}^p_w$ and $\mathcal{A}^p=$ are $*-$isomorphic. The isomorphism 
$\alpha_w: \mathcal{A}^p\to \mathcal{A}^p_w$ is realized by
\begin{equation}\label{eq:alpha}
\alpha_w (F)(\varphi) = e^{\frac{1}{2}\int d\mu_xd\mu_y w(x,y) \frac{\delta^2}{\delta \varphi(x)\delta\varphi(y)}}F(\varphi),  
\end{equation}
and
\[
F \cdot_w G = \alpha_w ( \alpha^{-1}_w(F) \cdot \alpha^{-1}_w(G)).
\]

The set of microcausal functional $\mathcal{F}_{\mu c}$ is linear, we equip it with the topology induced by the following seminorms labelled by $\varphi$, a field configuration, $l\in\mathbb{N}$ and $f\in {\mathcal{E}(M^l)}$, 
\[
\|G\|_{\varphi,l,f} = |G^{(l)}(\varphi,f) |, \qquad G\in \mathcal{F}_{\mu c}.
\] 
Hence, we say that $G_n \in \mathcal{F}_{\mu c}$ converges to $G\in \mathcal{F}_{\mu c}$ if for every $l$ and for every $\varphi\in\mathcal{C}$,  $G_n^{(l)}(\varphi)-G^{(l)}(\varphi)$ in the topology of $\mathcal{E}'(M^l)$ and, as usual,
we say that $D_n$ converges to $D$ in $\mathcal{E}'(M^l)$ if for every $f\in \mathcal{E}(M^l)$ $\lim_{n\to\infty} D_n(f)=D(f)$. With these definitions we have that the product $\cdot_w$ on $\mathcal{A}^p_w$ is continuous with respect to the topology of  $\mathcal{F}_{\mu c}$.

We observe that, every smooth function $f:\mathbb{C}\to \mathbb{C}$ can be promoted to a function on $f:\mathcal{A}\to\mathcal{A}$ $f(F)(\varphi):=f(F(\varphi))$, furthermore at the same time, if $f$ is a polynomial function we may obtain a function $f_w:\mathcal{A}_w \to \mathcal{A}_w$ applying $\alpha_w$ 
\[
f_w := \alpha_w \circ f \circ \alpha_w^{-1}.
\]
If $f$ is convex, Jensen inequality holds also in the deformed algebra. 
\begin{prop}(Jensen inequality (deformed algebra)) \label{pr:jensen2}
Let $f:\mathbb{R}\to\mathbb{R}$ be a real convex function, let 
$w\in\mathcal{D}'(M^2)$ be symmetric, real and positive.
 The following holds
\begin{itemize}
\item[(1)] If $f$ is a polynomial function, for every state $\omega$ on $\mathcal{A}^p_w$, and for every $G=G^*\in\mathcal{A}^p_{w}$
the following inequality holds
\begin{equation}\label{eq:Jensen-def}
\omega(f_w(G)) \geq f (\omega(G)), 
\end{equation}
\item[(2)]   Suppose that $G=G^*$ 
% \in (\mathcal{F}^V,\cdot_w)\cap \mathcal{F}_{{ \rm loc}}$
% \mathcal{F}^\Phi_{\normalfont\text{loc}}$ 
is a local functional which can be obtained as the limit of $G_n\in\mathcal{A}^p_w \subset \mathcal{F}_{\mu c}$. Suppose that $w$ is regular enough to have $G^{\cdot_w n}$ well defined for every $n\in\mathbb{N}$.
Let 
$\mathcal{A}^p_{G,w}$ the polynomial algebra generated by $G$ together with the identity.
% and
% such that, together with the identity, it generates a polynomial $*-$algebra $\mathcal{A}_{G,w}$ (polynomial in $G$) when equipped with the product $\cdot_w$ continuous with respect to the topology of $\mathcal{F}_{\mu c}$.
If $\omega$ is a state for $\mathcal{A}^p_{G,w}$ which is obtained as the 
 extension of a state $\omega$ (continuous with respect to the topology of $\mathcal{F}_{\mu c}$) on $\mathcal{A}^p_w$  and if $f$ is a polynomial function \eqref{eq:Jensen-def} holds.

\item[(3)] In the hypothesis of point (2), suppose that the function $f$ can be obtained as the pointwise limit of a sequence  $\{f_n\}_{n\in\mathbb{N}}$ of convex polynomial functions.
Suppose that $\lim_{n\to\infty}\omega(f_{n,w}(G))=L\in \mathbb{R}$ then \eqref{eq:Jensen-def} holds
with $\omega(f_w(G))=L$. 
\end{itemize}
\end{prop}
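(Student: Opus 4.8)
The plan is to prove the three parts of Proposition~\ref{pr:jensen2} in sequence, each one building on the previous, reducing everything to the elementary convexity inequality already exploited in the proof of Proposition~\ref{pr:jensen}.

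For part (1), the key observation is that the $*$-isomorphism $\alpha_w:\mathcal{A}^p\to\mathcal{A}^p_w$ lets us transport the problem back to the undeformed pointwise algebra. Given a state $\omega$ on $\mathcal{A}^p_w$ and a selfadjoint $G\in\mathcal{A}^p_w$, set $\tilde{\omega}:=\omega\circ\alpha_w$, which is a state on $\mathcal{A}^p$, and $\tilde{G}:=\alpha_w^{-1}(G)$, which is selfadjoint in $\mathcal{A}^p$ because $\alpha_w$ is a $*$-isomorphism. Since $f_w=\alpha_w\circ f\circ\alpha_w^{-1}$ by definition, we have $\omega(f_w(G))=\tilde{\omega}(f(\tilde{G}))$ and $\omega(G)=\tilde{\omega}(\tilde{G})$, so the inequality \eqref{eq:Jensen-def} is literally the statement of Proposition~\ref{pr:jensen} applied to $\tilde{G}$ and $\tilde{\omega}$. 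The only subtlety is that $\tilde{G}$ need not have compactly supported functional derivatives in the naive sense, but since $G$ is polynomial and $\alpha_w$ only adds contractions by $w$, $\tilde{G}$ still lies in the polynomial regular subalgebra, so the hypotheses of Proposition~\ref{pr:jensen} are met verbatim.

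For part (2), I would approximate. Take $G_n\in\mathcal{A}^p_w$ with $G_n\to G$ in the topology of $\mathcal{F}_{\mu c}$; replacing $G_n$ by $(G_n+G_n^*)/2$ we may assume $G_n$ selfadjoint, and this still converges to $G$ since $G=G^*$. Part (1) gives $\omega(f_w(G_n))\ge f(\omega(G_n))$ for each $n$. Now I invoke continuity twice: $\omega$ is assumed continuous with respect to the $\mathcal{F}_{\mu c}$-topology, so $\omega(G_n)\to\omega(G)$, and since $f$ is continuous the right-hand side converges to $f(\omega(G))$. For the left-hand side, $f$ is a polynomial, so $f_w(G_n)$ is a polynomial in $G_n$ with respect to $\cdot_w$; continuity of the product $\cdot_w$ with respect to the $\mathcal{F}_{\mu c}$-topology (stated in the excerpt) gives $f_w(G_n)\to f_w(G)$ in $\mathcal{F}_{\mu c}$, and then continuity of $\omega$ gives $\omega(f_w(G_n))\to\omega(f_w(G))$. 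Passing to the limit in the inequality yields \eqref{eq:Jensen-def}. The point requiring a little care here is that $f_w(G)$ a priori lives in the larger closure $\mathcal{A}_{G,w}$, not in $\mathcal{A}^p_w$, but that is exactly what the hypothesis on $\mathcal{A}_{G,w}$ and the continuous extension of $\omega$ are there to license.

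For part (3), I would pass to a limit over the approximating convex polynomials. By hypothesis $f_n\to f$ pointwise with each $f_n$ convex and polynomial, so part (2) applies to each $f_n$: $\omega(f_{n,w}(G))\ge f_n(\omega(G))$. The right-hand side converges to $f(\omega(G))$ by pointwise convergence of $f_n$ at the real number $\omega(G)$ (which is real since $G=G^*$ and $\omega$ is a state). The left-hand side converges to $L\in\mathbb{R}$ by assumption, and $L$ is by definition what we call $\omega(f_w(G))$ in this generalized sense. Hence $\omega(f_w(G))=L\ge f(\omega(G))$, which is \eqref{eq:Jensen-def}. I expect the main obstacle — really the only nontrivial point in the whole proposition — to be part (2): one must be careful that the polynomial $f_w(G_n)$, built from $\cdot_w$-products, genuinely converges to $f_w(G)$ in the topology in which $\omega$ is continuous, and that the limiting object is the one we want to call $f_w(G)$; this is where the structural hypotheses about $\mathcal{A}_{G,w}$ and the continuity of $\cdot_w$ on it are essential, and everything else is a routine limiting argument resting on Proposition~\ref{pr:jensen}.
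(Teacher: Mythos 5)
Your proposal is correct and follows essentially the same route as the paper: part (1) by transporting the problem through $\alpha_w$ to the undeformed algebra and applying Proposition \ref{pr:jensen}, part (2) by approximating $G$ with the $G_n$ and using continuity of $\omega$, $\cdot_w$ and $f$, and part (3) by taking the limit of the inequalities for the convex polynomials $f_n$. The extra remarks you add (symmetrizing $G_n$ to make it selfadjoint, and flagging that $f_w(G)$ lives in $\mathcal{A}_{G,w}$ rather than $\mathcal{A}^p_w$) are sound and only make the argument more careful than the paper's own write-up.
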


\begin{proof}
In the hypothesis of item $(1)$ we have that $\omega\circ\alpha_w$ is a state on $\mathcal{A}^p$ 
and that $\alpha_w^{-1}(G)$ 
is selfadjoint in $\mathcal{A}^p$. By Proposition 
\ref{pr:jensen}, we have 
\[
\omega(f_w(G))=\omega(\alpha_w( f(\alpha_w^{-1}(G)))) \geq  f (\omega(G))
\]
thus concluding the proof of item $(1)$. 

To prove item $(2)$ we have by hypothesis that  there exists a sequence of elements $G_n\in\mathcal{A}^p_w$ which converges to $G$ in the topology of $\mathcal{F}_{\mu c}$. 
Furthermore, 
we have that $\omega(G_n)$ and $\omega(f_w(G_n))$ converge respectively to $\omega(G)$ and $\omega(f_w(G))$ for large $n$ in view of the continuity of the state $\omega$ and because $f$ is a polynomial functions. 
For every $n$ it holds that by the first part of this proposition 
\[
\omega(f_w(G_n)) \geq  f (\omega(G_n))
\]
hence taking the limit on both side, also in view of the continuity of $f$, we get the statement of item (2)

To prove item (3) notice that for every $f_n$ 
equation \eqref{eq:Jensen-def} holds 
\[
\omega(f_{n,w}(G)) \geq  f_n (\omega(G)).
\]
The claim follows by taking the limit $n$ to infinity on both sides of the inequality.
\end{proof}

The previous Proposition will be used in the following way. 
\begin{rem}\label{rem:how-use-it}
Consider $V_{g,a}= \int e^{\mathrm{i} a  \varphi(x)}  g(x) d\mu_x$ with sufficiently small $a$, $\varphi\in\mathcal{C}$ and $g\in{C}^\infty_0(M)$, we have that it generates a $*-$algebra $\mathcal{A}_{V_{g,a},w}$ where the product is $\cdot_{w}$ with $w(x,y)=-\frac{1}{4\pi} \log(|(x-y)^2|) + s(x,y)$ and where $s$ is a continuous function on $M^2$.
The same holds for $V_{g,-a}$ and for $V=(V_{g,a}+V_{g,-a})/2$.
Consider now $V_{g,a}^N = \sum_{n=0}^N \int \frac{(\mathrm{i}a\varphi_{x,N})^n}{n!} g(x)  d\mu_x$ where 
$\varphi_{x,N} = \int \varphi(y) \delta_{1/N}(y-x) d\mu_y$ and where 
$\delta_{\epsilon}$ is a set of compactly supported smooth functions which converge to the Dirac delta function for $\epsilon\to 0$. $V_{g,a}^N$ are polynomial local functionals in $\mathcal{A}^p_w$ which converges to $V_{g,a}$ in the topology of $\mathcal{F}_{\mu c}$.  
The evaluation functional ${\rm{ev}}_0(G)=G(0)$ is a state on $\mathcal{A}_{V,w}$ which is continuous with respect to the topology of $\mathcal{F}_{\mu c}$.
Consider the polynomial functions $f_n(x) = \sum_{l=0}^{2n}  \frac{x^{l}}{l!}$. Uniformly on every compact set, the function $f_n$ converges to $\exp$ for large $n$, furthermore, for $n>0$ $f_n$ are real positive and convex. If ${\rm{ev}}_0(f_{n,w}(V_{g,a}))$ is a convergent sequence, considering $V=(V_{g,a} + V_{g,-a})/2$
we are in the hypothesis of item (3) of Proposition \ref{pr:jensen2} and thus
\[
{\rm{ev}}_0( \exp_w(V) ) \geq \exp ({\rm{ev}}_0(V) )
\] 
where ${\rm{ev}}_0( \exp_w(V) )$ is understood as the limit for $n\to\infty $ of ${\rm{ev}}_0(f_{n,w}(V_{g,a}))$.
\end{rem}

\subsection{Jensen inequality \texorpdfstring{$C^*-$}{C*-}algebraic case} \label{ap:jensen}
Jensen inequality holds also on a (non-commutative) $C^*$-algebra.
\begin{prop}
Let $\mathfrak{A}$ be a $C^*-$algebra, $f$ a differentiable real, convex function and $\omega$ a state on $\mathfrak{A}$, then for every $A=A^*\in\mathfrak{A}$
\[
\omega(f(A))\geq f(\omega(A))
\]
where $f(A)\in\mathfrak{A}$ is obtained with the standard functional calculus for $C^*-$algebras. 
\end{prop}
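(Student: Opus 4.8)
The plan is to prove the $C^*$-algebraic Jensen inequality by the same convexity-at-a-point argument used in Proposition \ref{pr:jensen}, but replacing pointwise estimates of functionals by the operator order on $\mathfrak{A}$. First I would recall that for a selfadjoint element $A=A^*\in\mathfrak{A}$ with spectrum $\spec(A)\subset[m_0,M_0]$, the continuous functional calculus gives an isometric $*$-homomorphism $C(\spec(A))\to\mathfrak{A}$, $g\mapsto g(A)$, which is positivity-preserving: if $g\geq 0$ on $\spec(A)$ then $g(A)\geq 0$ in $\mathfrak{A}$. The key elementary input is the ``supporting line'' characterization of convexity: since $f$ is differentiable and convex on an interval containing $\spec(A)$, for every $t_0$ in that interval one has $f(t)\geq f(t_0)+f'(t_0)(t-t_0)$ for all $t$ in the interval.

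Next I would set $t_0:=\omega(A)$. Since $\omega$ is a state and $m_0\1\leq A\leq M_0\1$, positivity and normalization of $\omega$ give $m_0\leq\omega(A)\leq M_0$, so $t_0\in\spec(A)$'s convex hull and the supporting-line inequality applies with this $t_0$. Applying the (positivity-preserving) functional calculus to the nonnegative function $t\mapsto f(t)-f(t_0)-f'(t_0)(t-t_0)$ evaluated at $A$ yields the operator inequality
\[
f(A)-f(t_0)\1-f'(t_0)\bigl(A-t_0\1\bigr)\geq 0 \quad\text{in }\mathfrak{A}.
\]
Then I would apply $\omega$ to both sides. Positivity of $\omega$ gives $\omega(f(A))-f(t_0)-f'(t_0)\bigl(\omega(A)-t_0\bigr)\geq 0$, and since $t_0=\omega(A)$ the last term vanishes, leaving $\omega(f(A))\geq f(\omega(A))$, which is the claim.

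There is no serious obstacle here; the statement is classical and the only point that needs a line of care is the justification that $f(A)$, $f'(t_0)$ etc.\ make sense, i.e.\ that $\spec(A)$ is a compact subset of the domain of $f$ (automatic since $A$ is selfadjoint and bounded and $f$ is defined on all of $\mathbb{R}$), and that the scalar affine function $t\mapsto f(t_0)+f'(t_0)(t-t_0)$ maps to $f(t_0)\1+f'(t_0)(A-t_0\1)$ under the functional calculus, which is just linearity and unitality of the calculus. If one wanted to avoid differentiability one could instead take $f'(t_0)$ to be any subgradient of $f$ at $t_0$, but since the statement assumes $f$ differentiable I would not bother. The mildly delicate bookkeeping step — and the one I would write out most carefully — is checking that $\omega(A)$ indeed lies in the interval on which the supporting-line inequality is valid, which as noted follows immediately from $m_0\1\leq A\leq M_0\1$ and the positivity and normalization of the state.
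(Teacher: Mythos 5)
Your proof is correct. It is not, however, the argument the paper actually writes out: the paper passes to the commutative $C^*$-subalgebra $\mathcal{B}$ generated by $\1$ and $A$, identifies $\mathcal{B}$ with $C(\sigma(A))$ via Gelfand duality, represents the restricted state by a probability measure via the Riesz--Markov theorem, and then invokes the classical Jensen inequality for probability measures. Your route instead transplants the supporting-line argument of the functional version (Proposition \ref{pr:jensen}) directly into the operator order: the nonnegativity of $t\mapsto f(t)-f(t_0)-f'(t_0)(t-t_0)$ on an interval containing $\sigma(A)$, the positivity-preservation of the continuous functional calculus, and the positivity and normalization of $\omega$ do all the work, with the linear term vanishing by the choice $t_0=\omega(A)$. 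The paper does remark that this is possible (``Jensen inequality can be proved as in the proof of Proposition \ref{pr:jensen}'') but carries out only the Gelfand/measure-theoretic alternative. Your version is the more self-contained of the two --- it needs no Gelfand duality, no Riesz--Markov, and no appeal to the measure-theoretic Jensen inequality as a black box --- at the modest cost of checking that $\omega(A)$ lies in the relevant interval, which you do via $m_0\1\leq A\leq M_0\1$. One implicit assumption you share with the paper: $\mathfrak{A}$ is treated as unital (otherwise $f(A)$ with $f(0)\neq 0$ and the constant term $f(t_0)\1$ live only in the unitization); this is harmless in the present context.
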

\begin{proof}
Notice that $I$ and $A$ generate a $C^*-$subalgebra $\mathcal{B}$ of $\mathcal{A}$, furthermore $\mathcal{B}$ is commutative.
The element $f(A)$ is in $\mathcal{B}$. $\omega$ restricted on $\mathcal{B}$ is there a state. 
Jensen inequality can be proved as in the proof of Proposition \ref{pr:jensen}, alternatively we obtain the standard Jensen inequality noticing that by Gelfand duality $\mathcal{B}$ is isomorphic to $C(\sigma(A))$ the $C^*-$algebra of bounded functions on $\sigma(A)$ which is the spectrum of $A$. 
By the Riesz-Markov theorem a state $\omega$ on $C(\sigma(A))$ is represented by a probability measure $\mu_\omega$, such that
\[
\omega(f(A)) = \int_{\sigma(A)} f d\mu_\omega.
\]
Applying the known Jensen inequality for probability measures we get the assertion.
\end{proof}

\section{Conditioning}

The following result and its proof is the translation to the functional formalism of the Theorem 2.1 (1) in \cite{Froe76}. See also \cite{GR75a,GR75b}.
\begin{prop}\label{pr:conditioning}
Let $w, v\in\mathcal{D}'(M\times M)$ be two symmetric positive distributions such that $w-v$ is also positive.
Consider $f_n(x) = \sum_{l=0}^{2n}  \frac{x^{l}}{l!}$, and 
$V=V^*\in \mathcal{F}_{\normalfont\text{reg}}\cap \mathcal{A}^p$. 
%  \tt check the following hypothesis, it is not necessary say something later about the convergence} which is positive.
Suppose that the sequences
${\rm{ev}}_0( f_{n,w}(V) )=f_{n,w}(V)(0)$
and
${\rm{ev}}_0( f_{n,v}(V) )=f_{n,v}(V)(0)$
converge for large $n$. If we denote respectively by ${\rm{ev}}_0( \exp_w(V) )$
and by 
${\rm{ev}}_0( \exp_v(V) )$
the resulting limits
%
% Suppose that the sequence
% ${\rm{ev}}_0( f_{n,w}(V) )=f_{n,w}(V)(0)$ converges for large $n$. If we denote by ${\rm{ev}}_0( \exp_w(V) )$ the  resulting limit,  we have that ${\rm{ev}}_0( f_{n,v}(V) )$ is also convergent and %
it holds that 
\begin{equation}\label{eq:Jensen-V}
{\rm{ev}}_0( \exp_w(V) ) \geq {\rm{ev}}_0( \exp_v(V) ).
\end{equation}

If for every $n$, 
${\rm{ev}}_0(V^{\cdot_v n})$ is positive, the convergence of ${\rm{ev}}_0( f_{n,v}(V) )$
to 
${\rm{ev}}_0( \exp_v(V))$
follows from the convergence of
${\rm{ev}}_0( f_{n,w}(V) )$.

If $V=V^*\in \mathcal{F}_{\normalfont\text{loc}}\cap \mathcal{F}^V$
%is such that together with $\cdot_w$ generates a $*-$algebra,  if $V$ 
can be obtained as the limit of  polynomial regular functionals,
if ${\rm{ev}}_0( f_{n,w}(V) )$ is convergent,
  and ${\rm{ev}}_0(V^{\cdot_v n} )$ is positive for every $n$,
 \eqref{eq:Jensen-V} holds also in this case.
\end{prop}
\begin{proof}
To every functional $F$ on $C^\infty(M;\mathbb{R})$ we can associate a functional $\tilde{F}$ on $\mathcal{C}^2=C^\infty(M;\mathbb{R})\times C^\infty(M;\mathbb{R})$ in the following way
\[
\tilde{F}(\varphi_1,\varphi_2)=F(\varphi_1+\varphi_2).
\]
Furthermore, using the state $\widetilde{{\rm{ev}}}_0(F)=F(0,0)$ on the functionals over $\mathcal{C}^2$, we have that 
\[
\widetilde{{\rm{ev}}}_0(\tilde{F}) = {{\rm{ev}}}_0({F}).
\]
Let us discuss the case where $V=V^*$ is a polynomial regular functional. 

Notice that
\begin{align*}
{\rm{ev}}_0(f_{n,w}(V)) &= 
{\rm{ev}}_0(\alpha_w f_n(\alpha_w^{-1}V))=
{\rm{ev}}_0(\alpha_v\alpha_{w-v}f_n(\alpha_{w-v}^{-1}\alpha_{v}^{-1}V)) \\
&=
\widetilde{{\rm{ev}}}_0(\alpha^1_v\alpha^2_{w-v}f_n({\alpha^2}_{w-v}^{-1}{\alpha^1}_{v}^{-1}\tilde{V}))\\
&=
(\alpha^1_v\alpha^2_{w-v}f_n({\alpha^2}_{w-v}^{-1}{\alpha^1}_{v}^{-1}\tilde{V}))(0,0)
\end{align*}
where $\alpha^1_{v}$ and $\alpha^2_{w-v}$ are defined as in \eqref{eq:alpha} where the functional derivatives in $\alpha^i$ act only on $\varphi_i$. For this reason they commute.   
For every $n$, $f_n$ is positive, 
because in the point $x_m$ where 
where $f_n$ takes its minimum 
$f_{n}'(x_m)=0$ and $f_{n}(x_m)=f_{n}'(x_m)+\frac{x_m^{2n}}{(2n)!} = \frac{x_m^{2n}}{(2n)!} \geq 0$.
For every $n$, $f_n$ is also convex, actually $f''_n= f_{n-1}$ and $f_{n-1}\geq 0$.
% because in its minimum $x_m$
% $f_{n-1}'(x_m)=0$ and $f_{n-1}(x)=f_{n-1}'(x)+\frac{x^{2n-1}}{(2n-1)!}$. 
Hence we may apply Jensen inequality.
For a fixed $\varphi_1$
Jensen inequality given in Proposition \ref{pr:jensen2} holds with respect of $\varphi_2$, and with respect to a state $\tilde\omega^2_0(\tilde{F}) = \tilde{F}(\varphi_1,0)$, namely  
\[
(\alpha^2_{w-v}f_n({\alpha^2}_{w-v}^{-1}{\alpha^1}_{v}^{-1}\tilde{V}))(\varphi_1,0)
\geq
f_n({\alpha^1}_{v}^{-1}{\tilde{V}})(\varphi_1,0)=f_n({\alpha}_{v}^{-1}{V})(\varphi_1)
\]
this implies that 
\[
F(\varphi):=(\alpha^2_{w-v}f_n({\alpha^2}_{w-v}^{-1}{\alpha^1}_{v}^{-1}\tilde{V}))(\varphi,0)
-f_n({\alpha}_{v}^{-1}{V})(\varphi)
\]
is a positive element of $\mathcal{F}_{\text{reg}}\cap \mathcal{A}^{p}$. Hence, since ${\rm{ev}}_0\circ\alpha_v$ is a positive state on $\mathcal{A}^{p}$, evaluating $F$ on ${\rm{ev}}_0\circ\alpha_v$ we get for every $n$
\begin{align*}
{\rm{ev}}_0(f_{n,w}(V)) \geq 
{\rm{ev}}_0(f_{n,v}(V)).
\end{align*}
The previous inequality holds for every $n$, 
both sequences converge, hence the previous inequality holds also in the limit $n\to\infty$.  
Hence we get
\begin{align*}
{\rm{ev}}_0(\exp_w(V)) \geq 
{\rm{ev}}_0(\exp_v(V)).
\end{align*}
If  ${\rm{ev}}_0(V^{\cdot_v n})$ is positive for every $n$, ${\rm{ev}}_0(f_{n,v}(V))$ is actually an absolutely convergent series because ${\rm{ev}}_0(f_{n,w}(V))$ converges.
% The previous inequality holds for every $n$, $V$ is positive and thus ${\rm{ev}}_0(V^{\cdot_v n})$ is positive for every $n$.
% Hence, ${\rm{ev}}_0(f_{n,v}(V))$ is actually an absolutely convergent series because ${\rm{ev}}_0(f_{n,w}(V))$ converges.
% Furthermore, the previous inequality holds also in the limit $n\to\infty$. Hence we get
% \begin{align*}
% {\rm{ev}}_0(\exp_w(V)) \geq 
% {\rm{ev}}_0(\exp_v(V)).
% \end{align*}
This concludes the proof of the first part of the statement. 
To prove the second part of the statement we have now that $V=V^*$ is such that together with $\cdot_w$ generates a $*-$algebra, ${\rm{ev}}_0( f_{n,w}(V) )$ is convergent and
$V$ can be obtained as the limit $j\to\infty$ of  polynomial regular functionals $V_j=V_j^*$.
By the first part of the statement we have that, for every $j$ and every $n$
\begin{align*}
{\rm{ev}}_0(f_{n,w}(V_j)) \geq 
{\rm{ev}}_0(f_{n,v}(V_j)).
\end{align*}
Taking the limit $j\to\infty$ 
  we get that the previous inequality holds for every $n$ also in the limit 
\begin{align*}
{\rm{ev}}_0(f_{n,w}(V)) \geq 
{\rm{ev}}_0(f_{n,v}(V)).
\end{align*}
Since ${\rm{ev}}_0(f_{n,w}(V))$ is in $n$ a convergent sequence,
and   since ${\rm{ev}}_0(V^{\cdot_v n})$ is positive,
we have that the series defining ${\rm{ev}}_0(f_{n,v}(V))$ is also absolutely convergent and controlled by ${\rm{ev}}_0(\exp_{w}(V))$.
Hence, the previous inequality holds also in the limit $n\to\infty$ thus getting the assertion.
\end{proof}

In the present paper, we are interested in using the results of the previous Proposition for $V_g=\int \cos(a\varphi(x)) g(x)d\mu_x$, this can be done following a discussion similar to the one presented in 
Remark \ref{rem:how-use-it}.

\section{Inverse Conditioning}

This section is based on \cite{Froe76}. 
Consider $w\in\mathcal{D}'(M^2)$ 
which is symmetric and positive (for every $f,g\in\mathcal{D}(M)$,  $w(f \otimes g)=w(g  \otimes f)$ and $w(\overline{f}  \otimes f)\geq 0$).
We also require that 
the integral kernel of $w$ is described by a function on $M^2$ (which with a little abuse of notation we denote by $w$) which makes $e^{-a_1 a_2w(x,y)}$ locally integrable for $a_i\in(-\alpha,\alpha)$.
To any symmetric and positive two-point function $w_2:\mathcal{D}(M)^2\to \mathbb{C}$ analyzed in the reset of the paper we can associate an unique element $w\in \mathcal{D}'(M^2)$ such that $w(f\otimes g) = w_2(f,g)$. To better present the result of this appendix it is easier to work with two-point function seen as elements of $\mathcal{D}'(M^2)$. 
The distribution $w$ is then used to construct a symmetric product among regular functionals as in \eqref{eq:product-exponential} and suppose that $w$ is sufficiently regular to allow the extension of $\cdot_w$ to local functionals.
We shall thus work in $\mathcal{A}=(\mathcal{F},\cdot_w)$ where $\mathcal{A}$ is generated by local functionals. 
Consider the following functionals
\begin{equation}\label{eq:chia}
V^{g}_a(\varphi) := \int_M  e^{\mathrm{i} a \varphi(x)}  g(x)  d\mu_x
\end{equation}
which are elements of $\mathcal{A}$. 
We also use the following notation 
\[
V_\pm^g(\varphi) :=V_{\pm1}^g(\varphi).
\]
Notice that
\[
V_{a_1}^g\cdot_w V_{a_2}^g =\int_{M^2}  e^{\mathrm{i}a_1 \varphi(x)+\mathrm{i}a_2\varphi(y) } e^{-a_1a_2w(x,y)} g(x)g(y)d\mu_xd\mu_y.
\]
We shall consider $g\geq 0 $, hence we have that
\[
|V_a^g|^2 = V_a^g\cdot_w {V_a^g}^{*} = \int_{M^2}  e^{\mathrm{i}a \varphi(x)-\mathrm{i}a\varphi(y) } e^{a^2w(x,y)} g(x)g(y)d\mu_xd\mu_y
\] 
Consider the state ${\rm{ev}}_0$ defined by
\[
{\rm{ev}}_0(F) := F(0) , \qquad \forall F\in \mathcal{A}.
\]
We thus have that 
\[
{\rm{ev}}_0(|V_a^g|^2) = V_a^g\cdot_w {V_a^g}^{*}(0) = \int_{M^2} e^{a^2w(x,y)} g(x)g(y) d\mu_xd\mu_y.
\]
We shall now prove that ${\rm{ev}}_0(|V_a^g|^2)$ is larger than ${\rm{ev}}_0(V_a^g\cdot_w V_a^g)$.
To reach this result we need the following lemma which involves the set of symmetric distributions 
\[
\mathcal{D}'_s(M^2) = \{ w\in \mathcal{D}'(M^2) \mid w(f\otimes g) = w(g\otimes f), \forall f,g \in \mathcal{D}(M)  \}.
\]
 We call an element of $\mathcal{D}'_s(M^2)$ a symmetric {\bf two-point function}.
Notice that  $I(f\otimes g):= \int f d\mu \int g d\mu$ is an element of $\mathcal{D}'_s(M^2)$. Furthermore, $w\in \mathcal{D}'_s(M^2) $ is positive if $w(\overline{f}\otimes f)\geq 0$ for every $f\in\mathcal{D}$.
\begin{lemma}\label{eq:change-sign-quadratic}
Let $N$ be a positive normalized linear functional on the set of symmetric distributions $\mathcal{D}'_s(M^2)$, namely
\[
N(I)=1,  \qquad N(w)\geq 0 \qquad \text{if } w\geq0.
\]
Then, for every positive   $w$ in $\mathcal{D}'_s(M^2)$ such that $e^{w}(f\otimes g) = \langle e^{w} ,f\otimes g\rangle$ is also in $\mathcal{D}'_s(M^2)$, it holds that
\[
N(e^{w}) \geq N(e^{-w}).
\]
\end{lemma}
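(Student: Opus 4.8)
The plan is to expand both sides as power series in $w$ and compare coefficients term by term, using the positivity of $N$ together with the positivity of $w$. Writing the exponentials as $e^{\pm w} = \sum_{n\geq 0} (\pm 1)^n w^n/n!$, where $w^n$ denotes the $n$-fold symmetric tensor power (which is again a positive element of $\mathcal{D}'_s(M^2)$ whenever $w$ is, since it is a pointwise product of positive distributions, hence represented by a positive function on the appropriate diagonal), we obtain
\[
N(e^{w}) - N(e^{-w}) = \sum_{n\geq 0} \frac{1 - (-1)^n}{n!} N(w^n) = 2\sum_{k\geq 0} \frac{1}{(2k+1)!} N(w^{2k+1}).
\]
Thus it suffices to check that every odd power $N(w^{2k+1})$ is nonnegative. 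First I would verify that $w^m$ is a positive element of $\mathcal{D}'_s(M^2)$ for every $m$: since $w$ is positive and, by hypothesis, $e^w$ makes sense as an element of $\mathcal{D}'_s(M^2)$, the powers $w^m$ are well defined and are positive because the pointwise product of positive integral kernels is positive (this is the same observation about multiplication of distributions used in the proof of Proposition \ref{prop:algebra-well-defined}, and is legitimate here precisely because $w$ is regular enough for $e^w$ to exist). Given that $w^m \geq 0$, the normalization and positivity of $N$ give $N(w^m)\geq 0$ for all $m$, and in particular for all odd $m$. Hence the displayed sum is a sum of nonnegative terms, which proves $N(e^w)\geq N(e^{-w})$.

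The main subtlety — and the step I would write most carefully — is the interchange of $N$ with the infinite sum: one must argue that the series $\sum_n N(w^n)/n!$ converges and that $N$ may be applied termwise. This is exactly the kind of issue flagged in Remark \ref{rem:estimates}: the statement is to be read as an inequality between the corresponding partial sums, valid whenever those partial sums converge. Concretely, I would note that by hypothesis $e^{\pm w}\in\mathcal{D}'_s(M^2)$, and in the applications of this lemma in the main text $N$ will be an evaluation-type functional $\mathrm{ev}_0$ composed with a conditioning map, for which the relevant series are precisely the ones shown to be absolutely convergent in Theorem \ref{thm:convergence} and Theorem \ref{thm:convergence-V}. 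So the termwise application of $N$ is justified in every situation where the lemma is invoked, and the comparison of partial sums $\sum_{n\le 2K} N(w^n)/n! \ge \sum_{n\le 2K} (-1)^n N(w^n)/n!$ holds for each $K$ by the elementary identity above; passing to the limit then yields the claim. I would close by remarking that this is the functional-analytic counterpart of the elementary scalar fact that $\cosh t + \sinh t \ge \cosh t - \sinh t$ for $t\ge 0$, with $N$ playing the role of a (possibly unbounded) positive linear average.
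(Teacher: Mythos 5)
Your proof has the same skeleton as the paper's: both reduce the claim to the identity $N(e^{w})-N(e^{-w})=2\sum_{k\geq 0}N(w^{2k+1})/(2k+1)!$ and then argue that each odd moment is nonnegative. Where you differ is in how that nonnegativity is obtained. The paper applies Jensen's inequality to the positive normalized functional $N$ and the exponential, getting $N(e^{\lambda w})\geq e^{\lambda N(w)}$ for $|\lambda|\leq 1$ and from this $N(w^{n})\geq N(w)^{n}\geq 0$, which it then feeds back into the series to bound $S(\lambda)\geq 2\sinh(\lambda N(w))\geq 0$. You instead argue directly that each power $w^{n}$ is itself a positive element of $\mathcal{D}'_s(M^2)$, so that $N(w^{n})\geq 0$ follows from the positivity hypothesis on $N$. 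That route is legitimate and arguably cleaner, but the justification you give for it needs care: you say $w^{n}$ is positive ``since it is a pointwise product of positive distributions, hence represented by a positive function on the appropriate diagonal,'' which reads as if positivity were pointwise nonnegativity of the integral kernel. In this paper positivity of a bidistribution means $w(f,f)\geq 0$, i.e.\ positivity as a quadratic form, and the kernels to which the lemma is applied (e.g.\ logarithmic Hadamard functions) are \emph{not} pointwise nonnegative. The fact you actually need is the Schur product theorem: the pointwise (Hadamard) product of positive-semidefinite kernels is positive-semidefinite, hence $w^{n}$ is positive in the quadratic-form sense whenever $w$ is and the products are defined. With that reference supplied, your argument is complete; your discussion of the interchange of $N$ with the series is consistent with Remark \ref{rem:estimates} and is, if anything, more explicit than the paper's. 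Also note that $w^{n}$ here is the $n$-th pointwise power of the kernel on $M^2$, not a tensor power on $M^{2n}$ as your phrasing briefly suggests.
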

\begin{proof}
Consider $E(\lambda) =  N(e^{\lambda w}) $ for $-1\leq \lambda \leq 1$.
The exponential is a convex smooth function and $N$ is a positive normalized functional on symmetric distributions $\mathcal{D}'_s(M^2)$, hence by Jensen inequality  
\[
 N(e^{\lambda w}) \geq  e^{\lambda N( w)}.
\]
This inequality holds for every $|\lambda| \leq 1$ hence, since $w$ is positive, it implies that 
\[
N(w^n) \geq  N(w)^n \geq 0
\]
where ${w}^n(f\otimes g) = \langle {w}^n ,f\otimes g\rangle$ are well defined because $e^{\lambda w} \in \mathcal{D}'_s(M^2)$. 
Consider now $S(\lambda) = E(\lambda) - E(-\lambda)$ and
notice that for $\lambda >0$
\[
S(\lambda) = 2 \sum_{n\geq 0} \frac{\lambda^{2n+1}}{(2n+1)!} N(w^{2n+1})  \geq 
2 \sum_{n\geq 0} \frac{\lambda^{2n+1}}{(2n+1)!} N(w)^{2n+1} = 2\sinh(\lambda N(w)) \geq 0.
\]
Hence $S(1)\geq0$ and this implies the claim.
\end{proof}

The following proposition holds
\begin{prop}
Consider $g\geq 0$, and $w$ a real positive distribution which makes $\cdot_w$ well defined on local functionals. It holds that
\[
{\rm{ev}}_0(|V_a^g|^2) \geq {\rm{ev}}_0(V_a^g\cdot_w V_a^g)
\]
\end{prop}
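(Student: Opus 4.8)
The statement asserts that $\omega_w(|V_a^g|^2) = \mathrm{ev}_0(V_a^g \cdot_w {V_a^g}^*) \geq \mathrm{ev}_0(V_a^g \cdot_w V_a^g)$. Writing both sides explicitly as in the preamble, the left-hand side is $\int_{M^2} e^{a^2 w(x,y)} g(x)g(y)\, d\mu_x d\mu_y$ while the right-hand side is $\int_{M^2} e^{-a^2 w(x,y)} g(x)g(y)\, d\mu_x d\mu_y$ (using that $e^{i a \varphi(x)} e^{i a \varphi(y)}$ evaluated at $\varphi = 0$ gives $1$, and the sign of the exponent flips because in $V_a^g \cdot_w V_a^g$ the two factors carry $a_1 = a_2 = a$, whereas in $V_a^g \cdot_w {V_a^g}^*$ one has $a_2 = -a$). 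So the claim reduces to the inequality
\[
\int_{M^2} e^{a^2 w(x,y)} g(x)g(y)\, d\mu_x d\mu_y \;\geq\; \int_{M^2} e^{-a^2 w(x,y)} g(x)g(y)\, d\mu_x d\mu_y,
\]
which is exactly the conclusion of Lemma \ref{eq:change-sign-quadratic} once the right functional $N$ is identified.

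First I would define the linear functional $N$ on the space $\mathcal{D}'_s(M^2)$ of symmetric two-point functions by $N(v) := \int_{M^2} v(x,y)\, g(x) g(y)\, d\mu_x d\mu_y = v(g \otimes g)$, i.e. by pairing $v$ with the symmetric test function $g \otimes g$. Since $g \geq 0$, this $N$ is positive: if $v \geq 0$ then $N(v) = v(g \otimes g) \geq 0$ because $g$ is real. Moreover $N(I) = (\int g\, d\mu)^2$; if $\int g\, d\mu \neq 0$ I normalize $N$ by dividing by this constant (if $\int g\, d\mu = 0$ one can first treat $g + \epsilon \rho$ for a fixed positive bump $\rho$ and pass to the limit $\epsilon \to 0$, or simply note the inequality is homogeneous enough that the normalization is cosmetic — the proof of Lemma \ref{eq:change-sign-quadratic} only uses $N(I) = 1$ to get $N(w^n) \geq N(w)^n$ via Jensen, and one can instead work with $N/N(I)$ directly). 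Then I apply Lemma \ref{eq:change-sign-quadratic} with the positive symmetric distribution $a^2 w$ in place of $w$: since $w$ is real, positive, and symmetric, so is $a^2 w$, and $e^{a^2 w}$ is again a symmetric distribution in $\mathcal{D}'_s(M^2)$ under the standing regularity hypothesis on $w$ (it is precisely this exponential, paired against $g \otimes g$, that appears in the computation of $\mathrm{ev}_0(|V_a^g|^2)$, so its well-definedness is already built into the hypothesis that $\cdot_w$ is defined on these local functionals). Lemma \ref{eq:change-sign-quadratic} gives $N(e^{a^2 w}) \geq N(e^{-a^2 w})$, which unpacks to the displayed inequality and hence to the claim.

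The only subtle point — and the main thing to verify carefully rather than the "hard part" in any deep sense — is that all the objects $e^{\pm a^2 w}$, $w^n$ really do lie in $\mathcal{D}'_s(M^2)$ so that the pairing with $g \otimes g$ makes sense and Lemma \ref{eq:change-sign-quadratic} applies verbatim; this follows because $w$ has the form \eqref{eq:form-of-w} (a locally integrable logarithm plus a continuous function), so for $a^2 \hbar/(4\pi) < 1$ the powers $w^n$ and the exponentials are locally integrable functions on $M^2$ (this is exactly the Cauchy-determinant / local-integrability estimate already invoked in Proposition \ref{prop:FVinFmuc} and Lemma \ref{le:Cauchy-Determinant}), hence bona fide distributions of order zero, and they are symmetric because $w$ is. With that observed, the proof is a one-line application of Lemma \ref{eq:change-sign-quadratic}. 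I would conclude by restating the chain $\mathrm{ev}_0(V_a^g \cdot_w {V_a^g}^*) = N(e^{a^2 w}) \cdot (\text{const}) \geq N(e^{-a^2 w}) \cdot (\text{const}) = \mathrm{ev}_0(V_a^g \cdot_w V_a^g)$, which is the assertion.
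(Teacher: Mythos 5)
Your proposal is correct and follows essentially the same route as the paper: both reduce the claim to comparing $\int e^{a^2 w}g\otimes g$ with $\int e^{-a^2 w}g\otimes g$, define the positive normalized functional $N$ by pairing with $g\otimes g/\|g\|_1^2$, and invoke Lemma \ref{eq:change-sign-quadratic} applied to $a^2 w$. Your additional remarks on the degenerate case $\int g\,d\mu=0$ and on the local integrability of $e^{\pm a^2 w}$ are sensible precisions of the same argument.
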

\begin{proof}
Let us start observing that 
\[
E(\lambda) := \int_{M^2} e^{\lambda a^2w(x,y)} g(x)g(y) d\mu_xd\mu_y
\]
is positive for every $\lambda$ because the integrand is positive.
Furthermore $E(1)={\rm{ev}}_0(|V_a^g|^2)$ and $E(-1)={\rm{ev}}_0(V_a^g\cdot_w V_a^g)$, moreover
\[
E(\lambda) = \|g\|_1^2 N(e^{\lambda a^2w})
\]
where $N(A):=\int_{M^2} A(x,y) g_n(x)g_n(y) d\mu_xd\mu_y$ and where $g_n(x)=g(x)/\|g\|_1$.
Notice that $N$ is a positive normalized functional on symmetric distributions in $\mathcal{D}_s(M^2)$, hence Lemma \ref{eq:change-sign-quadratic} implies that   
\[
E(1) = \|g\|_1^2 N(e^{ a^2w}) \geq \|g\|_1^2 N(e^{-a^2w}) = E(-1)
\]
and thus we have proven the claim.
\end{proof}

Using the symmetry of $w$, the fact that $w$ is real and $g\geq0$ we have
\begin{equation}\label{eq:positiveprod}
{\rm{ev}}_0( {V_{a_1}^g\cdot_{w}} \dots \cdot_w V_{a_n}^g) = \int_{M^n} 
e^{-\sum_{i<j}a_ia_j w(x_i,x_j)} g(x_1)\dots g(x_n)   d\mu_{x_1}\dots d\mu_{x_n} \geq 0.
\end{equation}
Suppose now that $-a_{\hat{i}}a_{\hat{j}}>0$ for a fixed copy of indices $\hat{i},\hat{j}$. Then we may obtain a lower bound for ${\rm{ev}}_0( {V^{a_1}\cdot_{w}} \dots \cdot_w V^{a_n})$
 changing the sign of $-a_{\hat{i}}a_{\hat{j}} w(x_{\hat{i}},x_{\hat{j}})$ in the argument of the exponential of the right hand side of \eqref{eq:positiveprod}.
If $w$ is regular enough, we have actually that 
\[
\hat{N}(A) := \int_{M^n} A(x_{\hat{i}},x_{\hat{j}})  e^{-\sum_{i<j,{(i,j)\neq (\hat{i},\hat{j})}}a_ia_j w(x_i,x_j)} g(x_1)\dots g(x_n)   d\mu_{x_1}\dots d\mu_{x_n}
\]
gives origin to a positive normalized functional $N(A)=\hat{N}(A)/\hat{N}(1)$, hence Lemma \ref{eq:change-sign-quadratic} applied to this $N$ furnishes the desired bound.

Consider
\[
Z_n(w,g) := {\rm{ev}}_0((V^g_{+}\cdot_w V^g_{-})^{\cdot_w n} )
\]
where the product in the $n-$th power of
$V^g_{+}\cdot_w V^g_{-}$ is taken with respect to $\cdot_w$.  Notice that $Z_n$ can be understood as the canonical partition function of a system formed by $n$ positively and $n$ negatively charged particles, 
which interacts with the two-body potential $w$ each of which is confined by an external potential of the form $\log(g)$.   
Lemma 2.1 in \cite{Froe76} implies the following
\begin{prop}\label{pr:inverse-conditioning} (Inverse Conditioning)
Consider $w_1$ and $w_2$ symmetric positive two-point functions. 
If $w_1(f,f)\geq w_2(f,f)$ for any real $f$ such that $\int_M f d\mu = 0$, and if 
\[
\sup_{x} \{w_1(x,x)-w_2(x,x)\} \leq K
\]
it holds that 
\[
Z_n(w_1,g) \leq e^{ n a^2 K} Z_{n}(w_2,g).
\]
\end{prop}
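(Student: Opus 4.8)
The plan is to reduce the inequality to a pointwise estimate on the integrands of the two partition functions, following the method of Lemma~2.1 in \cite{Froe76}. Using \eqref{eq:positiveprod} with the charge assignment $a_i=+a$ for $1\le i\le n$ and $a_i=-a$ for $n<i\le 2n$ (so that the configuration is \emph{neutral}, $\sum_{i=1}^{2n}a_i=0$), one has
\[
Z_n(w,g)=\int_{M^{2n}}\exp\!\Big(-\!\!\sum_{1\le i<j\le 2n}\!\! a_ia_j\,w(x_i,x_j)\Big)\prod_{k=1}^{2n}g(x_k)\,d\mu_{x_k},
\]
and these integrals are finite since, exactly as in the proof of Proposition~\ref{prop:FVinFmuc}, the exponential is locally integrable. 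Writing $v:=w_1-w_2$, the integrand for $w_1$ equals that for $w_2$ times $\exp\!\big(-\sum_{i<j}a_ia_j\,v(x_i,x_j)\big)$; because $g\ge 0$, it therefore suffices to prove the pointwise bound $\exp\!\big(-\sum_{i<j}a_ia_j\,v(x_i,x_j)\big)\le e^{na^2K}$ for almost every configuration of distinct points.

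To get this, fix such a configuration $x_1,\dots,x_{2n}$ and let $\rho_\epsilon:=\sum_{i=1}^{2n}a_i\,\phi_\epsilon(\cdot-x_i)$ with $\phi_\epsilon\ge 0$, $\int_M\phi_\epsilon\,d\mu=1$, an approximate identity. Then $\rho_\epsilon$ is a real test function with $\int_M\rho_\epsilon\,d\mu=\sum_i a_i=0$, so the hypothesis $w_1(f,f)\ge w_2(f,f)$ for mean-zero real $f$ gives $v(\rho_\epsilon,\rho_\epsilon)\ge 0$ for all $\epsilon$. Letting $\epsilon\to0$ — using that $v$ restricts to a locally bounded function on and off the diagonal, which is the regularity implicit in $\sup_x\{w_1(x,x)-w_2(x,x)\}\le K$ and is automatic in our applications, where the coincident-point singularities of $w_1$ and $w_2$ cancel in $v$ — one obtains
\[
0\le\sum_{i,j=1}^{2n}a_ia_j\,v(x_i,x_j)=2\sum_{i<j}a_ia_j\,v(x_i,x_j)+a^2\sum_{i=1}^{2n}v(x_i,x_i).
\]
Since $\sum_{i=1}^{2n}v(x_i,x_i)\le 2n\sup_x v(x,x)\le 2nK$, this rearranges to $-\sum_{i<j}a_ia_j\,v(x_i,x_j)\le\tfrac{a^2}{2}\sum_i v(x_i,x_i)\le na^2K$, which is the desired pointwise bound. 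Substituting it under the integral sign and using $g\ge0$ then yields $Z_n(w_1,g)\le e^{na^2K}Z_n(w_2,g)$.

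The step I expect to require the most care is the passage $\epsilon\to0$ in $v(\rho_\epsilon,\rho_\epsilon)$, i.e.\ the claim that the diagonal contributions $a_i^2\,v(\phi_\epsilon(\cdot-x_i),\phi_\epsilon(\cdot-x_i))$ tend to $a^2 v(x_i,x_i)$ rather than blowing up; this is exactly where one uses that the (logarithmic) coincident-point singularities of $w_1$ and $w_2$ cancel in $v$, so that $v$ is a bounded function — this is the meaning of ``$w$ regular enough'' in the discussion preceding the statement. If one prefers, one can mollify all the two-point functions simultaneously, run the whole argument with the smeared objects, and remove the mollification at the end, or simply cite \cite{Froe76}. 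Everything else (interchange of limit and integral, finiteness) is controlled by the same estimates used for Theorem~\ref{thm:convergence} and is routine.
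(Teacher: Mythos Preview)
Your proof is correct and follows essentially the same route as the paper's: both complete the square to write the interaction energy as $-\tfrac12\,v(\Sigma,\Sigma)+\tfrac{a^2}{2}\sum_i v(x_i,x_i)$ for the neutral charge density $\Sigma=\sum_i a_i\delta_{x_i}$, use the mean-zero positivity hypothesis to drop the first term, and bound the diagonal piece by $na^2K$. The only cosmetic difference is that the paper works directly with the formal $\Sigma$ as a sum of Dirac deltas, whereas you mollify to $\rho_\epsilon$ and pass to the limit; your version is the rigorous reading of the paper's computation.
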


\begin{proof}
The proof is exactly the same as the proof of Lemma 2.1 in \cite{Froe76}.
We need to compute
\[
Z_n(w_1,g) = \int dX^ndY^n g(X^n) g(Y^n) e^{-U_{w_1}(X^n,Y^n)}
\]
where $X^n=(x_1,\dots, x_n)$, $g(X^n):=\prod_{j=1}^n g(x_j)$ and $dX^n=dx_1\dots dx_n$. Furthermore, in terms of the two-point function $W$
\[
- U_W(X^n,Y^n) :=  - a^2 \sum_{1\leq i<j\leq n} \left(W(x_i,x_j)+W(y_i,y_j)\right)  + a^2 \sum_{i,j=1}^n W(x_i,y_j)
\]
Consider
\[
\Sigma(x) = \sum_{j=1}^n a (\delta (x-x_j)-\delta(x-y_j))
\]
Notice that 
\[
- U_W =  -\frac{1}{2} \int dx dy \Sigma(x) W(x,y) \Sigma(y) + \frac{a^2}{2} \sum_j \left(W(x_j,x_j)+W(y_j,y_j)\right)  
\]
hence, since $0\leq g \leq 1$
\begin{align*}
Z_n(w_1,g) &= \int dX^ndY^n g(X^n) g(Y^n) e^{-U_W(X^n,Y^n)} e^{-U_{w_1-w_2}(X^n,Y^n)}\\
&= \int dX^ndY^n g(X^n) g(Y^n) e^{-U_{w_2}(X^n,Y^n)}  e^{-\frac{1}{2} \int dx dy \Sigma(x) (w_1-w_2)(x,y) \Sigma(y)}  
e^{\frac{a^2}{2} \sum_j \left((w_1-w_2)(x_j,x_j)+(w_1-w_2)(y_j,y_j)\right) }
\\
&\leq 
\int dX^ndY^n g(X^n) g(Y^n) e^{-U_{w_2}(X^n,Y^n)}  e^{-\frac{1}{2} \int dx dy \Sigma(x) (w_1-w_2)(x,y) \Sigma(y)}  
e^{{a^2} n \;\text{supp}(w_1-w_2)(x,x) }
\\
&\leq 
Z(w_2,g) e^{n{a^2} K}
\end{align*}

\end{proof}

Consider
\begin{equation}\label{eq:Z2n2q}
Z_{2n}^{2q}(w,g) :=   {\rm{ev}}_0((V^g_{+})^{\cdot_w (n+q)}\cdot_w(V^g_{-})^{\cdot_w (n-q)})
\end{equation}
notice that $Z^0_{2n}(w,g)  = Z_n(w,g) $.
Equation \eqref{eq:positiveprod} implies that $Z_{2n}^{2q}(w,g) \geq 0$, hence, since
the algebraic product we are considering is commutative and the state $\omega_w$ is positive, we have  the following proposition

\begin{prop}
It holds that 
\begin{equation}\label{eq:Zinequality}
0 \leq   Z_{2n}^{2q}(w,g)  \leq  Z_n(w,g) 
\end{equation}
\end{prop}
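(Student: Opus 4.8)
The plan is to read off the two inequalities of \eqref{eq:Zinequality} from, respectively, the explicit integral formula \eqref{eq:positiveprod} and the Cauchy--Schwarz inequality for the positive normalized functional ${\rm{ev}}_0$ on the commutative $*$-algebra generated by the vertex operators.

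For the lower bound $0\le Z_{2n}^{2q}(w,g)$ I would expand the definition \eqref{eq:Z2n2q} using \eqref{eq:positiveprod} with $a_1=\dots=a_{n+q}=+1$ and $a_{n+q+1}=\dots=a_{2n}=-1$, obtaining
\[
Z_{2n}^{2q}(w,g)=\int_{M^{2n}} e^{-\sum_{1\le i<j\le 2n} a_i a_j\, w(x_i,x_j)}\,\prod_{k=1}^{2n} g(x_k)\, d\mu_{x_k};
\]
since $\cdot_w$ is well defined on the algebra this is a finite number, and because $w$ is a real distribution the exponent is real while $g\ge 0$, so the integrand is nonnegative and hence so is $Z_{2n}^{2q}(w,g)$.

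For the upper bound $Z_{2n}^{2q}(w,g)\le Z_n(w,g)$ I would assume first $0\le q\le n$ (the case $q<0$ is identical after exchanging the roles of $V_+^g$ and $V_-^g$, which is legitimate because $g$ is real). Introduce the two elements of the algebra
\[
A:=(V_+^g)^{\cdot_w n},\qquad B:=(V_+^g)^{\cdot_w q}\cdot_w(V_-^g)^{\cdot_w(n-q)},
\]
and use commutativity of $\cdot_w$ together with $(V_+^g)^*=V_-^g$ to rearrange factors, getting
\[
A\cdot_w B=(V_+^g)^{\cdot_w(n+q)}\cdot_w(V_-^g)^{\cdot_w(n-q)},\qquad A^*\cdot_w A=(V_+^g\cdot_w V_-^g)^{\cdot_w n}=B^*\cdot_w B.
\]
Hence ${\rm{ev}}_0(A\cdot_w B)=Z_{2n}^{2q}(w,g)$ and ${\rm{ev}}_0(A^*\cdot_w A)={\rm{ev}}_0(B^*\cdot_w B)=Z_n(w,g)$, and since ${\rm{ev}}_0$ is a positive normalized functional on this $*$-algebra (cf.\ Remark \ref{rem:how-use-it} and the proof of Theorem \ref{thm:convergence}) the Cauchy--Schwarz inequality yields
\[
\bigl|Z_{2n}^{2q}(w,g)\bigr|=\bigl|{\rm{ev}}_0(A\cdot_w B)\bigr|\le\sqrt{{\rm{ev}}_0(A^*\cdot_w A)}\,\sqrt{{\rm{ev}}_0(B^*\cdot_w B)}=Z_n(w,g).
\]
Combining this with the positivity of $Z_{2n}^{2q}(w,g)$ already established then gives \eqref{eq:Zinequality}.

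There is no genuine obstacle here: the argument is bookkeeping with commutativity plus one application of Cauchy--Schwarz. The only points that deserve a line of justification are that $A$, $B$ and their $\cdot_w$-products are honest elements of $\mathcal{F}^V$ (Proposition \ref{prop:algebra-well-defined}) and that ${\rm{ev}}_0$ is positive on this algebra so that Cauchy--Schwarz is available; the factor-collecting identities for $A\cdot_w B$, $A^*\cdot_w A$ and $B^*\cdot_w B$ rely only on the commutativity of $\cdot_w$ and on $g$ being real.
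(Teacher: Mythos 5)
Your argument is correct, and for the positivity $0\le Z_{2n}^{2q}(w,g)$ it coincides with the paper's (write out the integrand via \eqref{eq:positiveprod} and note it is nonnegative since $w$ is real and $g\ge 0$). For the upper bound, however, you take a genuinely different route. The paper compares $Z_{2n}^{2q}(w,g)$ with $Z_n(w,g)=Z_{2n}^{0}(w,g)$ by counting the cross terms carrying the sign $+w$ in the exponent ($n^2-q^2$ versus $n^2$) and then applying Lemma \ref{eq:change-sign-quadratic} (the statement $N(e^{w})\ge N(e^{-w})$ for a positive normalized functional $N$ on symmetric two-point functions) $q^2$ times to flip the sign of the missing attractive pairs, exactly as in the discussion following \eqref{eq:positiveprod}. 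You instead factor $(V_+^g)^{\cdot_w(n+q)}\cdot_w(V_-^g)^{\cdot_w(n-q)}=A\cdot_w B$ with $A=(V_+^g)^{\cdot_w n}$ and $B=(V_+^g)^{\cdot_w q}\cdot_w(V_-^g)^{\cdot_w(n-q)}$, observe via commutativity and $(V_+^g)^*=V_-^g$ that ${\rm{ev}}_0(A^*\cdot_w A)={\rm{ev}}_0(B^*\cdot_w B)=Z_n(w,g)$, and conclude by Cauchy--Schwarz for the positive functional ${\rm{ev}}_0$ on $(\mathcal{F}^V,\cdot_w)$ --- the same inequality the paper itself invokes in the proof of Theorem \ref{thm:convergence}. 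Your route is shorter and more self-contained: it needs only the positivity of ${\rm{ev}}_0$ (which for symmetric positive $w$ is used throughout the paper) rather than the Jensen-type sign-flip lemma and the combinatorial bookkeeping of which pairs change sign. What the paper's argument buys in exchange is that it is an instance of the general conditioning/inverse-conditioning machinery (monotonicity under changing the sign of an attractive pair interaction), which is reused elsewhere; but for this particular proposition both proofs are complete, and yours is valid as stated, the only hypotheses needed being the ones you flag (that $A$, $B$ lie in $\mathcal{F}^V$ by Proposition \ref{prop:algebra-well-defined} and that ${\rm{ev}}_0$ is positive there).
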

\begin{proof}
By direct inspection, 
\[
Z_{2n}^{2q}(w,g)  = \int_{M^{2n}} 
\exp{ \left(-\!\!\!\sum_{i<j\leq n+q}\!\!\! w(x_i,x_j) -\!\!\!\sum_{i<j\leq n-q}\!\!\! w(y_i,y_j) 
+\!\!\!\!\!\sum_{i\leq n+q; j\leq n-q}\!\!\!\!\!\!\!w(x_i,y_j)  \right)}
\!\!\prod_{i\leq n+q}\!\!g(x_i)\!\!\prod_{j\leq n-q}\!\! g(y_{j}) d\mu_{x_i}d\mu_{y_j} 
\]
hence  $Z_{2n}^{2q}(w,g)\geq0$ because $g\geq0$.
Furthermore, 
$Z_n(w,g) =Z^{0}_{2n}(w,g) $
hence the number of $w$  with positive signs in the exponential in $Z_n$ is $n^2$ which is larger than in $Z_{2n}^{2q}(w,g) $ where it is $n^2-q^2$. Hence Lemma \ref{eq:change-sign-quadratic} applied $q^2$ times as in the discussion after \eqref{eq:positiveprod} implies the claim.
\end{proof}
 
Furthermore, we observe that 
\begin{equation}\label{eq:Vpotential}
V^g:=\cos(\varphi)(g):=\int_M \cos(\varphi(x)) g(x)d\mu_x = \frac{1}{2}\left( V^g_{+}+V^g_- \right)
\end{equation}
hence, expanding the products and comparing with \eqref{eq:Z2n2q}
\begin{equation}\label{eq:sum}
\frac{1}{(2n)!}{\rm{ev}}_0(\cos(\varphi)(g)^{\cdot_w 2 n})=\frac{1}{2^{2n}}\sum_{q=0}^{2n} \frac{1}{q! (2n-q)!}Z^{2q-2n}_{2n}
\end{equation}
hence, as also discussed in \cite{Froe76}, we have
\begin{equation}\label{eq:Zineq-n}
\frac{1}{2^{2n}}\frac{1}{(n!)^2}Z_n(w,g)  \leq  \frac{1}{(2n)!}{\rm{ev}}_0(\cos(\varphi)(g)^{\cdot_w 2 n}) 
\leq  \frac{1}{(2n!)}Z_n(w,g) 
< \frac{1}{(n!)^2}Z_n(w,g).
\end{equation}
where in the first inequality we used the fact that all the elements in the sum \eqref{eq:sum} are positive hence we considered only the element $q=n$ of the sum
while in the second inequality, we used \eqref{eq:Zinequality} and the binomial formula for $(1+1)^{2n}$.
Now, we define 
\[
\Xi(w,zg):= \sum_{n=0}^\infty \frac{z^n}{n!} {\rm{ev}}_0(\cos(\varphi)(g)^{\cdot_w n})  = {\rm{ev}}_0(\exp_w(zV)),
\]
and
\[
\Xi_{\text{cosh}}(w,zg):= \sum_{n=0}^\infty \frac{z^{2n}}{(2n)!} {\rm{ev}}_0(\cos(\varphi)(g)^{\cdot_w 2n}) = {\rm{ev}}_0(\cosh_w(zV^g))
\]
where $2\cosh_w(V^g) = \exp_w(V^g)+\exp_w(-V^g)$, as well as
\[
Z(w,zg) := \sum_{n=0}^\infty \frac{z^{2n}}{(n!)^2} Z_n(w,g),
\]
which correspond to the gran canonical partition function. 
By direct inspection and using \eqref{eq:Zineq-n}, we then have
\[
|\Xi(w,zg)| \leq  \Xi(w,|z|g) \leq 2 \Xi_{\text{cosh}}(w,zg) \leq 2 \sum_n \frac{z^{2n}}{(2n)! } Z_n(w,zg)
 \leq 2 \sum_n \frac{z^{2n}}{(n!)^2 } Z_n(w,g) = 2Z(w,zg)
\]
and again by \eqref{eq:Zineq-n}
\[
Z(w,zg) =\sum_n\frac{z^{2n}}{(n!)^2 } Z_n(w,g)
\leq \sum_n \frac{(2z)^{2n}}{(2n)! }\omega_w(\cos(\varphi)(g)^{\cdot_w 2 n}) 
=  \Xi_{\text{cosh}}(w,2z g) 
\leq 2 \Xi(w,|2z|g).
\]
Let us summarize this discussion in the following theorem
\begin{thm}\label{th:partitions} Let $g$ be a positive function, then it holds
\[
\Xi_{\normalfont\text{cosh}}(w,g) \leq Z(w,g) \leq \Xi_{\normalfont\text{cosh}}(w,2g)
\]
and
\[
\Xi(w,g) \leq 2 \Xi_{\normalfont\text{cosh}}(w,g) \leq 2\Xi(w,g).
\]
\end{thm}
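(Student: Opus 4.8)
The plan is to read the two chains of inequalities off term by term from the expansions already set up in this section. Recall that replacing the cutoff $g$ by $zg$ multiplies each vertex operator $V^g_{\pm}$ by $z$, so that $Z_n(w,zg)=z^{2n}Z_n(w,g)$ and ${\rm{ev}}_0(\cos(\varphi)(zg)^{\cdot_w m})=z^m\,{\rm{ev}}_0(\cos(\varphi)(g)^{\cdot_w m})$; in particular $\Xi(w,\cdot)$, $\Xi_{\text{cosh}}(w,\cdot)$ and $Z(w,\cdot)$ are generating series whose coefficients are nonnegative (by \eqref{eq:positiveprod} together with the binomial expansion \eqref{eq:Vpotential}) and are compared pairwise by the double inequality \eqref{eq:Zineq-n}. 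First I would establish the outer chain $\Xi_{\text{cosh}}(w,g)\le Z(w,g)\le\Xi_{\text{cosh}}(w,2g)$ purely from \eqref{eq:Zineq-n}.

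For the left inequality, the $n$-th coefficient of $\Xi_{\text{cosh}}(w,g)$ is $\tfrac1{(2n)!}{\rm{ev}}_0(\cos(\varphi)(g)^{\cdot_w 2n})$, which by the last estimate in \eqref{eq:Zineq-n} is bounded by $\tfrac1{(n!)^2}Z_n(w,g)$, the $n$-th coefficient of $Z(w,g)$; summing over $n$ gives $\Xi_{\text{cosh}}(w,g)\le Z(w,g)$. For the right inequality, the first estimate in \eqref{eq:Zineq-n} reads $\tfrac1{(n!)^2}Z_n(w,g)=\tfrac{2^{2n}}{2^{2n}(n!)^2}Z_n(w,g)\le\tfrac{2^{2n}}{(2n)!}{\rm{ev}}_0(\cos(\varphi)(g)^{\cdot_w 2n})$, and the right-hand side is exactly the $n$-th coefficient of $\Xi_{\text{cosh}}(w,2g)$ (using $\cos(\varphi)(2g)=2\cos(\varphi)(g)$); summing over $n$ gives $Z(w,g)\le\Xi_{\text{cosh}}(w,2g)$.

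For the inner chain, the inequality $2\Xi_{\text{cosh}}(w,g)\le 2\Xi(w,g)$ is immediate: $\Xi(w,g)=\sum_m\tfrac1{m!}{\rm{ev}}_0(\cos(\varphi)(g)^{\cdot_w m})$ has all coefficients nonnegative and its even-indexed subseries is $\Xi_{\text{cosh}}(w,g)$, so $\Xi_{\text{cosh}}(w,g)\le\Xi(w,g)$. The remaining inequality $\Xi(w,g)\le 2\Xi_{\text{cosh}}(w,g)$ is the only substantial point: writing $2\cosh_w(A)=\exp_w(A)+\exp_w(-A)$ one has $2\Xi_{\text{cosh}}(w,g)={\rm{ev}}_0(\exp_w(\cos(\varphi)(g)))+{\rm{ev}}_0(\exp_w(-\cos(\varphi)(g)))=\Xi(w,g)+{\rm{ev}}_0(\exp_w(-\cos(\varphi)(g)))$, so it suffices to prove ${\rm{ev}}_0(\exp_w(-\cos(\varphi)(g)))\ge 0$. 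Here I would invoke Jensen's inequality in the form of Proposition \ref{pr:jensen2}(3) (see Remark \ref{rem:how-use-it}) applied to the self-adjoint functional $-\cos(\varphi)(g)$ and the convex function $\exp$, realized as the pointwise limit of its even Taylor truncations $f_n(x)=\sum_{l=0}^{2n}x^l/l!$: since ${\rm{ev}}_0(f_{n,w}(\cos(\varphi)(g)))$ converges (by Theorem \ref{thm:convergence}/\ref{thm:convergence-V} and the discussion of this appendix), the alternating series ${\rm{ev}}_0(f_{n,w}(-\cos(\varphi)(g)))=\sum_{l=0}^{2n}\tfrac{(-1)^l}{l!}{\rm{ev}}_0(\cos(\varphi)(g)^{\cdot_w l})$ converges absolutely, the hypotheses of Proposition \ref{pr:jensen2}(3) hold, and we obtain ${\rm{ev}}_0(\exp_w(-\cos(\varphi)(g)))\ge\exp\!\big({\rm{ev}}_0(-\cos(\varphi)(g))\big)>0$.

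The main obstacle, and the place where care is needed, is that $\cos(\varphi)(g)$ is not a polynomial regular functional, so the Jensen step and the interchange of limits must be carried out through the approximation of $\cos(\varphi)(g)$ by polynomial local functionals in the topology of $\mathcal{F}_{\mu c}$, exactly as in Remark \ref{rem:how-use-it} and Proposition \ref{pr:conditioning}; for $g\ge 0$ this is legitimate and it is what makes both the convexity/positivity argument above and the absolute convergence of all the series appearing in the statement rigorous rather than merely formal.
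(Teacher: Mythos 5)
Your proof is correct, and for the outer chain $\Xi_{\mathrm{cosh}}(w,g)\leq Z(w,g)\leq \Xi_{\mathrm{cosh}}(w,2g)$ as well as for the easy half $\Xi_{\mathrm{cosh}}\leq\Xi$ it is exactly the paper's argument: a coefficient-by-coefficient comparison using the two-sided estimate \eqref{eq:Zineq-n}, which in turn comes from keeping only the neutral term $q=n$ in \eqref{eq:sum} on one side and from \eqref{eq:Zinequality} plus the binomial identity on the other. Where you genuinely add something is the remaining inequality $\Xi(w,g)\leq 2\Xi_{\mathrm{cosh}}(w,g)$: the paper dispatches it with the words ``by direct inspection,'' which in the Euclidean setting of Fr\"ohlich would amount to the pointwise bound $e^{V}\leq 2\cosh V$ under a positive measure, but in the present deformed-product setting the required statement is really $\mathrm{ev}_0(\exp_w(-V^g))\geq 0$, which is not a termwise positivity (the odd coefficients enter with alternating signs) and does need an argument. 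Your reduction $2\Xi_{\mathrm{cosh}}=\Xi+\mathrm{ev}_0(\exp_w(-V^g))$ followed by Jensen's inequality (Proposition \ref{pr:jensen2}(3) via the even Taylor truncations, exactly as in Remark \ref{rem:how-use-it} but applied to the self-adjoint element $-V^g$, whose approximating series converges absolutely because the unsigned series does) supplies precisely this missing positivity and yields the sharp constant $2$; note that a naive Cauchy--Schwarz/AM--GM bound on the odd coefficients only gives the constant $1+\sqrt{2}$, so the Jensen route is not merely cosmetic. In short: same skeleton as the paper, with the one step the paper leaves implicit carried out correctly and by the tool the paper itself provides in the appendix.
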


Combining Proposition \ref{pr:inverse-conditioning} with Theorem \ref{th:partitions} we have that 
\begin{prop}\label{pr:inverse-conditioning2} (Inverse Conditioning)
Consider $g\geq0$ and let $w_1$ and $w_0$ be two positive, symmetric two-point functions, such that 
\[
w_1(f,f)\geq w_0(f,f)
\] 
for every real valued function $f$. Assume furthermore that 
\[
\sup_x \{w_1(x,x)-w_0(x,x) \}\leq K
\]
then
\[
\Xi(w_1,g) \leq 2 \Xi(w_0,2 e^{a^2 \frac{K}{2}}g)
\]
or using another notation
\[
\omega_{w_1}(\exp_{w_1}(V^g))\leq 2 \omega_{w_0}(\exp_{w_0}(2e^{a^2 \frac{K}{2}}V^g))
\]
\end{prop}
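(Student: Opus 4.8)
The plan is to route the comparison through the grand canonical partition functions: pass from the generating functionals $\Xi$ to the partition functions $Z$ by Theorem \ref{th:partitions}, estimate the individual canonical partition functions $Z_n$ termwise by Proposition \ref{pr:inverse-conditioning}, re-sum, and translate back into $\Xi$.

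First I would record the elementary scaling identity that makes the bookkeeping work: by its definition (see \eqref{eq:positiveprod}), $Z_n(w,g)$ is a $2n$-fold integral in which the cutoff $g$ enters exactly $2n$ times, so $Z_n(w,zg)=z^{2n}Z_n(w,g)$ for $z\geq 0$, and hence $Z(w,zg)=\sum_{n\geq 0}\frac{z^{2n}}{(n!)^2}Z_n(w,g)$. Next, the hypotheses $w_1(f,f)\geq w_0(f,f)$ for all real $f$ and $\sup_x\{w_1(x,x)-w_0(x,x)\}\leq K$ imply in particular the hypotheses of Proposition \ref{pr:inverse-conditioning} (there the inequality is only demanded on mean-zero $f$), so that for every $n$
\[
Z_n(w_1,g)\leq e^{n a^2 K}Z_n(w_0,g)=Z_n\!\left(w_0,e^{a^2 K/2}g\right),
\]
the factor $e^{n a^2 K}$ being absorbed into the rescaling $g\mapsto e^{a^2 K/2}g$ precisely because $g$ occurs $2n$ times in $Z_n$. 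Summing against $\frac{1}{(n!)^2}$ gives $Z(w_1,g)\leq Z(w_0,e^{a^2 K/2}g)$; all terms are nonnegative by \eqref{eq:positiveprod} and \eqref{eq:Zineq-n}, so this manipulation is legitimate in $[0,\infty]$ and transfers finiteness from right to left.

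It then remains to assemble the chain of inequalities. Using Theorem \ref{th:partitions} in the form $\Xi(w,g)\leq 2\,\Xi_{\text{cosh}}(w,g)\leq 2\,Z(w,g)$, then the bound just obtained, then Theorem \ref{th:partitions} once more in the form $Z(w,g)\leq \Xi_{\text{cosh}}(w,2g)\leq \Xi(w,2g)$, one obtains
\[
\Xi(w_1,g)\leq 2\,Z(w_1,g)\leq 2\,Z\!\left(w_0,e^{a^2 K/2}g\right)\leq 2\,\Xi_{\text{cosh}}\!\left(w_0,2e^{a^2 K/2}g\right)\leq 2\,\Xi\!\left(w_0,2e^{a^2 K/2}g\right),
\]
which is exactly the asserted estimate; the equivalent formulation $\omega_{w_1}(\exp_{w_1}(V^g))\leq 2\,\omega_{w_0}(\exp_{w_0}(2e^{a^2 K/2}V^g))$ follows by unfolding the definition $\Xi(w,zg)={\rm{ev}}_0(\exp_w(zV^g))$.

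I do not expect a genuine obstacle here, since the substantive analytic content — the sign-flip/Jensen argument behind Proposition \ref{pr:inverse-conditioning} and the three-sided comparison of $Z$ with $\Xi_{\text{cosh}}$ and $\Xi$ in Theorem \ref{th:partitions} — has already been established. The only points deserving care are the scaling identity $Z_n(w,zg)=z^{2n}Z_n(w,g)$ (which explains why the rescaling is by $e^{a^2 K/2}$ rather than $e^{a^2 K}$) and the bookkeeping of the two overall factors of $2$, one from squeezing $\Xi$ between $Z$'s and the other from the factor of $2$ in the argument in $Z(w,g)\leq\Xi_{\text{cosh}}(w,2g)$; combining them gives the constant $2$ and the argument $2e^{a^2K/2}g$ in the final bound.
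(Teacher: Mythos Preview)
Your proposal is correct and is precisely the argument the paper has in mind: the proposition is stated immediately after the sentence ``Combining Proposition \ref{pr:inverse-conditioning} with Theorem \ref{th:partitions} we have that,'' and your chain $\Xi(w_1,g)\leq 2\,\Xi_{\text{cosh}}(w_1,g)\leq 2\,Z(w_1,g)\leq 2\,Z(w_0,e^{a^2K/2}g)\leq 2\,\Xi_{\text{cosh}}(w_0,2e^{a^2K/2}g)\leq 2\,\Xi(w_0,2e^{a^2K/2}g)$ spells out exactly that combination, including the correct bookkeeping of the scaling $Z_n(w,zg)=z^{2n}Z_n(w,g)$.
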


\section{Cauchy-Schwarz and H\"older inequalities}

\subsection{Cauchy-Schwarz inequality}

Notice that $\exp_{\hbar H_m}\left(\frac{\lambda}{\hbar}V\right)$ is the exponential computed in the commutative $*-$algebra
\[
\mathcal{A}_{H_m}=(\mathcal{F}^V,\cdot_{H_m}).
\]
Furthermore, the functional ${\rm{ev}}_0(F):=F(0)$ is a linear normalized positive functional on $\mathcal{A}_{H_m}$. 
In view of the positivity of ${\rm{ev}}_0$ we have in $\mathcal{A}_{w}$ that Cauchy-Schwarz inequality holds
\[
|{\rm{ev}}_0(A \cdot_w B)| \leq  \sqrt{|{\rm{ev}}_0(A^* \cdot_w A)|}\sqrt{|{\rm{ev}}_0(B^* \cdot_w B)|}
\]
hence, with $B=1$ we obtain
\begin{equation}\label{eq:C*inequality}
|{\rm{ev}}_0(A)|^2 \leq |{\rm{ev}}_0(A^* \cdot_w A)|
\end{equation}
and
\begin{align*}
|{\rm{ev}}_0(A+B)|^2 &\leq {\rm{ev}}_0((A^*+B^*) \cdot_w (A+B))
\\
&=
{\rm{ev}}_0(A^*\cdot_w A)+{\rm{ev}}_0(A^*\cdot_w B)
+{\rm{ev}}_0(B^*\cdot_w A)
+
{\rm{ev}}_0(B^*\cdot_w B)
\\
&\leq
(\sqrt{{\rm{ev}}_0(A^*\cdot_w A)}+\sqrt{{\rm{ev}}_0(B^*\cdot_w B)})^2
\end{align*}
if now $B=A_x$ is obtained translating $A$ and in the direction $x$ and if $w$ is invariant under spacetime translations, we have that 
$|{\rm{ev}}_0(A)|^2=|{\rm{ev}}_0(A_x)|^2$ and $|{\rm{ev}}_0(A\cdot_w B)|^2=|{\rm{ev}}_0(A_x\cdot_w B_x)|^2$ hence
\begin{align*}
|{\rm{ev}}_0(A_x+A_y)|^2 
&
\leq
4|{\rm{ev}}_0(A^*\cdot_w A)|
\end{align*}
Iterating this procedure we have that 
\begin{align*}
|{\rm{ev}}_0(\sum_{j=1}^n A_{x_j})|^2 
&
\leq
n^2|{\rm{ev}}_0(A^*\cdot_w A)|
\end{align*}

\subsection{H\"older inequality}
 
Let us start considering $\mathcal{A}_e$ which is a commutative $*-$algebra of (not necessarily smooth) functionals over smooth field configurations  constructed with the pointwise product and with the involution defined by the complex conjugation. Consider also a generic state $\tilde\omega$ on $\mathcal{A}_e$. 
We say that an element of $A
\in\mathcal{A}_e$ is positive if  
$A=C C^*$ for some $C\in \mathcal{A}_e$. 
If $A$ and $B$ are positive elements of $\mathcal{A}$
H\"older inequality holds, namely
\[
\tilde\omega(A B)\leq \tilde\omega(A^p)^{\frac{1}{p}}\tilde\omega(B^q)^{\frac{1}{q}}
\]
for $1/p+1/q=1$. 

This inequality can be proven for the algebra of functionals proceeding in the following way:
Consider $\tilde{A} :=  A^p / \tilde\omega(A^p)$
and $\tilde{B} :=  B^p / \tilde\omega(B^p)$ where $A^p(\varphi)= A(\varphi)^p$. 
It holds that $\tilde\omega(\tilde{A}) =  1 =  \tilde\omega(\tilde{B})$.
The functionals $\tilde{A}$ and $\tilde{B}$ are positive, 
hence for a generic $\varphi$ we can write $\tilde{A}(\varphi)=e^{x}$ and 
$\tilde{B}(\varphi)=e^{y}$. 
Since $e^{x}$ is a convex function we have that 
\[
e^{\frac{x}{p}+ \frac{y}{q}} \leq \frac{1}{p} e^{x} +\frac{1}{q}  e^y  ,
\qquad 
\text{namely}
\qquad
\tilde{A}^{\frac{1}{p}}   \tilde{B}^{\frac{1}{q}} \leq
\frac{1}{p} \tilde{A} + \frac{1}{q} \tilde{B}.
\]
$\tilde{\omega}$ is a positive 
functional, hence the previous inequality holds evaluating both sides on $\tilde{\omega}$. 
We thus obtain
\[
\tilde\omega(\tilde{A}^{\frac{1}{p}} \tilde{B}^{\frac{1}{q}}) \leq
\frac{1}{p}  + \frac{1}{q}, 
\]
which is the desired H\"older inequality. 
For $n,l\in\mathbb{N}$ with $l<n$, H\"older inequality implies that 
for $A_i$ positive elements of $\mathcal{A}$
\[
\tilde\omega(\prod_{i=1}^l A_i^{\frac{n}{l}})^{\frac{l}{n}} \leq 
\tilde\omega(A_l^{n})^{\frac{1}{n}} 
\tilde\omega(\prod_{i=1}^{l-1} A_i^{\frac{n}{l-1}})^{\frac{l-1}{n}}
\] 
Using recursively this relation we have that
\begin{equation}\label{eq:n-holder-commutative}
\tilde\omega(\prod_{i=1}^nA_i)\leq \prod_{i=1}^n \tilde\omega(A_i^n)^{\frac{1}{n}}.
\end{equation}
Contrary to $p$ and $q$ above $n$ in \eqref{eq:n-holder-commutative} is an integer, hence the very same inequality \eqref{eq:n-holder-commutative} holds also for $A_i$ positive elements in $\mathcal{A}\subset \mathcal{A}_e$ the subalgebra of $\mathcal{A}_e$ formed by smooth functionals.  
Furthermore, a similar inequality holds also in the deformed algebra $\mathcal{A}_w$ where $w$ is a symmetric, positive  distribution ($w(f,\overline{f})\geq 0$ for every $f\in C^\infty_0(M)$) whose integral Kernel is a smooth function.
We recall here that an element  $A\in\mathcal{A}_w$ is positive if $A =C\cdot_w C^*$ for some $C\in\mathcal{A}_w$. The following proposition holds.
\begin{prop}\label{prop:multi-holder}
Let $\omega$ a state on $\mathcal{A}_w$
where $w$ is a positive symmetric bidistribution whose integral kernel is described by a smooth function.
Consider $A_i$, $i\in \{1,\dots, n\}$ positive elements of $\mathcal{A}_w$.
Then it holds that
\begin{equation}\label{eq:multi-holder}
\omega(A_1 \cdot_w\dots \cdot_wA_n) \leq \prod_{i=1}^n \omega (A_i \cdot_w\dots \cdot_w A_i)^\frac{1}{n}.
\end{equation}
Inequality \eqref{eq:multi-holder} holds also in  
$(\mathcal{F}^V,\cdot_{\hbar H_m})$, namely, consider now $A_i=C_i\cdot_{\hbar H_m} C_i^*$ positive elements of $(\mathcal{F}^V,\cdot_{\hbar H_m})$
then
\begin{equation}\label{eq:multi-holder-ev}
{\rm{ev}}_0(A_1 \cdot_{\hbar H_m}\dots \cdot_{\hbar H_m}A_n) \leq \prod_{i=1}^n {\rm{ev}}_0 (A_i \cdot_{\hbar H_m}\dots \cdot_{\hbar H_m} A_i)^\frac{1}{n}.
\end{equation}
% It holds that inequality \eqref{eq:multi-holder} holds also in  
% $(\mathcal{F}^V,\cdot_{\hbar H_m})$.
\end{prop}
\begin{proof}
Let us start considering the isomorphism of algebras $\alpha_w:\mathcal{A}\to\mathcal{A}_w$ given in \eqref{eq:alpha} and observe that 
\[
\omega(A_1 \cdot_w \dots \cdot_w A_n)
=
\omega(\alpha_w(
\alpha_w^{-1}(A_1) \dots \alpha_w^{-1}(A_n))),
\]
Notice that that 
for every state $\omega$ on $\mathcal{A}_w$, $\tilde{\omega} := \omega \circ \alpha_w$ is a positive normalized linear functional hence a state on $\mathcal{A}$, furthermore, 
 $B_i:=\alpha_w^{-1}(A_i)$ are positive elements of $\mathcal{A}$, hence we may use \eqref{eq:n-holder-commutative} to obtain
 \[
\omega(A_1 \cdot_w\dots \cdot_wA_n) \leq \prod_{i=1}^n \tilde\omega (B_i^{\cdot_w n})^\frac{1}{n}.
 \]
To conclude the proof of the first part of the proposition we use the properties of $\alpha_w$ to observe that the previous inequality is \eqref{eq:multi-holder}.

To prove the validity of \eqref{eq:multi-holder-ev}, consider a sequence of smooth functions $w_j$ which converges to $\hbar H_m$ in $\mathcal{D}'(M^2)$ for large $j$. 
For every $j$ we have that 
$(\mathcal{F}^V,\cdot_{w_j})$ is a subalgebra of $\mathcal{A}_{w_j}$
hence, applying 
\eqref{eq:multi-holder}
we have that for every $j$
\[
{\rm{ev}}_0((C_1\cdot_{w_j}C_1^*)
\cdot_{w_j}
\dots
\cdot_{w_j}
(C_n\cdot_{w_j}C_n^*))
\leq \prod_{i=i}^n
{\rm{ev}}_0(
(C_i\cdot_{w_j}C_i^*)^{\cdot_{w_j}n})^{\frac{1}{n}}
\]
and the inequality holds also in the limit of large $j$.
However, in the limit $j\to\infty$ the left hand side converges ${\rm{ev}}_0(A_1
\cdot_{\hbar H_m}\dots \cdot_{\hbar H_m})$ while the right hand side converges to $\prod_{i=1}^n {\rm{ev}}_0(A_i^{\cdot_{\hbar H_m}n})^{1/n}$.
Hence we get \eqref{eq:multi-holder-ev}. 
\end{proof}

% If now ${\rm{ev}}_0(A_i^n) = {\rm{ev}}_0(A^n)$ for every $i$, we have  
% \[
% {\rm{ev}}_0(\prod_{i=1}^nA_i)\leq {\rm{ev}}_0(A^n)
% \]

%  

% $\mathcal{A}_w$ is a commutative $*-$algebra. ${\rm{ev}}_0$ is a positive normalized functional over the commutative $*-$algebra. If $A$ and $B$ are positive elements of $\mathcal{A}_w$ H\"older inequality holds, namely
% \[
% {\rm{ev}}_0(A B)\leq {\rm{ev}}_0(A^p)^{\frac{1}{p}}{\rm{ev}}_0(B^q)^{\frac{1}{q}}
% \]
% for $1/p+1/q=1$. Hence for $n,l\in\mathbb{N}$ with $l<n$, H\"older inequality implies that 
% for $A_i$ positive elements of $A$
% \[
% {\rm{ev}}_0(\prod_{i=1}^l A_i^{\frac{n}{l}})^{\frac{l}{n}} \leq 
% {\rm{ev}}_0(A_l^{n})^{\frac{1}{n}} 
% {\rm{ev}}_0(\prod_{i=1}^{l-1} A_i^{\frac{n}{l-1}})^{\frac{l-1}{n}}
% \] 
% Using recursively this relation we have that
% \[
% {\rm{ev}}_0(\prod_{i=1}^nA_i)\leq \prod_{i=1}^n {\rm{ev}}_0(A_i^n)^{\frac{1}{n}}
% \]
% If now ${\rm{ev}}_0(A_i^n) = {\rm{ev}}_0(A^n)$ for every $i$, we have  
% \[
% {\rm{ev}}_0(\prod_{i=1}^nA_i)\leq {\rm{ev}}_0(A^n)
% \]
% for selfadjoint elements the previous equality implies
% \begin{equation}\label{eq:estimate-product}
% |{\rm{ev}}_0(\prod_{i=1}^nA_i)|\leq {\rm{ev}}_0(|A|^n)
% \end{equation}

\bibliographystyle{amsalpha}
\bibliography{References}
\end{document}